%%%%%%%%%%%%%%%%%%%%%%%%%%%%%%%%%%%%%%%%
%  Bias CSPs
%
%  Date: April 2020
%%%%%%%%%%%%%%%%%%%%%%%%%%%%%%%%%%%%%%%%

%%%%% FLAG FOR CONFERENCE VERSION %%%%%
\def\confversion{0}
\def\ifconf{\ifnum\confversion=1}
\def\ifnotconf{\ifnum\confversion=0}

\documentclass[11 pt]{article}
%%%%%%%%% options for the file macros.tex

\def\showauthornotes{0}
\def\showkeys{0}
\def\showdraftbox{0}

%%%%%%%%%%%% Packages
\usepackage{xspace,xcolor,enumerate}
\usepackage{amsmath,amssymb}
\usepackage{amsthm}
\usepackage[toc,page]{appendix}
\usepackage{thmtools}
\usepackage{thm-restate}
\usepackage{color,graphicx}
\usepackage{boxedminipage}
\usepackage{makecell}

\ifnum\showkeys=1
\usepackage[color]{showkeys}
\fi

\definecolor{darkred}{rgb}{0.5,0,0}
\definecolor{darkgreen}{rgb}{0,0.35,0}
\definecolor{darkblue}{rgb}{0,0,0.55}

\usepackage[pdfstartview=FitH,pdfpagemode=UseNone,colorlinks,linkcolor=darkblue,filecolor=darkred,citecolor=darkgreen,urlcolor=darkred,pagebackref]{hyperref}

\usepackage[capitalise,nameinlink]{cleveref}
\usepackage[T1]{fontenc}
\usepackage{mathtools,dsfont,bbm}
\usepackage{mathpazo}
%\usepackage[top=1in, bottom=1in, left=1in, right=1in]{geometry}
% \usepackage[top=1in, bottom=1in, left=1.25in, right=1.25in]{geometry}
% \usepackage{fullpage}

%%%%%%%%%%%%%%% Lengths
% \setlength{\parindent}{0 in} 
\setlength{\parskip}{0.05 in}
\setlength{\parindent}{4 ex} 
% \setlength{\parskip}{0.5 ex}

%%%%%%%%%%%%%%% Author Notes
\ifnum\showauthornotes=1
\newcommand{\Authornote}[2]{{\sf\small\color{red}{[#1: #2]}}}
\newcommand{\Authorcomment}[2]{{\sf \small\color{gray}{[#1: #2]}}}
\newcommand{\Authorfnote}[2]{\footnote{\color{red}{#1: #2}}}
\else
\newcommand{\Authornote}[2]{}
\newcommand{\Authorcomment}[2]{}
\newcommand{\Authorfnote}[2]{}
\fi

%%%%%%%%%%%%%%%% Draftbox
\ifnum\showdraftbox=1

\else

\fi

%%%%%%%%%%%%%% No bullets

%%%%%%%%%%%%%%%%%% Theorem Environments
\newtheorem{theorem}{Theorem}[section]
\newtheorem{conjecture}[theorem]{Conjecture}

\newtheorem{definition}[theorem]{Definition}

\newtheorem{lemma}[theorem]{Lemma}
\newtheorem{remark}[theorem]{Remark}
\newtheorem{proposition}[theorem]{Proposition}
\newtheorem{corollary}[theorem]{Corollary}
\newtheorem{claim}[theorem]{Claim}
\newtheorem{fact}[theorem]{Fact}

\newtheorem{remk}[theorem]{Remark}

\newtheorem{algo}[theorem]{Algorithm}

%%%%%%%%%%%%%%%%% Proof Environments

\def\FullBox{\hbox{\vrule width 6pt height 6pt depth 0pt}}

\def\qed{\ifmmode\qquad\FullBox\else{\unskip\nobreak\hfil
\penalty50\hskip1em\null\nobreak\hfil\FullBox
\parfillskip=0pt\finalhyphendemerits=0\endgraf}\fi}

\def\qedsketch{\ifmmode\Box\else{\unskip\nobreak\hfil
\penalty50\hskip1em\null\nobreak\hfil$\Box$
\parfillskip=0pt\finalhyphendemerits=0\endgraf}\fi}

%\ifnotconf
%\renewenvironment{proof}{\begin{trivlist} \item {\bf Proof:~~}}
%   {\qed\end{trivlist}}
%\fi

\newenvironment{proofof}[1]{\begin{trivlist} \item {\bf Proof of
#1:~~}}
  {\qed\end{trivlist}}

%%%%%%%%%% Symbols and Fonts

\def\to{\rightarrow}
\def\eps{\varepsilon}
\def\epsilon{\varepsilon}

\def\eps{\epsilon}

\def\phi{\varphi}
\def\cal{\mathcal}

\def\psdgeq{\succeq} 
\newcommand{\defeq}{:=}

% \renewcommand{\mathbb}{\varmathbb}
% \renewcommand{\Bbbk}{\varBbbk}

%%%%%%%%%%%%%%%%%%%%%% Text Macros
\newcommand{\ie}{i.e.,\xspace}

%%%%%%%%%%%%%%%%%%%%% Punctuation at the end of a displayed formula

%%%%%%%%%%%%%%%%%%%%%% Number Sets
\newcommand{\R}{{\mathbb R}}
\newcommand{\E}{{\mathbb E}}

\newcommand{\N}{{\mathbb{N}}}

\newcommand{\F}{{\mathbb F}}

%%%%%%%%%% Standard Normal Distribution

\newcommand{\gauss}[2]{{\cal N(#1, #2)}}

%%%%%%%%%%%% Fractions
% commands for fractions 
\usepackage{nicefrac}
% poor man's fraction

% similar commands: tfrac,dfrac

%%%%%%%%%%%%%%%%%% Vectors and Reals
\newcommand{\abs}[1]{\ensuremath{\left\lvert #1 \right\rvert}}

%
%------------------------------------ Norms -------------------
\newcommand{\norm}[1]{\ensuremath{\left\lVert #1 \right\rVert}}

% \makeatletter
% \def\norm#1{%fancy norm command
%   \@ifnextchar\bgroup%
%    {\normalpnorm{#1}}%
%    {\defaultnorm{#1}}%
% }
% \def\defaultnorm#1{%
%     \renderNorm{#1}{}
% }
% \def\normalpnorm#1#2{%
%    \@ifnextchar\bgroup%
%    {\banaxnorm{#1}{#2}}%
%    {\renderNorm{#2}{#1}}%\ensuremath{\left\lVert {#2} \right\rVert_{{#1}}}}%
% }

% \def\banaxnorm#1#2#3{%
%     \renderNorm{#3}{#1\to#2 }
%     %\ensuremath{\left\lVert #3 \right\rVert_{{#1}\to{#2}}}%
% }

% \def\renderNorm#1#2{%
%     \@ifnextchar^%
%     {\fixExponent{#1}{#2}}%
%     {\ensuremath{\mathchoice%
%         {\lVert #1 \rVert_{#2}}%
%         {\lVert #1 \rVert_{#2}}%
%         {\lVert #1 \rVert_{#2}}%
%         {\lVert #1 \rVert_{#2}}}%
%     }%
% }
% \def\fixExponent#1#2^#3{%
%     \ensuremath{{\lVert #1 \rVert^{#3}_{#2}}}%
% }

% \def\enorm#1#2{%
%    \@ifnextchar\bgroup%
%    {\norm{L_{#1}}{L_{#2}}}%
%    {\norm{L_{#1}}{#2}}%
% }

% \def\cnorm#1#2{%
%    \@ifnextchar\bgroup%
%    {\norm{\ell_{#1}}{\ell_{#2}}}%
%    {\norm{\ell_{#1}}{#2}}%
% }

% \makeatother

%

%\newcommand{\ip}[1]{\left\langle #1 \right\rangle}
\newcommand{\ip}[2] {\ensuremath{\langle #1 , #2 \rangle}}

 %% Euiwoong 20171030
 %% Euiwoong 20171030

  %% Euiwoong 20171030
 %% Euiwoong 20171030

%%%%%%%%%%% Vectors

%

%

%

\newcommand{\one}{{\mathbf{1}}}

%%%%%%%%%%%%%%%%%%%%%%%% Lasserre Variables

%%%%%%%%%%%%%%%% Sherali-Adams

%%%%%%%%%%%%%%%%%%%%% Random Variables

%%%% Variable arguments \Pr, \Ex, \Var, \tildeEx
%%%%  changes its behavior according to number of arguments as follows:
%%%%    - 1: <symbol> [ #1 ]
%%%%    - 2: <symbol>_[#1] [ #2 ] -- this behaves like \max, i.e., subscript is placed
%%%%                underneath the symbol if in displaystyle environment
%%%%    - 4: <symbol>_[#1]  #3 #2 #4 -- here #3 and #4 are the opening and closing
%%%%                braces symbol
\makeatletter

\def\ProbabilityRender#1#2{%fancy probability command
  \@ifnextchar\bgroup%
  {\renderwithdist{#1}{#2}}
   {\singlervrender{#1}{#2}}
}
\def\singlervrender#1#2{%
   \ensuremath{\mathchoice
       {{#1}\left[ #2 \right]}
       {{#1}[ #2 ]}
       {{#1}[ #2 ]}
       {{#1}[ #2 ]}
   }
}
\def\renderwithdist#1#2#3{%
   \@ifnextchar\bgroup
   {\superfancyrender{#1}{#2}{#3}}
   {\ensuremath{\mathchoice
      {\underset{#2}{#1}\left[ #3 \right]}
      {{#1}_{#2}[ #3 ]}
      {{#1}_{#2}[ #3 ]}
      {{#1}_{#2}[ #3 ]}
     }
   }
}
\def\superfancyrender#1#2#3#4#5{
   \ensuremath{\mathchoice
      {\underset{#1}{{#1}}\left#4 #3 \right#5}
      {{#1}_{#2}#4 #3 #5}
      {{#1}_{#2}#4 #3 #5}
      {{#1}_{#2}#4 #3 #5}
   }
}
\makeatother

%\newcommand{\varex}[1]{\E\paren{#1}}
%\newcommand{\varEx}[1]{\E\Paren{#1}}

%%%%%%%%%%%%%%%%%%%%%% General Useful Macros
\newfont{\inhead}{eufm10 scaled\magstep1}

\newcommand{\calA}{{\cal A}}
\newcommand{\calB}{{\cal B}}
\newcommand{\calC}{{\cal C}}
\newcommand{\calD}{{\cal D}}
\newcommand{\calE}{{\cal E}}
\newcommand{\calF}{{\cal F}}

\newcommand{\calI}{{\cal I}}

\newcommand{\calL}{{\cal L}}
\newcommand{\calM}{{\cal M}}
\newcommand{\calN}{{\cal N}}

\newcommand{\calP}{{\cal P}}

\newcommand{\calS}{{\cal S}}

\newcommand{\poly}{{\mathrm{poly}}}

%%%%%%%%%% Operators

%%%%%%%%%% Asymptotic and math symbols borrowed from Laci

\DeclareMathOperator{\aut}{\operatorname{Aut}}

\newcommand{\ceil}[1]{\ensuremath{\left\lceil #1 \right\rceil}}

\newcommand{\bigoh}{\operatorname{O}}

\newcommand{\bigomega}{\mathop{\Omega}}
\newcommand{\littleoh}{\operatorname{o}}

%%%%%%%%%%%%%%%%%%% Complexity Classes

%%%%%%%%%%%%%%%%%% Parameterized Complexity Classes 

%%%%%%%%%%%%%% SDPs and LPs

% \newcommand{\fractional}{{\sf FRAC}}
% \newcommand{\minopt}{{\sf MINOPT}}
% \newcommand{\greedy}{{\sf Greedy}}

% \newcommand{\SDP}{{\sf SDP}\xspace}
% \newcommand{\LP}{{\sf LP}\xspace}

%%%%%%%%%%%%%%%%%%%% Problems
% use texorpdfstring to avoid problems with hyperref (can use problem
% macros also in headings

                                %% Added by Euiwoong on 2017-10-21.

% Euiwoong Definitions

%%%%%%%%%%%%%%%%%%%%%%%%% Enclosures
             %\inparen{x+y}  is (x+y)
           %\inbrace{x+y}  is {x+y}
             %\insquare{x+y}  is [x+y]
 %\inangle{A}    is <A>

%%%%%%%%%%%%%%%%%%%% Names

          %% Added by Euiwoong on 2017-10-22.
          %% Added by Euiwoong on 2017-10-22.

% filled card suits
\DeclareSymbolFont{extraup}{U}{zavm}{m}{n}
\DeclareMathSymbol{\varheart}{\mathalpha}{extraup}{86}
\DeclareMathSymbol{\vardiamond}{\mathalpha}{extraup}{87}

\def\RR{\mathbb R}

\def\one{\mathbf 1}

\def\OPT{\mathsf{OPT}}

%% matrices

%% functions

%% faces

% operators

% parens, braces, brackets, norms, absolute value, and inner products
\DeclarePairedDelimiter\set{\lbrace}{\rbrace}

% Multiset from https://tex.stackexchange.com/questions/5816/multiset-notation-in-latex
\def\multiset#1#2{\ensuremath{\left(\kern-.3em\left(\genfrac{}{}{0pt}{}{#1}{#2}\right)\kern-.3em\right)}}
\let\multichoose=\multiset

%%%%%%%%%%%%%%%%%%%%%%%%%%%%%%%%
%
%       Walk Operators
%
%%%%%%%%%%%%%%%%%%%%%%%%%%%%%%%%

%%%%%%%%%%%%%%%%%%%%%%%%%%%%%%%%
%        Swap Walks
%%%%%%%%%%%%%%%%%%%%%%%%%%%%%%%%

% Height Specific Rectangular Swap Walk
% Args:
%       #1='k' size of initial face s
%       #2='l' size of final face t
%       #3='j' number of new elements, i.e., |t\s|
%       #4='u' walk height

% Rectangular Swap Walk (Height Independent)
% Args:
%       #1='k' size of initial face s
%       #2='l' size of final face t
%       #3='j' number of new elements, i.e., |t\s|

% Rectangular full swap walk
% Args:
%       #1='k' size of initial face s
%       #2='l' size of final face t

% Height Specific Squared Swap Walk
% Args:
%       #1='k' size of initial face s
%       #2='j' number of new elements, i.e., |t\s|
%       #3='u' walk height

% Squared Swap Walk (Height Independent)
% Args:
%       #1='k' size of initial face s
%       #2='j' number of new elements, i.e., |t\s|

% Walk Swap Operator
% Args:
%      #1='k_1'
%      #2='k_2'

%%%%%%%%%%%%%%%%%%%%%%%%%%%%%%%%
%        Normal Walks
%%%%%%%%%%%%%%%%%%%%%%%%%%%%%%%%

% Rectangular Normal Walk
% Args:
%       #1='k' size of initial face
%       #2='l' size of final face
%       #3='u' walk height

% Normal Walk
% Args:
%       #1='k' size of initial face
%       #3='u' walk height

\newcommand{\pE}{\widetilde{\E}}

\DeclareMathOperator{\slice}{\mathcal{S}}

\usepackage{tikz}
\usetikzlibrary{shapes.geometric}
\usepackage{float}

\usepackage{thm-restate}

%%%%%%% Additional macros for this paper

\newcommand{\unif}{\in_{\text{R}}}
\newcommand{\T}{\intercal}
\let\mod\relax
\DeclareMathOperator{\mod}{mod}
\DeclareMathOperator{\GOE}{GOE}

\DeclareMathOperator{\nullspace}{Null}
\DeclareMathOperator{\parents}{parents}
\renewcommand{\E}{\mathop{\mathbb{E}}}
\DeclareMathOperator{\body}{body}

\renewcommand*{\circle}[1]{\scalebox{0.85}{\footnotesize
    \tikz[baseline=(char.base)]{
        \node[shape=circle,draw,inner sep=1.5pt](char) {   \ifx&#1&
        \color{white} $i$
        \else
        $#1$
        \fi};
    }}}
\renewcommand*{\square}[1]{\scalebox{0.85}{\footnotesize
    \tikz[baseline=(char.base), square/.style={regular polygon,regular polygon sides=4}]{
        \node[draw,square, inner sep=0.5pt](char) {
        \ifx&#1&
        \color{white} $t$
        \else
        $#1$
        \fi};
    }}}
% \renewcommand*{\square}[1]{\scalebox{0.8}{\footnotesize
%     \tikz[baseline=(char.base), square/.style={regular polygon,regular polygon sides=4}]{
%         \node[draw,square, inner sep=0.5pt](char) {
%         \resizebox{\minof{\widthof{\ifx&#1&
%         \color{white} $t$
%         \else
%         $#1$
%         \fi}}{10pt}}{!}{\ifx&#1&
%         \color{white} $t$
%         \else
%         $#1$
%         \fi}};
%     }}}

%%%%%%%%%%%%%% No bullets

%%%%%%%%%%%%%%% Author Notes	

%\usepackage[top=1in, bottom=1in, left=1.25in, right=1.25in]{geometry}
\usepackage[top=1in, bottom=1in, left=1in, right=1in]{geometry}

%% \usepackage{fullpage}
%% \usepackage{color,xcolor}

%% % math packages
\usepackage{amsmath}
\usepackage{amssymb}
\usepackage{amsthm}
\usepackage{ytableau}
\usepackage{mathtools}
\usepackage{blkarray}

\begin{document}

\title{Sum-of-Squares Lower Bounds for Sherrington-Kirkpatrick via Planted Affine Planes}

\author{
  Mrinalkanti Ghosh\thanks{{\tt TTIC}. {\tt mkghosh@ttic.edu}. Supported in part by NSF grant CCF-1816372.}
  \and
  Fernando Granha Jeronimo\thanks{{\tt University of Chicago}. {\tt granha@uchicago.edu}. Supported in part by NSF grant CCF-1816372.}
  \and
  Chris Jones\thanks{{\tt University of Chicago}. {\tt csj@uchicago.edu}. Supported in part by NSF grant CCF-2008920.}
  \and
  Aaron Potechin\thanks{{\tt University of Chicago}. {\tt potechin@uchicago.edu}. Supported in part by NSF grant CCF-2008920.}
  \and
  Goutham Rajendran\thanks{{\tt University of Chicago}. {\tt goutham@uchicago.edu}. Supported in part by NSF grant CCF-1816372.}
}

\date{\today}

\maketitle
\thispagestyle{empty}

{\abstract{The Sum-of-Squares (SoS) hierarchy is a semi-definite
programming meta-algorithm that captures state-of-the-art polynomial
time guarantees for many optimization problems such as Max-$k$-CSPs
and Tensor PCA. On the flip side, a SoS lower bound provides evidence
of hardness, which is particularly relevant to average-case problems
for which NP-hardness may not be available.

In this paper, we consider the following average case problem, which
we call the \emph{Planted Affine Planes} (PAP) problem: Given $m$
random vectors $d_1,\ldots,d_m$ in $\mathbb{R}^n$, can we prove that
there is no vector $v \in \mathbb{R}^n$ such that for all $u \in [m]$,
$\langle v, d_u\rangle^2 = 1$? In other words, can we prove that $m$
random vectors are not all contained in two parallel hyperplanes at
equal distance from the origin? We prove that for $m \leq
n^{3/2-\eps}$, with high probability, degree-$n^{\Omega(\epsilon)}$
SoS fails to refute the existence of such a vector $v$.

When the vectors $d_1,\ldots,d_m$ are chosen from the multivariate
normal distribution, the PAP problem is equivalent to the problem of
proving that a random $n$-dimensional subspace of $\mathbb{R}^m$ does
not contain a boolean vector. As shown by Mohanty--Raghavendra--Xu
[STOC 2020], a lower bound for this problem implies a lower bound for
the problem of certifying energy upper bounds on the
Sherrington-Kirkpatrick Hamiltonian, and so our lower bound implies a
degree-$n^{\Omega(\epsilon)}$ SoS lower bound for the certification
version of the Sherrington-Kirkpatrick problem.

}}

\newpage
\ifnotconf
\pagenumbering{roman}
\tableofcontents
\clearpage
\fi

\pagenumbering{arabic}
\setcounter{page}{1}

\section{Introduction}\label{sec:intro}

The Sum-of-Squares (SoS) hierarchy is a semi-definite programming
(SDP) hierarchy which provides a meta-algorithm for polynomial
optimization~\cite{Lasserre15}. Given a polynomial objective function
and a system of polynomial equalities and inequalities as constraints,
the SoS framework specifies a family of increasingly ``larger'' SDP
programs, where each program provides a convex relaxation to the
polynomial optimization problem. The family is indexed by a size
parameter $D$ called the SoS degree. Roughly speaking, the larger the
SoS degree $D$, the tighter the relaxation, but also, the greater the
computational time required to solve the convex program, with $D=O(1)$
corresponding to polynomial time and $D = n$ able to exactly solve an
optimization problem on $n$ boolean variables. Due to the versatility
of polynomials in modeling computational problems, the SoS hierarchy
can be applied to a vast range of optimization problems. It has been
shown to be quite successful in this regard, as it captures
state-of-the-art approximation guarantees for many problems such as
Sparsest Cut~\cite{AroraRV04}, MaxCut~\cite{GW94}, Tensor
PCA~\cite{HopSS15} and all Max-$k$-CSPs~\cite{Raghavendra08}.

The success of SoS for optimization confers on it an important role as
an algorithmic tool. For this reason, on the flip side, understanding
the degree range for which SoS \textit{fails} to provide a desired
guarantee to a computational problem can be useful to the algorithm
designer in two ways. Firstly and more concretely, since SoS is a
proof system capturing a broad class of algorithmic
reasoning~\cite{FKP19}, an SoS lower bound can inform the algorithm
designer not only of the minimum degree required within the SoS
hierarchy, but also to avoid methods of proof that are captured by
low-degree SoS reasoning. Secondly, an SoS lower bound can serve as
strong evidence for computational hardness~\cite{hop17,hop18}, even
though it is not a formal guarantee against all types of
algorithms. This hardness evidence is particularly relevant to
average-case problems for which we do not have NP-hardness (see, e.g.,
the SoS lower bound on the Planted Clique problem~\cite{BHKKMP16}).

Our main results concern the performance of SoS on the following basic optimization problem
\begin{equation}\label{eq:general_opt}
  \OPT(W) \defeq \max_{x \in \{\pm 1\}^n} x^\T W x,
\end{equation}
where $W$ is a symmetric matrix in $\RR^{n\times n}$. This problem
arises in the fields of computer science and statistical physics,
though the terminology can sometimes differ. Computer scientists might
regard $x \in \{\pm 1\}^n$ as encoding a bipartition of $[n] = \{1,
2, \ldots, n\}$. Note that by taking $W$ to be a graph
Laplacian~\cite[Section 4]{HooryLW06} the problem is equivalent to the
MaxCut problem, a well-known NP-hard problem in the worst
case~\cite{K72}. A statistical physicist might regard $x$ as encoding
spin values in a spin-glass model. The matrix $-W$ is regarded as
the \textit{Hamiltonian} of the underlying physical system, where
entry $-W_{i,j}$ models the interaction between spin $x_i$ and $x_j$
(with $-W_{i,j} \ge 0$ being ferromagnetic and $-W_{i,j} < 0$ being
anti-ferromagnetic). Then, the optimized $x$ corresponds to the
minimum-energy, or ground, state of the system.

Instead of considering $\OPT(W)$ for a worst-case $W$, one can consider the
average-case problem in which $W$ is sampled according to some
distribution. One of the simplest models of random matrices is the
Gaussian Orthogonal Ensemble, denoted $\GOE(n)$ for $n$-by-$n$
matrices and defined as follows.
\begin{definition}
  The Gaussian Orthogonal Ensemble, denoted $\GOE(n)$, is the distribution of $\frac{1}{\sqrt{2}}(A + A^\T)$
  where $A$ is a random $n\times n$ matrix with i.i.d. standard Gaussian entries.
\end{definition}
Taking $W \sim \GOE(n)$ for the optimization problem $\OPT(W)$
of~\cref{eq:general_opt} gives rise to the so-called
Sherrington--Kirkpatrick (SK) Hamiltonian~\cite{SK76}. Note that $\GOE(n)$ is a particular kind
of Wigner matrix ensemble, thereby satisfying the semicircle law, which
in this case establishes that the largest eigenvalue of $W$ is
$(2+\littleoh_n(1)) \cdot \sqrt{n}$ with probability
$1-\littleoh_n(1)$. Thus, a trivial spectral bound establishes
$\OPT(W) \le (2+\littleoh_n(1)) \cdot n^{3/2}$ with probability
$1-\littleoh_n(1)$. However, in a foundational work based on a
variational argument~\cite{P79}, Parisi conjectured that
$$
\E_{W \sim \GOE(n)}\left[ \OPT(W) \right] \approx 2 \cdot P^* \cdot n^{3/2},
$$
where $P^* \approx 0.7632$ is now referred to as the Parisi constant.
In a breakthrough result, Talagrand~\cite{Tal06} gave a rigorous proof
of Parisi's conjecture~\footnote{The results of Talagrand~\cite{Tal06}
were for the Sherrington--Kirkpatrick and mixed $p$-spin systems with
$p$ even. In~\cite{Panchenko2014}, Panchenko generalized these results
to arbitrary mixed $p$-spin systems (also including odd $p$).}. The
question then became, ``is there a polynomial-time algorithm that
given $W \sim \GOE(n)$ computes an $x$ achieving close to $\OPT(W)$?"
As it turns out, the answer was essentially shown to be yes by
Montanari~\cite{Montanari19}!

The natural question we study is that of certification: ``is there an
efficient algorithm to certify an upper bound on $\OPT(W)$ for any
input $W$, that improves upon the trivial spectral bound?" In
particular, we can ask how well SoS does as a certification
algorithm. The natural upper bound of $(2+\littleoh_n(1)) \cdot
n^{3/2}$ obtained via the spectral norm of $W$ is also the value of
the degree-$2$ SoS relaxation~\cite{MS16}. Two independent recent
works of Mohanty--Raghavendra--Xu~\cite{MRX20} and
Kunisky--Bandeira~\cite{KuniskyBandeira19} show that degree-4 SoS does
not perform much better, and a heuristic argument from~\cite{bkw19} suggests that even degree-$(n/\log n)$ SoS cannot certify anything stronger than the trivial spectral bound. Thus we ask,

\begin{center}
  \emph{Can higher-degree SoS certify better upper bounds for the Sherrington--Kirkpatrick problem, \\
        hopefully closer to the true bound $2 \cdot P^* \cdot n^{3/2}$?}
\end{center}

\noindent \textbf{Our Results.} \enspace We answer the question above negatively by showing that even at degree as large as
$n^\delta$, SoS cannot improve upon the basic spectral
algorithm. More precisely, we have the following theorem which is our
first main result and our most important contribution.

\begin{restatable}{theorem}{SKbounds}[Main I]\label{theo:sk-bounds}
  There exists a constant $\delta > 0$ such that, w.h.p. for $W \sim \GOE(n)$, there is a degree-$n^\delta$ SoS solution
  for the Sherrington--Kirkpatrick problem with value at least $(2-\littleoh_n(1)) \cdot n^{3/2}$.
\end{restatable}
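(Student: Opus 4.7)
The plan is to derive \cref{theo:sk-bounds} as a consequence of a more fundamental SoS lower bound for the Planted Affine Planes (PAP) problem, via the reduction of Mohanty--Raghavendra--Xu~\cite{MRX20}. The starting observation is that if $V \subseteq \RR^n$ is the eigenspace of $W$ spanned by eigenvectors with eigenvalues at least $(2-\eta)\sqrt{n}$, then any $x \in V \cap \{\pm 1\}^n$ satisfies $x^\T W x \geq (2-\eta)\, n^{3/2}$. Hence, to prove the theorem it suffices to exhibit a degree-$n^\delta$ SoS pseudoexpectation witnessing that the random subspace $V$ contains a $\pm 1$ vector; MRX then lifts such a pseudoexpectation to one for the SK problem while losing only an additive $\littleoh(n^{3/2})$ in the certified energy.

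The dual formulation of ``$V$ contains a $\pm 1$ vector'' is exactly PAP: picking an $n \times n'$ matrix $D$ whose columns span $V$, the vector $x = Dv$ is boolean iff $\langle d_u, v\rangle^2 = 1$ for every row $d_u$ of $D$. For $W \sim \GOE(n)$, the Wigner semicircle law gives that the number of eigenvalues above $(2-\eta)\sqrt{n}$ is w.h.p.\ $n' = \Theta(\eta^{3/2} n)$, and by orthogonal invariance of the $\GOE$ distribution the corresponding eigenspace is uniformly distributed among $n'$-dimensional subspaces, so the rows $d_u$ behave like i.i.d.\ Gaussian vectors in $\RR^{n'}$. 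I would then pick $\eta = n^{-\delta'}$ small enough that $n' = \Theta(\eta^{3/2} n)$ still satisfies $n \leq (n')^{3/2-\eps}$ for some fixed $\eps > 0$; a short calculation shows this is possible for any $\delta' < 2/9$, which simultaneously makes $(2 - \eta) = (2 - \littleoh_n(1))$ and keeps $n' \geq n^{\Omega(1)}$.

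The heart of the argument is then to invoke the main PAP lower bound: with high probability over $d_1, \ldots, d_n \in \RR^{n'}$ with $n \leq (n')^{3/2-\eps}$, there is a degree-$(n')^{\Omega(\eps)} = n^{\Omega(\eps)}$ pseudoexpectation $\pE$ satisfying $\pE[\langle v, d_u\rangle^2] = 1$ for all $u$. The hard part, which I expect to dominate the proof, is the construction and PSD verification of this PAP pseudoexpectation, as is typical in pseudo-calibration-based SoS lower bounds: one must define candidate moments via the conditioning of a planted distribution, organize the resulting moment matrix by graph matrices, and control the spectrum of each component by a delicate charging argument. Granted this PAP lower bound, \cref{theo:sk-bounds} follows by pushing the pseudoexpectation through the MRX reduction and setting $\delta = \Omega(\eps)$; a minor technical care point will be reconciling the fact that the $\GOE$ top eigenspace is only approximately Haar-distributed and of slightly random dimension with the clean i.i.d.\ Gaussian setup of PAP, which can be handled by a truncation and coupling argument.
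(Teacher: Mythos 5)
Your proposal takes essentially the same route as the paper: reduce SK to the Planted Boolean Vector problem via the top eigenspace of $W$, identify PBV with PAP through the coefficient vector $v$ with $x = Dv$, and then do the real work in a pseudocalibration-based PAP lower bound. The only cosmetic difference is that you fix an eigenvalue threshold $(2-\eta)\sqrt{n}$ and let the eigenspace dimension $n'$ be random, whereas the paper fixes the dimension $p = n^{0.67}$ and deduces $\lambda_p \ge (2-o(1))\sqrt n$ from the semicircle law -- the latter is cleaner since the top-$p$ GOE eigenspace is then \emph{exactly} Haar-distributed (not "approximately," as you hedge), so no coupling or truncation argument is actually needed.
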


In light of the result of Montanari~\cite{Montanari19}, the situation
is intriguing. Montanari showed that for all $\eps > 0$, there is a
$\bigoh_\eps(n^2)$ time randomized algorithm that given a random $W$
drawn from the Gaussian Orthogonal Ensemble, outputs an $x$ such that
${x^T}Wx \geq (1-\eps)\OPT(W)$.  The correctness of the algorithm
assumes a widely-believed conjecture from statistical physics known as
the full replica symmetry breaking assumption. However, we show an
integrality gap for SoS.

Based on this, it is an interesting question whether SoS, together
with an appropriate rounding scheme, is optimal for the
Sherrington-Kirkpatrick problem. On the one hand, the situation could
be similar to the Feige-Schechtman integrality gap instance for
MaxCut~\cite{fs02}. For the Feige-Schechtman integrality gap instance,
SoS fails to certify the value of the optimal solution. However,
applying hyperplane rounding to the SoS solution gives an
almost-optimal solution for these instances. It could be the case that
there is a rounding scheme which takes an SoS solution for the
Sherrington-Kirkpatrick problem on a random $W$ and returns an almost
optimal solution $x$. On the other hand, we currently don't know what
this rounding scheme would be.

In order to prove~\cref{theo:sk-bounds}, we first introduce a new
average-case problem we call Planted Affine Planes (PAP) for which we directly prove a SoS lower bound. We then use
the PAP lower bound to prove a lower bound on the
Sherrington--Kirkpatrick problem. The PAP problem can be informally
described as follows (see~\cref{def:prob:pap} for the formal definition).
\begin{definition}[Informal statement of PAP]
  Given $m$ random vectors $d_1,\ldots,d_m$ in $\mathbb{R}^n$, can we
  prove that there is no vector $v \in \RR^n$ such that for all
  $u \in [m]$, $\langle v, d_u\rangle^2 = 1$? In other words, can we
  prove that $m$ random vectors are not all contained in two parallel
  hyperplanes at equal distance from the origin?
\end{definition}
This problem, when we restrict $v$ to a Boolean vector in $\set{\pm \frac{1}{\sqrt{n}}}^n$,
can be encoded as the feasibility of the polynomial system
\begin{align*}
\exists v \in \R^n~\text{s.t.} \qquad & \forall i \in [n], \; v_i^2 = \frac{1}{n},\\
& \forall u \in [m], \; \ip{v}{d_u}^2 = 1.
\end{align*}
Hence it is a ripe candidate for SoS. However, we show that SoS fails
to refute a random instance. The Boolean restriction on $v$ actually
makes the lower bound result stronger since SoS cannot refute even a
smaller subset of vectors in $\RR^n$. In this work, we will consider two
different random distributions, namely when $d_1, \ldots, d_m$ are
independent samples from the multivariate normal distribution and when
they are independent samples from the uniform distribution on the
boolean hypercube.
\begin{theorem}[Main II]\label{theo:sos-bounds}
  For both the Gaussian and Boolean settings, there exists a constant $c > 0$ such that for all $\eps > 0$ and $\delta \le c\eps$, for $m \leq n^{3/2 - \eps}$, w.h.p. there is a feasible degree-$n^\delta$ SoS solution for Planted Affine Planes.
\end{theorem}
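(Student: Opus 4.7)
The plan is to construct the pseudo-expectation via \emph{pseudo-calibration}, using a natural planted distribution $\nu$ that samples a hidden $v$ together with vectors $d_1, \ldots, d_m$ satisfying all the affine plane constraints: pick $v \in \{\pm 1/\sqrt{n}\}^n$ uniformly at random and, independently for each $u \in [m]$, sample $d_u$ from the input distribution conditioned on $\langle v, d_u \rangle^2 = 1$. In the Gaussian case this is conditioning the $N(0,1)$ variable $\langle v, d_u \rangle$ on the two-point set $\{\pm 1\}$; in the Boolean case an analogous conditioning works directly when $n$ is a perfect square, with a small discrete surrogate otherwise. For a target degree-$D$ relaxation, fix a Fourier/Hermite truncation degree $D_{\text{trunc}} = n^{\Theta(\delta)}$ and define
$$\pE[v^\alpha] \;:=\; \sum_{T \,:\, |T| \le D_{\text{trunc}}} \widehat{g_\alpha}(T) \cdot \chi_T(d),$$
where $g_\alpha(d) := \E_{\nu}[v^\alpha \mid d]$ and the $\chi_T$ are the Hermite (respectively, Fourier $\pm 1$) characters in the $d_u$'s.

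\textbf{Constraint verification.} The single-variable constraints $v_i^2 = 1/n$ are satisfied by construction: since $\nu$ is supported on Boolean $v$, the conditional function satisfies $g_{\alpha + 2e_i}(d) = (1/n)\, g_\alpha(d)$ pointwise, and truncation preserves this identity. The affine-plane constraints $\langle v, d_u\rangle^2 = 1$ are more delicate: under $\nu$ we have $g_\alpha(d)(\langle v, d_u\rangle^2 - 1) \equiv 0$, but truncation may leak onto Fourier characters of degree above $D_{\text{trunc}}$. Following the standard pseudo-calibration template, this residue is removed by a correction step, either by projecting the pseudo-expectation onto the ideal generated by $\{\langle v, d_u\rangle^2 - 1\}_{u \in [m]}$ or by enlarging $D_{\text{trunc}}$ and absorbing the correction into a negligible perturbation of the moment matrix.

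\textbf{Positive semidefiniteness via graph matrices.} The core content is showing that the moment matrix $M_D$ with entries $M_D(\alpha, \beta) := \pE[v^\alpha v^\beta]$ is PSD with high probability over the input $d$. Following the graph-matrix framework of \cite{BHKKMP16} and its refinements, the plan is to decompose
$$M_D \;=\; \sum_{\sigma} \lambda_\sigma\, M_\sigma,$$
where the sum is over \emph{shapes} $\sigma$: bipartite-type multigraphs with variable vertices (indexing the $n$ coordinates of $v$) and constraint vertices (indexing the $m$ samples $d_u$), equipped with labeled boundary vertices corresponding to the row and column multi-indices. The dominant contribution should come from the "trivial" shape with no constraint vertices, producing a main PSD matrix that behaves like a suitably scaled identity on the relevant quotient space. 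The remaining shapes are to be bounded and absorbed, using a charging scheme that matches each non-trivial shape to the main term.

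\textbf{Main obstacle.} The principal difficulty is the norm-bound phase: each variable vertex contributes roughly $\sqrt{n}$ and each constraint vertex roughly $\sqrt{m}$ to the spectral norm of a shape, while the combinatorial count of edges and the structure of repeated constraint vertices determines what can be extracted. Turning these bounds into a uniform control over all shapes of "complexity" up to $D_{\text{trunc}}$, and executing a charging argument that dominates each non-trivial shape by the main term, is what should yield the precise threshold $m \le n^{3/2 - \epsilon}$ and the corresponding SoS degree $D = n^{\Theta(\epsilon)}$. The most technical step will likely be handling shapes with spider-like subgraphs at the boundary and shapes with many repeated constraint vertices, where naive norm bounds are too weak and a more refined combinatorial accounting (tracking the interplay between variable and constraint vertex types, which is a new feature compared with Planted Clique) is required.
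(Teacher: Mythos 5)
Your high-level plan matches the paper's: pseudo-calibrate against a planted distribution that samples $v$ Boolean and $d_u$ conditioned on $\langle v, d_u\rangle^2=1$, truncate, project to fix the affine-plane constraints exactly, decompose the moment matrix into graph matrices over two vertex types, take the trivial (identity-block) shapes as the main term, and then charge the rest. You also correctly flag that ``spider-like'' structures at the boundary are the hard case. Where the proposal goes wrong, and where the paper's central new idea lives, is in \emph{how} those spiders get handled.

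A spider is a shape in which two degree-one boundary squares attach to a common circle vertex. The issue with spiders is not that the norm bounds are loose and a ``more refined combinatorial accounting'' could tighten them. The norm bounds are essentially tight, and $\|\lambda_\alpha M_\alpha\|$ for a spider is genuinely of the same order as the corresponding diagonal identity block (e.g.\ $\tilde\Omega(1/n^2)$ against a $1/n^2$ block). No charging scheme against the trivial shapes can absorb them, and summing over all spiders would be catastrophic. The paper's fix is structural, not combinatorial: because $\pE$ satisfies $\langle v,d_u\rangle^2=1$ exactly, the moment matrix $\calM$ has a \emph{nontrivial null space}, and the two-leg substructure in a spider is precisely the pattern that factors through that null space. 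Concretely, one builds a ``completed'' matrix $L_k$ with $\calM_{\mathrm{fix}} L_k = 0$ and writes $M_\alpha \approx L_{|U_\alpha|}M_{\mathrm{body}(\alpha)}$ up to intersection terms, so that on $\operatorname{Null}(\calM)^\perp$ one may freely subtract $\lambda_\alpha M_\alpha$. Killing one spider spawns intersection terms that may themselves be spiders, so the process is iterated through a DAG (the ``web'' of a spider), and the entire second half of the PSD argument is about controlling the accumulated coefficients along that recursion (via bounds on in-degree, on how often ``type~1'' versus ``type~2'' collapses can occur, and a second, strictly harder, charging lemma for the collapsed non-spiders which can now have isolated vertices, degree-0/2 circles, and where the $\eta=1/\sqrt n$ normalization of $v$ does real work). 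None of this is captured by ``tracking the interplay between variable and constraint vertex types'': that interplay only enters the $m\le n^{3/2-\epsilon}$ threshold for the \emph{non}-spider charging.

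In short: the pseudo-calibration, truncation-fixing, and graph-matrix skeleton you lay out is the right scaffolding, and your first charging lemma for non-spiders is roughly what the paper proves. But as written the proposal has no mechanism to handle spiders, and the suggested ``refined accounting'' route would fail. The missing idea is that the constraints force a null space, that spiders are approximately supported on it, and that one must subtract them and recursively control the resulting web of intersection terms before any charging can close.
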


It turns out that the Planted Affine Plane problem introduced above is
closely related to the following ``Boolean vector in a random
subspace'' problem, which we call the Planted Boolean Vector problem,
introduced by~\cite{MRX20} in the context of studying the performance
of SoS on computing the Sherrington--Kirkpatrick Hamiltonian.

The Planted Boolean Vector problem is to certify that a random subspace of $\R^n$ is far from containing a boolean vector.
Specifically, we want to certify an upper bound for
\[
\OPT(V) \defeq  \frac{1}{n}\max_{b \in \{\pm 1\}^n} b^\T \Pi_V b,
\]
where $V$ is a uniformly random $p$-dimensional subspace\footnote{$V$
can be specified by a basis, which consists of $p$ i.i.d.  samples
from $\calN(0, I)$.} of $\RR^n$, and $\Pi_V$ is the projector onto
$V$. In brief, the relationship to the Planted Affine Plane problem is
that the PAP vector $v$ represents the coefficients on a linear 
combination for the vector $b$ in the span of a basis of 
$V$.

An argument of~\cite{MRX20} shows that, when $p \ll n$, w.h.p.,
$\OPT(V) \approx \frac{2}{\pi}$, whereas they also show that
w.h.p. assuming $p \geq n^{0.99}$, there is a degree-4 SoS solution
with value $1-\littleoh_n(1)$. They ask whether or not there is a
polynomial time algorithm that can certify a tighter bound; we rule
out SoS-based algorithms for a larger regime both in terms of SoS
degree and the dimension $p$ of the random subspace.

\begin{restatable}{theorem}{booleanSubspace}[Main III]\label{theo:boolean-subspace}
	There exists a constant $c > 0$ such that, for all $\eps > 0$ and $\delta \le c\eps$, for $p \geq n^{2/3 + \eps}$, w.h.p. over $V$ there is a
  degree-$n^\delta$ SoS solution for Planted Boolean Vector of value $1$.
\end{restatable}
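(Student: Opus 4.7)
The plan is to obtain \cref{theo:boolean-subspace} as an almost immediate corollary of \cref{theo:sos-bounds} (Main II) via a direct reduction that reinterprets Planted Boolean Vector as a Gaussian PAP instance. Let $V \subseteq \R^n$ be the random $p$-dimensional subspace specified by $p$ i.i.d.\ samples from $\calN(0, I_n)$, and arrange these samples as the columns of an $n \times p$ matrix $D$. The $n$ rows $d_1,\ldots,d_n \in \R^p$ of $D$ are then i.i.d.\ samples from $\calN(0, I_p)$. The key observation is that any $b \in V$ can be written as $b = Dv$ for some $v \in \R^p$, and the condition $b \in \{\pm 1\}^n$ translates exactly to $\ip{v}{d_j}^2 = 1$ for every $j \in [n]$. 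This is precisely the Gaussian PAP problem on $v$ with PAP parameters $(n_{\mathrm{PAP}}, m_{\mathrm{PAP}}) = (p, n)$.

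Under the hypothesis $p \geq n^{2/3+\eps}$ one has $n \leq p^{3/2 - \eps'}$ for some $\eps' = \Theta(\eps)$. First I would invoke \cref{theo:sos-bounds} at these parameters to obtain a degree-$p^{c\eps'}$ pseudo-expectation $\pE_v$ satisfying the PAP constraints $v_i^2 = 1/p$ and $\ip{v}{d_j}^2 = 1$. Then I would define the candidate Planted Boolean Vector pseudo-expectation by the linear pullback
\[
\pE_b[f(b)] \defeq \pE_v[f(Dv)] \qquad \text{for all polynomials } f.
\]
Since $b \mapsto Dv$ is linear in $v$, the degree of $f(Dv)$ in $v$ equals the degree of $f$ in $b$, so $\pE_b$ inherits the SoS degree $p^{c\eps'}$, which is at least $n^{c\eps'(2/3+\eps)} \geq n^{\delta}$ for a suitable $\delta = \Omega(\eps)$.

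The verification that $\pE_b$ is a valid SoS solution of value $1$ is routine. Positive-semidefiniteness of $\pE_b$ follows from that of $\pE_v$ because $\pE_b[q(b)^2] = \pE_v[q(Dv)^2] \geq 0$ for every polynomial $q$ of appropriate degree. The Boolean constraints on $b$ are satisfied, since $\pE_b[(b_j^2 - 1) f(b)] = \pE_v[(\ip{v}{d_j}^2 - 1) f(Dv)] = 0$ by the PAP constraints on $\pE_v$. Finally, because the columns of $D$ lie in $V$ we have $\Pi_V D = D$, and therefore
\[
\pE_b[b^\T \Pi_V b] = \pE_v[v^\T D^\T \Pi_V D v] = \pE_v[v^\T D^\T D v] = \sum_{j=1}^n \pE_v[\ip{v}{d_j}^2] = n,
\]
which after the normalization $\tfrac{1}{n}$ gives the desired pseudo-objective value $1$.

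Thus the content of \cref{theo:boolean-subspace} lies essentially entirely in \cref{theo:sos-bounds}; the reduction above is just a linear change of variables. The main obstacle is therefore upstream, in Main II itself, whose proof requires actually constructing the PAP pseudo-expectation (for instance via pseudo-calibration) and establishing its PSDness via delicate graph-matrix norm bounds. The only bookkeeping for the present step is to check that the exponent coming out of Main II, when rescaled by $\log p / \log n \geq 2/3$, still yields degree $n^{\Omega(\eps)}$ in the ambient dimension $n$; this is immediate from $p \geq n^{2/3+\eps}$.
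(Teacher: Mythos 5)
Your proposal is correct and is essentially the same reduction the paper uses: construct the random subspace $V$ as the column span of a Gaussian $n\times p$ matrix whose rows $d_1,\dots,d_n \in \R^p$ become the PAP data, invoke \cref{theo:sos-bounds} (in the Gaussian setting, with roles of $m$ and $n$ swapped so that $n \le p^{3/2-\Omega(\eps)}$), and pull back the PAP pseudo-expectation along the linear map $b = Dv$, verifying Booleanity, PSDness, and objective value $1$ via $\Pi_V D = D$. The paper phrases the pullback as ``define $\pE[b_u] := \pE'[\ip{v}{d_u}]$ and extend by multilinearity'' rather than $\pE_b[f] := \pE_v[f(Dv)]$, but these are identical; you additionally do the small but worthwhile bookkeeping of rescaling the degree exponent from base $p$ to base $n$.
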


\noindent \textbf{Our Approach.} \enspace We now provide a brief
high-level description of our approach (see~\cref{sec:strategy} for a
more detailed overview). The bulk of our technical contribution lies
in the SoS lower bound for the Planted Affine Planes
problem,~\cref{theo:sos-bounds}. We then show that Planted Affine
Planes in the Gaussian setting is equivalent to the Planted Boolean
Vector problem. The reduction from Sherrington-Kirkpatrick to the
Planted Boolean Vector problem is due to
Mohanty--Raghavendra--Xu~\cite{MRX20}.

As a starting point to the PAP lower bound, we employ the general
techniques introduced by Barak et al.~\cite{BHKKMP16} for SoS lower
bounds. We use their pseudocalibration machinery to produce a good
candidate SoS solution $\pE$. The operator $\pE$ unfortunately does
not exactly satisfy the PAP constraints ``$\ip{v}{d_u}^2 = 1$'', it
only satisfies them up to a tiny error. We use an interesting and
rather generic approach to round $\pE$ to a nearby pseudoexpectation
operator $\pE'$ which does exactly satisfy the constraints.

For degree $D$, the candidate SoS solution can be viewed as a
(pseudo) moment matrix $\calM$ with rows and columns indexed by
subsets $I,J\subset [n]$ with size bounded by $D/2$ and with entries
\[\calM[I,J] \defeq \pE[v^{I} v^{J}].\]
The matrix $\calM$ is a random function of the inputs $d_1, \dots, d_m$, and the most challenging part of the 
analysis consists of showing that $\calM$ is positive
semi-definite (PSD) with high probability.

Similarly to~\cite{BHKKMP16}, we decompose
$\calM$ as a linear combination of graph matrices, i.e., $\calM~=~\sum_{\alpha}~\lambda_{\alpha}~\cdot~M_{\alpha}$, where $M_{\alpha}$
is the graph matrix associated with shape $\alpha$. In brief, each
graph matrix aggregates all terms with shape $\alpha$  in the Fourier expansions of the entries of $\calM$ -- the shape $\alpha$ is informally a graph with labeled edges
with size bounded by $\poly(D)$. A graph
matrix decomposition of $\calM$ is particularly handy in the PSD
analysis since the operator norm of individual graph matrices $M_{\alpha}$ is (with high probability)
determined by simple combinatorial properties of the graph
$\alpha$. One technical difference from~\cite{BHKKMP16} is that our
graph matrices have two types of vertices $\square{}$ and $\circle{}$; these graph matrices fall into the general framework developed by Ahn et al. in~\cite{AMP20}.

To show that the matrix $\calM$ is PSD, we need to study the graph matrices that appear with nonzero coefficients in the decomposition. The matrix $\calM$ can be split into blocks and each diagonal block contains in the decomposition a (scaled) identity matrix. From the graph matrix perspective, this means that certain ``trivial'' shapes appear in the decomposition, with appropriate coefficients. If we could bound the norms of all other graph matrices that appear against these trivial shapes and show that, together, they have negligible norm compared to the sum of these scaled identity blocks, then we would be in good shape.

Unfortunately, this approach will not work. The kernel of the matrix $\calM$ is nontrivial, as a consequence of satisfying the PAP constraints ``$\ip{v}{d_u}^2 = 1$", and hence there is no hope of showing that the contribution of all nontrivial shapes in the decomposition of $\calM$ has small norm. Indeed, certain shapes $\alpha$ appearing in the 
decomposition of $\calM$ are such that $\norm{\lambda_{\alpha} \cdot M_{\alpha}}$ is large. As it turns out, all such shapes have a simple graphical substructure, and so we call these shapes \textit{spiders}. 

To get around the null space issue, we restrict ourselves to $\nullspace(\calM)^\perp$, which is the complement of the nullspace of $\calM$. 
We show that the substructure present in a spider implies that the spider is close to the zero matrix in $\nullspace(\calM)^\perp$. Because of this, we can almost freely 
add and subtract $M_\alpha$ for spiders $\alpha$ while preserving the action of $\calM$ on $\nullspace(\calM)^\perp$. Our strategy is to ``kill'' the spiders 
by subtracting off $\lambda_\alpha \cdot M_\alpha$ for each spider $\alpha$. But because $M_{\alpha}$ is only approximately in $\nullspace(\calM)^\perp$, this 
strategy could potentially introduce new graph matrix terms, and in particular it could introduce new spiders. To handle this, 
we recursively kill them while carefully analyzing how the coefficients of all the graph matrices change. After all spiders
are killed, the resulting moment matrix becomes
$$
\sum_{0 \le k \le D/2} \frac{1}{n^{k}} \cdot I_k + \sum_{\gamma \colon \textup{non-spiders}} \lambda_{\gamma}' \cdot M_{\gamma},
$$
for some new coefficients $\lambda_{\gamma}'$. Here, $I_k$ is the
matrix which has an identity in the $k$th block and the remaining
entries $0$. Using a novel charging argument, we finally show that the
latter term is negligible compared to the former term, thus
establishing $\calM \succeq 0$.

\noindent \textbf{Summary of Related Work and Our Contributions.} \enspace
We now summarize the existing work on these problems and our
contributions.  Degree-$4$ SoS lower bounds on the
Sherrington-Kirkpatrick Hamiltonian problem were proved independently
by Mohanty--Raghavendra--Xu~\cite{MRX20} and
Kunisky--Bandeira~\cite{KuniskyBandeira19} whereas we prove an
improved degree-$n^{\delta}$ SoS lower bound for some constant $\delta
> 0$.  Our result is obtained by reducing the Sherrington-Kirkpatrick
problem to the ``Boolean Vector in a Random Subspace'' problem which
is equivalent to our new Planted Affine Planes problem on the normal
distribution. The reduction from Sherrington-Kirkpatrick problem to
the ``Boolean Vector in a Random Subspace'' is due to
Mohanty--Raghavendra--Xu~\cite{MRX20}. The results of
Mohanty--Raghavendra--Xu~\cite{MRX20} and
Kunisky--Bandeira~\cite{KuniskyBandeira19} build on a degree-$2$ SoS
lower bounds of Montanari and Sen~\cite{MS16}. Regarding upper bounds,
Montanari~\cite{Montanari19} gave an efficient randomized message
passing algorithm to estimate $\OPT(W)$ in the SK problem within a
$(1-\eps)$ factor under the full replica symmetry breaking assumption.

Degree-$4$ SoS lower bounds on the ``Boolean Vector in a Random
Subspace'' problem for $p~\ge~n^{0.99}$ were proved by
Mohanty--Raghavendra--Xu in~\cite{MRX20} where this problem was
introduced. We improve the dependence on $p$ to $p \ge n^{2/3
+ \epsilon}$ for any $\epsilon > 0$ and obtain a stronger
degree-$n^{c\eps}$ SoS lower bound for some absolute constant $c > 0$.

\section{Technical Preliminaries}\label{sec:prelim}

In this section we record problem statements, then define and discuss the main objects in our SoS
lower bound: pseudoexpectation operators, the moment matrix, and graph matrices.

For a vector or variable
$v \in \R^n$, and $I \subseteq [n]$, we use the notation
$v^I \defeq \prod_{i \in I}v_i$. When a statement holds with high
probability (w.h.p.), it means it holds with probability $1 - o_n(1)$. In
particular, there is no requirement for small $n$.

\subsection{Problem statements}

We introduce the Planted Affine Planes problem over a distribution $\calD$.
\begin{definition}[Planted Affine Planes (PAP) problem]\label{def:prob:pap}
  Given $d_1, \dots, d_m \sim \calD$ where each $d_u$ is a vector in $\RR^n$,
  determine whether there exists $v \in \set{\pm \frac{1}{\sqrt{n}}}^n$ such that
  \[
  \ip{v}{d_u}^2 = 1,
  \]
  for every $u \in [m]$.
\end{definition}
Our results hold for the Gaussian setting $\mathcal{D} = \calN(0, I)$ and the boolean setting where $\calD$ is uniformly sampled from $\{\pm 1\}^n$, though we conjecture (\cref{sec:open-problems}) that similar SoS bounds hold under more general conditions on $\calD$.

Observe that in both settings the solution vector $v$ is restricted to be Boolean (in the sense that the entries are either $\frac{1}{\sqrt{n}}$ or $\frac{-1}{\sqrt{n}}$) and an SoS lower bound for this restricted version of the problem is
stronger than when $v$ can be an arbitrary vector from $\RR^n$.

The Sherrington--Kirkpatrick (SK) problem comes from the spin-glass model
in statistical physics~\cite{SK76}.

\begin{definition}[Sherrington-Kirkpatrick problem]\label{def:prob:sk}
  Given $W \sim \GOE(n)$, compute
  \[
    \OPT(W) \defeq \max_{x \in \{\pm 1\}^n} x^\T W x.
  \]
\end{definition}

The Planted Boolean Vector problem was introduced by
Mohanty--Raghavendra--Xu~\cite{MRX20}, where it was called the
``Boolean Vector in a Random Subspace''.

\begin{definition}[Planted Boolean Vector problem]\label{def:prob:pbv}
  Given a uniformly random $p$-dimensional subspace $V$ of $\mathbb{R}^n$ in the form of
  a projector $\Pi_V$ onto $V$, compute
  \[
  \OPT(V) \defeq  \frac{1}{n}\max_{b \in \{\pm 1\}^n} b^\T \Pi_V b.
  \]
\end{definition}

\subsection{Sum-of-Squares solutions}

We will work with two equivalent definitions of a degree-$D$ SoS
solution: a pseudoexpectation operator and a moment matrix. We tailor these
definitions to our setting of feasibility of systems of polynomial
equality constraints given by the common zero set of a collection of
polynomials $\calP$ on $\pm \frac{1}{\sqrt{n}}$ Boolean variables
$v_1,\dots,v_n$.  For a degree-$D$ solution to be well defined, we
need $D$ to be at least the maximum degree of a polynomial in
$\calP$. Let $\RR^{\le D}(v_1,\dots,v_n)$ be the subset of
polynomials of degree at most $D$ from the polynomial ring
$\RR(v_1,\dots,v_n)$. We denote the degree of a polynomial
$f \in \RR(v_1,\dots,v_n)$ by $\deg(f)$.

\subsubsection{Pseudoexpectation operator}

We formally define the pseudoexpectation operators used in our setting.
\begin{definition}[Pseudoexpectation]\label{def:pseudoexpectation}
  Given a finite collection of ``constraint'' polynomials $\calP$ of degree at most $D$ on $\pm \frac{1}{\sqrt{n}}$ Boolean variables $v_1,\dots,v_n$,
  a degree-$D$ pseudoexpectation operator $\pE$ is an operator
  $\pE \colon \RR^{\le D}(v_1,\dots,v_n) \to \mathbb{R}$ satisfying:
  \begin{enumerate}
    \item $\pE[1] = 1$, \label{pe:normalized}
    \item $\pE$ is an $\RR$-linear operator, i.e., $\pE[f+g] = \pE[f] + \pE[g]$ for every $f,g \in \RR^{\le D}(v_1,\dots,v_n)$, \label{pe:linear}
    \item $\pE[f^2] \ge 0$ for every $f \in \RR^{\le D}(v_1,\dots,v_n)$ with $\deg(f^2) \le D$. \label{pe:psdness}
    \item $\pE[(v_i^2-\frac{1}{n}) \cdot f] = 0$ for all $i \in [n]$ and for every $f \in \RR^{\le D}(v_1,\dots,v_n)$ with $\deg(f) \le D-2$, and \label{pe:boolean}   
    \item $\pE[g \cdot f] = 0$ for every $g \in \calP, f \in \RR^{\le D}(v_1,\dots,v_n)$ with $\deg(f \cdot g) \le D$. \label{pe:feasible}
  \end{enumerate}
\end{definition}
Note that $\pE$ behaves similarly to an expectation operator
restricted to $\RR^{\le D}(v_1,\dots,v_n)$ with the caveat that $\pE$
is only guaranteed to be non-negative on sum-of-squares polynomials.

The degree-$D$ SoS algorithm checks feasibility of a polynomial system by 
checking whether or not a degree-$D$ pseudoexpectation operator exists. To 
show an SoS lower bound, one must construct a pseudoexpectation 
operator.

\subsubsection{Moment matrix}\label{sec:moment-mtx}

We define
the moment matrix associated with a degree-$D$
pseudoexpectation $\pE$.
\begin{definition}[Moment Matrix of $\pE$]
  The moment matrix $\calM=\calM(\pE)$ associated to a pseudoexpectation $\pE$ is a
  $\binom{[n]}{\leq D/2} \times \binom{[n]}{\leq D/2}$ matrix with rows and columns indexed
  by subsets of $I, J \subseteq [n]$ of size at most $D/2$ and defined as
  \[
  \calM[I, J] \defeq \pE\left[ v^I \cdot v^J \right].
  \]
\end{definition}

To show that a candidate pseudoexpectation satisfies~\cref{pe:psdness} in~\cref{def:pseudoexpectation}, we will rely on the following standard fact.
\begin{fact}
  In the definition of pseudoexpectation, \cref{def:pseudoexpectation}, the condition in \cref{pe:psdness} is equivalent to $\calM \succeq 0$.
\end{fact}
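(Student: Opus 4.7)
The plan is to establish both directions by relating $\pE[f^2]$ to the quadratic form $c^\T \calM c$ for an appropriate coefficient vector $c$ indexed by subsets $I \subseteq [n]$ with $|I| \le D/2$.

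For the easier direction ($\pE[f^2] \ge 0 \Rightarrow \calM \succeq 0$), given an arbitrary vector $c = (c_I)_{|I| \le D/2}$, I would define the multilinear polynomial $f = \sum_{|I| \le D/2} c_I v^I$, which has $\deg(f) \le D/2$ and hence $\deg(f^2) \le D$. By the linearity of $\pE$ (\cref{pe:linear}) and the definition of $\calM$,
\[
\pE[f^2] = \sum_{I,J} c_I c_J \pE[v^I v^J] = \sum_{I,J} c_I c_J\, \calM[I,J] = c^\T \calM c,
\]
so the hypothesis $\pE[f^2] \ge 0$ directly gives $c^\T \calM c \ge 0$, and since $c$ is arbitrary, $\calM \succeq 0$.

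For the reverse direction ($\calM \succeq 0 \Rightarrow \pE[f^2] \ge 0$), the issue is that an arbitrary polynomial $f$ with $\deg(f) \le D/2$ need not be multilinear, while $\calM$ is only defined on subsets of $[n]$. The plan is to reduce $f$ modulo the ideal generated by $\{v_i^2 - \tfrac{1}{n}\}_{i \in [n]}$ to obtain a multilinear polynomial $\tilde{f} = \sum_{|I| \le D/2} c_I v^I$ with the same pseudoexpectation of its square. Writing $f - \tilde{f} = \sum_i (v_i^2 - \tfrac{1}{n}) g_i$ for appropriate polynomials $g_i$ of degree at most $D/2 - 2$, I expand
\[
f^2 - \tilde{f}^2 \;=\; (f - \tilde{f})(f + \tilde{f}) \;=\; \sum_i (v_i^2 - \tfrac{1}{n})\, g_i (f + \tilde{f}),
\]
and apply \cref{pe:boolean} to each summand, which is valid since $\deg\!\bigl(g_i (f + \tilde{f})\bigr) \le (D/2 - 2) + D/2 = D-2$. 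This yields $\pE[f^2] = \pE[\tilde{f}^2]$, and then the same expansion as in the first direction gives $\pE[\tilde{f}^2] = c^\T \calM c \ge 0$ by the PSD hypothesis.

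The only subtlety is the degree bookkeeping in the reduction step, but this is essentially a routine check rather than a genuine obstacle; the rest of the argument is just linearity plus the definition of $\calM$.
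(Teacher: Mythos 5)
Your argument is correct. The paper simply asserts this as a standard fact without supplying a proof, so there is nothing to compare against; your writeup fills in the standard details.

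Both directions check out. The forward implication is the immediate identity $\pE[f^2]=c^\T\calM c$ for multilinear $f=\sum_I c_I v^I$. For the converse you correctly identify the one real issue: arbitrary $f$ with $\deg(f^2)\le D$ need not be multilinear, so you must reduce modulo the ideal $(v_i^2-\tfrac1n)_i$ and then use \cref{pe:boolean} to see that the reduction does not change $\pE[f^2]$. The degree bookkeeping is right: the division process produces $g_i$ with $\deg(g_i)\le D/2-2$, so $\deg\bigl(g_i(f+\tilde f)\bigr)\le D-2$, which is exactly the range in which \cref{pe:boolean} is stipulated to hold. Two tiny points worth being aware of, neither a gap: (i) you implicitly use that $\deg(f^2)=2\deg(f)$ over $\R$, so the hypothesis $\deg(f^2)\le D$ is the same as $\deg(f)\le\lfloor D/2\rfloor$, matching the index set of $\calM$; and (ii) you only need \emph{some} multilinear representative $\tilde f$ of the same degree, not a canonical one, which the division process supplies. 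In short, this is the standard proof, carried out correctly.
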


\subsection{Graph matrices}
To study $\calM$, we decompose it using the framework of \textit{graph matrices}. Originally developed in the context of the planted clique problem, graph matrices are random matrices whose entries are symmetric functions of an underlying random object -- in our case, the set of vectors $d_1, \dots, d_m$. We take the general presentation and results from~\cite{AMP20}. For our purposes, the following definitions are sufficient.

The graphs that we study have two types of vertices, circles $\circle{}$ and squares $\square{}$. We let $\calC_m$ be a set of $m$ circles labeled 1 through $m$, which we denote by $\circle{1}, \circle{2}, \dots, \circle{m}$, and let $\calS_n$ be a set of $n$ squares labeled 1 through $n$, which we denote by $\square{1}, \square{2}, \dots, \square{n}$. We will work with bipartite graphs with edges between circles and squares, which have positive integer labels on the edges. When there are no multiedges (the graph is simple), such graphs are in one-to-one correspondence with Fourier characters on the vectors $d_u$. An edge between $\circle{u}$ and $\square{i}$ with label $l$ represents $h_{l}(d_{u,i})$ where $\{h_k\}$ is the Fourier basis (e.g. Hermite polynomials).

\[ \text{simple graph with labeled edges} \qquad \Longleftrightarrow \qquad \displaystyle\prod_{\substack{\circle{u} \in \calC_m,\\ \square{i} \in \calS_n}} h_{l(\circle{u}, \square{i})}(d_{u,i}) \]

An example of a Fourier polynomial as a graph with labeled edges is given in~\cref{fig:fourier_graph}. Unlabeled edges are implicitly labeled 1.
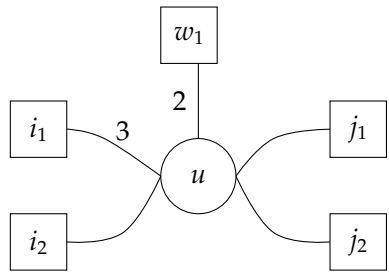
\begin{figure}[h!]
\centering      
\begin{tikzpicture}[scale=0.5,every node/.style={scale=0.5}]
\draw  (-2,3) rectangle node {\huge $i_1$}(-0.5,1.5) node (v5) {};
\draw  (3,1) ellipse (1 and 1) node {\huge $u$};
\draw  (6.5,3) rectangle node (v10) {\huge $j_1$} (8,1.5);
\draw  (-2,0) rectangle node {\huge $i_2$} (-0.5,-1.5);
\draw  (6.5,0) rectangle node {\huge $j_2$} (8,-1.5);
\node (v1) at (-0.5,3) {};
\node (v4) at (-0.5,2.25) {};
\node (v6) at (6.5,2.25) {};
\node (v8) at (-0.5,-0.75) {};
\node (v9) at (6.5,-0.75) {};
\node at (-0.5,0) {};
\node (v2) at (2,1) {};
\node at (2,1.5) {};
\node (v7) at (6.5,1.5) {};
\node (v3) at (4,1) {};
%\draw  plot[smooth, tension=.7] coordinates {(v1) (1,2.5) (v2)};
\draw  plot[smooth, tension=.7] coordinates {(v3)};
\draw  plot[smooth, tension=.7] coordinates {(v3)};
\draw  plot[smooth, tension=.7] coordinates {(v2) (0.5,2) (v4)};
\node at (1,2.2) {\huge $3$};
%\draw  plot[smooth, tension=.7] coordinates {(v2) (0.5,1) (v5)};
\draw  plot[smooth, tension=.7] coordinates {(v3) (5,2) (v6)};
\draw  plot[smooth, tension=.7] coordinates {(v3)};
\draw  plot[smooth, tension=.7] coordinates {(v2) (1,-0.5) (v8)};
\draw  plot[smooth, tension=.7] coordinates {(v3) (5,-0.5) (v9)};
\node at (6.5,3) {};
\draw  plot[smooth, tension=.7] coordinates {(v3)};
\draw  plot[smooth, tension=.7] coordinates {(v3)};
\draw  plot[smooth, tension=.7] coordinates {(v10)};
\node at (6.5,3) {};
\draw  (2,5.5) rectangle node {\huge $w_1$} (3.5,4);
\node (v11) at (2.5,4) {};
\node (v13) at (3,4) {};
\node (v12) at (3,2) {};
%\draw  plot[smooth, tension=.7] coordinates {(v12) (2.5,3) (2.5,4)};
\draw  plot[smooth, tension=.7] coordinates {(v12) (3,3) (3,4)};
\node at (2.5,3) {\huge $2$};
\end{tikzpicture}
\caption{The Fourier polynomial $h_3(d_{u,i_1})h_1(d_{u,i_2})h_2(d_{u,w_1})h_1(d_{u,j_1})h_1(d_{u,j_2})$ represented as a graph.}
\label{fig:fourier_graph}
\end{figure}

Define the degree of a vertex $v$,  denoted $\deg(v)$, to be the sum of the labels incident to $v$, and $\abs{E}$ to be the sum of all labels. For 
intuition it is mostly enough to work with simple graphs, in which case these quantities make sense as the edge multiplicities in an implicit multigraph.

\begin{definition}[Proper]
We say an edge-labeled graph is \textit{proper} if it has no multiedges.
\end{definition}
The definitions allow for ``improper'' edge-labeled multigraphs which simplify multiplying graph matrices (\cref{sec:single-spider} and \cref{sec:exact-constraints}).

\begin{definition}[Matrix indices]
A \textit{matrix index} is a set $A$ of elements from $\calC_m \cup \calS_n$.
\end{definition}
We let $A(\square{i})$ or $A(\circle{u})$ be 0 or 1 to indicate if the vertex is in $A$.

\begin{definition}[Ribbons]\label{def:ribbon}
A \textit{ribbon} is an undirected, edge-labeled graph $R = (V(R), E(R), A_R, B_R)$, where $V(R) \subseteq \calC_m\cup \calS_n$ and $A_R, B_R$ are two matrix indices (possibly not disjoint) with $A_R, B_R \subseteq V(R)$, representing two distinguished sets of vertices. Furthermore, all edges in $E(R)$ go between squares and circles.
\end{definition}
We think of $A_R$ and $B_R$ as being the ``left'' and ``right'' sides of $R$, respectively. We also define the set of ``middle vertices'' $C_R \defeq V(R) \setminus (A_R \cup B_R)$. If $e \not\in E(R)$, then we define its label $l(e) = 0$. We also abuse notation and write $l(\square{i}, \circle{u})$ instead of $l(\{\square{i}, \circle{u}\})$.

Akin to the picture above, each ribbon corresponds to a Fourier polynomial.
This Fourier polynomial lives inside a single entry of the matrix $M_R$.
In the definition below, the $h_k(x)$ are the Fourier basis corresponding to the respective setting. In the Gaussian case, they are the (unnormalized) Hermite polynomials, and in the boolean case, they are just the parity function, represented by
\[h_0(x) = 1, \qquad h_1(x) = x, \qquad h_k(x) = 0 \;\; (k \geq 2) \]

\begin{definition}[Matrix for a ribbon]\label{def:ribbon-matrix}
The matrix $M_R$ has rows and columns indexed by subsets of $\calC_m~\cup~\calS_n$, with a single nonzero entry defined by
\[M_R[I, J] = \left\{\begin{array}{lr}
    \displaystyle\prod_{\substack{e \in E(R), \\ e = \{\square{i}, \circle{u}\}}} h_{l(e)}(d_{u,i}) &  I = A_R, J = B_R\\
    0 & \text{Otherwise}
\end{array}\right. \]
\end{definition}

Next we describe the shape of a ribbon, which is essentially the ribbon when we have forgotten all the vertex labels and retained only the graph structure and the distinguished sets of vertices.
\begin{definition}[Index shapes]
An \textit{index shape} is a set $U$ of formal variables. Furthermore, each variable is labeled as either a ``circle'' or a ``square''.
\end{definition}
We let $U(\square{i})$ and $U(\circle{u})$ be either 0 or 1 for whether $\square{i}$ or $\circle{u}$, respectively, is in $U$.

\begin{definition}[Shapes]\label{def:shape}
A \textit{shape} is an undirected, edge-labeled graph $\alpha = (V(\alpha), E(\alpha), U_\alpha, V_\alpha)$ where $V(\alpha)$ is a set of formal variables, each of which is labeled as either a ``circle'' or a ``square''. $U_\alpha$ and $V_\alpha$ are index shapes (possibly with variables in common) such that $U_\alpha, V_\alpha \subseteq V(\alpha)$. The edge set $E(\alpha)$ must only contain edges between the circle variables and the square variables.
\end{definition}

We'll also use $W_\alpha \defeq V(\alpha) \setminus (U_\alpha \cup V_\alpha)$ to denote the ``middle vertices'' of the shape.

\begin{remk}
	We will abuse notation and use $\square{i}, \square{j}, \circle{u}, \circle{v}, \ldots$ for both the vertices of ribbons and the vertices of shapes. If they are ribbon vertices, then the vertices are elements of $\calC_m\cup\calS_n$ and if they are shape vertices, then they correspond to formal variables with the appropriate type.
\end{remk}

\begin{definition}[Trivial shape]
	Define a shape $\alpha$ to be trivial if $U_\alpha = V_\alpha$, $W_\alpha = \emptyset$ and $E(\alpha) = \emptyset$.
\end{definition}

\begin{definition}[Transpose of a shape]
  The transpose of a shape  $\alpha = (V(\alpha), E(\alpha), U_\alpha, V_\alpha)$ is defined
  to be the shape $\alpha^{\T} = (V(\alpha), E(\alpha), V_\alpha, U_\alpha)$.
\end{definition}

For a shape $\alpha$ and an injective map $\sigma : 
V(\alpha) \to \calC_m \cup \calS_n$, we define the 
realization $\sigma(\alpha)$ as a ribbon in the natural
way, by labeling all the variables using the map 
$\sigma$. We also require $\sigma$ to be 
type-preserving i.e. it takes square variables to $\calS_n$ and circle variables to $\calC_m$. 
The ribbons that result are referred to as \textit{ribbons of shape $\alpha$}; notice that this partitions the set of all ribbons according to their shape\footnote{Partitions up to equality of shapes, where two shapes are equal if there is a type-preserving bijection between their variables that converts one shape to the other. When we operate on sets of shapes below, we implicitly use each distinct shape only once.}\footnote{Note that in our definition two realizations of a shape may give the same ribbon.}.

Finally, given a shape $\alpha$, the graph matrix $M_\alpha$ consists of all Fourier characters for ribbons of shape $\alpha$.
\begin{definition}[Graph matrices]\label{def:graph-matrix}
Given a shape $\alpha = (V(\alpha), E(\alpha), U_\alpha, V_\alpha)$, the \textit{graph matrix} $M_\alpha$ is
\[M_\alpha = \displaystyle\sum_{R \text{ is a ribbon of shape }\alpha} M_R\]
\end{definition}

The moment matrix for PAP will turn out to be defined using graph matrices $M_\alpha$ whose left and right sides only have square vertices, and no circles. However, in the course of the analysis we will factor and multiply graph matrices with circle vertices in the left or right.

\subsection{Norm bounds}
The spectral norm of a graph matrix is determined, up to logarithmic factors, by relatively simple combinatorial properties of the graph. For a subset $S \subseteq \calC_m \cup \calS_n$, we define the weight $w(S)~\defeq~(\#\text{ circles in }S)\cdot \log_n(m)+ (\#\text{ squares in }S)$. Observe that $n^{w(S)} = m^{\# \text{ circles in }S}\cdot n^{\#\text{ squares in }S}$.

\begin{definition}[Minimum vertex separator]
For a shape $\alpha$, a set $S_{\min}$ is a minimum vertex separator if all paths from $U_\alpha$ to $V_\alpha$ pass through $S_{\min}$ and $w(S_{\min})$ is minimized over all such separating sets.
\end{definition}

Let $W_{iso}$ denote the set of isolated vertices in $W_\alpha$. Then essentially the following norm bound holds for all shapes $\alpha$ with high probability (a formal statement can be found in~\cref{app:norm_bounds}):
\[\norm{M_\alpha} \leq  \widetilde\bigoh\left(n^{\frac{w(V(\alpha)) - w(S_{\min}) + w(W_{iso})}{2}}\right)\]

In fact, the only probabilistic property required of the inputs $d_1, \dots, d_m$ by our proof is that the above norm bounds hold for all shapes that arise in the analysis.
We henceforth assume that the norm bounds in~\cref{lem:gaussian-norm-bounds} (for the Gaussian case) and~\cref{lem:norm-bounds} (for the boolean case) hold.

\section{Proof Strategy}\label{sec:strategy}

Here we explain in more detail the ideas for the Planted Affine Planes
lower bound. Towards the proof of~\cref{theo:sos-bounds}, fix a
constant $\eps > 0$ and a random instance $d_1, \dots, d_m$ with
$n \leq m \leq n^{3/2-\eps}$. We will construct a pseudoexpectation operator
and show that it is PSD up to degree $D = 2\cdot n^\delta$
with high probability.

We start by 
pseudocalibrating to obtain a pseudoexpectation operator $\pE$. The
operator $\pE$ will exactly satisfy the ``booleanity" constraints
``$v_i^2 = \frac{1}{n}$" though it may not exactly satisfy the
constraints ``$\ip{v}{d_u}^2 = 1$" due to truncation error in the
pseudocalibration. Taking the truncation parameter $n^{\tau}$ to be larger than the degree $D$ of the SoS solution, i.e., $\delta \ll \tau$, the truncation error is small enough that we can
round $\pE$ to a nearby $\pE'$ that exactly satisfies the
constraints. This is formally accomplished by viewing
$\pE \in \RR^{\binom{[n]}{\leq D}}$ as a vector and expressing the
constraints as a matrix $Q$ such that $\pE$ satisfies the constraints
iff it lies in the null space of $Q$. The choice of $\pE'$ is then the
projection of $\pE$ to $\nullspace(Q)$. The end result is that we
construct a moment matrix $\calM_{fix} = \calM + \calE$ that exactly
satisfies the constraints such that $\norm{\calE}$ is tiny. This step is done in~\cref{sec:exact-constraints}.

After performing pseudocalibration, in both settings, we will have
essentially the graph matrix decomposition
\[
\calM = \sum_{\text{shapes }\alpha} \lambda_\alpha M_\alpha = \displaystyle\sum_{\substack{\text{shapes }\alpha:\\ \deg(\square{i}) + U(\square{i}) + V(\square{i}) \text{ even},\\ \deg(\circle{u})\text{ even}}} \frac{1}{n^{\frac{\abs{U_\alpha} + \abs{V_\alpha}}{2}}}\cdot \left(\prod_{\circle{u} \in V(\alpha)} h_{\deg(\circle{u})}(1)\right) \cdot \frac{M_\alpha}{n^{\abs{E(\alpha)}/2}}
\]
Here $h_k(1)$ is in both settings the $k$-th Hermite polynomial, evaluated on 1.

In this decomposition of $\calM$, the trivial shapes will be the
dominant terms which we will use to bound the other terms. Recall that
a shape $\alpha = (V(\alpha), E(\alpha), U_\alpha, V_\alpha)$ is
trivial if $U_\alpha = V_\alpha$, $W_\alpha = \emptyset$ and
$E(\alpha) = \emptyset$. These shapes contribute scaled identity
matrices on different blocks of the main diagonal of $\calM$, with
trivial shape $\alpha$ contributing an identity matrix with
coefficient $n^{-\abs{U_\alpha}}$. Two trivial shapes are illustrated
in~\cref{fig:trivial_shapes}.

\begin{figure}[h!]
  \centering      
  \begin{tikzpicture}[scale=0.5,every node/.style={scale=0.5}]
   \draw  (-1,1) rectangle node {\huge $u_1$} (0.5,-0.5);
   \draw  (-1.5,1.5) rectangle (1,-1);
   \node at (-0.25,-1.75) {\huge $U_{\alpha} \cap V_{\alpha}$};
   \node at (-2.5,0) {\huge $\frac{1}{n}$};
   \draw  (4.5,2.5) rectangle (7,-2);
   \draw  (5,2) rectangle node {\huge $u_1$} (6.5,0.5);
   \draw  (5,0) rectangle node {\huge $u_2$} (6.5,-1.5);
   \node at (3.5,0) {\huge $\frac{1}{n^2}$};
   \node at (5.75,-2.75) {\huge $U_{\alpha} \cap V_{\alpha}$};
  \end{tikzpicture}
  \caption{Two examples of trivial shapes.}
  \label{fig:trivial_shapes}
\end{figure}
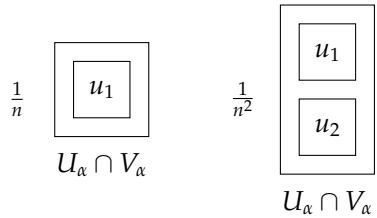

Let $\calM_{\text{triv}}$ be this diagonal matrix of trivial shapes in
the above decomposition of $\calM$. To prove that $\calM \psdgeq 0$,
we attempt the simple strategy of showing that the norm of all other
terms can be ``charged'' against this diagonal matrix
$\calM_{\text{triv}}$. For several shapes this strategy is indeed
viable. To illustrate, let's consider one such shape $\alpha$ depicted
in~\cref{fig:non_spider}.

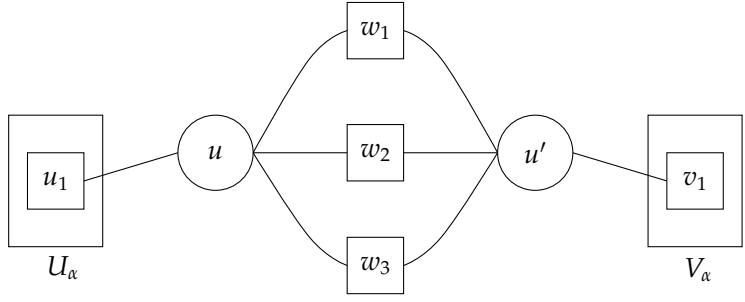
\begin{figure}[h!]
\centering      
\begin{tikzpicture}[scale=0.5,every node/.style={scale=0.5}]
  \draw  (-6.5,2) rectangle node {\huge $u_1$} (-5,0.5);
  \draw  (2,6) rectangle node {\huge $w_1$} (3.5,4.5);
  \draw  (10.5,2) rectangle node {\huge $v_1$} (12,0.5);
  \draw  (2,-0.25) rectangle node {\huge $w_3$} (3.5,-1.75);
  \draw  (2,1.25) rectangle node {\huge $w_2$} (3.5,2.75);
  \draw  (-1.5,2) ellipse (1 and 1) node {\huge $u$};
  \draw  (7,2) ellipse (1 and 1) node {\huge $u'$};
  \node (v1) at (2,2) {};
  \node (v5) at (2,5.25) {};
  \node (v3) at (3.5,5.25) {};
  \node (v6) at (2,-1) {};
  \node (v2) at (-0.5,2) {};
  \node (v4) at (-0.5,2) {};
  \draw  plot[smooth, tension=.7] coordinates {(v1)};
  \draw  plot[smooth, tension=.7] coordinates {(v1) (1,2) (v2)};
  \draw  plot[smooth, tension=.7] coordinates {(v4)};
  \draw  plot[smooth, tension=.7] coordinates {(v4) (1,4.5) (v5)};
  \draw  (-7,3) rectangle (-4.5,-0.5);
  \node at (-5.5549,-1.061) {\huge $U_{\alpha}$};
  \draw  (10,3) rectangle (12.5,-0.5);
  \node at (11.3364,-1.1181) {\huge $V_{\alpha}$};
  \node at (2,5.25) {};
  \node at (2,2) {};
  \node (v11) at (3.5,2) {};
  \node (v13) at (3.5,-1) {};
  \node (v8) at (-2.5,2) {};
  \node (v12) at (6,2) {};
  \node (v10) at (8,2) {};
  \node (v7) at (-5,1.25) {};
  \node (v9) at (10.5,1.25) {};
  \draw (v7);
  \draw  plot[smooth, tension=.7] coordinates {(v2)};
  \draw  plot[smooth, tension=.7] coordinates {(-0.5,2) (1,-0.25) (v6)};
  \draw  plot[smooth, tension=.7] coordinates {(v3) (4.5,4.5) (6,2)};
  \draw  plot[smooth, tension=.7] coordinates {(v10) (v9)};
  \draw  plot[smooth, tension=.7] coordinates {(v12) (4.5,2) (v11)};
  \draw  plot[smooth, tension=.7] coordinates {(v12) (4.5,-0.25) (v13)};
  \draw  plot[smooth, tension=.7] coordinates {(v7) (v8)};
  \end{tikzpicture}
  \caption{Picture of basic non-spider shape $\alpha$.}
  \label{fig:non_spider}
\end{figure}

This graph matrix has $\abs{\lambda_\alpha}
= \Theta(\frac{1}{n^5})$. Using the graph matrix norm bounds, with
high probability the norm of this graph matrix is $\tilde{O}({n^2}m)$:
there are four square vertices and two circle vertices which are not
in the minimum vertex separator. Thus, for this shape $\alpha$, with
high probability $\abs{\lambda_{\alpha}}\norm{M_{\alpha}}$ is
$\tilde{O}\left(\frac{m}{n^3}\right)$ and thus
$\lambda_{\alpha}M_{\alpha} \preceq \frac{1}{n}Id$ (which is the
multiple of the identity appearing in the corresponding block of
$\calM_{\text{triv}}$).

Unfortunately, as pointed out in the introduction, some shapes
$\alpha$ that appear in the decomposition have $\norm{\lambda_\alpha
M_\alpha}$ too large to be charged against
$\calM_{\textup{triv}}$. These are shapes with a certain substructure
(actually the same structure that appears in the matrix $Q$ used to
project the pseudoexpectation operator!) whose norms cannot be handled
by the preceding argument, and which we denote \textit{spiders}.  The
following graph depicts one such \textit{spider} shape (and also
motivates this terminology):
\begin{figure}[h!]
  \centering      
  \begin{tikzpicture}[scale=0.5,every node/.style={scale=0.5}]
    \draw  (-4,4) rectangle node {\huge $u_1$} (-2.5,2.5);
    \draw  (-4,-0.5) rectangle node {\huge $u_2$} (-2.5,-2);
    \draw  (5.5,4) rectangle node {\huge $v_1$} (7,2.5);
    \draw  (5.5,-0.5) rectangle node {\huge $v_2$} (7,-2);
   \draw  (1.5,1) ellipse (1 and 1) node {\huge $u$};
   \node (v1) at (-2.5,3.25) {};
   \node (v5) at (5.5,3.25) {};
  \node (v3) at (-2.5,-1.25) {};
  \node (v6) at (5.5,-1.25) {};
  \node (v2) at (0.5,1) {};
  \node (v4) at (2.5,1) {};
  \draw  plot[smooth, tension=.7] coordinates {(v1)};
  \draw  plot[smooth, tension=.7] coordinates {(v1) (-0.5,3) (v2)};
  \draw  plot[smooth, tension=.7] coordinates {(v3) (-0.5,-1) (v2)};
  \draw  plot[smooth, tension=.7] coordinates {(v4)};
  \draw  plot[smooth, tension=.7] coordinates {(v4) (3.5,3) (v5)};
  \draw  plot[smooth, tension=.7] coordinates {(v4) (3.5,-1) (v6)};
  \draw  (-4.5,5) rectangle (-2,-3.3727);
  \node at (-3.0549,-3.9337) {\huge $U_{\alpha}$};
  \draw  (4.8834,5.3694) rectangle (7.5016,-3.1943);
  \node at (6.338,-3.8124) {\huge $V_{\alpha}$};
\end{tikzpicture}
  \caption{Picture of basic spider shape $\alpha$.}
\end{figure}
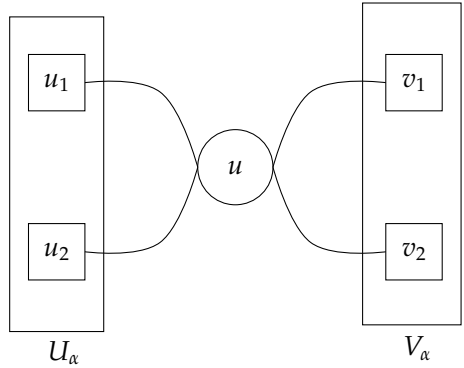

The norm $\norm{\lambda_\alpha M_\alpha}$ of this graph is
$\widetilde\bigomega(\frac{1}{n^{2}})$, as can be easily estimated through the
norm bounds (the coefficient is $\lambda_\alpha = \frac{-2}{n^4}$, the
minimum vertex separator is $\circle{u}$, and there are no isolated
vertices). This is too large to bound against $\frac{1}{n^2}Id$, which is the coefficient of $M_\text{triv}$ on this spider's block.

To skirt this and other spiders, we restrict ourselves to
vectors $x \perp \nullspace(M)$, and observe that this spider $\alpha$ satisfies $x^\T M_{\alpha} \approx 0$. To be more precise, consider the following argument. Consider the two shapes in~\cref{fig:betas}, $\beta_1$ and $\beta_2$ (take note of the label 2 on the edge in $\beta_2$).

\begin{figure}[h!]
\centering      
\begin{tikzpicture}[scale=0.5,every node/.style={scale=0.5}]
\draw  (-4,4) rectangle node {\huge $u_1$} (-2.5,2.5);
\draw  (-4,-0.5) rectangle node {\huge $u_2$} (-2.5,-2);
\draw  (1.5,1) ellipse (1 and 1) node {\huge $u$};   
\draw  (12,1) ellipse (1 and 1) node {\huge $u$};      
\node (v1) at (-2.5,3.25) {};
\node (v3) at (-2.5,-1.25) {};
\node (v2) at (0.5,1) {};
\node (v4) at (11,1) {};
\draw  plot[smooth, tension=.7] coordinates {(v1)};
\draw  plot[smooth, tension=.7] coordinates {(v1) (-0.5,3) (v2)};
\draw  plot[smooth, tension=.7] coordinates {(v3) (-0.5,-1) (v2)};
\draw  plot[smooth, tension=.7] coordinates {(v4)};
\draw  (-4.5,5) rectangle (-2,-3.3727);
\node at (-3.0549,-3.9337) {\huge $U_{\beta_1}$};
\draw  (0,3.5) rectangle (3,-1.5);
\node at (1.5,-2.5) {\huge $V_{\beta_1}$};
\draw  (10.5,3) rectangle  (13.5,-1);
\node at (12,-1.5) {\huge $V_{\beta_2}$};
\draw  (6,2.5) rectangle (7.5,-1.5);
\draw  (9.5,1.5) rectangle node {\huge $w$} (8,0);
\draw  plot[smooth, tension=.7] coordinates {(v4)};
\draw  plot[smooth, tension=.7] coordinates {(v4)};
\node (v5) at (9.5,1) {};
\draw  plot[smooth, tension=.7] coordinates {(v4) (v5)};
\node at (7,-2.1) {\huge $U_{\beta_2} = \emptyset$};
\node at (10.2,1.3) {\Large $2$};
\end{tikzpicture}
\caption{Picture of shapes $\beta_1$ and $\beta_2$.}
\label{fig:betas}
\end{figure}
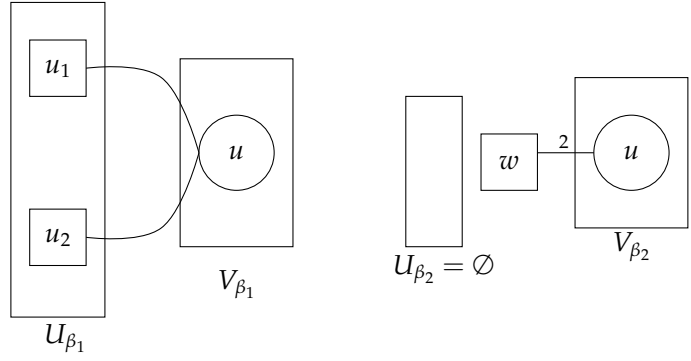

We claim that every column of the matrix $2M_{\beta_1} + \frac{1}{n}M_{\beta_2}$
is in the null space of $\calM$. There are $m$ nonzero columns indexed
by assignments to $V$, which can be a single circle
$\circle{1}, \circle{2}, \dots, \circle{m}$. The nonzero rows are
$\emptyset$ in $\beta_2$ and $\{\square{i}, \square{j}\}$ for $i \neq j$ in $\beta_1$. Fixing $I \subseteq [n]$, entry
$(I, \circle{u})$ of the product matrix $\calM(2M_{\beta_1} + \frac{1}{n}M_{\beta_2})$ is
\begin{align*}
2& \displaystyle\sum_{i < j}\pE [v^I v_i v_j] \cdot d_{ui} d_{uj} + \frac{1}{n}\pE[v^I] \cdot \sum_{i}(d_{ui}^2 - 1)\\
&= 2\displaystyle\sum_{i < j}\pE [v^I v_i v_j] \cdot d_{ui} d_{uj} + \pE[v^Iv_i^2] \cdot \sum_{i}d_{ui}^2 - \pE[v^I] && (\pE \text{ satisfies ``}v_i^2 = \frac{1}{n}")\\
&= \sum_{i,j} \pE[v^I v_i v_j] d_{ui}d_{uj} - \pE[v^I] \\
&= \pE[v^I(\ip{v}{d_u}^2 - 1)]\\
&= 0 && (\pE \text{ satisfies ``}\ip{v}{d_u}^2 = 1")
\end{align*}
In words, the constraint ``$\ip{v}{d_u}^2 = 1 $'' creates a shape
$2\beta_1 + \frac{1}{n}\beta_2$ that lies in the null space of the moment
matrix. On the other hand, we can approximately factor the spider
$\alpha$ across its central vertex, and when we do so, the shape
$\beta_1$ appears on the left side.
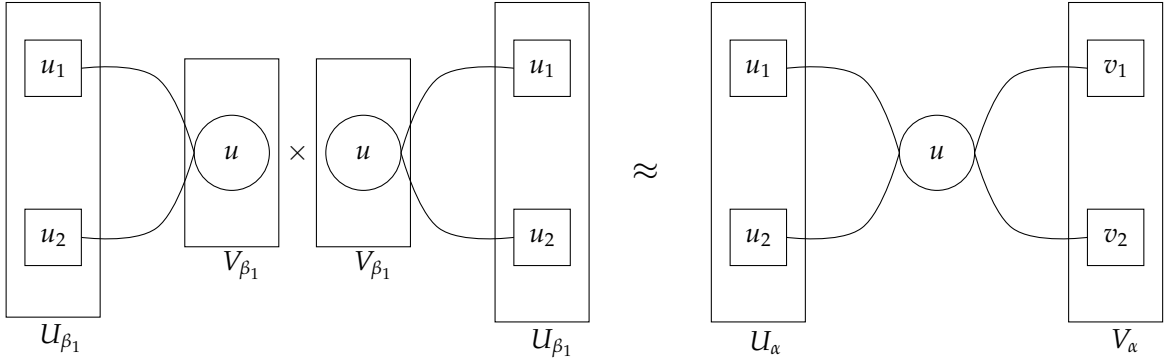
\begin{figure}[h!]
\centering      
\begin{tikzpicture}[scale=0.5,every node/.style={scale=0.5}]
\draw  (4.25,-5.5) rectangle node {\huge $u_1$} (5.75,-7);
\draw  (4.25,-10) rectangle node {\huge $u_2$} (5.75,-11.5);
\draw  (9.75,-8.5) ellipse (1 and 1) node {\huge $u$};
\draw  (13.25,-8.5) ellipse (1 and 1) node {\huge $u$};
\node (v1) at (5.75,-6.25) {};
\node (v3) at (5.75,-10.75) {};
\node (v2) at (8.75,-8.5) {};
\node (v4) at (12.25,-8.5) {};
\draw  plot[smooth, tension=.7] coordinates {(v1)};
\draw  plot[smooth, tension=.7] coordinates {(v1) (7.75,-6.5) (v2)};
\draw  plot[smooth, tension=.7] coordinates {(v3) (7.75,-10.5) (v2)};
\draw  plot[smooth, tension=.7] coordinates {(v4)};
\draw  (3.75,-4.5) rectangle (6.25,-12.8727);
\node at (5.1951,-13.4337) {\huge $U_{\beta_1}$};
\draw  (16.75,-4.5) rectangle (19.25,-13);
\draw  (17.25,-5.5) rectangle node {\huge $u_1$} (18.75,-7);
\draw  (17.25,-10) rectangle node {\huge $u_2$} (18.75,-11.5);
\node (v5) at (14.25,-8.5) {};
\node (v6) at (17.25,-6.25) {};
\node (v7) at (17.25,-10.75) {};
\draw  plot[smooth, tension=.7] coordinates {(v5)};
\draw  plot[smooth, tension=.7] coordinates {(v5) (15.25,-6.5) (v6)};
\draw  plot[smooth, tension=.7] coordinates {(v5) (15.25,-10.5) (v7)};
\node at (18.257,-13.5621) {\huge $U_{\beta_1}$};
\node at (11.5,-8.5) {\huge $\times$};
\node at (20.75,-9) {\Huge $\approx$};
\draw  (23,-5.5) rectangle node {\huge $u_1$} (24.5,-7);
\draw  (23,-10) rectangle node {\huge $u_2$} (24.5,-11.5);
\draw  (32.5,-5.5) rectangle node {\huge $v_1$} (34,-7);
\draw  (32.5,-10) rectangle node {\huge $v_2$} (34,-11.5);
\draw  (28.5,-8.5) ellipse (1 and 1) node {\huge $u$};
\node (v1) at (24.5,-6.25) {};
\node (v5) at (32.5,-6.25) {};
\node (v3) at (24.5,-10.75) {};
\node (v6) at (32.5,-10.75) {};
\node (v2) at (27.5,-8.5) {};
\node (v4) at (29.5,-8.5) {};
\draw  plot[smooth, tension=.7] coordinates {(v1)};
\draw  plot[smooth, tension=.7] coordinates {(v1) (26.5,-6.5) (v2)};
\draw  plot[smooth, tension=.7] coordinates {(v3) (26.5,-10.5) (v2)};
\draw  plot[smooth, tension=.7] coordinates {(v4)};
\draw  plot[smooth, tension=.7] coordinates {(v4) (30.5,-6.5) (v5)};
\draw  plot[smooth, tension=.7] coordinates {(v4) (30.5,-10.5) (v6)};
\draw  (22.5,-4.5) rectangle (25,-13);
\node at (23.9451,-13.561) {\huge $U_{\alpha}$};
\draw  (32,-4.5) rectangle (34.5,-13);
\node at (33.5,-13.5) {\huge $V_{\alpha}$};
\draw  (8.5,-6) rectangle (11,-11);
\node at (10,-11.5) {\huge $V_{\beta_1}$};
\draw  (12,-6) rectangle (14.5,-11);
\node at (13.5,-11.5) {\huge $V_{\beta_1}$};
\end{tikzpicture}
\caption{Approximation $\beta_1 \times \beta_1^\T \approx \alpha$.}
\end{figure}

Therefore $M_\alpha \approx M_{\beta_1} M_{\beta_1}^\T \approx
(M_{\beta_1} + \frac{1}{2n}M_{\beta_2}) M_{\beta_1}^\T$. The columns of
the matrix $M_{\beta_1} + \frac{1}{2n}M_{\beta_2}$ are in the null
space of $\calM$, so for $x \perp \nullspace(\calM)$ we have $x^\T
M_\alpha \approx 0$.

More formally, we are able to find coefficients $c_\beta$ so that all
columns of the matrix
\[A = M_\alpha + \displaystyle\sum_{\beta} c_\beta M_\beta \]
are in $\nullspace(\calM)$. We then observe the following fact:
\begin{fact}\label{fact:null-space}
	If $x \perp \nullspace(\calM)$ and $\calM A = 0$, then $x^\T(AB + \calM)x = x^\T(B^\T A^\T + \calM)x= x^\T \calM x$.
\end{fact}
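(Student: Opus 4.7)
The statement is a short linear algebra observation, and the plan is to reduce everything to the identity $A^{\T} x = 0$. The key point is that $\calM$, being a pseudo-moment matrix, is symmetric: $\calM[I,J] = \pE[v^{I} v^{J}] = \calM[J,I]$. Consequently its null space and its column-space orthogonal complement coincide, i.e. $\nullspace(\calM) = \operatorname{range}(\calM)^{\perp}$.

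From $\calM A = 0$ I read off that every column of $A$ lies in $\nullspace(\calM)$. Since by hypothesis $x \perp \nullspace(\calM)$, the vector $x$ is orthogonal to each column of $A$, which is exactly the statement $A^{\T} x = 0$. This is the single substantive step; everything else is bookkeeping.

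Given $A^{\T} x = 0$, I expand the two claimed equalities directly. For the first,
\[
x^{\T}(AB + \calM) x \;=\; x^{\T} A B x + x^{\T}\calM x \;=\; (A^{\T} x)^{\T} (B x) + x^{\T}\calM x \;=\; 0 + x^{\T}\calM x.
\]
For the second,
\[
x^{\T}(B^{\T} A^{\T} + \calM) x \;=\; x^{\T} B^{\T} (A^{\T} x) + x^{\T}\calM x \;=\; 0 + x^{\T}\calM x.
\]
Both sides therefore equal $x^{\T}\calM x$, as claimed.

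There is no real obstacle: symmetry of $\calM$ is immediate from the definition of the moment matrix, and the rest is one line of matrix manipulation. I would write the proof as essentially two sentences in the paper, since the content is just to record the fact in a convenient form for later use (where $A$ will be the spider-killing correction and $AB + \calM$ will be the corrected moment matrix whose PSDness on $\nullspace(\calM)^{\perp}$ one wants to certify).
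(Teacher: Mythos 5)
Your proof is correct, and it is the natural (essentially the only) argument; the paper states this as a bare \emph{Fact} with no written proof, so there is nothing to compare against beyond the obvious intended one-liner, which you have reproduced. One small remark: the symmetry of $\calM$ and the identification $\nullspace(\calM)=\operatorname{range}(\calM)^{\perp}$ that you invoke at the start are not actually needed. The only thing used is the definition of $\nullspace(\calM)$ as the kernel $\{y : \calM y = 0\}$: from $\calM A = 0$ each column of $A$ lies in this kernel, and $x \perp \nullspace(\calM)$ then gives $A^{\T}x = 0$ directly, after which your two displayed computations finish the job. So the proof goes through verbatim even without assuming $\calM$ is symmetric, and you could drop the first sentence of your "key point."
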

Using the fact, we can freely add multiples of $A$ to $\calM$ without
changing the action of $\calM$ on $\nullspace(\calM)^\perp$. A
judicious choice is to subtract $\lambda_\alpha A$ which will ``kill''
the spider from $\calM$. Doing this for all spiders, we produce a
matrix whose action is equivalent on $\nullspace(\calM)^\perp$, and
which has high minimum eigenvalue by virtue of the fact that it has no
spiders, showing that $\calM$ is PSD.

The catch is two-fold: first, the coefficients $c_\beta$ may
contribute to the coefficients on the non-spiders; second, the further
intersection terms $M_\beta$ may themselves be spiders ( though they
will always have fewer square vertices than $\alpha$). Thus we must
recursively kill these spiders, until there are no spiders remaining
in the decomposition of $\calM$. The resulting matrix has some new
coefficients on the non-spiders
\[ \calM' = \displaystyle\sum_{\text{non-spiders }\beta} \lambda_\beta' M_\beta. \]
We must bound the accumulation on the coefficients
$\lambda_\beta'$. We do this by considering the \textit{web} of
spiders and non-spiders created by each spider and using bounds on the $c_\beta$
and $\lambda_\alpha$ to argue that the contributions do not blow up, via an interesting charging scheme that exploits the structure of these graphs.

\section{Pseudocalibration}\label{sec:pseudo_calib}

To be able to apply the pseudocalibration technique of~\cite{BHKKMP16}
to an average-case feasibility problem, in our case the PAP
problem, one needs to design a planted distribution supported on
feasible instances. This is done
in~\cref{subsec:pap:dist}. In~\cref{subsec:pseudo_calib_technique}, we
recall the precise details in applying pseudocalibration. Then we pseudocalibrate in the Gaussian~(\cref{subsec:calib:gauss}) and
boolean~(\cref{subsec:calib:bool}) settings.

\subsection{PAP planted distribution}\label{subsec:pap:dist}

We formally define the random and the planted distributions for the
Planted Affine Planes problem in the Gaussian and boolean
settings. These two (families of) distributions are required by the
pseudocalibration machinery in order to define a candidate
pseudoexpectation operator $\pE$. For the Gaussian setting, we have
the following distributions.

\begin{definition}[Gaussian PAP distributions]\label{def:prob:pap:gauss:dist}
  The Gaussian PAP distributions are as follows.
  \begin{enumerate}
      \item (Random distribution) $m$ i.i.d. vectors $d_u \sim \gauss{0}{I}$.
      \item (Planted distribution) A vector $v$ is sampled uniformly from $\left\{\pm \frac{1}{\sqrt{n}}\right\}^n$, as well as signs $b_u \unif \{\pm 1\}$,
             and $m$ vectors $d_u$ are drawn from $\mathcal{N}(0, I)$ conditioned on $\ip{d_u}{v} = b_u$.
  \end{enumerate}
\end{definition}

For the boolean setting, we have the following distributions.
\begin{definition}[Boolean PAP distributions]\label{def:prob:pap:bool:dist}
  The boolean PAP distributions are as follows
  \begin{enumerate}
      \item (Random distribution) $m$ i.i.d. vectors $d_u \unif \{-1,+1\}^n$.
      \item (Planted distribution) A vector $v$ is sampled uniformly from $\left\{\pm \frac{1}{\sqrt{n}}\right\}^n$, as well as signs $b_u \unif \{\pm 1\}$, and $m$ vectors $d_u$ are drawn from $\left\{\pm 1\right\}^n$ conditioned on $\ip{d_u}{v} = b_u$.
  \end{enumerate}
\end{definition}

\subsection{Pseudocalibration technique}\label{subsec:pseudo_calib_technique}

We will use the shorthand $\E_{\text{ra}}$ and $\E_{\text{pl}}$ for
the expectation under the random and planted distributions.
Pseudocalibration gives a method for constructing a candidate
pseudoexpectation operator $\pE$.  
The idea behind pseudocalibration is that
$\E_{\text{ra}} \pE f(v)$ should match with $\E_{\text{pl}} f(v)$ for
every low-degree test of the data $t = t(d) = t(d_1, \dots, d_m)$,
\[\E_{\text{ra}} t(d) \pE f(v) = \E_{\text{pl}} t(d) f(v) .\]
When pseudocalibrating, one can freely choose the ``outer'' basis in
which to express the polynomial $f(v)$, as well as the ``inner'' basis
of low-degree tests which should agree with the planted
distribution. Though we attempted to use alternate bases to simplify
the analysis, ultimately we opted for the standard choice of bases: a
Fourier basis for the inner basis in each setting (Hermite functions
for the Gaussian setting, parity functions for the boolean setting),
and the coordinate basis $v^I$ for the outer basis.

When the inner basis is orthonormal under the random distribution (as a Fourier basis is), the pseudocalibration condition
gives a formula for the coefficients of $\pE f(v)$ in the orthonormal basis (though it only gives the coefficients of the low-degree functions $t(d)$). Concretely, letting the inner basis be indexed by $\alpha \in \calF$, as a function of $d$ the pseudocalibration condition enforces
\[ \pE f(v) = \displaystyle\sum_{\substack{\alpha \in \calF: \\ \abs{\alpha} \leq n^\tau}} \left( \E_{\text{pl}} t_\alpha(d) f(v) \right)t_\alpha (d).\]
Here we use ``$\abs{\alpha} \leq n^\tau$'' to describe the set of low-degree tests. The pseudocalibration condition does not prescribe any coefficients for functions $t_\alpha(d)$ with $\abs{\alpha} > n^\tau$ and an economical choice is to set these coefficients to zero. 

When pseudocalibrating, our pseudoexpectation operator is guaranteed to be linear, as the expression above is linear in $f$. It is guaranteed to satisfy all constraints of the form ``$f(v) =0$''. It will approximately satisfy constraints of the form ``$f(v, d) = 0$'', though only up to truncation error. 

\begin{fact}[Proof in \cref{lem:boolean-approximate-constraints}]\label{lem:pE-constraints}
  If $p(v)$ is a polynomial which is uniformly zero on the planted
  distribution, then $\pE[p]$ is the zero function. If $p(v,d)$ is a polynomial which is uniformly zero on the planted distribution, then the only nonzero Fourier coefficients of $\pE[p]$ are those with size between $n^\tau \pm \deg_d(p)$.
\end{fact}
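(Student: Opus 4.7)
My plan is to introduce the untruncated pseudocalibrated operator $\widetilde{\pE}$, defined by the same Fourier expression as $\pE$ but without the $|\alpha| \le n^\tau$ cutoff:
\[
\widetilde{\pE}[v^I] \defeq \sum_{\alpha \in \calF} \E_{\text{pl}}[t_\alpha(d)\,v^I]\, t_\alpha(d),
\]
and extended to $d$-dependent polynomials by $\widetilde{\pE}\bigl[\sum_I q_I(d)\, v^I\bigr] \defeq \sum_I q_I(d)\,\widetilde{\pE}[v^I]$. The first step is to observe the duality identity, which follows from orthonormality of $\{t_\alpha\}$ under the random distribution: for any polynomials $h(d)$ and $q(v,d)$,
\[
\E_{\text{ra}}\bigl[h(d)\cdot \widetilde{\pE}[q(v,d)]\bigr] \;=\; \E_{\text{pl}}\bigl[h(d)\cdot q(v,d)\bigr].
\]
The truncated operator $\pE$ agrees with $\widetilde{\pE}$ after zeroing every Fourier coefficient at levels $|\alpha| > n^\tau$, so the ``defect'' $\pE[v^I] - \widetilde{\pE}[v^I]$ is a polynomial in $d$ whose Fourier support lies strictly above level $n^\tau$.

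For the first statement, I will plug $p(v)$ directly into the pseudocalibration formula
\[
\pE[p](d) \;=\; \sum_{\alpha \in \calF,\ |\alpha| \le n^\tau} \E_{\text{pl}}[t_\alpha(d)\cdot p(v)]\, t_\alpha(d)
\]
and use that $t_\alpha(d)\cdot p(v) \equiv 0$ pointwise on $\{\pm \tfrac{1}{\sqrt{n}}\}^n$, so every inner expectation vanishes and $\pE[p]$ is the zero function.

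For the second statement, I will take $h$ arbitrary in the duality identity to get $\E_{\text{ra}}[h(d)\widetilde{\pE}[p(v,d)]] = \E_{\text{pl}}[h(d)p(v,d)] = 0$ for every polynomial $h$, which forces $\widetilde{\pE}[p(v,d)] = 0$ as a polynomial in $d$. Consequently,
\[
\pE[p(v,d)] \;=\; \pE[p(v,d)] - \widetilde{\pE}[p(v,d)] \;=\; \sum_I p_I(d)\bigl(\pE[v^I] - \widetilde{\pE}[v^I]\bigr).
\]
Each factor $\pE[v^I] - \widetilde{\pE}[v^I]$ is Fourier-supported on $\{|\alpha| > n^\tau\}$, and multiplying by $p_I(d)$ of $d$-degree at most $\deg_d(p)$ can shift the Fourier support of any character by at most $\deg_d(p)$, via the three-term recurrence $x\, h_n(x) = h_{n+1}(x) + n\, h_{n-1}(x)$ in the Hermite setting or $\chi_S \chi_T = \chi_{S \triangle T}$ in the parity setting. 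This yields the lower bound $|\alpha| > n^\tau - \deg_d(p)$. The symmetric upper bound $|\alpha| \le n^\tau + \deg_d(p)$ follows immediately from $\pE[p(v,d)] = \sum_I p_I(d)\pE[v^I]$ together with the fact that $\pE[v^I]$ itself is supported on $|\alpha| \le n^\tau$.

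The crux is the identification $\widetilde{\pE}[p(v,d)] = 0$ for $p$ vanishing on the planted distribution; once this is in hand, everything else is routine degree bookkeeping. The only probabilistic input to the proof is orthonormality of $\{t_\alpha\}$ under the random distribution, which is why the same argument will apply identically in both the Gaussian and boolean settings.
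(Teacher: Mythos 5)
Your proof is correct and takes a cleaner, more conceptual route than the paper's. The paper's proof (\cref{lem:boolean-approximate-constraints}) expands $p$ in the Fourier basis over $d$, plugs into the pseudocalibration formula, and verifies cancellation of each Fourier coefficient at levels $\abs{\alpha} \le n^\tau - \deg_d(p)$ by an explicit computation that, in the Gaussian case, requires the Hermite linearization coefficients $l_{\alpha,\beta,\gamma}$ and a product identity. You instead introduce the untruncated operator $\widetilde{\pE}$, deduce $\widetilde{\pE}[p]=0$ from the duality identity, and read off the conclusion from the decomposition $\pE[p] = \sum_I p_I(d)\bigl(\pE[v^I] - \widetilde{\pE}[v^I]\bigr)$ together with the observation that the truncation defect $\pE[v^I]-\widetilde{\pE}[v^I]$ is supported above level $n^\tau$. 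The term-by-term cancellation the paper verifies is exactly your observation $\widetilde{\pE}[p]=0$ in disguise, so the underlying idea coincides, but your packaging treats both settings uniformly and makes the reason for the support localization transparent. Two points worth tightening: (1) in the Gaussian setting $\widetilde{\pE}[v^I]$ is an infinite Hermite series, so before concluding $\widetilde{\pE}[p]=0$ (a.e.) from $\E_{\text{ra}}[h\cdot\widetilde{\pE}[p]]=0$ for all polynomial $h$ you should justify that $\widetilde{\pE}[p]$ is an $L^2$ function and invoke density of polynomials in $L^2(\gamma)$; one clean way is to note that $\widetilde{\pE}[v^I]$ equals the bounded conditional expectation $\E_{\text{pl}}[v^I\mid d]$ times the (likelihood-ratio) Radon--Nikodym derivative of the planted marginal on $d$ against the random one, so that $\widetilde{\pE}[p]$ is this ratio times $\E_{\text{pl}}[p\mid d]=0$. (2) The paper's lemma additionally proves a quantitative magnitude bound on the surviving Fourier coefficients (the $M\cdot L\cdot 2^D e^{mn}\cdot\max\abs{\mu_{I,\alpha}}$ bound), which is what is actually consumed in \cref{lem:approximate-constraints}; your argument gives only the support localization, which suffices for the Fact as stated but not as a drop-in replacement for the paper's full lemma.
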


Truncation
introduces a tiny error in the constraints, which we are able to handle in a
uniform way in~\cref{sec:exact-constraints}.

For the pseudocalibration we truncate to only Fourier coefficients of
size at most $n^\tau$. The relationship between the parameters is $\delta \le c\tau \le c'\eps$ where $c' < c < 1$ are absolute constants. We will assume that they are sufficiently small for all our proofs to go through.

Pseudocalibration also by default does not enforce the condition $\pE[1] = 1$. However, this is easily fixed by dividing the operator by $\pE[1]$. As will be pointed out in~\cref{rmk:pe-one}, w.h.p. in the unnormalized pseudocalibration, $\pE[1] = 1 + \littleoh_n(1)$ and so the error introduced does not impact the statement of any lemmas.

\subsection{Gaussian setting pseudocalibration}\label{subsec:calib:gauss}

We start by computing the pseudocalibration for the Gaussian setting. Here the natural choice of Fourier basis is the Hermite polynomials. Let $\alpha \in (\N^n)^m$ denote a Hermite polynomial index. Define $\alpha! \defeq \prod_{u,i} \alpha_{u,i}!$ and $\abs{\alpha} \defeq \sum_{u, i} \alpha_{u,i}$ and $\abs{\alpha_u} \defeq \sum_i \alpha_{u,i}$. We let $h_\alpha(d_1, \dots, d_m)$ denote an unnormalized Hermite polynomial, so that $h_{\alpha}/\sqrt{\alpha!}$ forms an orthonormal basis for polynomials in the entries of the vectors $d_1, \dots, d_m$, under the inner product $\ip{p}{q} = \E_{d_1, \dots, d_m \sim \calN(0, I)} [p \cdot q]$.

We can view $\alpha$ as an $m\times n$ matrix of natural numbers, and with this view we also define $\alpha^\T \in (\N^m)^n$.
\begin{lemma}\label{lem:gaussian-pseudocal}
For any $I \subseteq [n]$, the pseudocalibration value is
\[\pE v^I = \displaystyle\sum_{\substack{\alpha: \abs{\alpha} \leq n^\tau,\\ \abs{\alpha_u} \text{ even}, \\ \abs{(\alpha^\T)_i} \equiv I_i \; (\mod 2)}} \left(\prod_{u = 1}^m h_{\abs{\alpha_u}}(1) \right)\cdot\frac{1}{n^{\abs{I}/2 + \abs{\alpha}/2}} \cdot\frac{h_{\alpha}(d_1, \dots, d_m)}{\alpha!}. \]
\end{lemma}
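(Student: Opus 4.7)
I will apply the pseudocalibration recipe directly. Since $\{h_{\alpha}/\sqrt{\alpha!}\}$ is an orthonormal basis for polynomials in the entries of $d_1,\ldots,d_m$ under the random (Gaussian) distribution, truncating at Fourier level $n^{\tau}$ gives
\[
\pE v^{I} \;=\; \sum_{\alpha:\,\abs{\alpha}\le n^{\tau}} \E_{\mathrm{pl}}\!\left[v^{I}\, \frac{h_{\alpha}(d)}{\sqrt{\alpha!}}\right] \cdot \frac{h_{\alpha}(d)}{\sqrt{\alpha!}} \;=\; \sum_{\alpha:\,\abs{\alpha}\le n^{\tau}} \E_{\mathrm{pl}}\!\left[v^{I} h_{\alpha}(d)\right] \cdot \frac{h_{\alpha}(d)}{\alpha!}.
\]
Thus the entire content of the lemma is to show that the planted coefficient equals
\[
\E_{\mathrm{pl}}\!\left[v^{I}\, h_{\alpha}(d)\right] \;=\; \mathbf{1}\!\left[\text{all $\abs{\alpha_u}$ even, and $\abs{(\alpha^{\T})_i}\equiv I_i\!\!\!\pmod 2$}\right] \cdot \prod_{u=1}^{m} h_{\abs{\alpha_u}}(1) \cdot n^{-(\abs{I}+\abs{\alpha})/2}.
\]

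To evaluate the planted expectation, I would condition on $v$ and on the signs $b_u$. Since $\norm{v}=1$, the distribution of $d_u$ conditioned on $\ip{d_u}{v}=b_u$ is $d_u = b_u v + g_u$ where $g_u$ is a standard Gaussian on $v^{\perp}$. The key computation, carried out for each $u$ separately via the Hermite generating function $\sum_k h_k(x)\,t^{k}/k! = e^{tx - t^{2}/2}$ applied coordinatewise, is
\[
\E_{d_u\mid v,b_u}\!\left[\,e^{\sum_i t_i d_{u,i} - \sum_i t_i^{2}/2}\,\right] \;=\; e^{\,b_u(t\cdot v) - (t\cdot v)^{2}/2} \;=\; \sum_{k\ge 0} h_{k}(b_u)\,\frac{(t\cdot v)^{k}}{k!}.
\]
Expanding $(t\cdot v)^{k}$ by the multinomial theorem and matching the coefficient of $\prod_i t_i^{\alpha_{u,i}}/\alpha_{u,i}!$ on both sides yields
\[
\E_{d_u\mid v,b_u}\!\left[\,\prod_i h_{\alpha_{u,i}}(d_{u,i})\,\right] \;=\; h_{\abs{\alpha_u}}(b_u)\cdot v^{\alpha_u} \;=\; b_u^{\abs{\alpha_u}}\, h_{\abs{\alpha_u}}(1)\cdot v^{\alpha_u},
\]
using the Hermite parity relation $h_k(-x) = (-1)^{k} h_k(x)$.

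Taking the product over $u$ and multiplying by $v^{I}$, I obtain
\[
\E_{d\mid v,b}\!\left[v^{I}\, h_{\alpha}(d)\right] \;=\; \left(\prod_{u} b_u^{\abs{\alpha_u}}\right)\!\left(\prod_{u} h_{\abs{\alpha_u}}(1)\right)\cdot \prod_{i} v_i^{\,I_i + \abs{(\alpha^{\T})_i}}.
\]
Averaging over $b_u \unif \{\pm 1\}$ kills the contribution unless every $\abs{\alpha_u}$ is even, and averaging over $v \unif \{\pm 1/\sqrt n\}^n$ kills it unless $I_i + \abs{(\alpha^{\T})_i}$ is even for every $i$, in which case each $v_i$-factor contributes $(1/\sqrt{n})^{I_i + \abs{(\alpha^{\T})_i}}$ and the product equals $n^{-(\abs{I}+\abs{\alpha})/2}$. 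Substituting this back into the pseudocalibration formula yields exactly the claimed expression.

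The only place that requires care is the coefficient extraction from the Hermite generating function, which is really just bookkeeping with multinomial coefficients; the factor of $k!$ from the multinomial cancels against the $1/k!$ in the generating function to give $v^{\alpha_u}$ rather than $k!/\alpha_u!\cdot v^{\alpha_u}/\alpha_u!$. There is no substantive obstacle.
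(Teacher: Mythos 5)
Your proposal is correct and matches the paper's own proof in essentially every respect: same pseudocalibration recipe, same generating-function computation of $\E_{d_u\mid v,b_u}[h_{\alpha_u}(d_u)] = h_{|\alpha_u|}(b_u)\, v^{\alpha_u}$ (the paper isolates this as Lemma~\ref{lem:fixed-moments}), and the same marginalization over $b_u$ and $v$ to produce the parity conditions and the factor $n^{-(\abs{I}+\abs{\alpha})/2}$. The only cosmetic difference is that you invoke the Hermite parity relation $h_k(b)=b^k h_k(1)$ for $b\in\{\pm1\}$ where the paper argues directly that $\E_{b_u}[h_{|\alpha_u|}(b_u)]$ vanishes when $|\alpha_u|$ is odd and equals $h_{|\alpha_u|}(1)$ otherwise; these are the same observation.
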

In words, the nonzero Fourier coefficients are those which have even row sums, and whose column sums match the parity of $I$.  
\begin{proof}
The truncated pseudocalibrated value is defined to be
\[\pE v^I = \displaystyle\sum_{\alpha : \abs{\alpha} \leq n^\tau} \frac{h_{\alpha}(d_1, \dots, d_m)}{\alpha!} \cdot \E_{\text{pl}}[h_{\alpha}(d_1, \dots, d_m) \cdot v^I] \]
So we set about to compute the planted moments. For this computation, the following lemma is crucial. Here, we give a short proof of this lemma using generating functions. For a combinatorial proof, see Appendix \ref{app:combinatorialpseudocalibration}.

\begin{lemma}\label{lem:fixed-moments}
Let $\alpha \in \N^n$. When $v$ is fixed and $b$ is fixed (not necessarily +1 or -1) and $d \sim N(0, I)$ conditioned on $\ip{v}{d} = b\norm{v}$, 
\[\E_{d}[h_{\alpha}(d)] = \frac{v^\alpha}{\norm{v}^{\abs{\alpha}}} \cdot h_{\abs{\alpha}}(b).\]
\end{lemma}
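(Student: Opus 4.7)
The plan is to use the exponential generating function for Hermite polynomials together with the explicit distribution of a Gaussian conditioned on a linear constraint, and then read off the coefficient of $t^\alpha/\alpha!$.

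Recall that the (unnormalized, ``probabilists''') Hermite polynomials satisfy
\[
\sum_{k\ge 0} h_k(x)\,\frac{t^k}{k!} \;=\; \exp\!\Bigl(xt - \tfrac{t^2}{2}\Bigr),
\]
so multiplying over coordinates yields the multivariate identity
\[
\sum_{\alpha\in\N^n} h_\alpha(d)\,\frac{t^\alpha}{\alpha!} \;=\; \exp\!\Bigl(\langle d,t\rangle - \tfrac{\|t\|^2}{2}\Bigr).
\]
Thus it suffices to compute $\E_d \exp(\langle d,t\rangle - \|t\|^2/2)$ under the conditional distribution and expand the result in $t$.

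Next, decompose $d = d_\parallel + d_\perp$ where $d_\parallel = \tfrac{\langle v,d\rangle}{\|v\|^2}\,v$ is the projection onto $v$ and $d_\perp \in v^\perp$. Under $d\sim\calN(0,I)$ conditioned on $\langle v,d\rangle = b\|v\|$, the component $d_\parallel$ is deterministic and equals $(b/\|v\|)\,v$, while $d_\perp$ remains standard Gaussian on $v^\perp$. Writing $t = t_\parallel + t_\perp$ in the same way, we get
\[
\langle d,t\rangle \;=\; \tfrac{b}{\|v\|}\langle v,t\rangle \;+\; \langle d_\perp, t_\perp\rangle,
\qquad
\|t_\perp\|^2 = \|t\|^2 - \frac{\langle v,t\rangle^2}{\|v\|^2}.
\]
Since $\langle d_\perp,t_\perp\rangle \sim \calN(0,\|t_\perp\|^2)$, the standard Gaussian MGF gives
\[
\E_d \exp\!\bigl(\langle d,t\rangle - \tfrac{\|t\|^2}{2}\bigr)
\;=\; \exp\!\Bigl(\tfrac{b}{\|v\|}\langle v,t\rangle + \tfrac{1}{2}\|t_\perp\|^2 - \tfrac{1}{2}\|t\|^2\Bigr)
\;=\; \exp\!\Bigl(b s - \tfrac{s^2}{2}\Bigr),
\]
where $s \defeq \langle v,t\rangle/\|v\|$. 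By the univariate Hermite generating function, this equals $\sum_k h_k(b)\,s^k/k!$.

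Finally, expand $s^k = \|v\|^{-k}\langle v,t\rangle^k$ multinomially:
\[
\sum_{k\ge 0} h_k(b)\,\frac{s^k}{k!}
\;=\; \sum_{k\ge 0} h_k(b)\sum_{\beta:\,|\beta|=k} \frac{v^\beta}{\|v\|^{|\beta|}}\cdot\frac{t^\beta}{\beta!}
\;=\; \sum_{\beta\in\N^n} h_{|\beta|}(b)\,\frac{v^\beta}{\|v\|^{|\beta|}}\cdot\frac{t^\beta}{\beta!}.
\]
Equating coefficients of $t^\alpha/\alpha!$ with the earlier generating series $\sum_\alpha \E_d h_\alpha(d)\,t^\alpha/\alpha!$ yields $\E_d[h_\alpha(d)] = (v^\alpha/\|v\|^{|\alpha|})\cdot h_{|\alpha|}(b)$, as claimed. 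There is no real obstacle here; the main thing to be careful about is keeping the parallel/perpendicular split consistent so that the MGF computation is valid for arbitrary $b$ (not just $b=\pm 1$), which the proof above does.
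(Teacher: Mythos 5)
Your proof is correct and takes essentially the same route as the paper's: both compute the conditional expectation of the Hermite generating function $\exp(\langle d,t\rangle - \|t\|^2/2)$ by splitting into components along and orthogonal to $v$, arriving at $\exp(bs - s^2/2)$ with $s = \langle v,t\rangle/\|v\|$, and then reading off coefficients. The only cosmetic differences are that you carry $\|v\|$ through rather than normalizing to $\|v\|=1$ up front, and you extract the coefficient of $t^\alpha/\alpha!$ via an explicit multinomial expansion rather than the paper's chain-rule/Taylor argument.
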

\begin{proof}
It suffices to prove the claim when $\norm{v} = 1$ since the left-hand side is independent of $\norm{v}$. Express $d = bv + (I - vv^\T)x$ where $x \sim N(0,I)$ is a standard normal variable. Now we want
\[\E_{x \sim N(0,I)} h_{\alpha}\left(bv + (I - vv^\T)x\right). \]
The Hermite polynomial generating function is
\[\displaystyle\sum_{\alpha \in \N^n} \E_{x \sim N(0,I)} h_{\alpha}\left(bv + (I - vv^\T)x\right)\frac{t^{\alpha}}{\alpha!} = \E_x\exp\left(\ip{bv + (I - vv^\T)x}{t} - \frac{\norm{t}_2^2}{2}\right)\]
\[= \int_{\mathbb{R}^n} \frac{1}{(2\pi)^{\frac{n}{2}}} \cdot \exp\left(\ip{bv + (I - vv^\T)x}{t} - \frac{\norm{t}_2^2}{2} - \frac{\norm{x}_2^2}{2}\right) \; dx. \]
Completing the square,
\begin{align*}= &  \int_{\mathbb{R}^n} \frac{1}{(2\pi)^{\frac{n}{2}}} \cdot \exp\left(\ip{bv}{t} - \frac{\ip{v}{t}^2}{2} - \frac{1}{2} \cdot\norm{x- (t - \ip{v}{t}v)}_2^2\right) \; dx \\
=&  \exp\left(\ip{bv}{t} - \frac{\ip{v}{t}^2}{2}\right) \\
=& \exp\left(b\ip{v}{t} - \frac{1}{2} \cdot \ip{v}{t}^2\right).
\end{align*}
How can we Taylor expand this in terms of $t$? The Taylor expansion of $\exp(by - \frac{y^2}{2})$ is $\sum_{i=0}^\infty h_i(b) \frac{y^i}{i!}$. That is, the $i$-th derivative in $y$ of $\exp(by - \frac{y^2}{2})$, evaluated at 0, is $h_i(b)$. Using the chain rule with $y=\ip{v}{t}$, the $\alpha$-derivative in $t$ of our expression, evaluated at 0, is $v^\alpha \cdot h_{\abs{\alpha}}(b)$. This is the expression we wanted when $\norm{v} = 1$, and along with the aforementioned remark about homogeneity in $\norm{v}$ this completes the proof.
\end{proof}

Now we can finish the calculation.
To compute $\E_{\text{pl}}[h_{\alpha}(d_1, \dots, d_m) \cdot v^I]$, marginalize $v$ and the $b_u$ and factor the conditionally independent $b_u$ and $d_u$.
\begin{align*}
    \E_{\text{pl}}[h_{\alpha}(d_1, \dots, d_m) v^I] &= \E_{v, b_u} v^I \prod_{u=1}^m \E_{d}\left[h_{\alpha_u}(d_u) \mid v, b_u\right]\\
    &= \E_{v, b_u} v^I \cdot \prod_{u=1}^m \frac{v^{\alpha_u}}{\norm{v}^{\abs{\alpha_u}}} \cdot h_{\abs{\alpha_u}}(b_u) && (\text{\cref{lem:fixed-moments})}\\
    &= \left(\E_{v} \frac{v^{I + \sum_{u=1}^m \alpha_u}}{\norm{v}^{\sum_{u=1}^m\abs{\alpha_u}}} \right) \cdot \left( \prod_{u=1}^m \E_{b_u}h_{\abs{\alpha_u}}(b_u)\right)
\end{align*}
The Hermite polynomial expectations will be zero in expectation over $b_u$ if the degree is odd, and otherwise $b_u$ is raised to an even power and can be replaced by 1. This requires that $\abs{\alpha_u}$ is even for all $u$. The norm $\norm{v}$ is constantly $1$ and can be dropped. The numerator will be $\frac{1}{n^{\abs{I}/2 + \abs{\alpha}/2}}$ if the parity of every $\abs{(\alpha^\T)_i}$ matches $I_i$, and 0 otherwise. This completes the pseudocalibration calculation.
\end{proof}

We can now write $\calM$ in terms of graph matrices.

\begin{definition}\label{def:calL_valid_shapes}
	Let $\cal{L}$ be the set of all proper shapes $\alpha$ with the following properties
	\begin{itemize}
		\item $U_{\alpha}$ and $V_{\alpha}$ only contain square vertices and $|U_{\alpha}|, |V_{\alpha}| \le n^{\delta}$
		\item $W_\alpha$ has no degree $0$ vertices
		\item $\deg(\square{i}) + U_\alpha(\square{i}) + V_\alpha(\square{i})$ is even for all $\square{i} \in V(\alpha)$ 
		\item $\deg(\circle{u})$ is even and $\deg(\circle{u}) \ge 4$ for all $\circle{u} \in V(\alpha)$
		\item $|E(\alpha)| \le n^\tau$
	\end{itemize}
\end{definition}

\begin{remark}
  Note that the shapes in $\cal{L}$ can have isolated vertices in $U_{\alpha} \cap V_{\alpha}$.
\end{remark}

\begin{remark}\label{rmk:circle_deg_bound}
	$\calL$ captures all the shapes that have nonzero coefficient when we write $\calM$ in terms of graph matrices. The constraint $\deg(\circle{u}) \ge 4$ arises because pseudocalibration gives us that $\deg(\circle{u})$ is even, $\circle{u}$ cannot be isolated, and $h_2(1) = 0$.
\end{remark}

%Note that a corollary of the definition is that $\abs{U_\alpha} + \abs{V_\alpha}$ must be even for $\alpha \in \calL$. We will use the notation $n_{(k)} \defeq n(n+2)(n+4)\cdots (n+2k-2)$ to denote a rising factorial. 
% \cnote{Commented out, but also incorrect.} In this notation the moments of the uniform distribution on the sphere are
%     \[\E_{v \unif S^n} v^I = \left\{\begin{array}{lr}
%         \frac{\prod_{i=1}^n I_i!!}{n_{(\abs{I}/2)}} & \text{if all }I_i\text{ are even}\\
%         0 & \text{ otherwise}
%     \end{array}\right. \]

For a shape $\alpha$, we define 
\[\alpha! \defeq \prod_{e \in E(\alpha)} l(e)!\] Note that this equals the factorial of the corresponding index of the Hermite polynomial for this shape.

\begin{definition}
	For any shape $\alpha$, if $\alpha \in \cal{L}$, define \[\lambda_{\alpha} \defeq 
	\left( \prod_{\circle{u}\in V(\alpha)} h_{\deg(\circle{u})}(1)\right) 
	\cdot \frac{1}{ n^{(\abs{U_\alpha} + \abs{V_\alpha} + \abs{E(\alpha)})/2}}
	\cdot \frac{1}{\alpha!}  \]
	Otherwise, define $\lambda_{\alpha} \defeq 0$.
\end{definition}

\begin{corollary} Modulo the footnote\footnote{Technically, the graph matrices $M_\alpha$ have rows and columns indexed by all subsets of $\calC_m \cup \calS_n$. The submatrix with rows and columns from $\binom{\calS_n}{\leq D/2}$ equals the moment matrix for $\pE$ as defined in~\cref{sec:moment-mtx}.}, $\calM = \displaystyle\sum_{\text{shapes }\alpha} \lambda_{\alpha} M_{\alpha}$.
\end{corollary}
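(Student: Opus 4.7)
The plan is to verify the claimed identity entrywise, showing that for every $I, J \subseteq [n]$ with $|I|, |J| \leq D/2$ one has $\pE[v^I v^J] = \sum_{\alpha \in \calL} \lambda_\alpha M_\alpha[I,J]$. First, I would use the booleanity property~\cref{pe:boolean} of \cref{def:pseudoexpectation} to reduce $\pE[v^I v^J] = n^{-|I \cap J|} \pE[v^{I \triangle J}]$ (since $v^I v^J = \prod_{i \in I \cap J} v_i^2 \cdot v^{I \triangle J}$), and then apply \cref{lem:gaussian-pseudocal} to expand $\pE[v^{I \triangle J}]$ as a sum over Hermite multi-indices $\beta \in (\N^n)^m$ satisfying $|\beta| \leq n^\tau$, $|\beta_u|$ even for all $u$, and $|(\beta^\T)_i| \equiv (I \triangle J)_i \pmod 2$ for all $i$.

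The second step is to translate the Hermite expansion into graph matrix language. Each eligible multi-index $\beta$ canonically determines a single ribbon $R$ with $A_R = I$ and $B_R = J$: the squares of $R$ are $I \cup J \cup \{i : (\beta^\T)_i \neq 0\}$, the circles are $\{u : \beta_u \neq 0\}$, and each pair with $\beta_{u,i} > 0$ contributes an edge between $\circle{u}$ and $\square{i}$ with label $\beta_{u,i}$; indices in $I \cap J$ with no incident edges appear as isolated squares in $U_R \cap V_R$. Conversely, each ribbon with $A_R = I, B_R = J$ arises from a unique $\beta$. Grouping these ribbons by their isomorphism type rewrites the Hermite sum as a sum over shapes $\alpha$, and by \cref{def:graph-matrix,def:ribbon-matrix} the inner sum over ribbons of shape $\alpha$ with the prescribed matrix indices is precisely $M_\alpha[I,J]$.

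The final step is to match the scalar coefficient. Multiplying the pseudocalibration coefficient $\prod_u h_{|\beta_u|}(1) \cdot n^{-|I\triangle J|/2 - |\beta|/2}/\alpha!$ by the outer $n^{-|I \cap J|}$ and using $|U_\alpha| + |V_\alpha| = |I|+|J| = |I \triangle J| + 2|I \cap J|$, $|E(\alpha)| = |\beta|$, $\deg(\circle{u}) = |\beta_u|$, and $\alpha! = \prod_{u,i} \beta_{u,i}!$ (valid because shapes in $\calL$ are proper), yields exactly $\lambda_\alpha = \bigl(\prod_{\circle{u}} h_{\deg(\circle{u})}(1)\bigr) \cdot n^{-(|U_\alpha| + |V_\alpha| + |E(\alpha)|)/2}/\alpha!$. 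I would then verify that the defining conditions of $\calL$ correspond precisely to the nonvanishing contributions: the shape is proper because each $\beta_{u,i}$ is a single nonnegative integer; the parity condition $\deg(\square{i}) + U_\alpha(\square{i}) + V_\alpha(\square{i})$ even translates to $|(\beta^\T)_i| \equiv (I \triangle J)_i \pmod 2$, since $U_\alpha(\square{i}) + V_\alpha(\square{i}) \equiv (I \triangle J)_i \pmod 2$; $W_\alpha$ has no isolated vertices by construction; and every circle in $V(\alpha)$ has even degree $\geq 4$ because it is non-isolated, has even degree by the parity constraint, and $h_2(1) = 0$ kills any degree-$2$ circle. The main obstacle here is purely bookkeeping: carefully matching vertex types between the Hermite and graph languages, accounting for shared $I \cap J$ indices that become isolated squares in $U_\alpha \cap V_\alpha$, and confirming that the correspondence between eligible $\beta$ and ribbons of shapes in $\calL$ is a bijection, so that no spurious term is introduced or dropped.
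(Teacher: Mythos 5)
Your proof is correct and is exactly the intended derivation of the corollary, which the paper leaves implicit (it is stated without proof immediately after the definition of $\lambda_\alpha$). You correctly identify the two small points that need care and that the paper glosses over: the booleanity reduction $\pE[v^Iv^J] = n^{-|I\cap J|}\pE[v^{I\triangle J}]$ absorbs the extra factor of $n^{-|I\cap J|}$ into the exponent via $|U_\alpha|+|V_\alpha| = |I\triangle J| + 2|I\cap J|$, and the $\deg(\circle{u}) \geq 4$ condition in $\calL$ arises from $h_2(1) = 0$ rather than directly from the pseudocalibration sum (this matches Remark~\ref{rmk:circle_deg_bound}). The bijection between eligible Hermite multi-indices $\beta$ and ribbons with $A_R = I$, $B_R = J$, and the fact that $\lambda_\alpha$ depends only on the shape and not on the particular ribbon, are the remaining ingredients and you handle both correctly.
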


\subsection{Boolean setting pseudocalibration}\label{subsec:calib:bool}

We now present the pseudocalibration for the boolean setting. For the
sequel, we need notation for vectors on a slice of the boolean
cube.

\begin{definition}[Slice]
  Let $v \in \set{\pm 1}^n$ and $\theta \in \mathbb{Z}$. The slice $\slice_{v}(\theta)$ is defined as
  $$
  \slice_{v}(\theta) \coloneqq \set{d \in \set{\pm 1}^n ~\vert~ \ip{v}{d} = \theta}.
  $$
  We use $\slice_{v}(\pm \theta)$ to denote $\slice_{v}(\theta) \cup \slice_{v}(-\theta)$ and
  $\slice(\theta)$ to denote $\slice_{v}(\theta)$ when $v$ is the all-ones vector.
\end{definition}

\begin{remark}
  With our notation for the slice, the planted distribution in the boolean setting can be equivalently described as
  \begin{enumerate}
    \item Sample $v \in \set{\frac{\pm 1}{\sqrt{n}}}^n$ uniformly, and then
    \item Sample $d_1,\dots,d_m$ independently and uniformly from $\slice_{\sqrt{n} \cdot v}(\pm\sqrt{n})$.
  \end{enumerate}
\end{remark}
The planted distribution doesn't actually exist for every $n$, but this is immaterial, as we can still define the pseudoexpectation via the same formula.

We will also need the expectation of monomials over the slice
$\slice(\sqrt{n})$ since they will appear in the description of the
pseudocalibrated Fourier coefficients.

\begin{definition}
   $
   e(k) \coloneqq \E_{x \unif \mathcal{S}(\sqrt{n})}\left[x_1\cdots x_k\right].
   $   
\end{definition}

We now compute the Fourier coefficients of $\pE v^{\beta}$, where
$\beta \in \mathbb{F}_2^n$. The Fourier basis when $d_1, \dots, d_m \unif \{\pm 1\}^n$ is the set of parity functions. Thus a character can be specified by $\alpha \in (\F_2^n)^m$, where $\alpha$
is composed of $m$ vectors
$\alpha_1,\dots,\alpha_m \in \mathbb{F}_2^n$.  More precisely, the
character $\chi_{\alpha}$ associated to $\alpha$ is defined as
\[ \chi_\alpha(d_1,\dots,d_m) \defeq \prod_{u=1}^m d_u^{\alpha_u}\]
We denote by $\abs{\alpha}$ the number of non-zero entries of $\alpha$
and define $\abs{\alpha_u}$ similarly. Thinking of $\alpha$ as an $m\times n$ matrix with entries in $\F_2$, we also define $\alpha^\T \in (\F_2^n)^m$.

\begin{lemma}\label{lem:boolean-pseudocalibration}
We have
  $$
  \pE v^{\beta} = \frac{1}{n^{\abs{\beta}/2}}\sum_{\substack{\alpha \colon \abs{\alpha} \le n^\tau, \\ \abs{\alpha_u} \text{ even},\\ \abs{\alpha^\T_i} \equiv \beta_i \;(\mod 2) }} \prod_{u=1}^m e(\abs{\alpha_u}) \cdot \chi_{\alpha_u}(d_u).
  $$
\end{lemma}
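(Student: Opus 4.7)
Since $d_1,\dots,d_m$ are i.i.d.\ uniform on $\{\pm 1\}^n$ under the random distribution, the parity characters $\chi_\alpha(d) = \prod_{u=1}^m \chi_{\alpha_u}(d_u)$ form an orthonormal basis. The pseudocalibration recipe then gives
\[
\pE v^{\beta} \;=\; \sum_{\alpha:\,|\alpha|\le n^\tau} \chi_\alpha(d)\cdot \E_{\mathrm{pl}}\!\left[\chi_\alpha(d)\,v^{\beta}\right],
\]
so the lemma reduces to computing the planted correlation $\E_{\mathrm{pl}}[\chi_\alpha(d)\,v^\beta]$ and reading off which $\alpha$ contribute nonzero coefficients.

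The plan is to marginalize in the order ``conditional on $(v,b_1,\dots,b_m)$, then over the $b_u$, then over $v$.'' Conditional on $(v,b_u)$, the vectors $d_u$ are independent, with $d_u$ uniform on $\slice_{\sqrt{n}\,v}(b_u\sqrt{n})$. Set $\tilde v = \sqrt{n}\,v \in \{\pm 1\}^n$ and perform the entrywise substitution $w_u \defeq d_u\cdot \tilde v$; then $w_u$ is uniform on $\slice(b_u\sqrt{n})$, and $\chi_{\alpha_u}(d_u) = \tilde v^{\alpha_u}\,\chi_{\alpha_u}(w_u)$. By permutation-symmetry of the slice, $\E[\chi_{\alpha_u}(w_u)]$ depends only on $|\alpha_u|$; using the bijection $w\mapsto -w$ between $\slice(\sqrt{n})$ and $\slice(-\sqrt{n})$ gives
\[
\E_{w_u}\bigl[\chi_{\alpha_u}(w_u)\bigr] \;=\; b_u^{|\alpha_u|}\,e(|\alpha_u|),
\]
so that $\E[\chi_{\alpha_u}(d_u)\mid v,b_u] = \tilde v^{\alpha_u}\,b_u^{|\alpha_u|}\,e(|\alpha_u|)$.

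Next, taking the expectation over $b_u \unif \{\pm 1\}$ kills every factor with $|\alpha_u|$ odd and leaves $1$ when $|\alpha_u|$ is even; this produces the parity condition ``$|\alpha_u|$ even'' in the sum. For the expectation over $v$, note that $\prod_u \tilde v^{\alpha_u} = \prod_i \tilde v_i^{|\alpha^\T_i|}$, so
\[
\E_v\!\left[v^{\beta}\prod_u \tilde v^{\alpha_u}\right] \;=\; \prod_{i=1}^n \E_{v_i}\!\left[v_i^{\beta_i}\,\tilde v_i^{|\alpha^\T_i|}\right].
\]
Since $v_i \in \{\pm 1/\sqrt{n}\}$ and $\tilde v_i = \sqrt{n}\,v_i$ with $\tilde v_i^2 = 1$, a two-case analysis on the parity of $|\alpha^\T_i|$ shows the $i$-th factor is $\mathbf 1[\beta_i = 0]$ if $|\alpha^\T_i|$ is even and $\tfrac{1}{\sqrt{n}}\mathbf 1[\beta_i = 1]$ if $|\alpha^\T_i|$ is odd. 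Thus the product is nonzero exactly when $|\alpha^\T_i|\equiv \beta_i\pmod 2$ for every $i$, in which case it equals $n^{-|\beta|/2}$.

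Combining the three marginalizations yields
\[
\E_{\mathrm{pl}}\!\left[\chi_\alpha(d)\,v^{\beta}\right] \;=\; \frac{1}{n^{|\beta|/2}}\prod_{u=1}^m e(|\alpha_u|)\cdot \mathbf 1\!\left[\,|\alpha_u|\text{ even for all }u\,\right]\cdot \mathbf 1\!\left[\,|\alpha^\T_i|\equiv \beta_i\text{ for all }i\,\right],
\]
and plugging back into the pseudocalibration sum, together with $\chi_\alpha(d)=\prod_u \chi_{\alpha_u}(d_u)$, gives the stated formula. No step is expected to be an obstacle: the only non-routine ingredient is the slice-moment identity in the second paragraph, which follows immediately from the symmetry argument above and the definition of $e(k)$.
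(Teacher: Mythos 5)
Your proof is correct and follows essentially the same route as the paper: both reduce $\E_{\mathrm{pl}}[\chi_\alpha(d)\,v^\beta]$ to a slice computation via the entrywise substitution $d_u \mapsto d_u\cdot v$, then read off the parity conditions on $\abs{\alpha_u}$ and $\abs{\alpha^\T_i}$. The only cosmetic difference is that you marginalize the signs $b_u$ explicitly (getting $b_u^{\abs{\alpha_u}}$ factors that vanish unless $\abs{\alpha_u}$ is even) and track the $v_i\in\{\pm 1/\sqrt n\}$ versus $\tilde v_i\in\{\pm 1\}$ scaling coordinate by coordinate, whereas the paper absorbs $b_u$ into the $\pm$-slice notation $\slice_v(\pm\sqrt n)$ and treats $v$ as $\{\pm 1\}$-valued, recovering the $n^{-\abs\beta/2}$ prefactor only in the final statement; your version is, if anything, a bit more explicit about where that factor comes from.
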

The set of nonzero coefficients has a similar structure as in the Gaussian case: the rows of $\alpha$ must have an even number of entries, and the $i$-th column must have parity matching $\beta_i$.

\begin{proof}
   Given $\alpha \in (\F_2^n)^m$ with $\abs{\alpha} \le n^\tau$, the pseudocalibration equation enforces by construction that
   $$
   \E_{d_1,\dots,d_m \in \set{\pm 1}^n} (\pE v^{\beta})(d_1,\dots,d_m) \cdot \chi_{\alpha}(d_1,\dots,d_m) = \E_{\text{pl}} v^{\beta} \cdot \chi_{\alpha}(d_1,\dots,d_m).
   $$

  Computing the RHS above 
  yields
  \begin{align*}
    \E_{v \in \set{\pm 1}^n}  \E_{d_1,\dots,d_m \unif \slice_v(\pm \sqrt{n})}\left[ v^{\beta} \prod_{u=1}^m \chi_{\alpha_u}(d_u) \right] &= \E_{v \in \set{\pm 1}^n}  \E_{d_1,\dots,d_m \unif \slice(\pm \sqrt{n})} \left[ v^{\beta} \prod_{u=1}^m \chi_{\alpha_u}(v) \chi_{\alpha_u}(d_u) \right] \\
      & = \E_{v \in \set{\pm 1}^n} \chi_{\alpha_1+\cdots +\alpha_m + \beta}(v) \E_{d_1,\dots,d_m \in \slice(\pm \sqrt{n})}\left[ \prod_{i=1}^m \chi_{\alpha_i}(d_i) \right] \\
      & = \one_{\left[\alpha_1+\cdots +\alpha_m=\beta\right]} \cdot \prod_{i=1}^m \E_{d_i \in \slice(\pm \sqrt{n})} \left[ \chi_{\alpha_i}(d_i) \right]\\
      & = \one_{\left[\alpha_1+\cdots +\alpha_m=\beta\right]} \cdot \prod_{i=1}^m \one_{\left[\abs{\alpha_i} \equiv 0 \pmod{2} \right]} \cdot \prod_{i=1}^m e(\abs{\alpha_i}).
\end{align*}
  Since we have a general expression for the Fourier coefficient of each character, 
  applying Fourier inversion  concludes the proof.
\end{proof}

We can now express the moment matrix in terms of graph matrices.

\begin{definition}
    Let $\calL_{bool}$ be the set of shapes in $\calL$ from~\cref{def:calL_valid_shapes} in which the edge labels are all 1.
\end{definition}

\begin{remark}
	$\calL_{bool}$ captures all the shapes that have nonzero coefficient when we write $\calM$ in terms of graph matrices. Similar to~\cref{rmk:circle_deg_bound}, since $e(2) = 0$ (see~\cref{claim:e2}), we have the same condition $deg(\circle{u}) \ge 4$ for shapes in $\calL_{bool}$.
\end{remark}

\begin{definition}
For all shapes $\alpha$, if $\alpha \in \calL_{bool}$ define
    \[\lambda_\alpha \defeq   \frac{1}{n^{(\abs{U_\alpha} + \abs{V_\alpha})/2}}\prod_{\circle{u} \in V(\alpha)} e(\deg(\circle{u}))\]
Otherwise, let $\lambda_\alpha \defeq 0$.
\end{definition}

\begin{corollary}$\calM = \displaystyle\sum_{\text{shapes }\alpha} \lambda_\alpha M_\alpha$
\end{corollary}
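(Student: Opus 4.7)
The plan is to verify the identity by expanding the boolean pseudocalibration formula and regrouping its Fourier characters into graph matrices. First I would reduce from $\pE[v^I v^J]$ to $\pE[v^{I \triangle J}]$: writing $v^I v^J = \prod_{i \in I \cap J} v_i^2 \prod_{i \in I \triangle J} v_i$ and applying the booleanity constraint $\pE[(v_i^2 - \tfrac{1}{n}) f] = 0$ once per $i \in I \cap J$ gives
\[
\calM[I,J] \;=\; \pE[v^I v^J] \;=\; \frac{1}{n^{|I \cap J|}}\, \pE\!\left[v^{I \triangle J}\right].
\]
Substituting the boolean pseudocalibration formula with $\beta = I \triangle J$ and using the identity $|I \cap J| + |I \triangle J|/2 = (|I|+|J|)/2$ yields
\[
\calM[I,J] \;=\; \frac{1}{n^{(|I|+|J|)/2}} \sum_{\alpha} \prod_{u=1}^m e(|\alpha_u|)\, \chi_{\alpha_u}(d_u),
\]
where $\alpha \in (\F_2^n)^m$ ranges over indices with $|\alpha| \le n^\tau$, even row-sums, and column parities matching $I \triangle J$.

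Next I would exhibit a bijection between such Fourier indices $\alpha$ and ribbons $R$ with $A_R = I$ and $B_R = J$: take the vertex set of $R$ to be $I \cup J \cup \{\square{i}:(\alpha^\T)_i \ne 0\} \cup \{\circle{u} : \alpha_u \ne 0\}$ and place a label-$1$ edge between $\circle{u}$ and $\square{i}$ whenever $(\alpha_u)_i = 1$. Since the boolean Fourier basis has $h_1(x) = x$, the ribbon-matrix definition gives $M_R[I, J] = \prod_{(u,i) \in E(R)} d_{u,i} = \prod_u \chi_{\alpha_u}(d_u)$, and the scalar factor $\prod_u e(|\alpha_u|)$ equals $\prod_{\circle{u} \in V(R)} e(\deg(\circle{u}))$, which depends on $R$ only through its shape. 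The column-parity constraint on $\alpha$ then translates precisely to the parity condition $\deg(\square{i}) + U(\square{i}) + V(\square{i}) \equiv 0 \pmod 2$ used in the definition of $\calL_{bool}$.

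Grouping ribbons by shape collapses the sum to $\sum_{\alpha} \bigl(\prod_{\circle{u} \in V(\alpha)} e(\deg(\circle{u}))\bigr)\, M_\alpha[I,J]$ by the graph-matrix definition. Since $M_\alpha[I, J]$ is nonzero only for shapes with $|U_\alpha| = |I|$ and $|V_\alpha| = |J|$, the prefactor $n^{-(|I|+|J|)/2}$ may be rewritten as $n^{-(|U_\alpha|+|V_\alpha|)/2}$ and absorbed into $\lambda_\alpha$, producing exactly $\calM = \sum_\alpha \lambda_\alpha M_\alpha$. It remains to verify that the shapes with nonzero contribution are exactly those in $\calL_{bool}$: edges all carry label $1$ so the shape is proper; the parity and size constraints are identical to the ones above; middle square vertices of degree $0$ do not arise because they would contribute no Fourier variable; and the lower bound $\deg(\circle{u}) \ge 4$ follows from $e(2) = 0$ combined with the even row-sum and the fact that circles cannot be isolated. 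The proof is essentially bookkeeping; the only subtle point is tracking the equivalence between the three constraints on the Fourier index $\alpha$ (truncation, row parity, column parity) and the structural conditions defining $\calL_{bool}$, in particular ensuring that the isolated vertices permitted in $U_\alpha \cap V_\alpha$ correspond exactly to the indices $i \in I \cap J$ with $(\alpha^\T)_i = 0$.
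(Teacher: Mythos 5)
Your proof is correct and fills in the bookkeeping that the paper leaves implicit for this corollary. The steps — reducing $\pE[v^I v^J]$ to $\pE[v^{I\triangle J}]$ via the booleanity relations, substituting \cref{lem:boolean-pseudocalibration}, identifying Fourier indices $\alpha$ with ribbons $R$ having $A_R = I$, $B_R = J$, and then collecting ribbons by shape — are exactly the translation between the Fourier picture and the graph-matrix picture that the paper intends, and your treatment of the edge cases (column parity forcing no isolated squares outside $U_\alpha\cap V_\alpha$, even row sums together with $e(2)=0$ forcing $\deg(\circle{u})\ge 4$) matches the constraints defining $\calL_{bool}$.
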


\subsubsection{Unifying the analysis}

It turns out that the analysis of the boolean setting mostly
follows from the analysis in the Gaussian setting.
Initially, the boolean pseudocalibration is essentially equal to
the Gaussian pseudocalibration in which we have removed all shapes
containing at least one edge with a label $k \ge 2$. The coefficients
on the graph matrices will actually be slightly different, but they
both admit an upper bound that is sufficient for our purposes
(see~\cref{prop:coefficient-bound} for the precise statement).

To unify the notation in our analysis, we conveniently set the edge
functions of the graphs in the boolean case to be
\[h_k(x) = \left\{\begin{array}{lr}
    1 & \text{if }k = 0\\
    x & \text{if }k = 1\\
    0 & \text{if }k \geq 2
\end{array}
\right. \]
This choice of $h_k(x)$ preserves the fact that
$\{h_0(x)=1,h_1(x)=x\}$ is an orthogonal polynomial basis in
the boolean setting, while zeroing out graphs with larger labels.

During the course of the analysis, we may multiply two graph matrices
and produce graph matrices with improper parallel edges (so-called
``intersections terms"). For a fixed pair ${u,i}$ of vertices, parallel
edges between $u$ and $i$ with labels $l_1,\dots,l_s$ correspond to
the product of orthogonal polynomials $\prod_{j=1}^s
h_{l_j}(d_{u,i}) \eqqcolon q(d_{u,i})$. We will re-express this product
as a linear combination of polynomials in the orthogonal family, \ie
$q(d_{u,i}) = \sum_{i=0}^{\textup{deg}(q)} \lambda_i \cdot
h_i(d_{u,i})$ for some coefficients $\lambda_i \in \mathbb{R}$. For
the boolean case, the polynomial $q(d_{u,i})$ will be either
$h_0(d_{u,i})=1$ or $h_1(d_{u,i})=d_{u,i}$. However, for the Gaussian
setting there may be up to $\textup{deg}(q)$ non-zero, potentially larger coefficients
$\lambda_i$ for the corresponding Hermite polynomials $h_i$.
For the graphs that arise in this way, we will always bound their 
contributions to $\calM$ by applying the triangle inequality and norm 
bounds. Since we show bounds using the larger coefficients $\lambda_i$ from the Gaussian case,
the same bounds apply when using the 0/1 coefficients in the boolean case.

We will consider separate cases at any point where the analysis differs between the two settings.

\section{Proving PSD-ness}\label{sec:psd}

Looking at the shapes that make up $\calM$, the trivial shape with $k$ square vertices contributes an identity matrix on the degree-$2k$ submatrix of $\calM$. Our ultimate goal will be to bound all shapes against these identity matrices. 

\begin{definition}[Block]
    For $k,l \in \{0,1, \dots, D/2\}$, the $(k,l)$ block of $\calM$ is the submatrix with rows from $\binom{[n]}{k}$ and columns from $\binom{[n]}{l}$. Note that when $\calM$ is expressed as a sum of graph matrices, this exactly restricts $\calM$ to shapes $\alpha$ with $\abs{U_\alpha} = k$ and $\abs{V_\alpha} = l$.
\end{definition}

We define the parameter $\eta \defeq 1/\sqrt{n}$. The trivial shapes
live in the diagonal blocks of $\calM$, and on the $(k,k)$ block
contribute a factor of $\frac{1}{n^k} = \eta^{2k}$ on the diagonal.
In principle, we could make $\eta$ as small as we like\footnote{Though
pseudocalibration truncation errors may become nonnegligible for
extremely tiny $\eta$.} by considering the moments of a rescaling of
$v$ rather than $v$ itself. Counterintuitively, it will turn out that
the scaling helps us prove PSD-ness (see~\cref{app:comment_on_scaling}
for more details). It turns out that pseudocalibrating $v$ as a unit
vector (equivalently, using $\eta = 1/\sqrt{n}$) is sufficient for our
analysis.

Towards the goal of bounding $\calM$ by the identity terms, we will bound the norm of matrices on each block of $\calM$, and invoke the following lemma to conclude PSD-ness.

\begin{lemma}\label{lem:block-psd}
    Suppose a symmetric matrix $\calA \in \R^{\binom{[n]}{\leq D} \times \binom{[n]}{\leq D}}$ satisfies, for some parameter $\eta \in (0, 1)$,
    \begin{enumerate}
        \item For each $k \in \{0, 1,\dots, D\}$, the $(k,k)$ block has minimum singular value at least $\eta^{2k}(1-\frac{1}{D+1})$
        \item For each $k,l \in \{0, 1,\dots, D\}$ such that $k \neq l$, the $(k,l)$ block has norm at most $\frac{\eta^{k+l}}{D+1}$.
    \end{enumerate}
    Then $\calA \psdgeq 0$.
\end{lemma}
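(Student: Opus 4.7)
\textbf{Proof plan for \cref{lem:block-psd}.} The plan is to prove PSD-ness directly from the definition: fix an arbitrary vector $x \in \R^{\binom{[n]}{\le D}}$, decompose it according to the block structure as $x = (x_0, x_1, \dots, x_D)$ where $x_k \in \R^{\binom{[n]}{k}}$, and lower bound $x^\T \calA x$ by $0$. Writing $\calA_{k,l}$ for the $(k,l)$ block of $\calA$, expand
\[
x^\T \calA x \;=\; \sum_{k=0}^{D} x_k^\T \calA_{k,k}\, x_k \;+\; \sum_{\substack{k,l=0 \\ k\neq l}}^{D} x_k^\T \calA_{k,l}\, x_l.
\]
By assumption (1), each diagonal term satisfies $x_k^\T \calA_{k,k} x_k \ge \eta^{2k}\bigl(1-\tfrac{1}{D+1}\bigr)\|x_k\|^2$. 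By assumption (2) together with Cauchy--Schwarz, each off-diagonal term satisfies $|x_k^\T \calA_{k,l} x_l| \le \tfrac{\eta^{k+l}}{D+1}\|x_k\|\|x_l\|$.

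The key step is to rescale and absorb the $\eta^k$ factors. Define the auxiliary scalars $y_k \defeq \eta^k \|x_k\| \ge 0$ for $k = 0,1,\dots,D$. Substituting the two bounds above yields
\[
x^\T \calA x \;\ge\; \Bigl(1 - \tfrac{1}{D+1}\Bigr) \sum_{k=0}^{D} y_k^2 \;-\; \frac{1}{D+1}\sum_{\substack{k,l=0\\ k\neq l}}^{D} y_k y_l.
\]
Now apply Cauchy--Schwarz (or, equivalently, the power-mean inequality) to the vector $(y_0,\dots,y_D) \in \R^{D+1}$: one has $\bigl(\sum_{k=0}^D y_k\bigr)^2 \le (D+1)\sum_{k=0}^D y_k^2$, and hence
\[
\sum_{\substack{k,l=0\\ k\neq l}}^{D} y_k y_l \;=\; \Bigl(\sum_{k=0}^{D} y_k\Bigr)^{\!2} - \sum_{k=0}^{D} y_k^2 \;\le\; D \sum_{k=0}^{D} y_k^2.
\]
Plugging this into the previous display gives $x^\T \calA x \ge \bigl(\tfrac{D}{D+1} - \tfrac{D}{D+1}\bigr)\sum_k y_k^2 = 0$, so $\calA \psdgeq 0$.

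There is no serious obstacle here; the proof is a clean block-Gershgorin-style argument. The only subtlety worth highlighting is that the constants in the hypotheses are calibrated \emph{exactly} so that the two terms cancel: the diagonal is deficient by $\tfrac{1}{D+1}$ of $\eta^{2k}$, and the off-diagonal mass is controlled by $\tfrac{1}{D+1}$ times the geometric mean $\eta^{k+l}$, and Cauchy--Schwarz over the $D+1$ blocks converts one into the other with no slack. This tightness explains why the slightly odd-looking factor $(1 - \tfrac{1}{D+1})$ appears in hypothesis (1); later applications of this lemma will therefore need to produce per-block norm bounds of size at most $\eta^{k+l}/(D+1)$, which is where the $n^\delta$ degree budget in \cref{theo:sos-bounds} is spent.
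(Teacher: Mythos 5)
Your proof is correct and takes essentially the same approach as the paper: decompose $x$ by blocks, rescale to $y_k = \eta^k\|x_k\|$, and reduce to the positive semidefiniteness of the $(D+1)\times(D+1)$ matrix $I - \tfrac{1}{D+1}J$. The only cosmetic difference is that the paper writes the final step as a matrix quadratic form and leaves the PSD-ness of that matrix implicit, whereas you verify it explicitly via Cauchy--Schwarz on $(y_0,\dots,y_D)$; the two are the same calculation.
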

\begin{proof}
We need to show that for all vectors $x$, ${x^\T}{\calA}x \geq 0$. Given a vector $x$, let $x_0,\dots,x_D$ be its components in blocks $0,\dots,D$. Observe that 
\begin{align*}
&{x^\T}{\calA}x \geq \sum_{k \in [0,D]}{\eta^{2k}\left(1 - \frac{1}{D+1}\right)\norm{x_k}^2} - \sum_{k \neq l \in [0,D]}{\frac{\eta^{k+l}}{D+1}\norm{x_k}\norm{x_l}} \\
&= (\norm{x_0},{\eta}\norm{x_1},\dots,{\eta}^D\norm{x_D}) 
\begin{pmatrix}
1-\frac{1}{D+1} & -\frac{1}{D+1} & \cdots & -\frac{1}{D+1} \\
-\frac{1}{D+1} & 1-\frac{1}{D+1} & \cdots & -\frac{1}{D+1} \\
\vdots  & \vdots  & \ddots & \vdots  \\
-\frac{1}{D+1} & -\frac{1}{D+1} & \cdots & 1-\frac{1}{D+1}
\end{pmatrix}
\begin{pmatrix}
\norm{x_0}\\
{\eta}\norm{x_1}\\
\vdots\\
{\eta}^D\norm{x_D}
\end{pmatrix} 
\geq 0.
\end{align*}
\end{proof}

We start by defining spiders, which are special shapes $\alpha$ that we will handle separately in the decomposition of $\calM$. Informally, these contain special substructures which allow their norm bounds not to be negligible with respect to the identity matrix. We then show that shapes which are not spiders have bounded norms.

\begin{definition}[Left Spider]
	A left spider is a proper shape $\alpha = (V(\alpha), E(\alpha), U_{\alpha}, V_{\alpha})$ with the property that there exist two distinct square vertices $\square{i}, \square{j} \in U_{\alpha}$ of degree $1$ and a circle vertex $\circle{u} \in V(\alpha)$ such that $E(\alpha)$ contains the edges $(\square{i}, \circle{u})$ and $(\square{j}, \circle{u})$ (these are necessarily the only edges incident to $\square{i}$ and $\square{j}$).
\end{definition}
The vertices $\square{i}$ and $\square{j}$ are called the \textit{end vertices} of $\alpha$. Because of degree parity, the end vertices must lie in $U_\alpha \setminus (U_\alpha \cap V_\alpha)$.

\begin{definition}[Right spider]
	A shape $\alpha = (V(\alpha), E(\alpha), U_{\alpha}, V_{\alpha})$ is a right spider if $\alpha^\T = (V(\alpha), E(\alpha), V_{\alpha}, U_{\alpha})$ is a left spider. The end vertices of $\alpha^\T$ are also called the end vertices of $\alpha$.
\end{definition}

\begin{definition}[Spider]
	A shape $\alpha$ is a spider if it is either a left spider or a right spider.
\end{definition}
\begin{remark}
	A spider can have many pairs of end vertices. For each possible spider shape, we single out a pair of end vertices, so that in what follows we can discuss ``the'' end vertices of the spider.
\end{remark}

\subsection{Non-spiders are negligible}

For non-spiders, we will now show that their norm is small. We point out that this norm bound on non-spiders critically relies on the assumption $m \leq n^{3/2 - \eps}$.

\begin{lemma}\label{lem:charging}
	If $\alpha \in \calL$ is not a trivial shape and not a spider, then
	\[\frac{1}{n^{|E(\alpha)|/2}} n^{\frac{w(V(\alpha)) - w(S_{\min})}{2}} \le \frac{1}{n^{\Omega(\eps |E(\alpha)|)}}\]
	where $S_{min}$ is the minimum vertex separator of $\alpha$.
\end{lemma}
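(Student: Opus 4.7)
The plan is to prove the equivalent inequality
\[
|E(\alpha)| - (w(V(\alpha)) - w(S_{\min})) \geq \Omega(\epsilon \cdot |E(\alpha)|),
\]
from which the lemma follows after dividing by $2$. The key quantitative inputs are $\log_n m \leq 3/2 - \epsilon$ (from $m \leq n^{3/2-\epsilon}$) and $\deg(\circle{u}) \geq 4$ for every circle (from the definition of $\calL$).

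The first step is a per-edge charging scheme. Writing $|E(\alpha)| = \sum_v \deg(v)/2$ and distributing each non-separator vertex's weight across its incident edges in proportion to labels, I rewrite
\[
|E(\alpha)| - w(V(\alpha) \setminus S_{\min}) = \sum_{e = (\circle{u}, \square{i})} \ell(e) \cdot \phi(e), \qquad \phi(e) \defeq 1 - \frac{\mathds{1}[\circle{u} \notin S_{\min}] \log_n m}{\deg(\circle{u})} - \frac{\mathds{1}[\square{i} \notin S_{\min}]}{\deg(\square{i})}.
\]
Since each edge is incident to a unique circle, it suffices to show $\sum_{e \ni \circle{u}} \ell(e)\phi(e) \geq \Omega(\epsilon \deg(\circle{u}))$ for every circle $\circle{u}$; summing over circles then yields the desired bound.

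The crux of the argument is a structural claim that bounds the \emph{bad edges} of each circle: edges $(\circle{u}, \square{i})$ where $\square{i}$ is a degree-$1$ square outside $S_{\min}$ (precisely the edges where $\ell(e) \phi(e)$ can be strongly negative, since $\ell(e)/\deg(\square{i}) = 1$ there). By the parity condition of $\calL$, every degree-$1$ square lies in $U_\alpha \triangle V_\alpha$; the non-spider hypothesis then caps each circle's deg-$1$ neighbors at $\leq 1$ from $U_\alpha$ and $\leq 1$ from $V_\alpha$. Furthermore, for a circle $\circle{u} \notin S_{\min}$, the connected component of $V(\alpha) \setminus S_{\min}$ containing $\circle{u}$ must lie entirely on the $U$-side or the $V$-side (else $S_{\min}$ would fail to be a separator), which rules out $\circle{u}$ being adjacent outside $S_{\min}$ to both a deg-$1$ $U$-square and a deg-$1$ $V$-square. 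Combining these facts gives at most $1$ bad edge per circle outside $S_{\min}$ and at most $2$ per circle inside $S_{\min}$.

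Given this structural bound, the remainder is a direct calculation. For $\circle{u} \notin S_{\min}$, using that each bad edge contributes $\ell(e)/\deg(\square{i}) = 1$ and every other edge to an outside square contributes at most $\ell(e)/2$,
\[
\sum_{e \ni \circle{u}} \ell(e)\phi(e) \;\geq\; \deg(\circle{u}) - \log_n m - \frac{\deg(\circle{u}) + 1}{2} \;\geq\; \frac{\deg(\circle{u})}{2} - 2 + \epsilon \;\geq\; \frac{\epsilon \deg(\circle{u})}{4},
\]
where the final step uses $\deg(\circle{u}) \geq 4$. The analogous estimate for $\circle{u} \in S_{\min}$ follows similarly (with the $2$-bad-edge bound in place of $1$, but without any $\log_n m$ subtraction). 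Summing over all circles yields $\Omega(\epsilon \cdot |E(\alpha)|)$, completing the proof. The main obstacle is the separator-plus-non-spider argument that caps bad edges per circle; once this is in place, the remaining steps are mechanical.
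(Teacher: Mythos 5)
Your proof is correct and takes essentially the same approach as the paper's: a charging argument that exploits (i) degree-$1$ squares lying in $U_\alpha \triangle V_\alpha$ by parity, (ii) the non-spider hypothesis capping each circle's degree-$1$ $U$- and $V$-neighbors at one each, (iii) the separator property forbidding a circle outside $S_{\min}$ from having a degree-$1$ $U$-neighbor and a degree-$1$ $V$-neighbor both outside $S_{\min}$, and (iv) the constraint $\deg(\circle{u}) \ge 4$. Your per-edge charge function $\phi(e)$ is a clean reformulation of the paper's per-vertex accounting via the split $E_1 \cup E_2$; one small point worth making explicit is that the identity $|E(\alpha)| - w(V(\alpha) \setminus S_{\min}) = \sum_e \ell(e)\phi(e)$ tacitly requires every vertex outside $S_{\min}$ to have degree at least $1$, which holds since degree-$0$ vertices in a shape from $\calL$ can only live in $U_\alpha \cap V_\alpha \subseteq S_{\min}$.
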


\begin{proof}
The idea behind the proof is as follows. Each square vertex which is not in the minimum vertex separator contributes $\sqrt{n}$ to the norm bound while each circle vertex which is not in the minimum vertex separator contributes $\sqrt{m}$. To compensate for this, we will try and take the factor of $\frac{1}{\sqrt{n}}$ from each edge and distribute it among its two endpoints so that each square vertex which is not in the minimum vertex separator is assigned a factor of $\frac{1}{\sqrt{n}}$ or smaller and each circle vertex which is not in the minimum vertex separator is assigned a factor of $\frac{1}{\sqrt{m}}$ or smaller.
\begin{remark}
Instead of using the minimum vertex separator, we will actually use a set $S$ of square vertices such that $w(S) \leq w(S_{\min})$. For details, see the actual distribution scheme below.
\end{remark}
To motivate the distribution scheme which we use, we first give two attempts which don't quite work. For simplicity, for these first two attempts we assume that $U_{\alpha} \cap V_{\alpha} = \emptyset$ as vertices in $U_{\alpha} \cap V_{\alpha}$ can essentially be ignored.
\begin{enumerate}
\item[] Attempt 1: Take each edge and assign a factor of $\frac{1}{\sqrt[4]{n}}$ to its square endpoint and a factor of $\frac{1}{\sqrt[8]{m}}$ to its circle endpoint.

With this distribution scheme, since each circle vertex has degree at least $4$, each circle vertex is assigned a factor of $\frac{1}{\sqrt{m}}$ or smaller. Since each square vertex in $W_{\alpha}$ has degree at least $2$, each square vertex in $W_{\alpha}$ is assigned a factor of $\frac{1}{\sqrt{n}}$ or smaller. However, square vertices in $U_{\alpha} \cup V_{\alpha}$ may only have degree $1$ in which case they are assigned a factor of $\frac{1}{\sqrt[4]{n}}$ which is not small enough.

To fix this issue, we can have all of the edges which are incident to a square vertex in $U_{\alpha} \cup V_{\alpha}$ give their entire factor of $\frac{1}{\sqrt{n}}$ to the square vertex.
\begin{remark}\label{rmk:pe-one}
For analyzing $\pE[1]$, this first attempt works as $U_{\alpha} = V_{\alpha} = \emptyset$. Thus, as long as $m \leq n^{2 - \epsilon}$, with high probability $\pE[1] = 1 \pm \littleoh_n(1)$ .
\end{remark}
\item[] Attempt 2: For each edge which is between a square vertex in $U_{\alpha} \cup V_{\alpha}$ and a circle vertex, we assign a factor of $\frac{1}{\sqrt{n}}$ to the square vertex and nothing to the circle vertex. For all other edges, we assign a factor of $\frac{1}{\sqrt[4]{n}}$ to its square endpoint and a factor of $\frac{1}{\sqrt[6]{m}}$ to its circle endpoint (which we can do because $m \leq n^{\frac{3}{2} - \epsilon}$).

With this distribution scheme, each square vertex is assigned a factor of $\frac{1}{\sqrt{n}}$. Since $\alpha$ is not a spider, no circle vertex is adjacent to two vertices in $U_{\alpha}$ or $V_{\alpha}$. Thus, any circle vertex which is not adjacent to both a square vertex in $U_{\alpha}$ and a square vertex in $V_{\alpha}$ must be adjacent to at least $3$ square vertices in $W_{\alpha}$ and is thus assigned a factor of $\frac{1}{\sqrt{m}}$ or smaller. However, we can have circle vertices which are adjacent to both a square vertex in $U_{\alpha}$ and a square vertex in $V_{\alpha}$. These circle vertices may be assigned a factor of $\frac{1}{\sqrt[3]{m}}$, which is not small enough.

To fix this, observe that whenever we have a circle vertex which is adjacent to both a square vertex in $U_{\alpha}$ and a square vertex in $V_{\alpha}$, this gives a path of length $2$ from $U_{\alpha}$ to $V_{\alpha}$. Any vertex separator must contain one of the vertices in this path, so we can put one of these two square vertices in $S$ and not assign it a factor of $\frac{1}{\sqrt{n}}$.
\item[] Actual distribution scheme: Based on these observations, we use the following distribution scheme. Here we are no longer assuming that $U_{\alpha} \cap V_{\alpha}$ is empty.
\begin{enumerate}
\item[1.] Choose a set of square vertices $S \subseteq U_{\alpha} \cup V_{\alpha}$ as follows. Start with $S = U_{\alpha} \cap V_{\alpha}$. Whenever we have a circle vertex which is adjacent to both a square vertex in $U_{\alpha} \setminus V_{\alpha}$ and a square vertex in $V_{\alpha} \setminus U_{\alpha}$, put one of these two square vertices in $S$ (this choice is arbitrary). Observe that $w(S) \leq w(S_{\min})$
\item[2.] For each edge which is incident to a square vertex in $S$, assign a factor of $\frac{1}{\sqrt[3]{m}}$ to its circle endpoint and nothing to this square. 
\item[3.] For each edge which is incident to a square vertex in $(U_{\alpha} \cup V_{\alpha}) \setminus S$, assign a factor of $\frac{1}{\sqrt{n}}$ to the square vertex and nothing to the circle vertex.
\item[4.] For all other edges, assign a factor of $\frac{1}{\sqrt[4]{n}}$ to its square endpoint and a factor of $\frac{1}{\sqrt[6]{m}}$ to its circle endpoint.
\end{enumerate}
Now each square vertex which is not in $S$ is assigned a factor of $\frac{1}{\sqrt{n}}$ and since $\alpha$ is not a spider, all circle vertices are assigned a factor of $\frac{1}{\sqrt{m}}$ or smaller. 
\end{enumerate}
We now make this argument formal. 

	Let $\calC_{\alpha}$ and $\calS_{\alpha}$ be the set of circle vertices and the set of square vertices in $\alpha$ respectively. We have $ n^{\frac{w(V(\alpha)) - w(S_{\min})}{2}} \leq n^{0.5|\calS_{\alpha} \setminus S_{min}| + (0.75 - \frac{\eps}{2})|\calC_{\alpha} \setminus S_{min}|}$. So, it suffices to prove that
	\[|E(\alpha)| - |\calS_{\alpha} \setminus S_{min}| - (1.5 - \eps)|\calC_{\alpha} \setminus S_{min}| \ge \Omega(\epsilon |E(\alpha)|)\]
	
	Let $Q = U_{\alpha} \cap V_{\alpha}, P = (U_{\alpha} \cup V_{\alpha}) \setminus Q$ and let $P'$ be the set of vertices of $P$ that have degree $1$ and are not in $S_{min}$. Let $E_1$ be the set of edges incident to $P'$ and let $E_2 = E(\alpha) \setminus E_1$. 
	
	For each vertex $\square{i}$ (resp. $\circle{u}$), let the number of edges of $E_2$ incident to it be $\deg'(\square{i})$ (resp. $\deg'(\circle{u})$). Since $\alpha$ is bipartite, we have that $|E_2| = \sum_{\square{i} \in \calS_{\alpha}} \deg'(\square{i}) = \sum_{\circle{u} \in \calC_{\alpha}} \deg'(\circle{u})$. We get that
	\[|E(\alpha)| = |E_1| + |E_2| = |P'| + \frac{1}{2}(\sum_{\square{i} \in \calS_{\alpha}} \deg'(\square{i}) + \sum_{\circle{u} \in \calC_{\alpha}} \deg'(\circle{u}))\]
	
	We also have $|S_{\alpha} \setminus S_{min}| \le |P'| + |\calS_{\alpha} \cap W_{\alpha}| + |\calS_{\alpha} \cap (P \setminus P')| \le |P'| + \frac{1}{2} \sum_{\square{i} \in \calS_{\alpha}} \deg'(\square{i})$ because each square vertex outside $P' \cup Q$ has degree at least $2$ and is not incident to any edge in $E_1$. So, it suffices to prove
	\[\frac{1}{2}\sum_{\circle{u} \in \calC_{\alpha}} \deg'(\circle{u}) - (1.5 - \eps)|\calC_{\alpha} \setminus S_{min}| \ge \Omega(\epsilon |E(\alpha)|)\]
	
	Now, observe that each $\circle{u} \in \calC_{\alpha}$ is incident to at most two edges in $E_1$. This is because if it were adjacent to at least $3$ edges in $E_1$, then either $\circle{u}$ is adjacent to at least two vertices of degree $1$ in $U_{\alpha}$ or $\circle{u}$ is adjacent to at least two vertices of degree $1$ in $V_{\alpha}$. However, this cannot happen since $\alpha$ is not a spider. This implies that $\deg'(\circle{u}) \ge \deg(\circle{u}) - 2$.
	
	Note moreover that if $\circle{u} \in \calC_{\alpha} \setminus S_{min}$, we have that $\deg'(\circle{u}) \ge \deg(\circle{u}) - 1$. This is because, building on the preceding argument, $\deg'(\circle{u}) = \deg(\circle{u}) - 2$ can only happen if there exist $\square{i} \in U_{\alpha}, \square{j} \in V_{\alpha}$ such that $(\square{i}, \circle{u}), (\square{j}, \circle{u}) \in E_1$. But then, note that we have $\square{i}, \square{j} \not\in S_{min}$ by definition of $P'$ and also, $\circle{u} \not\in S_{min}$ by assumption. This means that there is a path from $U_{\alpha}$ to $V_{\alpha}$ which does not pass through $S_{min}$, which is a contradiction.
	
	Finally, we set $\epsilon$ small enough such that the following inequalities are true, both of which follow from the fact that $\deg(\circle{u}) \ge 4$ for all $\circle{u} \in \calC_{\alpha}$.
	\begin{enumerate}
		\item For any $\circle{u} \in \calC_{\alpha} \cap S_{min}$, we have $\frac{\deg(\circle{u}) - 2}{2} \ge \frac{\eps}{10}\deg(\circle{u})$.
		\item For any $\circle{u} \in \calC_{\alpha} \setminus S_{min}$, we have $\frac{\deg(\circle{u}) - 1}{2} - 1.5 + \eps \ge \frac{\eps}{10}\deg(\circle{u})$.
	\end{enumerate}
	Using this, we get
	\begin{align*}
	\frac{1}{2}\sum_{\circle{u} \in \calC_{\alpha}} \deg'(\circle{u}) - (1.5 - \eps)|\calC_{\alpha} \setminus S_{min}| &\ge \sum_{\circle{u} \in \calC_{\alpha} \cap S_{min}} 
	\frac{\deg(\circle{u}) - 2}{2} + \sum_{\circle{u} \in \calC_{\alpha} \setminus S_{min}} 
	\frac{\deg(\circle{u}) - 1}{2} - (1.5 - \eps)|\calC_{\alpha} \setminus S_{min}|\\
	&\ge \sum_{\circle{u} \in \calC_{\alpha} \cap S_{min}} 
	\frac{\eps}{10}\deg(\circle{u}) + \sum_{\circle{u} \in \calC_{\alpha} \setminus S_{min}} \left(\frac{\deg(\circle{u}) - 1}{2} - 1.5 + \eps\right)\\
	&\ge \sum_{\circle{u} \in \calC_{\alpha} \cap S_{min}} 
	\frac{\eps}{10}\deg(\circle{u}) + \sum_{\circle{u} \in \calC_{\alpha} \setminus S_{min}} 
	\frac{\eps}{10}\deg(\circle{u})\\
	&= \sum_{\circle{u} \in \calC_{\alpha}} \frac{\eps}{10}\deg(\circle{u}) = \Omega(\eps|E(\alpha)|)
	\end{align*}
\end{proof}

Since $\calL_{bool} \subseteq \calL$, the above result extends to non-trivial non spider shapes in $\calL_{bool}$ too.

\begin{corollary}
	If $\alpha \in \calL_{bool}$ is not a trivial shape and not a spider, then
	\[\frac{1}{n^{|E(\alpha)|/2}} n^{\frac{w(V(\alpha)) - w(S_{\min})}{2}} \le \frac{1}{n^{\Omega(\eps |E(\alpha)|)}}\]
\end{corollary}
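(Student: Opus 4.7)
The plan is to observe that this corollary is an immediate consequence of the preceding \cref{lem:charging}, exploiting the fact that $\calL_{bool}$ is defined as a subset of $\calL$. Specifically, $\calL_{bool}$ consists exactly of those shapes in $\calL$ whose edge labels are all equal to $1$, so any $\alpha \in \calL_{bool}$ is in particular a member of $\calL$.

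The argument therefore proceeds in one line: given $\alpha \in \calL_{bool}$ that is neither trivial nor a spider, apply \cref{lem:charging} directly to $\alpha$ (viewed as an element of $\calL$) and conclude the desired inequality
\[
\frac{1}{n^{|E(\alpha)|/2}}\, n^{\frac{w(V(\alpha)) - w(S_{\min})}{2}} \;\le\; \frac{1}{n^{\Omega(\eps\, |E(\alpha)|)}}.
\]
No new combinatorial analysis is required: the edge-distribution scheme used to prove \cref{lem:charging} (distributing a factor of $1/\sqrt{n}$ per edge across its endpoints, with special treatment for edges incident to vertices in $S$ or in $P'$) does not depend on the edge labels being $\ge 1$ or exactly $1$; it only uses the structural facts that each circle vertex has degree at least $4$, square middle vertices have degree at least $2$, and that $\alpha$ is not a spider.

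There is essentially no obstacle here, since the only property of $\alpha$ used by \cref{lem:charging} beyond its membership in $\calL$ is its being a non-trivial non-spider, and both of these properties are assumed in the corollary. The corollary is thus stated separately purely for convenience, to make it available when the analysis is later specialized to the boolean setting.
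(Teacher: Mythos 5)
Your proposal is correct and matches the paper's reasoning exactly: the paper notes in one sentence that the corollary follows from \cref{lem:charging} because $\calL_{bool} \subseteq \calL$, and gives no further proof. Your additional remarks on why the charging scheme is label-agnostic are accurate but unnecessary, since membership in $\calL$ is all that \cref{lem:charging} requires.
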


\begin{corollary}\label{cor:non_spider_killing}
	If $\alpha \in \calL$ is not a trivial shape and not a spider, then w.h.p. \[\frac{1}{n^{|E(\alpha)|/2}}\norm{M_{\alpha}} \le \frac{1}{n^{\Omega(\epsilon |E(\alpha)|)}}\]
\end{corollary}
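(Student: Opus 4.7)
The plan is to deduce this corollary by combining the high-probability graph matrix norm bound stated in the preliminaries with the purely combinatorial bound from Lemma 5.6 (which I will call the \emph{charging lemma}). The norm bound says that with high probability, uniformly over all shapes $\alpha$ arising in our decomposition,
\[
\norm{M_\alpha} \;\le\; \widetilde{O}\!\left(n^{\frac{w(V(\alpha)) - w(S_{\min}) + w(W_{\mathrm{iso}})}{2}}\right),
\]
where $W_{\mathrm{iso}}$ denotes the set of isolated vertices in $W_\alpha$. So my first step is to observe that for any $\alpha \in \calL$ we have $W_{\mathrm{iso}} = \emptyset$: this is built into the definition of $\calL$ in~\cref{def:calL_valid_shapes}, which explicitly requires $W_\alpha$ to have no degree-$0$ vertices. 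Hence the factor $n^{w(W_{\mathrm{iso}})/2}$ disappears, and the norm bound collapses to $\widetilde{O}\!\left(n^{(w(V(\alpha)) - w(S_{\min}))/2}\right)$.

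Next I would multiply through by the coefficient factor $n^{-|E(\alpha)|/2}$ and apply the charging lemma directly: it yields
\[
\frac{1}{n^{|E(\alpha)|/2}} \cdot n^{\frac{w(V(\alpha)) - w(S_{\min})}{2}} \;\le\; \frac{1}{n^{\Omega(\epsilon |E(\alpha)|)}}
\]
exactly under the hypothesis that $\alpha$ is neither trivial nor a spider. Combining the two bounds gives
\[
\frac{1}{n^{|E(\alpha)|/2}}\norm{M_\alpha} \;\le\; \widetilde{O}(1) \cdot \frac{1}{n^{\Omega(\epsilon |E(\alpha)|)}}.
\]

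Finally, I would absorb the polylogarithmic $\widetilde{O}(1)$ factor into the exponent. Since $|E(\alpha)| \ge 1$ for any non-trivial shape (a trivial shape has no edges, and the only shapes in $\calL$ with no edges are trivial, because $W_\alpha$ cannot have isolated vertices and the parity condition forces $\deg(\circle{u}) \ge 4$), the bound $n^{-\Omega(\epsilon |E(\alpha)|)}$ loses at most a constant-factor slack in the exponent when swallowing $\poly\log(n)$, leaving the stated $n^{-\Omega(\epsilon |E(\alpha)|)}$ bound intact. I do not expect any real obstacle here: the entire content of the corollary lives in Lemma 5.6 and in the preliminary norm bounds, and the task is only to check that the hypotheses of those two results apply to $\alpha \in \calL$ that are non-trivial non-spiders, and to verify the book-keeping of the exponents.
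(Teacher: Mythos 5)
Your high-level plan is exactly the paper's: invoke the graph-matrix norm bound, observe that $W_{\mathrm{iso}}=\emptyset$ for $\alpha\in\calL$, apply \cref{lem:charging}, and absorb the leftover prefactor. The one place where the argument as written does not hold up is the final absorption step. You treat the $\widetilde{O}$ in the informal norm bound as hiding a fixed $\polylog(n)$ factor, but the actual bound (\cref{lem:gaussian-norm-bounds}) has prefactor
\[
\left(\abs{V(\alpha)}\cdot(1+\abs{E(\alpha)})\cdot\log n\right)^{C\cdot(\abs{V_{rel}(\alpha)}+\abs{E(\alpha)})},
\]
and for $\alpha\in\calL$ the exponent can be as large as $\Theta(n^\tau)$ since $\abs{E(\alpha)}$ ranges up to $n^\tau$. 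This is quasi-polynomial in $n$, not polylogarithmic, so it is \emph{not} dominated by $n^{-\Omega(\eps)}$ alone; the fact that $\abs{E(\alpha)}\geq1$ does not save you.

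What rescues the argument is that the prefactor's exponent also scales \emph{linearly} in $\abs{E(\alpha)}$, matching the scaling of the savings from \cref{lem:charging}. Concretely: since $\alpha\in\calL$ forces every vertex of $V_{rel}(\alpha)$ to have degree at least $1$ (middle vertices have positive degree by definition, and vertices of $U_\alpha\triangle V_\alpha$ have odd, hence nonzero, degree by the parity constraint), one gets $\abs{V_{rel}(\alpha)}\leq 2\abs{E(\alpha)}$. Together with $\abs{V(\alpha)}\cdot(1+\abs{E(\alpha)})\cdot\log n\leq n^{O(\tau)}$, the entire prefactor is at most $n^{O(\tau\abs{E(\alpha)})}$. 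Since the paper fixes $\tau$ to be a sufficiently small constant multiple of $\eps$, this is swallowed by the $n^{-\Omega(\eps\abs{E(\alpha)})}$ from the charging lemma while leaving the $\Omega(\eps\abs{E(\alpha)})$ exponent intact. You should make this $\abs{V_{rel}(\alpha)}\leq 2\abs{E(\alpha)}$ observation and the per-edge bookkeeping explicit; as stated, "swallowing $\poly\log(n)$" is not a valid step.
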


\begin{proof}
	Using the norm bounds in~\cref{lem:gaussian-norm-bounds}, we have
	\[ \norm{M_\alpha} \leq 2\cdot\left(\abs{V(\alpha)} \cdot (1+\abs{E(\alpha)}) \cdot \log(n)\right)^{C\cdot (\abs{V_{rel}(\alpha)} + \abs{E(\alpha)})} \cdot n^q{\frac{w(V(\alpha)) - w(S_{\min}) + w(W_{iso})}{2}}\]
	We have $W_{iso} = \emptyset$. Observe that since there are no degree $0$ vertices in $V_{rel}(\alpha)$, we have that $|V_{rel}(\alpha)| \le 2|E(\alpha)|$ and since we also have $|V(\alpha)|\cdot (1+\abs{E(\alpha)})\cdot \log n \le n^{O(\tau)}$, the factor $2\cdot(\abs{V(\alpha)} \cdot (1+\abs{E(\alpha)}) \cdot \log(n))^{C\cdot (\abs{V_{rel}(\alpha)} + \abs{E(\alpha)})}$ can be absorbed into $\frac{1}{n^{\Omega(\eps |E(\alpha)|)}}$. The result follows from~\cref{lem:charging}.
\end{proof}

This says that nontrivial non-spider shapes have $\littleoh_n(1)$ norm (ignoring the extra factor $\eta$ for the moment). We now demonstrate how to use this norm bound to control the total norm of all non-spiders in a block of $\calM$,~\cref{cor:non-spider-sum}. We will first need a couple propositions which will also be of use to us later after we kill the spiders.

\begin{proposition}\label{prop:edge-shape-count}
	The number of proper shapes with at most $L$ vertices and exactly $k$ edges is at most $L^{8(k+1)}$.
\end{proposition}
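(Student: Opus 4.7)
The plan is to count these shapes by encoding each proper shape $\alpha=(V(\alpha), E(\alpha), U_\alpha, V_\alpha)$ as a bounded-length structured description, and then bounding the number of possible descriptions. Since shapes are defined up to type-preserving bijection, the encoding must rely only on data that is invariant under relabeling of the formal variables, which forces us to treat isolated and non-isolated vertices differently.

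First, I would split $V(\alpha)$ into its non-isolated and isolated parts. Because $\alpha$ is proper and $|E(\alpha)|=k$ (the sum of edge labels, with each label a positive integer), there are at most $k$ distinct edges and hence at most $2k$ non-isolated vertices. I would fix a canonical labeling of these vertices by $\{1,\ldots,2k\}$, and for each one specify its type (circle or square) together with its membership in $U_\alpha$ and $V_\alpha$, contributing at most $8^{2k}$ choices. I would then list the $\le k$ distinct edges, each specified by its two endpoints (from $\{1,\ldots,2k\}$) and by its positive integer label (which is at most $k$, since labels sum to $k$); this yields at most $((2k)^2\cdot k)^k \le (2k)^{3k}$ choices for the edge data.

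For the isolated vertices, the only equivalence-invariant information is how many of each isomorphism class occur. An isolated vertex is characterized by its type together with its membership in $U_\alpha$ and $V_\alpha$, so there are at most $8$ isomorphism classes. The isolated part of $\alpha$ is thus specified by choosing $8$ multiplicities, each in $\{0,1,\ldots,L\}$, contributing at most $(L+1)^8$ choices. Combining everything, the number of shapes is at most $(L+1)^8\cdot 8^{2k}\cdot(2k)^{3k}$, which is bounded by $L^{8(k+1)}$ after absorbing the constants into the exponent $8(k+1)$ (with the edge case $k=0$ handled directly by the $(L+1)^8$ term, and the case $L<2k$ handled by noting that then the whole bound $L^{8(k+1)}$ already dominates trivially).

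The main obstacle I expect is the bookkeeping of equivalences under type-preserving bijection. Assigning an arbitrary canonical labeling to the non-isolated vertices is safe because this only overcounts and still yields an upper bound, but the isolated vertices must be handled differently: any labeled encoding of them would overcount by huge factors, so one must restrict to the $8$-tuple of class multiplicities. Separating the counting cleanly along this isolated/non-isolated split is the one place where care is required; after that, the remaining work is elementary arithmetic absorbed into the slack of the $L^{8(k+1)}$ bound.
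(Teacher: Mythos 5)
Your approach diverges from the paper's and runs into a genuine arithmetic problem. The paper's proof is much shorter and sidesteps the difficulty entirely: choose the number of square and circle variables in each of the four sets $U \cap V$, $U \setminus V$, $V \setminus U$, $W$ --- eight counts, each at most $L$, giving $L^8$ --- and then place each of the $k$ edge-units between two of the at most $L$ vertices, giving $L^{2k}$. Since $L^{8+2k} \le L^{8(k+1)}$, this closes immediately for all $k \ge 0$. The observation you are missing is that, because over-counting is explicitly permitted (the paper says ``many will be constructed multiple times''), you may describe the vertex data by the eight class sizes for \emph{all} vertices uniformly, not only the isolated ones; there is no need to record per-vertex class data for the non-isolated vertices, and no need for your premise that the encoding must be relabeling-invariant. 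That premise is what pushes you into the more expensive per-vertex bookkeeping.

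The concrete gap is in your final absorption step. The bound $(L+1)^8 \cdot 8^{2k} \cdot (2k)^{3k}$ does not fit inside $L^{8(k+1)}$: the factor $(2k)^{3k}$ is not a constant, and even in your ``$L \ge 2k$'' regime it contributes $L^{3k}$, while $8^{2k} = 2^{6k} \le L^{6k}$, so your total is on the order of $L^{9k+8}$, which exceeds $L^{8k+8}$ for every $k \ge 1$. Concretely, with $L = 2$, $k = 1$ (which is in your $L \ge 2k$ case, not your fallback case), your bound is $3^8 \cdot 8^2 \cdot 2^3 \approx 3.4 \times 10^6$, while $L^{8(k+1)} = 2^{16} = 65536$. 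Your fallback claim that the regime $L < 2k$ ``already dominates trivially'' is also left unjustified: you never bound the actual number of shapes in that regime by $L^{8(k+1)}$. The encoding you give is a valid upper bound on the shape count, but it does not prove the stated inequality; to finish you would need either the paper's tighter vertex bookkeeping or a separate, careful analysis in the small-$L$ and small-$k$ cases.
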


\begin{proof}
    The following process captures all shapes (though many will be constructed multiple times):
  \begin{itemize}
      \item Choose the number of square and circle variables in each of the four sets $U \cap V, U \setminus (U \cap V), V \setminus (U \cap V), W$. This contributes a factor of $L^{8}$.
      \item Place each edge between two of the vertices. This contributes a factor of $L^{2 k}$.
  \end{itemize}
\end{proof}

\begin{proposition}\label{prop:coefficient-bound}
$\abs{\lambda_\alpha} \leq \eta^{\abs{U_\alpha} + \abs{V_\alpha}} \cdot  \frac{\abs{E(\alpha)}^{3\cdot \abs{E(\alpha)}}}{n^{\abs{E(\alpha)}/2}}$ where we assume by convention that $0^0 = 1$.
\end{proposition}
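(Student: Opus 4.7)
The plan is to split into the Gaussian and boolean cases and, in each, substitute the explicit formula for $\lambda_{\alpha}$ and bound the surviving combinatorial factors. Throughout, recall that $\eta^{\abs{U_\alpha}+\abs{V_\alpha}} = n^{-(\abs{U_\alpha}+\abs{V_\alpha})/2}$, so only the factors indexed by circles and edges need work. If $\abs{E(\alpha)} = 0$ the product is empty, $\lambda_\alpha = \eta^{\abs{U_\alpha}+\abs{V_\alpha}}$ by definition, and the conclusion follows from the convention $0^0 = 1$. So assume $\abs{E(\alpha)} \geq 1$.

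For the Gaussian case, by definition
\[
\abs{\lambda_\alpha} \;=\; \eta^{\abs{U_\alpha}+\abs{V_\alpha}} \cdot \frac{1}{n^{\abs{E(\alpha)}/2}} \cdot \frac{\prod_{\circle{u} \in V(\alpha)} \abs{h_{\deg(\circle{u})}(1)}}{\prod_{e \in E(\alpha)} l(e)!}.
\]
I will bound $\abs{h_k(1)}$ using the exponential generating function $\sum_{k} h_k(1)\, t^k / k! = \exp(t - t^2/2)$: evaluating on the unit circle gives $\abs{h_k(1)} \leq e^{3/2}\,k!$, so $\abs{h_k(1)} \leq C^{k+1} k!$ for an absolute constant $C$. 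Next, since $\alpha$ is bipartite, edges partition according to their circle endpoint, and for each circle $\circle{u}$ the ratio
\[
\frac{\deg(\circle{u})!}{\prod_{e \ni \circle{u}} l(e)!}
\]
is a multinomial coefficient, bounded by $(\text{\#edges at } \circle{u})^{\deg(\circle{u})} \leq \deg(\circle{u})^{\deg(\circle{u})}$. Taking the product over circles and using $\sum_{\circle{u}} \deg(\circle{u}) = \abs{E(\alpha)}$ yields
\[
\prod_{\circle{u}} \frac{\deg(\circle{u})!}{\prod_{e\ni\circle{u}} l(e)!} \;\leq\; \prod_{\circle{u}} \deg(\circle{u})^{\deg(\circle{u})} \;\leq\; \abs{E(\alpha)}^{\abs{E(\alpha)}}.
\]
Since each circle in $V(\alpha)$ has positive degree, the number of circles is at most $\abs{E(\alpha)}$, so the prefactor $C^{\abs{\calC_\alpha}+\abs{\calC_\alpha}}$ coming from the Hermite bound is at most $C^{2\abs{E(\alpha)}} \leq \abs{E(\alpha)}^{\abs{E(\alpha)}}$ (absorbing $C$ for $\abs{E(\alpha)}$ large enough, and checking the small cases by hand). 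Combining gives the claim with a factor $\abs{E(\alpha)}^{2\abs{E(\alpha)}}$, well within the allowed $\abs{E(\alpha)}^{3\abs{E(\alpha)}}$.

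For the boolean case,
\[
\abs{\lambda_\alpha} \;=\; \eta^{\abs{U_\alpha}+\abs{V_\alpha}} \cdot \prod_{\circle{u} \in V(\alpha)} \abs{e(\deg(\circle{u}))},
\]
so the task is to absorb the product of $\abs{e(\cdot)}$'s into $\abs{E(\alpha)}^{3\abs{E(\alpha)}} / n^{\abs{E(\alpha)}/2}$. The key input is a moment bound on the slice: for $x$ uniform on $\slice(\sqrt{n})$, one has $\abs{e(k)} \leq k^{k}/n^{k/2}$ for $k \leq n^{1/2}$, and $\abs{e(k)}$ is identically zero (or negligible) for odd $k$ by the definition; I will either quote this (it is standard via the hypergeometric representation of $e(k)$, and in the pseudocalibrated shapes the degrees are bounded by $\abs{E(\alpha)} \leq n^{\tau}$ with $\tau$ small) or prove it by expanding as an alternating sum of products of falling factorials on the slice. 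Given the bound, the product over circles contributes
\[
\prod_{\circle{u}} \abs{e(\deg(\circle{u}))} \;\leq\; \frac{1}{n^{\abs{E(\alpha)}/2}} \prod_{\circle{u}} \deg(\circle{u})^{\deg(\circle{u})} \;\leq\; \frac{\abs{E(\alpha)}^{\abs{E(\alpha)}}}{n^{\abs{E(\alpha)}/2}},
\]
again using $\sum_{\circle{u}} \deg(\circle{u}) = \abs{E(\alpha)}$, which is tighter than required.

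The main obstacle is the slice-moment bound $\abs{e(k)} \leq k^{k}/n^{k/2}$: unlike the Gaussian case, where Hermite bounds are completely standard, the monomial moments on $\slice(\sqrt{n})$ involve alternating sums of hypergeometric probabilities and require a careful cancellation argument to extract the $n^{-k/2}$ decay without picking up large constants in the range of degrees we care about. Everything else is a direct substitution plus the multinomial identity.
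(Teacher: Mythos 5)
Your Gaussian case is correct and in fact takes a cleaner route than the paper. You bound $\abs{h_k(1)} \leq Ck!$ via the Cauchy integral of the generating function $\exp(t - t^2/2)$ on the unit circle, and then cancel the $k!$ against the edge factorials $\alpha! = \prod_e l(e)!$ by grouping the (bipartite) edges by their circle endpoint, recognizing each ratio $\deg(\circle{u})!/\prod_{e\ni\circle{u}} l(e)!$ as a multinomial coefficient bounded by $\deg(\circle{u})^{\deg(\circle{u})}$. The paper instead uses the three-term recurrence to show $\abs{h_k(1)} \leq k^k$ and simply discards the $1/\alpha!$ factor; your version is sharper and slightly slicker, though both land inside the allowed $\abs{E(\alpha)}^{3\abs{E(\alpha)}}$.

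The boolean case has a genuine gap, which you yourself flag. You need a bound of the form $\abs{e(k)} \leq k^{O(k)} n^{-k/2}$, and you claim $\abs{e(k)} \leq k^{k}/n^{k/2}$ is ``standard via the hypergeometric representation'' or provable ``by expanding as an alternating sum of products of falling factorials on the slice.'' Neither is the case. The paper devotes a full appendix section to exactly this (its Corollary~\ref{cor:bound_on_coeff_e_k}, giving $\abs{e(k)} \leq k^{3k} n^{-k/2}$), and explicitly remarks that the naive expansion gives only $3^{k^3}/n^{k/2}$, which is far too lossy once $k = n^{\Theta(\tau)}$: $3^{k^3}$ dwarfs any $k^{O(k)}$. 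To get the $k^{3k}$ prefactor the paper runs a delicate induction over integer partitions (its Lemma~\ref{lem:slice_inv_exp}), starting from the exact identity $\sum_{\lambda \vdash k} \tfrac{\lambda!}{\lambda_1!\cdots\lambda_k!}\tfrac{(n)_{\lambda^t_1}}{\abs{\aut(\lambda)}} \E[x^\lambda] = n^{k/2}$ on the slice, splitting partitions by the size of their largest part, and controlling the cancellation term by term. Your sketch neither carries this out nor cites a source for a comparable estimate, so as written the boolean half of the proposition is unproven. (As a side remark, the parenthetical that $e(k)$ is ``identically zero'' for odd $k$ is false --- $e(1) = 1/\sqrt{n}$ --- though harmless here since only even circle degrees arise in $\calL_{bool}$.)
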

\begin{proof}
\noindent\textbf{(Gaussian setting)} Recall that the coefficients $\lambda_\alpha$ are either zero or are defined by the formula
\[\lambda_\alpha  = \eta^{\abs{U_\alpha} + \abs{V_\alpha}}\cdot \left( \prod_{\circle{u}\in V(\alpha)} h_{\deg(\circle{u})}(1)\right) 
	\cdot \frac{1}{ n^{\abs{E(\alpha)}/2}}
	\cdot \frac{1}{\alpha!}\]
	
	The sequence $h_k(1)$ satisfies the recurrence $h_0(1) = h_1(1) = 1, h_{k + 1}(1) = h_k(1) - kh_{k - 1}(1)$. We can prove by induction that $\abs{h_k(1)} \le k^k$ and hence, 
	\[\prod_{\circle{u}\in V(\alpha)} \abs{h_{\deg(\circle{u})}(1)} \le \prod_{\circle{u}\in V(\alpha)} (\deg(\circle{u}))^{\deg(\circle{u})} \le \abs{E(\alpha)}^{\abs{E(\alpha)}}.\]

\noindent\textbf{(Boolean setting)} In the boolean setting the coefficients $\lambda_\alpha$ are defined by
    \[\lambda_\alpha =  \eta^{\abs{U_\alpha} + \abs{V_\alpha}} \cdot \left(\prod_{\circle{u} \in V(\alpha)} e(\deg(\circle{u})) \right)\]
    Using~\cref{cor:bound_on_coeff_e_k}, we have that $\abs{e(k)} \le k^{3k} \cdot n^{-k/2}$. Thus,
    \[
    \abs{\lambda_\alpha} =  \eta^{\abs{U_\alpha} + \abs{V_\alpha}} \cdot \prod_{\circle{u} \in V(\alpha)} \abs{e(\deg(\circle{u}))} \le  \eta^{\abs{U_\alpha} + \abs{V_\alpha}} \cdot \frac{\abs{E(\alpha)}^{3\abs{E(\alpha)}}}{n^{\abs{E(\alpha)}/2}}.
    \]    
\end{proof}

\begin{corollary}\label{cor:non-spider-sum}
For $k, l \in \{0, 1, \dots , D/2\}$, let $\calB_{k,l} \subseteq \calL$ denote the set of nontrivial, non-spiders $\alpha \in \calL$ on the $(k,l)$ block i.e. $\abs{U_\alpha} = k, \abs{V_\alpha} = l$. The total norm of the non-spiders in $\calB_{k, l}$ satisfies
\[\sum_{\alpha \in \calB_{k, l}} \abs{\lambda_\alpha} \norm{M_\alpha} = \eta^{k + l} \cdot \frac{1}{n^{\Omega(\eps)}} \]
\end{corollary}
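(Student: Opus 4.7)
The plan is to obtain a per-shape bound from the norm estimate in \cref{cor:non_spider_killing} and the coefficient estimate in \cref{prop:coefficient-bound}, and then sum over all $\alpha \in \calB_{k,l}$ by grouping shapes according to their edge count using \cref{prop:edge-shape-count}.

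First, fix a shape $\alpha \in \calB_{k,l}$ and let $e = \abs{E(\alpha)}$. Combining \cref{prop:coefficient-bound} with \cref{cor:non_spider_killing} (which applies since $\alpha$ is a non-trivial non-spider in $\calL$) gives, with high probability,
\[
\abs{\lambda_\alpha}\norm{M_\alpha} \;\leq\; \eta^{k+l}\cdot e^{3e}\cdot \frac{1}{n^{\Omega(\eps\, e)}}.
\]
Next I would count the shapes in $\calB_{k,l}$ with exactly $e$ edges. Since $\alpha \in \calL$, the middle vertex set $W_\alpha$ has no isolated vertices, so $\abs{W_\alpha}\leq 2e$, and therefore $\abs{V(\alpha)} \leq k+l+2e$. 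By \cref{prop:edge-shape-count}, the number of proper shapes on at most $L:= k+l+2e$ vertices with exactly $e$ edges is at most $L^{8(e+1)}$. Since $k,l\leq D = 2n^\delta$ and $e\leq n^\tau$, we have $L \leq n^{O(\tau)}$, so the number of shapes with edge count $e$ is at most $n^{O(\tau (e+1))}$. Similarly $e^{3e}\leq n^{O(\tau e)}$.

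Combining these estimates yields
\[
\sum_{\alpha \in \calB_{k,l}} \abs{\lambda_\alpha}\norm{M_\alpha}
\;\leq\; \eta^{k+l}\sum_{e\geq 1} n^{O(\tau(e+1))}\cdot n^{-\Omega(\eps\, e)}
\;=\; \eta^{k+l}\cdot n^{O(\tau)}\sum_{e\geq 1} n^{-(\Omega(\eps)-O(\tau))\, e}.
\]
The main (and really only) obstacle is ensuring the geometric series converges at the required rate. This is handled by the parameter choice fixed in \cref{sec:pseudo_calib}, namely $\delta \leq c\tau \leq c'\eps$ with $c,c'$ sufficiently small absolute constants: with $\tau$ chosen a small enough multiple of $\eps$, the exponent $\Omega(\eps)-O(\tau)$ is still $\Omega(\eps)$, so the sum is dominated by its $e=1$ term up to a constant factor, giving
\[
\sum_{\alpha \in \calB_{k,l}} \abs{\lambda_\alpha}\norm{M_\alpha}
\;\leq\; \eta^{k+l}\cdot n^{O(\tau)-\Omega(\eps)} \;=\; \eta^{k+l}\cdot \frac{1}{n^{\Omega(\eps)}},
\]
as desired.
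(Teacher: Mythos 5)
Your proof is correct and follows essentially the same route as the paper: combine \cref{prop:coefficient-bound} with \cref{cor:non_spider_killing} for the per-shape bound, then sum over shapes grouped by edge count using \cref{prop:edge-shape-count} and the parameter hierarchy $\delta \le c\tau \le c'\eps$ to make the geometric series converge. The only cosmetic difference is that you spell out the bound $\abs{V(\alpha)} \le k+l+2\abs{E(\alpha)}$ before invoking the shape count, which the paper leaves implicit.
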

\begin{proof}
\begin{align*}
    \sum_{\alpha \in \calB_{k, l}} \abs{\lambda_\alpha} \norm{M_\alpha} & \leq \sum_{\alpha \in \calB_{k, l}}\eta^{k+l} \cdot \frac{\abs{E(\alpha)}^{3\abs{E(\alpha)}}}{n^{\abs{E(\alpha)}/2}} \norm{M_\alpha} && \text{(\cref{prop:coefficient-bound})}\\
    & \leq \eta^{k+l} \cdot\sum_{\alpha \in \calB_{k, l}}\left(\frac{\abs{E(\alpha)}^3}{n^{\Omega(\eps)}}\right)^{\abs{E(\alpha)}} && \text{(\cref{cor:non_spider_killing})}\\
    & \leq\eta^{k+l} \cdot \sum_{\alpha \in \calB_{k, l}}\left(\frac{n^{3\tau}}{n^{\Omega(\eps)}}\right)^{\abs{E(\alpha)}} && (\alpha \in \calL)\\
    & \leq \eta^{k+l} \cdot \sum_{\alpha \in \calB_{k, l}}\frac{1}{n^{\Omega(\eps\abs{E(\alpha)})}}\\
    & \leq \eta^{k+l} \cdot\sum_{i=1}^\infty \frac{n^{O(\tau i)}}{n^{\Omega(\eps i)}} && \text{(\cref{prop:edge-shape-count} and  }|E(\alpha)| \ge 1\text{ for }\alpha \in \calB_{k, l})\\
    & = \eta^{k+l} \cdot \frac{1}{n^{\Omega(\eps)}} \qedhere
\end{align*}
\end{proof}

\subsection{Killing a single spider}
\label{sec:single-spider}

We saw in the Proof Strategy section that the shape $2\beta_1 + \frac{1}{n}\beta_2$ lies in the nullspace of a moment matrix which 
satisfies the constraints ``$\ip{v}{d_u}^2 = 1$". The shape $\beta_1$ is 
exactly the kind of substructure that appears in a spider! Therefore it 
is natural to hope that if $\alpha$ is a left spider, then
$\calM_{fix}M_{\alpha} = 0$. This
doesn't quite hold because $\ip{v}{d_u}^2$ is ``missing" 
some terms: in realizations of $\alpha,$ the end vertices are required to be 
distinct from the other squares in $\alpha$, which prevents terms 
for all pairs $i,j$ from appearing in the product 
$\calM_{fix}M_\alpha$. There are smaller ``intersection terms" 
(which we call
collapses of $\alpha$) that we can add so that the end vertices are permitted to take
on all pairs $i, j$. After adding in these terms, we will produce a matrix $L$ with $\calM_{fix}L =
0$.

We first define what it means to collapse a shape into another shape
by merging two vertices. Here, we only define it for merging two
square vertices, since these are the only kind of merges that will
happen in our analysis of intersection terms.

\begin{definition}[Improper collapse]
    Let $\alpha$ be a shape and let $\square{i}, \square{j}$ be two distinct square vertices in $V(\alpha)$. We define the improper collapse of $\square{i}, \square{j}$ by:
    \begin{itemize}
        \item Remove \square{i}, \square{j} from $V(\alpha)$ and replace them by a single new vertex \square{k}.
        \item Replace each edge $\{\square{i}, \circle{u}\}$ and $\{\square{j}, \circle{u}\}$, if present, by $\{\square{k}, \circle{u}\}$, keeping the same labels (note that there may be multiedges and so the new shape may not be proper).
        \item Set $U(\square{k}) = U(\square{i}) + U(\square{j}) (\mod 2)$ and $V(\square{k}) = V(\square{i}) + V(\square{j}) (\mod 2)$.
    \end{itemize}
\end{definition}

Improper collapses have parallel edges, but we can convert them back to a sum
of proper shapes.
This is done by, for each set of parallel edges, expanding the product of Fourier characters in the Fourier basis. For example, two parallel edges with label 1 should be expanded as
\[h_1(z)^2 = (z^2-1) + 1 = h_2(z) + h_0(z)\]
\begin{definition}[Collapsing a shape]
    Let $\alpha$ be a shape with two distinct square vertices $\square{i}, \square{j}$. We say that $\beta$ is a (proper) collapse of $\square{i}, \square{j}$ if $\beta$ appears in the expansion of the improper collapse of $\square{i},\square{j}$. 
\end{definition}

\begin{remark}
    If $l_1, \dots, l_k$ are the labels of a set of parallel edges, then the product $h_{l_1}(z) \cdots h_{l_k}(z)$ is even/odd depending on the parity of $l_1 + \cdots + l_k$. Thus the nonzero Fourier coefficients will be the terms of matching parity. Therefore, in both the boolean and Gaussian cases, the shapes that are proper collapses of a given improper collapse are formed by replacing each set of parallel edges by a single edge $e$ such that $l(e) \le l_1 + \ldots + l_k$ and $l(e)~\equiv~l_1 + \cdots + l_k\pmod 2$.
\end{remark}

\begin{remark}\label{rmk:parity}
	Looking at the definition and in light of the previous remark, we have the following.
	\begin{enumerate}
		\item The number of circle vertices does not change by collapsing a shape but the number of square vertices decreases by $1$.
		\item $\alpha \in \calL$ has the property that the vertices have odd degree if and only if they are in $(U_{\alpha} \cup V_{\alpha}) \setminus (U_{\alpha} \cap V_{\alpha})$. When $\alpha$ collapses, this property is preserved.
	\end{enumerate}
\end{remark}

We now define the desired shapes $L_k$ which lie in the null space of $\calM_{fix}$.

\begin{definition}
For $k \geq 2$ define the shape $\ell_k$ on $\{\square{1}, \dots, \square{k}, \circle{1} \}$ with two edges $\{\{\square{1}, \circle{1}\}, \{\square{2}, \circle{1}\}\}$. The left side of $\ell_k$ consists of $U_{\ell_k} = \{\square{1},\dots,\square{k}\}$. The right side consists of $V_{\ell_k} =\{\square{3}, \dots, \square{k}, \circle{1}\}$.
\end{definition}

\begin{definition}\label{def:lk}
Define the ``completed'' version $L_k$ of $\ell_k$ to be the matrix which is the sum of $c_\beta M_{\beta}$ for $\beta$ being the following shapes with coefficients:
\begin{itemize}
    \item ($L_{k,1}$): $\ell_k$, with coefficient 2.
    \item ($L_{k,2}$): If $k \geq 3$, collapse $\square{1}$ and $\square{3}$ in $\ell_k$ with coefficient $\frac{2}{n}$
    \item ($L_{k,3}$): If $k \geq 4$, collapse $\square{1}$ and $\square{3}$, and collapse $\square{2}$ and $\square{4}$ in $\ell_k$ with coefficient $\frac{2}{n^2}$
    \item ($L_{k,4}$): Collapse $\square{1}$ and $\square{2}$, replacing the edges by an edge with label 2, with coefficient $\frac{1}{n}$
    \item ($L_{k,5}$): If $k \geq 3$, collapse $\square{1}, \square{2}$, and $\square{3}$, replacing the edges by an edge with label 2, with coefficient $\frac{1}{n}$.
\end{itemize}
\end{definition}

For a pictorial representation of the ribbons/shapes, see ~\cref{fig:Lk} below.

\begin{lemma}\label{lem:completed-left-side}
    $\calM_{fix} L_k = 0$
\end{lemma}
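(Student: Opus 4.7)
The plan is to fix arbitrary indices $I$ and $J$ and show directly that $(\calM_{fix} L_k)[I, J] = 0$. Since each of the shapes $L_{k,1}, \dots, L_{k,5}$ has the circle $\circle{1}$ on its right side, the product entry is trivially zero unless $J = \{\square{i_3}, \dots, \square{i_k}, \circle{u}\}$ for distinct $i_3, \dots, i_k \in [n]$ and some $u \in [m]$; set $J' = J \setminus \{\circle{u}\}$. My goal is to show that the five summands of $L_k$ combine to yield exactly $\pE_{fix}[v^I v^{J'}\,(\ip{v}{d_u}^2 - 1)]$ (with $\pE_{fix}$ the pseudoexpectation associated with $\calM_{fix}$), which vanishes because $\calM_{fix}$ exactly satisfies the PAP constraint $\ip{v}{d_u}^2 = 1$ by construction in Section~\ref{sec:exact-constraints}. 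The case $k = 2$, worked out in Section~\ref{sec:strategy}, is the template I will generalize.

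The core of the argument is a bijection between the five shapes in $L_k$ and the five possible collision patterns of an ordered pair $(i_1, i_2) \in [n]^2$ relative to $J'$: both outside $J'$ and distinct $(L_{k,1})$; exactly one inside $J'$ $(L_{k,2})$; both inside $J'$ and distinct $(L_{k,3})$; $i_1 = i_2 \notin J'$ $(L_{k,4})$; and $i_1 = i_2 \in J'$ $(L_{k,5})$. For each $L_{k,r}$, I would identify the intermediate indices $K$ with $M_{L_{k,r}}[K, J] \neq 0$ by writing down realizations of the (collapsed) shape, record the corresponding Fourier character on the edges, and then use the booleanity identity $v_i^2 = 1/n$ (which $\calM_{fix}$ also respects exactly) in reverse to reexpress $\calM_{fix}[I,K]$ as $\pE_{fix}[v^I v^{J'} v_{i_1} v_{i_2}]$ summed over the collision class corresponding to $r$. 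The numerical coefficients $2,\ 2/n,\ 2/n^2,\ 1/n,\ 1/n$ in Definition~\ref{def:lk} are precisely what is required: the factors of $2$ convert unordered to ordered pairs, and each factor of $1/n$ compensates for one application of $v_i^2 = 1/n$ absorbing a collided square vertex. For $L_{k,4}$ and $L_{k,5}$, the label-two edge corresponds to $h_2(d_{u,j}) = d_{u,j}^2 - 1$ in the Gaussian setting, which packages the diagonal $i_1 = i_2$ contribution together with the ``$-1$'' coming from the constant term of $\ip{v}{d_u}^2 - 1$, summed over $j \notin J'$ for $L_{k,4}$ and $j \in J'$ for $L_{k,5}$.

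Adding all five contributions gives exactly $\sum_{i_1, i_2 \in [n]} \pE_{fix}[v^I v^{J'} v_{i_1} v_{i_2}] d_{u,i_1} d_{u,i_2} - \pE_{fix}[v^I v^{J'}] = \pE_{fix}[v^I v^{J'}(\ip{v}{d_u}^2 - 1)] = 0$. In the boolean setting, $h_2 \equiv 0$ forces $L_{k,4}$ and $L_{k,5}$ to vanish; this is consistent because on the boolean cube $d_{u,i}^2 = 1$ identically, so the diagonal $i_1 = i_2$ contributes the constant $1$ that already cancels the ``$-1$'' from $\ip{v}{d_u}^2 - 1$ without needing a dedicated shape, and the identity $\ip{v}{d_u}^2 - 1 = \sum_{i \neq j} v_i v_j d_{u,i} d_{u,j}$ shows that the remaining shapes $L_{k,1} + L_{k,2} + L_{k,3}$ suffice. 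The main obstacle is the combinatorial bookkeeping: I must correctly count realizations of each collapsed shape, accounting for the automorphisms among the indistinguishable middle vertices $\square{3}, \dots, \square{k}$ of $\ell_k$ and between collapsed vertices that play the same structural role, and verify that the ``$h_0$'' (no-edge) pieces of the improper Hermite expansions $h_1(x)^2 = h_2(x) + 1$ produced by the double edges in $L_{k,4}$ and $L_{k,5}$ exactly cancel the ``$-1$'' of $\ip{v}{d_u}^2 - 1$, leaving only the $h_2$ components that appear in Definition~\ref{def:lk}.
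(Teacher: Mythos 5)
Your plan is correct and mirrors the paper's proof: you fix a column $J = S \cup \{\circle{u}\}$, identify the five shapes of $L_k$ with the five collision patterns of the ordered pair $(j_1,j_2)$ against $S$ (distinct and outside $S$; one in $S$; both in $S$; equal and outside $S$; equal and in $S$), use $v_i^2 = 1/n$ to re-index collapsed squares, and sum to $\pE_{fix}[v^I v^S (\ip{v}{d_u}^2 - 1)] = 0$, with the coefficients $2, 2/n, 2/n^2, 1/n, 1/n$ absorbing the ordered-pair double count and the powers of $n$ from the booleanity substitution. One remark: your concern about ribbon automorphisms is not actually an issue here, because $M_\alpha$ is defined as a sum over distinct ribbons of shape $\alpha$ (each counted once), not over realizations; the automorphism subtlety only arises later when multiplying two graph matrices (\cref{prop:expand-improper}), not when computing $\calM_{fix} L_k$ entry-by-entry as you do.
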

\begin{proof}
    These shapes are constructed so that if we fix a partial realization 
    of the vertices $\circle{1}$ and $\square{3}, \dots, \square{k}$ as $\circle{u} \in \calC_m$ and $S \in \binom{\calS_n}{k-2}$, the squares $\square{1}$ and $\square{2}$ can still be realized as any $j_1,j_2 \in [n]$. That is, exactly the following equality holds,
    \begin{align*}
        (\calM_{fix} L_k)_I &= \displaystyle\sum_{\substack{\circle{u} \in \calC_m,\\ S \in \binom{\calS_n}{k-2}} }\left(\sum_{\substack{j_1, j_2 \in [n]:\\ j_1 \neq j_2}} \pE[v^I v^S v_{j_1}v_{j_2}] d_{uj_1}d_{uj_2} + \sum_{j_1 \in [n]} \pE[v^Iv^Sv_{j_1}^2](d_{uj_1}^2 - 1)\right)\\
        &= \displaystyle\sum_{\substack{\circle{u} \in \calC_m,\\ S \in \binom{\calS_n}{k-2}}} \pE[v^Iv^S(\ip{v}{d_u}^2 - 1)]\\
        &= 0
    \end{align*}
    
    To demonstrate how the coefficients arise, we analyze the ribbons $R$ which $L_k$ is composed of and see how they contribute to the output.
    For pictures of the ribbons/shapes, see~\cref{fig:Lk} below. 
    Let the ribbon be partially realized as $\circle{u}$ and $S = \{\square{j_3},\dots, \square{j_k}\}$. Let $(M_{fix}L_k)_{I(u, S)}$ denote the terms in $(M_{fix}L_k)_I$ with this partial realization. In this notation we want to show
    \[(\calM_{fix}L_k)_{I(u, S)} = \sum_{\substack{j_1, j_2 \in [n]:\\ j_1 \neq j_2}} \pE[v^I v^S v_{j_1}v_{j_2}] d_{uj_1}d_{uj_2} + \sum_{j_1 \in [n]} \pE[v^Iv^Sv_{j_1}^2](d_{uj_1}^2 - 1).\]
    
    \begin{figure}[h!]
        \centering
        \includegraphics[height=10cm]{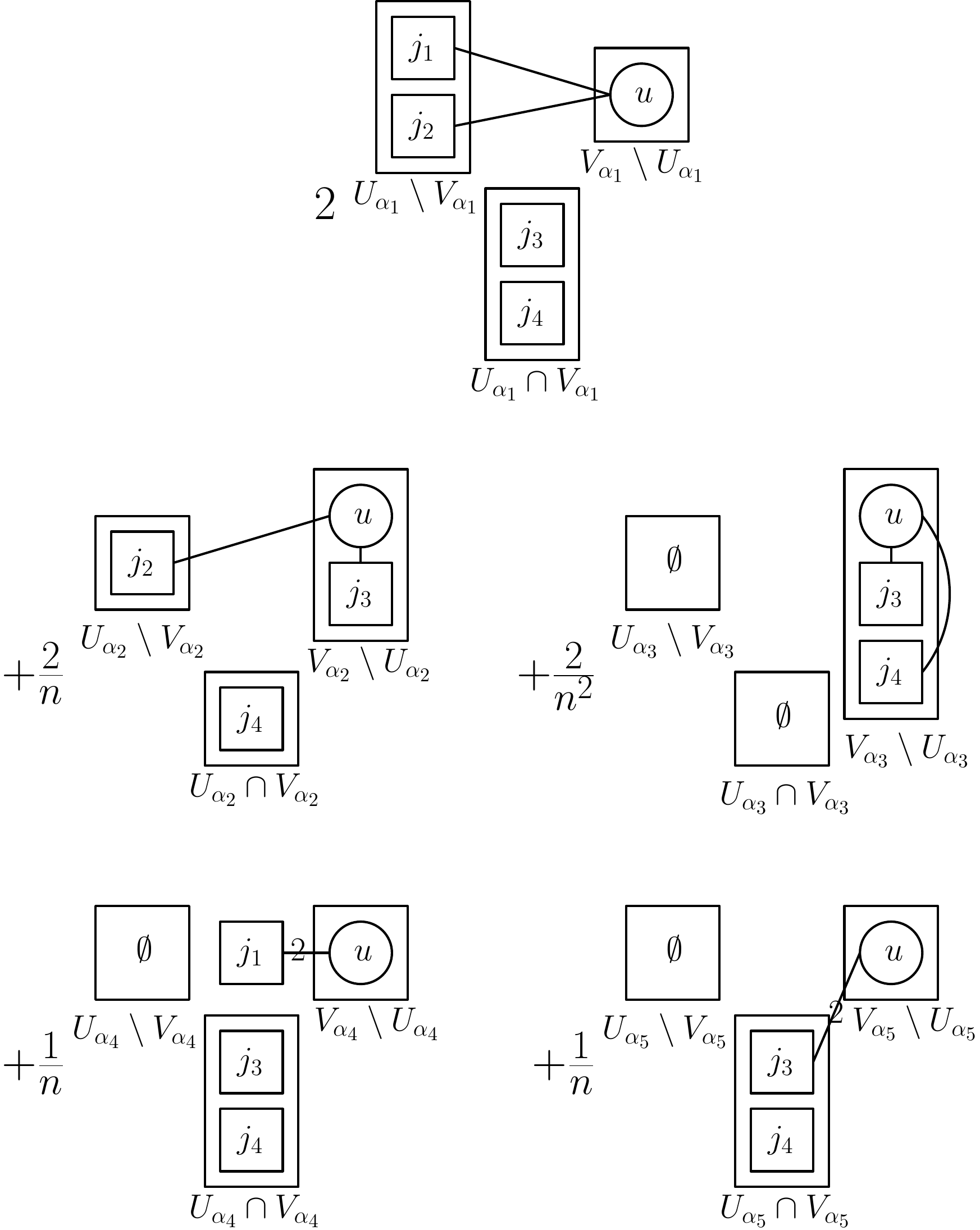}
        \caption{The five shapes that make up $L_4$.}
        \label{fig:Lk}
    \end{figure}
    
    \begin{enumerate}
        \item If we take a ribbon $R$ with $A_R = \{\square{j_1}, \dots, \square{j_k}\}$, $B_R = \{\square{j_3}, \dots, \square{j_k}\} \cup \{\circle{u}\}$ and $E(R) = \{\{\square{j_1}, \circle{u}\}, \{\square{j_2}, \circle{u}\}\}$ where $j_1 \neq j_2$ and $j_1, j_2 \notin S$ then
        \[ (\calM_{fix}M_R)_{I(u, S)} = \pE[v^Iv^Sv_{j_1}v_{j_2}]d_{uj_1}d_{uj_2}.\]
        This ribbon must ``cover'' both ordered pairs $(j_1, j_2)$ and $(j_2, j_1)$, so we want each such ribbon $R$ to appear with a coefficient of 2 in $L_k$.
        \item If we take a ribbon $R$ with $A_R = \{\square{j_1}, \dots, \square{j_k}\} \setminus \{\square{j_1}, \square{j_3}\}$, $B_R = \{\square{j_3}, \dots, \square{j_k}\} \cup \{\circle{u}\}$ and $E(R) = \{\{\square{j_3}, \circle{u}\}, \{\square{j_2}, \circle{u}\}\}$ where $j_1 = j_3 \in S$ then
        \[ (\calM_{fix}M_R)_{I(u, S)} = \pE[v^Iv^{S\setminus \{j_3\}}v_{j_2}]d_{uj_3}d_{uj_2} = n\pE[v^Iv^Sv_{j_1}v_{j_2}]d_{uj_1}d_{uj_2}.\]
        Taking a coefficient of $\frac{2}{n}$ in $L_k$ covers the two pairs $(j_1, j_2)$ and $(j_2, j_1)$ for this case of overlap with $S$.
        \item If we take a ribbon $R$ with $A_R = \{\square{j_1}, \dots, \square{j_k}\} \setminus \{\square{j_1}, \square{j_2}, \square{j_3}, \square{j_4}\}$, $B_R = \{\square{j_3}, \dots, \square{j_k}\} \cup \{\circle{u}\}$ and $E(R) = \{\{\square{j_3}, \circle{u}\}, \{\square{j_4}, \circle{u}\}\}$ where $j_1 = j_3 \in S$ and $j_2 = j_4 \in S$ then
        \[ (\calM_{fix}M_R)_{I(u, S)} = \pE[v^Iv^{S\setminus \{j_3, j_4\}}]d_{uj_3}d_{uj_4} = n^2\pE[v^Iv^Sv_{j_1}v_{j_2}]d_{uj_1}d_{uj_2}.\]
        Taking a coefficient of $\frac{2}{n^2}$ in $L_k$ covers the two pairs $(j_1, j_2)$ and $(j_2, j_1)$ for this case of overlap with $S$.
        \item If we take a ribbon $R$ with $A_R = \{\square{j_1}, \dots, \square{j_k}\}\setminus \{\square{j_1},\square{j_2}\}$, $B_R = \{\square{j_3}, \dots, \square{j_k}\} \cup \{\circle{u}\}$ and $E(R) = \{\{\square{j_1}, \circle{u}\}_2\}$ where $j_1 = j_2 \notin S$ then
        \[ (\calM_{fix}M_R)_{I(u, S)} = \pE[v^Iv^{S}](d_{uj_1}^2-1) = n\pE[v^Iv^Sv_{j_1}^2](d_{uj_1}^2-1).\]
        Taking a coefficient of $\frac{1}{n}$ in $L_k$ covers these terms.
        \item If we take a ribbon $R$ with $A_R = \{\square{j_1}, \dots, \square{j_k}\} \setminus \{\square{j_1}, \square{j_2}\}$, $B_R = \{\square{j_3}, \dots, \square{j_k}\} \cup \{\circle{u}\}$ and $E(R) = \{\{\square{j_3}, \circle{u}\}_2\}$ where $j_1 = j_2 =j_3\in S$ then
        \[ (\calM_{fix}M_R)_{I(u, S)} = \pE[v^Iv^{S}](d_{uj_3}^2-1) = n\pE[v^Iv^Sv_{j_1}^2](d_{uj_1}^2-1).\]
        Taking a coefficient of $\frac{1}{n}$ in $L_k$ covers these terms.
    \end{enumerate}
\end{proof}

One of the key facts about graph matrices is that multiplication of graph matrices approximately equals a new graph matrix, $M_\alpha \cdot M_\beta \approx M_{\gamma}$, where $\gamma$ is the result of gluing $V_\alpha$ with $U_\beta$ (and if $V_\alpha, U_\beta$ do not have the same number of vertices of each type, the product is zero). The error terms in the approximation are intersection terms (collapses) between the variables in $\alpha$ and $\beta$.
\begin{definition}
    Say that shapes $\alpha$ and $\beta$ are composable if $V_\alpha$ and $U_\beta$ have the same number of square and circle vertices. We say a shape $\gamma$ is a gluing of $\alpha$ and $\beta$, if the graph of $\gamma$ is the disjoint union of the graphs of $\alpha$ and $\beta$, followed by identifying $V_\alpha$ and $U_\beta$ under some type-preserving bijection, and if $U_\gamma = U_\alpha$ and $V_\gamma = V_\beta$.
\end{definition}

\begin{proposition}\label{prop:graph-matrix-multiplication}
    Let $\alpha, \beta$ be composable shapes. Assume that $V(\alpha) \setminus V_\alpha$ has only square vertices. Let $\{\gamma_i\}$ be the distinct gluings of $\alpha$ and $\beta$, and let $\widetilde{\calI}$ be the set of improper collapses of any number of squares (possibly zero) in $V(\alpha) \setminus V_\alpha$ with distinct squares in $V(\beta) \setminus U_\beta$ in any gluing $\gamma_i$. Then there are coefficients $c_\gamma$ for $\gamma \in \widetilde{\calI}$ such that
    \[ M_\alpha\cdot M_\beta = \displaystyle\sum_{\gamma \in \widetilde{\calI}} c_\gamma M_\gamma.\]
    Furthermore, the coefficients satisfy $\abs{c_\gamma} \leq
    2^{\abs{V(\alpha) \setminus V_\alpha}}\abs{V(\gamma)}^{\abs{V(\alpha) \setminus U_\alpha}}$.
\end{proposition}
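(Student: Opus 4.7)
The plan is to expand both sides as sums over ribbons and then regroup. Write $M_\alpha = \sum_{R_\alpha} M_{R_\alpha}$ and $M_\beta = \sum_{R_\beta} M_{R_\beta}$, so that $M_\alpha M_\beta = \sum_{R_\alpha, R_\beta} M_{R_\alpha} M_{R_\beta}$. By Definition~\ref{def:ribbon-matrix}, the product $M_{R_\alpha} M_{R_\beta}$ vanishes unless $B_{R_\alpha} = A_{R_\beta}$ as subsets of $\calC_m \cup \calS_n$; when equal, the product is a rank-one matrix supported at entry $(A_{R_\alpha}, B_{R_\beta})$, whose value is the product of the two Fourier characters encoded by $R_\alpha$ and $R_\beta$.

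The first substantive step is to interpret each valid pair $(R_\alpha, R_\beta)$ as a single composite ``ribbon'' with vertex set $V(R_\alpha) \cup V(R_\beta)$ (the shared middle vertices being exactly $B_{R_\alpha} = A_{R_\beta}$) and edge multiset $E(R_\alpha) \cup E(R_\beta)$. This composite records the same Fourier product, possibly with parallel edges. There are two sources of coincidence. First, the set equality $B_{R_\alpha} = A_{R_\beta}$ induces a type-preserving bijection between $V_\alpha$ and $U_\beta$, which selects one of the gluings $\gamma_i$. Second, squares in $V(R_\alpha) \setminus B_{R_\alpha}$ may incidentally be labeled identically to squares in $V(R_\beta) \setminus A_{R_\beta}$, producing a collapse. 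The hypothesis that $V(\alpha) \setminus V_\alpha$ contains only squares forbids any circle--circle incidental coincidence, since then all non-middle vertices of $R_\alpha$ live in $\calS_n$, which is disjoint from $\calC_m$. Consequently, the shape of every composite ribbon lies in $\widetilde{\calI}$.

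The second step is to convert each improper composite shape into a combination of proper shapes by expanding each bunch of parallel edges in the Fourier basis, $h_{l_1}(z)\cdots h_{l_k}(z) = \sum_{m} c_{l_1,\ldots,l_k;m}\, h_m(z)$, with $|c_{l_1,\ldots,l_k;m}|$ absolutely bounded. Grouping all terms that share a given proper shape $\gamma \in \widetilde{\calI}$ and summing yields an identity of the desired form, with coefficients $c_\gamma$ equal to a weighted count of composite ribbons producing $\gamma$. To bound $|c_\gamma|$, I would parametrize each composite by (i) a subset $S \subseteq V(\alpha) \setminus V_\alpha$ of squares that participate in a collapse, contributing a factor of $2^{|V(\alpha) \setminus V_\alpha|}$, and (ii) an injection from $S$ together with the bijection $V_\alpha \to U_\beta$ into the squares of $V(\gamma)$, contributing at most $|V(\gamma)|^{|V(\alpha) \setminus U_\alpha|}$. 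Fourier-expansion coefficients from step two are absorbed into the same bounds after noting that each parallel-edge bunch contributes only a constant factor per edge, dominated by the exponential prefactor.

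The main obstacle will be the bookkeeping in the grouping step: the same shape $\gamma$ can arise from several composite ribbons when $\gamma$ has automorphisms permuting middle or collapsed vertices, and the Fourier expansion in the Gaussian setting sends a single improper shape to several proper shapes simultaneously. Ensuring that the counting aligns with the claimed bound---in particular, that the exponent $|V(\alpha) \setminus U_\alpha|$ suffices rather than something larger---is the one quantitatively delicate point; the rest is a structural identity about matrix multiplication combined with Fourier expansion that proceeds routinely.
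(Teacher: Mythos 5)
The first two thirds of your argument mirror the paper's proof: expand $M_\alpha M_\beta$ over pairs of ribbons, observe that a nonzero product forces $B_{R_\alpha}=A_{R_\beta}$, pack the surviving pairs into composite ribbons, and note that the hypothesis on $V(\alpha)\setminus V_\alpha$ rules out circle--circle coincidences so every composite has a shape in $\widetilde{\calI}$. Your counting sketch for the coefficient also matches the paper's in spirit: a subset of $V(\alpha)\setminus V_\alpha$ to decide which squares collapse (factor $2^{\abs{V(\alpha)\setminus V_\alpha}}$) and an assignment of the remaining free vertices into $V(\gamma)$ (factor $\abs{V(\gamma)}^{\abs{V(\alpha)\setminus U_\alpha}}$).

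The genuine gap is your ``second step.'' The proposition's target set $\widetilde{\calI}$ consists of \emph{improper} collapses, so the shapes $\gamma$ in the sum are explicitly allowed to carry parallel edges, and the paper's proof stops there: $M_R M_S$ is read off as the Fourier character of an improper ribbon $T$, full stop. You instead propose to Fourier-expand every parallel bunch $h_{l_1}\cdots h_{l_k}$ and then group by \emph{proper} shape. That does not prove the statement as written --- the proper shapes you produce (e.g., a single edge with label $2$) are generally not elements of $\widetilde{\calI}$. Worse, the assertion that the Hermite linearization coefficients are ``absorbed'' into the prefactor $2^{\abs{V(\alpha)\setminus V_\alpha}}\abs{V(\gamma)}^{\abs{V(\alpha)\setminus U_\alpha}}$ is unsupported and in general false: in the Gaussian setting these coefficients scale super-exponentially in the labels (see Proposition~\ref{prop:hermite-product-coefficients}), and there is also an automorphism correction when passing from improper to proper shapes (Proposition~\ref{prop:expand-improper}), neither of which fits in the claimed bound. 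The paper deliberately defers both issues --- linearization coefficients and automorphism ratios --- to separate lemmas (Propositions~\ref{prop:expand-improper}, \ref{prop:hermite-product-coefficients}, \ref{prop:automorphism-ratio}, combined in Lemma~\ref{lem:collapse-lemma}) precisely so that this proposition can carry a clean, Fourier-free bound. Dropping your step~3 and finishing the pair-count exactly as you began would align with the paper and prove the statement.
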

\begin{proof}
    The product $M_\alpha \cdot M_\beta$ is a matrix which is a symmetric function of the inputs $(d_1, \dots, d_m)$, the space of which is spanned by the $M_\gamma$ over all possible shapes $\gamma$ (not restricted to $\widetilde{\calI}$), so there exist coefficients $c_\gamma$ if we allow all shapes $\gamma$. We need to check that $M_\alpha \cdot M_\beta$ actually lies in the span of shapes in $\widetilde{\calI}$ by showing that all ribbons in $M_\alpha \cdot M_\beta$ have shapes in $\widetilde{\calI}$. Expanding the definition,
    \[ M_\alpha \cdot M_\beta = \left(\displaystyle\sum_{R \text{ is a ribbon of shape }\alpha} M_R\right)\left(\sum_{S\text{ is a ribbon of shape }\beta} M_S\right) = \displaystyle\sum_{\substack{R \text{ is a ribbon of shape }\alpha,\\ S \text{ is a ribbon of shape }\beta}} M_R M_S.\]
    In order for $M_RM_S$ to be nonzero, we require $B_R = A_S$ as sets; $R$ may assign the labels arbitrarily inside $B_R$, resulting in different gluings of $\alpha$ and $\beta$. Fix $R$ and $S$, and let $\gamma$ be the corresponding gluing of $\alpha$ and $\beta$ for this $R$ and $S$.
    
    The matrix $M_RM_S$ has one nonzero entry; we claim that it is a Fourier character for a ribbon $T$ which is a collapse of $\gamma$. The labels of $R$ outside of $B_R$ can possibly overlap with the labels of $S$ outside of $A_S$, and naturally the shape of $T$ is the result of collapsing vertices in $\gamma$ with the same label.
    
    To bound the coefficients $c_\gamma$ that appear, it suffices to bound the coefficient on a ribbon $M_T$, which is bounded by the number of contributing ribbons $R, S$, where we say ribbons $R$ of shape $\alpha$ and $S$ of shape $\beta$ contribute to $T$ if $M_RM_S = M_T$. From $T$, we can completely recover the sets $A_R$ and $B_S$. The labels of $V(R) \setminus A_R$ must be among the labels of $T$; choose them in at most $\abs{V(\gamma)}^{\abs{V(\alpha) \setminus U_\alpha}}$ ways. This also determines $B_R =A_S$. All that remains is to determine the graph structure of $S$. Since improper collapsing doesn't lose any edges, knowing the labels of $R$ we know exactly which edges of $T$ must come from $R$ and $S$. The vertices $V(T) \setminus V(R)$ must come from $S$, as must $B_R$; pick a subset of $V(R) \setminus B_R$ to include in $2^{\abs{V(\alpha) \setminus V_\alpha}}$ ways.
\end{proof}

Let $\alpha$ be a left spider with end vertices $\square{i}, \square{j}$ which are adjacent to a circle $\circle{u}$. Recall that our goal is to argue that $\calM M_\alpha \approx 0$. To get there, we can try and factor $M_\alpha$ across the vertex separator $S = U_\alpha \cup \{\circle{u}\} \setminus \{\square{i},\square{j}\}$ which separates $\alpha$ into
\[ M_\alpha \approx L_{\abs{U_\alpha}} \cdot M_{\body(\alpha)}\]
where we have defined,
\begin{definition}
    Let $\alpha$ be a left spider with end vertices $\square{i}, \square{j}$. 
    Define $\body(\alpha)$ as the shape whose graph is $\alpha$ with $\square{i}$ and $\square{j}$ deleted and with $U_{\body(\alpha)} = U_\alpha \cup \{\circle{u}\} \setminus \{\square{i},\square{j}\}, V_{\body(\alpha)} = V_\alpha$. The definition is analogous for right spiders.
\end{definition}
Due to~\cref{lem:completed-left-side}, the right-hand side of the approximation is in the null space of $\calM$. We now formalize this approximate factorization.

\begin{definition}
	Let $\alpha$ be a spider with end vertices $\square{i}, \square{j}$. Define $\widetilde{\calI}_{\alpha}$ to be the set of shapes that can be obtained from $\alpha$ by performing at least one of the following steps:
	\begin{itemize}
	    \item Improperly collapse $\square{i}$ with a square vertex in $\alpha$
	    \item Improperly collapse $\square{j}$ with a square vertex in $\alpha$
	\end{itemize}
	Let $\calI_\alpha$ be the set of proper shapes that can be obtained via the same process but using proper collapses.
\end{definition}
In the above definition, we allow $\square{i}, \square{j}$ to collapse with two distinct squares, or to collapse together, or to both collapse with a common third vertex. For technical reasons we need to work with a refinement of $\calI_\alpha$ into two sets of shapes and use tighter bounds on coefficients of one set.
\begin{definition}
    Let $\calI_{\alpha}^{(1)}$ be the set of shapes that can be obtained from $\alpha$ by performing at least one of the following steps:
	\begin{itemize}
	    \item Collapse $\square{i}$ with a square vertex in $\body(\alpha) \setminus U_\alpha$
	    \item Collapse $\square{j}$ with a square vertex in $\body(\alpha) \setminus U_\alpha$ (distinct from $\square{i}$'s collapse if it happened)
	\end{itemize}        
    Let $\calI_\alpha^{(2)}\defeq \calI_\alpha \setminus \calI_\alpha^{(1)}$
    and define the improper versions $\widetilde{\calI}_\alpha^{(1)}, \widetilde{\calI}_\alpha^{(2)}$ analogously.
\end{definition}

\begin{lemma}\label{lem:improper-collapse}
	Let $\alpha$ be a left spider with end vertices \square{i}, \square{j}. There are coefficients $c_\beta$ for $\beta \in \widetilde{\calI}_\alpha$ such that 
	\[L_{\abs{U_\alpha}} \cdot M_{\body(\alpha)} = 2M_{\alpha} + \sum_{\beta \in \widetilde{\calI}_\alpha}c_\beta M_\beta,\]
	\[\abs{c_\beta} \leq 
	\begin{cases}
	 40\abs{V(\alpha)}^3 & \beta \in \widetilde{\calI}_\alpha^{(1)}\\
	 \frac{40\abs{V(\alpha)}^3}{n} & \beta \in \widetilde{\calI}_\alpha^{(2)}
	\end{cases}.\]
\end{lemma}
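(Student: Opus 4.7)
The plan is to expand $L_{\abs{U_\alpha}} \cdot M_{\body(\alpha)}$ by applying the graph-matrix multiplication rule of Proposition~\ref{prop:graph-matrix-multiplication} to each of the five summands $c_{k,j} M_{L_{k,j}} \cdot M_{\body(\alpha)}$ making up $L_{\abs{U_\alpha}}$, and then match the resulting gluings and improper collapses against the definitions of $\widetilde{\calI}_\alpha^{(1)}$ and $\widetilde{\calI}_\alpha^{(2)}$. The composability hypothesis of Proposition~\ref{prop:graph-matrix-multiplication} holds for every summand because each $L_{k,j}$ has $k-2$ squares and $1$ circle on its right side, matching the left side of $\body(\alpha)$.

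For the principal summand $2M_{\ell_k}$, we have $V(\ell_k) \setminus V_{\ell_k} = \{\square{1},\square{2}\}$ (the end squares of $\ell_k$) and $V(\ell_k) \setminus U_{\ell_k} = \{\circle{1}\}$. Proposition~\ref{prop:graph-matrix-multiplication} expresses the product as a sum over gluings together with improper collapses that identify $\square{1}$ and/or $\square{2}$ with distinct squares in $V(\body(\alpha)) \setminus U_{\body(\alpha)}$. Exactly one gluing, in which $\circle{1}$ is identified with $\circle{u}$ and $\square{3},\ldots,\square{k}$ match the remaining squares of $U_\alpha$ with no further collapse, reconstructs $\alpha$ and contributes $2M_\alpha$. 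Every other term collapses at least one end square with a vertex of $V(\body(\alpha)) \setminus U_{\body(\alpha)} = \body(\alpha) \setminus U_\alpha$, so it lies in $\widetilde{\calI}_\alpha^{(1)}$ by definition; the coefficient estimate from Proposition~\ref{prop:graph-matrix-multiplication} is $2 \cdot 2^{2} \cdot \abs{V(\gamma)}^{1} = 8\abs{V(\gamma)} \le 40\abs{V(\alpha)}^3$.

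For the remaining four summands $L_{k,2},\ldots,L_{k,5}$, the shape has already been obtained from $\ell_k$ by collapsing at least one end square either with another end square or with a square originally in $U_{\ell_k} \setminus \{\square{1},\square{2}\}$. After gluing with $\body(\alpha)$, those $U_{\ell_k}$-squares are identified with $U_\alpha \setminus \{\square{i},\square{j}\}$, so the resulting shapes necessarily have one of $\square{i},\square{j}$ collapsed either with a vertex of $U_\alpha$ or with the other end square, placing them in $\widetilde{\calI}_\alpha^{(2)}$. Each of $L_{k,2},\ldots,L_{k,5}$ carries a coefficient of order $\frac{1}{n}$ or $\frac{1}{n^2}$ in $L_k$, and a short computation of the exponents in the Proposition~\ref{prop:graph-matrix-multiplication} bound for each of these shapes (all at most $\abs{V(\gamma)}^3$) gives the required $\frac{40\abs{V(\alpha)}^3}{n}$ bound.

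The main technical obstacle is careful bookkeeping: a given shape $\beta \in \widetilde{\calI}_\alpha$ may arise from multiple summands, or from multiple gluings within a single summand, so the final coefficient $c_\beta$ is a sum of such contributions, and one must check that the total still satisfies the stated bound (the number of contributions is bounded by $\abs{V(\alpha)}^{O(1)}$, comfortably absorbed). A further subtlety arises for $L_{k,4}$ and $L_{k,5}$, whose edges carry label $2$: after gluing, such a label-$2$ edge can become parallel to edges inherited from $\body(\alpha)$, so the product of edge functions must be re-expanded in the orthogonal basis; this generates additional shapes which still lie in $\widetilde{\calI}_\alpha^{(2)}$, and the extra constants from the Hermite/parity expansion are uniformly bounded and absorbed into the constant $40$.
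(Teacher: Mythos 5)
Your proof is correct and follows essentially the same route as the paper: apply Proposition~\ref{prop:graph-matrix-multiplication} to each of the five summands of $L_{\abs{U_\alpha}}$, observe that the unique full-size gluing from the $\ell_k$ term contributes $2M_\alpha$ while its intersection terms land in $\widetilde{\calI}_\alpha^{(1)}$, and that the products with $L_{k,2},\ldots,L_{k,5}$ (each carrying a coefficient $\le 2/n$ in $L_k$) land in $\widetilde{\calI}_\alpha^{(2)}$; the coefficient bounds then follow from the proposition. One small remark: the ``further subtlety'' you raise about re-expanding the label-$2$ edges of $L_{k,4}, L_{k,5}$ in the orthogonal basis does not actually arise in this lemma, since $\widetilde{\calI}_\alpha$ consists by definition of \emph{improper} shapes (with parallel edges permitted); that re-expansion is deferred to Lemma~\ref{lem:collapse-lemma}, where $\widetilde{\calI}_\alpha$ is replaced by the proper family $\calI_\alpha$ via Proposition~\ref{prop:expand-improper}.
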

\begin{proof}
    First, we can check that the coefficient of $M_\alpha$ is 2. Only the $\ell_k$ term of $L_k$ has the full number of squares, and it has a factor of 2 in $L_k$.

    The shapes in $\widetilde{\calI}_\alpha$ are definitionally the intersection 
    terms that appear in this graph matrix product, and furthermore the shapes in
    $\widetilde{\calI}_\alpha$ are definitionally the intersection terms for the $\ell_k$ term. 
    Using~\cref{prop:graph-matrix-multiplication}, for each of the five shapes
    in $L_{\abs{U_\alpha}}$ the coefficient it contributes is bounded by 
    $4\abs{V(\alpha)}^3$. The coefficient on $\ell_k$ is 2, so the coefficients 
    for $\widetilde{\calI}_\alpha^{(1)}$ are at most $8 \abs{V(\alpha)}^3$. The 
    maximum coefficient of the other four shapes in $L_{\abs{U_\alpha}}$ is 
    $\frac{2}{n}$, so their total contribution to coefficients on 
    $\widetilde{\calI}_\alpha^{(2)}$ is at most $\frac{32\abs{V(\alpha)}^3}{n}$.
\end{proof}

We now want to turn our improper shapes into proper ones from $\calI_\alpha$. Unfortunately it is not quite true that to expand an improper shape, one can just expand each edge individually
(though this is true for improper ribbons).
There is an additional difficulty that arises due to ribbon symmetries. To see the difficulty, consider the example given in \cref{fig:ribbon-symmetry} below.

\begin{figure}[h!]
  \centering      
  \begin{tikzpicture}[scale=0.5,every node/.style={scale=0.5}]
    % improper shape
    \draw  (-6.5,1.5) rectangle node {\huge $u_1$} (-5,0);
    \draw  (5,1.5) rectangle node {\huge $v_2$} (6.5,0);
    \draw  (0,2.5) ellipse (1 and 1) node {\huge $w_1$};   
    \draw  (0,-1) ellipse (1 and 1) node {\huge $w_2$};      
    \node (v1) at (-5,0.75) {};
    \node (v3) at (1,2.5) {};
    \node (v2) at (-1,2.5) {};
    \node (v4) at (1,-1) {};
    \draw  (-7,2) rectangle (-4.5,-0.5);
    \node at (-5.5,-1.5) {\huge $U_{\alpha}$};
    \draw  (4.5,2) rectangle (7,-0.5);
    \node at (6,-1.5) {\huge $V_{\alpha}$};
    \node (v6) at (-1,-1) {};
    \node (v5) at (4.95,0.75) {};
    \draw (v3);
    \draw  plot[smooth, tension=.7] coordinates {(v3) (v5)};
    \draw  plot[smooth, tension=.7] coordinates {(v5) (v4)};
    \draw  plot[smooth, tension=.7] coordinates {(v1) (v6)};
    \draw  plot[smooth, tension=.7] coordinates {(v1) (-3,2.5) (v2)};
    \draw  plot[smooth, tension=.7] coordinates {(v1) (-2.5,1) (v2)};
    \node at (-3.5,3) {\Large $1$};
    \node at (-2,0.5) {\Large $1$};
    \node at (-3,-0.5) {\Large $2$};
    \node at (3,2) {\Large $2$};
    \node at (3,-0.5) {\Large $2$};

    % proper shape 1
    \draw  (12.5,-3.5) rectangle node {\huge $u_1$} (14,-5);
    \draw  (24,-3.5) rectangle node {\huge $v_2$} (25.5,-5);
    \draw  (19,-2.5) ellipse (1 and 1) node {\huge $w_1$};   
    \draw  (19,-6) ellipse (1 and 1) node {\huge $w_2$};      
    \node (v11) at (14,-4.25) {};
    \node (v13) at (20,-2.5) {};
    \node (v12) at (18,-2.5) {};
    \node (v14) at (20,-6) {};
    \draw  (12,-3) rectangle (14.5,-5.5);
    \node at (13.5,-6.5) {\huge $U_{\gamma_2}$};
    \draw  (23.5,-3) rectangle (26,-5.5);
    \node at (25,-6.5) {\huge $V_{\gamma_2}$};
    \node (v16) at (18,-6) {};
    \node (v15) at (23.95,-4.25) {};
    \draw (v13);
    \draw  plot[smooth, tension=.7] coordinates {(v13) (v15)};
    \draw  plot[smooth, tension=.7] coordinates {(v15) (v14)};
    \draw  plot[smooth, tension=.7] coordinates {(v11) (v16)};

    \node at (16,5.5) {\Large $2$};
    \node at (16,-5.5) {\Large $2$};
    \node at (22,-3) {\Large $2$};
    \node at (22,-5.5) {\Large $2$};

    % proper shape 2
    \draw  (12.5,5) rectangle node {\huge $u_1$} (14,3.5);
    \draw  (24,5) rectangle node {\huge $v_2$} (25.5,3.5);
    \draw  (19,6) ellipse (1 and 1) node {\huge $w_1$};   
    \draw  (19,2.5) ellipse (1 and 1) node {\huge $w_2$};      
    \node (v11) at (14,4.25) {};
    \node (v13) at (20,6) {};
    \node (v12) at (18,6) {};
    \node (v14) at (20,2.5) {};
    \draw  (12,5.5) rectangle (14.5,3);
    \node at (13.5,2) {\huge $U_{\gamma_1}$};
    \draw  (23.5,5.5) rectangle (26,3);
    \node at (25,2) {\huge $V_{\gamma_1}$};
    \node (v16) at (18,2.5) {};
    \node (v15) at (23.95,4.25) {};
    \draw (v13);
    \draw  plot[smooth, tension=.7] coordinates {(v13) (v15)};
    \draw  plot[smooth, tension=.7] coordinates {(v15) (v14)};
    \draw  plot[smooth, tension=.7] coordinates {(v11) (v16)};
    \node at (16,3) {\Large $2$};
    \node at (22,5.5) {\Large $2$};
    \node at (22,3) {\Large $2$};

    \draw  plot[smooth, tension=.7] coordinates {(v11)};
    \draw  plot[smooth, tension=.7] coordinates {(v11) (v12)};
    \node at (8.5,0.5) {\Huge $=$};
    \node at (19,0) {\Huge $+$};
    \node at (10.5,4) {\Huge \bf $2 \times$};
  \end{tikzpicture}
  \caption{A surprising equality of graph matrices.}
  \label{fig:ribbon-symmetry}
\end{figure}
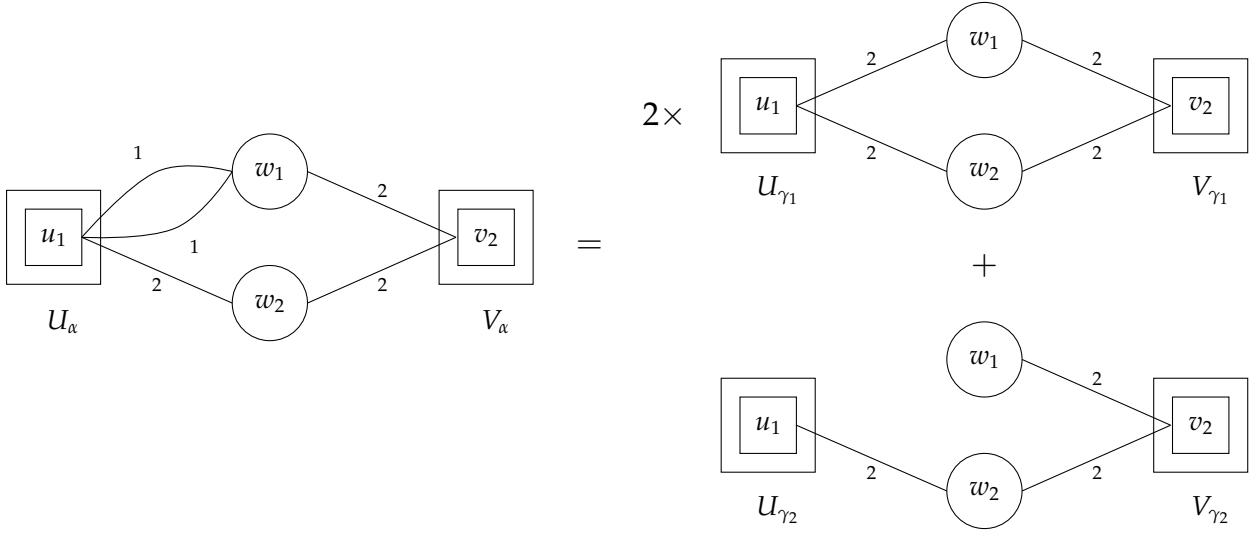

One would expect both coefficients on the right shapes to be 1 since $h_1(z)^2 = h_2(z) + h_0(z)$. However, in the left shape, the two circles are distinguishable, hence summing over all ribbons includes one with $w_1 = i, w_2 =  j$ and a second with $w_1 = j, w_2 = i$. On the top right shape, the circles are indistinguishable, hence the graph/ribbon where the circles are assigned $\{i, j\}$ is counted twice. On the bottom right shape, the circles are distinguishable, so all ribbons are summed once. To bound the new coefficients, we use the concept of shape automorphisms.

\begin{definition}
    An automorphism of a shape $\alpha$ is a function $\phi:V(\alpha) \to V(\alpha)$ that preserves the sets $U_\alpha, V_\alpha$ and is an automorphism of the underlying edge-labeled graph. Let $\aut(\alpha)$ denote the automorphism group of $\alpha$.
\end{definition}

\begin{proposition}\label{prop:expand-improper}
    Let $\alpha$ be an improper shape, and let $\calP$ be the set of proper shapes that can be obtained by expanding $\alpha$. Then there are coefficients $\abs{c_\gamma} \leq C_{Fourier}\cdot C_{Aut}$ such that
    \[M_\alpha = \displaystyle\sum_{\gamma \in \calP} c_\gamma M_\gamma\]
    where $C_{Fourier}$ is a bound on the magnitude of Fourier coefficients in the expansion and $C_{Aut} = \max_{\gamma \in \calP} \frac{\abs{\aut(\gamma)}}{\abs{\aut(\alpha)}}$.
\end{proposition}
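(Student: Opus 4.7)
The plan is to parametrize $M_\alpha$ by realizations rather than ribbons, Fourier-expand each bundle of parallel edges, and regroup the resulting proper ribbons by shape; this produces an explicit formula for each $c_\gamma$ from which the bound follows. I start from $\abs{\aut(\alpha)} \cdot M_\alpha = \sum_\sigma M_{\sigma(\alpha)}$, where $\sigma$ ranges over type-preserving injections $V(\alpha) \to \calC_m \cup \calS_n$; each ribbon of shape $\alpha$ has exactly $\abs{\aut(\alpha)}$ realizations mapping to it, which justifies the prefactor.

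For each realization $\sigma$, I Fourier-expand the bundle contributions in $M_{\sigma(\alpha)}$ one bundle at a time: a bundle $B$ of parallel edges between $\circle{u}$ and $\square{i}$ contributes $\prod_{e \in B} h_{l(e)}(d_{u,i}) = \sum_k c_{B,k}\, h_k(d_{u,i})$ in the orthogonal Fourier basis. Distributing over all bundles gives $M_{\sigma(\alpha)} = \sum_f c_f\, M_{\sigma_f}$, where $f$ is an assignment of a single label to each bundle, $c_f = \prod_B c_{B, f(B)}$ is the corresponding product of per-bundle Fourier coefficients, and $\sigma_f$ is the proper ribbon on vertex labels $\sigma(V(\alpha))$ with single edges labeled by $f$.

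Finally, I swap the summation order and group by proper shape. For fixed $f$ the ribbons $\{\sigma_f\}_\sigma$ share a common shape $\gamma_f$, and each ribbon of shape $\gamma_f$ is obtained from exactly $\abs{\aut(\gamma_f)}$ realizations: two realizations produce the same labeled proper ribbon iff they differ by an automorphism of $\gamma_f$, not of $\alpha$. Therefore $\sum_\sigma M_{\sigma_f} = \abs{\aut(\gamma_f)}\, M_{\gamma_f}$, and after dividing by $\abs{\aut(\alpha)}$ and collecting isomorphic $\gamma_f$'s one obtains
\[ M_\alpha \;=\; \sum_{\gamma \in \calP} \frac{\abs{\aut(\gamma)}}{\abs{\aut(\alpha)}} \bigg( \sum_{f:\, \gamma_f \cong \gamma} c_f \bigg) M_\gamma. \]
The bound $\abs{c_\gamma} \leq C_{Fourier} \cdot C_{Aut}$ is then immediate from the definition of $C_{Aut}$ together with the interpretation of $C_{Fourier}$ as a bound on the total Fourier contribution $\bigl|\sum_{f:\, \gamma_f \cong \gamma} c_f\bigr|$ appearing on a single proper character of the expansion. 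The main obstacle is precisely this automorphism count: the stabilizer of a proper ribbon $\sigma_f$ inside the realization sum is $\aut(\gamma_f)$ rather than $\aut(\alpha)$, because the labeled graph underlying $\sigma_f$ is $\gamma_f$ and not $\alpha$, and this mismatch is what produces the ratio $\abs{\aut(\gamma)}/\abs{\aut(\alpha)}$ in the coefficient and accounts for the ribbon-symmetry phenomenon illustrated in \cref{fig:ribbon-symmetry}.
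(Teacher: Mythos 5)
Your proof is correct and follows essentially the same route as the paper: write $M_\alpha$ as the normalized sum $\tfrac{1}{\abs{\aut(\alpha)}}\sum_\sigma M_{\sigma(\alpha)}$ over realizations, Fourier-expand each realization into proper ribbons, and regroup by proper shape, with the ratio $\abs{\aut(\gamma)}/\abs{\aut(\alpha)}$ emerging because the stabilizer of a resulting proper ribbon is $\aut(\gamma)$ rather than $\aut(\alpha)$. Your parametrization by per-bundle label assignments $f$, yielding $c_\gamma = \bigl(\abs{\aut(\gamma)}/\abs{\aut(\alpha)}\bigr)\sum_{f:\,\gamma_f\cong\gamma} c_f$, is slightly more explicit than the paper's prose and correctly pins down that what the paper calls ``the Fourier coefficient on each realization of $\gamma$'' is this aggregated sum over label assignments, not a single per-bundle coefficient.
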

\begin{proof}
The number of realizations of a graph matrix giving a particular ribbon is exactly the number of automorphisms, therefore
\begin{align*}
    M_\alpha &= \frac{1}{\abs{\aut(\alpha)}}\displaystyle\sum_{\text{realizations }\sigma} M_{\sigma(\alpha)}
\end{align*}
Expand each improper ribbon $M_{\sigma(\alpha)}$ into proper ribbons with coefficients at most $C_{Fourier}$. 
Because the realizations of $\alpha$ and any $\gamma$ are the same, this exactly sums over all $\gamma$ and all realizations of $\gamma$. The 
Fourier coefficient on each realization of $\gamma$ is the same; let it be 
$c_\gamma'$ with $\abs{c_\gamma'} \leq C_{Fourier}$. Continuing, 
\begin{align*}
    &= \displaystyle\frac{1}{\abs{\aut(\alpha)}} \sum_{\gamma \in \calP} c_\gamma'\sum_{\text{realizations }\sigma} M_{\sigma(\gamma)}\\
    &= \sum_{\gamma \in \calP} c_\gamma' \frac{\abs{\aut(\gamma)}}{\abs{\aut(\alpha)}} M_{\gamma}
\end{align*}
\end{proof}

\begin{proposition}\label{prop:hermite-product-coefficients}
Let $l_1 \leq \cdots \leq l_k \in \N$ and let $L = l_1 + \cdots + l_k$. Assume $L \ge 1$. In the Fourier expansion of $h_{l_1}(z)\cdots h_{l_k}(z)$, the maximum coefficient is bounded in magnitude by $(2L)^{L-l_k}$.
\begin{proof}
In the boolean case, the coefficient is 1. In the Gaussian case, the ``linearization coefficient'' of $h_p(z)$ in this product is given by orthogonality to be
\[\frac{\E_{z \sim \calN(0,1)}[h_{l_1}(z) \cdots h_{l_k}(z) \cdot h_p(z)]}{\E_{z \sim \calN(0,1)}[h_p^2(z)]}  = \frac{\E_{z \sim \calN(0,1)}[h_{l_1}(z) \cdots h_{l_k}(z) \cdot h_p(z)]}{p!}\]
A formula from, e.g.,~\cite[Example G (Continued)]{RotaWallstrom97} shows that $\E[h_{l_1} \cdots h_{l_k} \cdot h_p]$ equals the number of ``block perfect matchings'': perfect matchings on $l_1 + \cdots + l_k + p$ elements
divided into blocks of size $l_i$ or $p$ such that no two elements from the same block are matched. Bound the number of block perfect matchings by:
\begin{itemize}
    \item Pick a partial function from blocks $l_1, \dots, l_{k-1}$ to $[L]$ in at most $(L+1)^{L-l_k}$ ways.
    \item If this forms a valid partial matching and there are $p$ unmatched elements remaining, match them with the elements from the block of size $p$ in $p!$ ways.
\end{itemize}
Therefore the coefficient is bounded by $(L+1)^{L - l_k} \leq (2L)^{L-l_k}$.
\end{proof}
\end{proposition}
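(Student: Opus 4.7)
The plan is to split into the two settings. For the boolean case each $h_{l_i}(z)$ is one of $1, z, 0$, so the product is either the zero polynomial or a monomial $1$ or $z$, and every Fourier coefficient has magnitude at most $1$, which is trivially below $(2L)^{L-l_k}$ since $L \ge 1$. So the only substantive case is the Gaussian setting, and I would focus the bulk of the argument there.

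For the Gaussian case, write $h_{l_1}(z)\cdots h_{l_k}(z) = \sum_p c_p h_p(z)$ and use orthogonality of the Hermite polynomials under $\gauss{0}{1}$ together with $\E[h_p(z)^2] = p!$ to get
\[
c_p \;=\; \frac{1}{p!}\,\E\bigl[h_{l_1}(z)\cdots h_{l_k}(z)\,h_p(z)\bigr].
\]
Note that the product has degree $L$, so $c_p = 0$ unless $p \le L$; assume $p \le L$ from here on. The numerator is a classical multivariate Hermite moment, and I would invoke the standard combinatorial identity (a Wick-type formula for Hermite products; see, e.g., the umbral calculus presentation in Rota--Wallstrom) which says that $\E[h_{l_1}\cdots h_{l_k} h_p]$ equals the number of perfect matchings on $L+p$ points partitioned into $k+1$ blocks of sizes $l_1,\ldots,l_k,p$ such that no matched pair lies inside a single block. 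Call these block perfect matchings.

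The key step is to bound the number of block perfect matchings by $(L+p)^{L-l_k}\cdot p!$. I would do this greedily: fix an ordering of the $L-l_k$ points in blocks $B_1,\ldots,B_{k-1}$, and walk through them; whenever we reach a point not yet matched, choose its partner from among the remaining unmatched points, using the very loose bound of $L+p$ choices per such decision. This costs at most $(L+p)^{L-l_k}$ over all steps (even though each decision is actually made at most $L-l_k$ times). After processing $B_1,\ldots,B_{k-1}$, the only unmatched points lie in $B_k \cup B_{p}$, and the no-same-block constraint forces the leftover to be a bijection between the unmatched elements of $B_k$ and those of $B_p$, contributing at most $p!$ choices. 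Dividing by $p!$ cancels the second factor, and using $p \le L$ yields $|c_p| \le (L+p)^{L-l_k} \le (2L)^{L-l_k}$.

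The main obstacle is getting the greedy count tight enough: a naive counting that picks a function from $B_1 \cup \cdots \cup B_{k-1}$ into all $L+p$ positions is tempting but risks overcounting because a single pair with both endpoints in $B_1 \cup \cdots \cup B_{k-1}$ is recorded twice and, conversely, unmatched leftovers could be undercounted if block sizes disagree. Ordering the points and only making a choice for each unmatched point, plus tracking that the final leftover must be a balanced bijection between $B_k$ and $B_p$, are the two small observations that make the bound clean. Everything else is bookkeeping.
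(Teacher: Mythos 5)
Your proposal is correct and follows essentially the same route as the paper: reduce to the Gaussian case, express the linearization coefficient via orthogonality, invoke the Rota--Wallstrom block-perfect-matching identity, and bound the count by scanning the points of $B_1,\ldots,B_{k-1}$ and observing that what remains is a bijection between $B_k$ and $B_p$. The only difference is bookkeeping: the paper encodes the scan as a partial function into $[L]$ (giving $(L+1)^{L-l_k}$) while you allow $L+p$ choices per step (giving $(L+p)^{L-l_k}$), but both are at most $(2L)^{L-l_k}$.
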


\begin{proposition}\label{prop:automorphism-ratio}
For a shape $\alpha$, let $\alpha\pm e$ denote the shape with edge $e$ added or deleted. 
Then
\[ \frac{\abs{\aut(\alpha \pm e)}}{\abs{\aut(\alpha)}}~\leq~\abs{V(\alpha)}^2.\]
\end{proposition}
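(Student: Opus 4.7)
The plan is to apply the orbit--stabilizer theorem to the action of the relevant automorphism groups on the (unordered) pair of endpoints of the edge $e$. Write $G = \aut(\alpha)$ and $G' = \aut(\alpha + e)$, and first focus on bounding $|G'|/|G|$; the deletion case will then reduce to the addition case by setting $\beta \defeq \alpha - e$ and applying the addition result to $\beta$, since $\beta + e = \alpha$.

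The key step is identifying a common subgroup of $G$ and $G'$. Let $H$ be the subgroup of vertex permutations that preserve $U_\alpha$, $V_\alpha$, vertex types, the edge-labeled graph of $\alpha$, and additionally fix $e$ as an unordered pair. By definition $H$ is the stabilizer of $e$ inside $G$, and I claim it is also the stabilizer of $e$ inside $G'$: any $\phi \in G'$ with $\phi(e) = e$ must preserve $E(\alpha + e) \setminus \{e\} = E(\alpha)$ (with labels), so $\phi \in H$; conversely, any element of $H$ preserves $E(\alpha)$ and fixes $e$, hence preserves $E(\alpha + e)$. This stabilizer identification is the only subtlety in the argument, and it is purely set-theoretic.

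Applying orbit--stabilizer to both $G$ and $G'$ acting on the set of unordered pairs of vertices of $\alpha$ yields $|G| = |\mathrm{orb}_G(e)| \cdot |H|$ and $|G'| = |\mathrm{orb}_{G'}(e)| \cdot |H|$. The common factor $|H|$ cancels, giving
\[
\frac{|\aut(\alpha + e)|}{|\aut(\alpha)|} \;=\; \frac{|\mathrm{orb}_{G'}(e)|}{|\mathrm{orb}_G(e)|} \;\le\; |\mathrm{orb}_{G'}(e)| \;\le\; |V(\alpha)|^2,
\]
where the last step uses that any orbit of $e$ under vertex permutations is contained in the set of type-compatible unordered vertex pairs, which has size at most $\binom{|V(\alpha)|}{2} \le |V(\alpha)|^2$. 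For the deletion case, the same computation with $\beta$ in place of $\alpha$ gives $|\aut(\beta + e)|/|\aut(\beta)| = |\mathrm{orb}_{\aut(\alpha)}(e)|/|\mathrm{orb}_{\aut(\alpha - e)}(e)|$, so taking reciprocals yields $|\aut(\alpha - e)|/|\aut(\alpha)| \le |\mathrm{orb}_{\aut(\alpha - e)}(e)| \le |V(\alpha)|^2$. There is no genuine obstacle here; the main thing to check carefully is the stabilizer identification, and implicitly that $e$ joins two vertices already in $V(\alpha)$ so that $V(\alpha \pm e) = V(\alpha)$ and the right-hand side $|V(\alpha)|^2$ is the intended quantity.
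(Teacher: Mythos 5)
Your proof is correct and takes essentially the same route as the paper's: both rely on observing that $\aut(\alpha)$ and $\aut(\alpha \pm e)$ have the same stabilizer of $e$ (viewed as an unordered vertex pair), then apply orbit--stabilizer and bound the orbit by the number of pairs, at most $\abs{V(\alpha)}^2$. The only cosmetic difference is that you split the addition and deletion cases and reduce the latter to the former, whereas the paper treats both uniformly in one line.
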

\begin{proof}
We show that the two groups have a large subgroup which are equal. Consider $\aut(\alpha\pm e)$ and $\aut(\alpha)$ as group actions on the set $\binom{V(\alpha)}{2}$. Letting $G^e$ denote the stabilizer of edge $e$, observe that $\aut(\alpha\pm e)^e = \aut(\alpha)^e$. By the orbit-stabilizer lemma, the index $\abs{G : G^e}$ is equal to the size of the orbit of $e$, which is at least 1 and at most $\abs{V(\alpha)}^2$. So,
\[\frac{\abs{\aut(\alpha \pm e)}}{\abs{\aut(\alpha)}} = \frac{\abs{\aut(\alpha \pm e) : \aut(\alpha \pm e)^e}}{\abs{\aut(\alpha) : \aut(\alpha)^e}} \leq \abs{V(\alpha)}^2.\qedhere \] 
\end{proof}

\begin{lemma}\label{lem:collapse-lemma}
    If $\alpha$ is a left spider, there are coefficients ${c_\beta}$ for each $\beta \in \calI_\alpha$ such that 
	\[L_{\abs{U_\alpha}} \cdot M_{\body(\alpha)} = 2M_{\alpha} + \sum_{\beta \in {\calI}_\alpha}c_\beta M_\beta,\]
	\[\abs{c_\beta} \leq 
	\begin{cases}
	 160\abs{V(\alpha)}^7\abs{E(\alpha)}^2 & \beta \in {\calI}_\alpha^{(1)}\\
	 \frac{160\abs{V(\alpha)}^7\abs{E(\alpha)}^2}{n} & \beta \in {\calI}_\alpha^{(2)}
	\end{cases}.\]
\end{lemma}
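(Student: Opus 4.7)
The plan is to boot the result up from \cref{lem:improper-collapse}, which already gives the decomposition
\[ L_{|U_\alpha|} \cdot M_{\body(\alpha)} = 2 M_\alpha + \sum_{\tilde\beta \in \widetilde\calI_\alpha} c_{\tilde\beta} M_{\tilde\beta} \]
with $|c_{\tilde\beta}| \le 40|V(\alpha)|^3$ on $\widetilde\calI_\alpha^{(1)}$ and $|c_{\tilde\beta}| \le 40|V(\alpha)|^3/n$ on $\widetilde\calI_\alpha^{(2)}$. The only remaining work is to convert the improper shapes $\tilde\beta$ into proper ones: for each $\tilde\beta$ I will apply \cref{prop:expand-improper} to write $M_{\tilde\beta} = \sum_{\beta} c_{\tilde\beta \to \beta} M_\beta$, and track how the coefficients aggregate. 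Since proper expansion changes only the edge labels and never the vertex identification pattern, a shape in $\widetilde\calI_\alpha^{(1)}$ expands only into shapes in $\calI_\alpha^{(1)}$, and likewise for (2); moreover, each proper $\beta \in \calI_\alpha$ is produced by a \emph{unique} improper $\tilde\beta$ (the one with $\beta$'s vertex identification pattern), so the combined coefficient is simply $c_\beta = c_{\tilde\beta} \cdot c_{\tilde\beta \to \beta}$ with no further sum.

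The quantitative heart is to bound the factor $C_{Fourier} \cdot C_{Aut}$ from \cref{prop:expand-improper}. The key structural observation is that, because $\square{i}$ and $\square{j}$ each have a unique incident edge (of label $1$) going to $\circle{u}$, any collapse of $\square{i}$ (or $\square{j}$) with another square $\square{k}$ can create parallel edges only at $\circle{u}$, and only when $\square{k}$ was already adjacent to $\circle{u}$. Therefore $\tilde\beta$ has at most two sets of parallel edges (one from each of $\square{i}, \square{j}$'s collapse), each of size at most $3$ (with at least one label-$1$ edge and hence $L - l_k \leq 2$). Applying \cref{prop:hermite-product-coefficients} to each parallel set bounds its contribution by $(2L)^{L - l_k} \leq 4|E(\alpha)|^2$, giving $C_{Fourier} \leq 4|E(\alpha)|^2$ overall. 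For $C_{Aut}$, replacing each parallel set by a single edge amounts to a bounded number of single-edge modifications of the underlying (multi)graph, so iterated application of \cref{prop:automorphism-ratio} bounds $|\aut(\beta)|/|\aut(\tilde\beta)| \leq |V(\alpha)|^4$.

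Putting these together, $|c_{\tilde\beta\to\beta}| \leq 4 |V(\alpha)|^4 |E(\alpha)|^2$, and multiplying by the improper coefficient bound from \cref{lem:improper-collapse} gives exactly $|c_\beta| \leq 160|V(\alpha)|^7 |E(\alpha)|^2$ on $\calI_\alpha^{(1)}$ and $|c_\beta| \leq 160|V(\alpha)|^7 |E(\alpha)|^2 / n$ on $\calI_\alpha^{(2)}$, matching the statement. The coefficient on $M_\alpha$ remains $2$ because no strict collapse $\tilde\beta$ can expand back to $\alpha$ (proper expansion preserves vertex count). The main obstacle is the careful bookkeeping of the parallel edge structure induced by a spider collapse, together with the automorphism ratio — both are subtle, but the spider's restrictive local structure (degree-$1$ end vertices feeding into a single circle) keeps the parallel-edge blowup bounded by a small constant number of sets, which is what keeps both $C_{Fourier}$ and $C_{Aut}$ polynomially bounded.
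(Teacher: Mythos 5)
Your proposal follows the same route as the paper: start from \cref{lem:improper-collapse}, apply \cref{prop:expand-improper} to each improper shape, and bound $C_{Fourier} \leq 4\abs{E(\alpha)}^2$ and $C_{Aut} \leq \abs{V(\alpha)}^4$ by exploiting that the only parallel edges come from the two degree-$1$ end vertices collapsing onto $\circle{u}$'s neighborhood (labels $\{1,k\}$ or $\{1,1,k\}$, at most two extra edges), then multiply by the $40\abs{V(\alpha)}^3$ (resp.\ $40\abs{V(\alpha)}^3/n$) bound. The additional clarifications you add — that expansion respects the $\calI^{(1)}/\calI^{(2)}$ split and that each proper $\beta$ arises from a unique improper $\tilde\beta$, so coefficients do not accumulate — are left implicit in the paper but are exactly the tacit assumptions its short proof relies on, so this is the same argument, not a variant.
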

\begin{proof}
    We express each $M_\beta, \beta \in \widetilde{\calI}_\alpha$ in~\cref{lem:improper-collapse} in terms of proper shapes. We apply~\cref{prop:expand-improper} using the following bounds on $C_{Fourier}$ and $C_{Aut}$. The only improperness in $\beta$ comes from 
    collapsing (at most) the two end vertices, which have a single incident 
    edge each. Therefore the set of labels of any parallel edges is either 
    $\{1,k\}$ or $\set{1,1,k},$ for some $k \leq \abs{E(\alpha)}$. By~\cref{prop:hermite-product-coefficients}, we have $C_{Fourier} \leq 4\abs{E(\alpha)}^2$. There are at most two extra parallel edges in $\beta$, so we have $C_{Aut} \leq \abs{V(\alpha)}^4$ using~\cref{prop:automorphism-ratio}. Therefore the coefficients increase by at most $C_{Fourier}\cdot C_{Aut} \leq 4\abs{E(\alpha)}^2\abs{V(\alpha)}^4$.
\end{proof}

\begin{corollary}\label{cor:right-spider-coefs}
    If $\alpha$ is a right spider, there are coefficients $c_\beta$ with the same bounds given in~\cref{lem:collapse-lemma} such that 
	\[M_{\body(\alpha)} \cdot L_{\abs{U_\alpha}}^\T = 2M_{\alpha} + \sum_{\beta \in {\calI}_\alpha}c_\beta M_\beta.\]
\end{corollary}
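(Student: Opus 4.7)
The strategy is to reduce directly to Lemma 4.19 via transposition. If $\alpha$ is a right spider with end vertices $\square{i}, \square{j}$ adjacent to some circle $\circle{u}$, then by definition $\alpha^\T$ is a left spider with the same end vertices. Applying Lemma 4.19 to $\alpha^\T$ produces coefficients $c_{\beta'}$ for $\beta' \in \calI_{\alpha^\T}$ satisfying
\[ L_{\abs{U_{\alpha^\T}}} \cdot M_{\body(\alpha^\T)} = 2 M_{\alpha^\T} + \sum_{\beta' \in \calI_{\alpha^\T}} c_{\beta'} M_{\beta'},\]
with the bound $\abs{c_{\beta'}} \le 160\, \abs{V(\alpha)}^7 \abs{E(\alpha)}^2$ on $\calI^{(1)}_{\alpha^\T}$ and the smaller $1/n$ bound on $\calI^{(2)}_{\alpha^\T}$ (using $\abs{V(\alpha^\T)} = \abs{V(\alpha)}$ and $\abs{E(\alpha^\T)} = \abs{E(\alpha)}$).

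Next I would transpose both sides and invoke the identity $M_\gamma^\T = M_{\gamma^\T}$, which holds because transposing a matrix swaps row and column indices, precisely the $U_\gamma \leftrightarrow V_\gamma$ exchange used in the definition of $\gamma^\T$. The left-hand side becomes $M_{\body(\alpha^\T)^\T} \cdot L_{\abs{U_{\alpha^\T}}}^\T$. A short unwinding of the definition of $\body$ then shows $\body(\alpha^\T)^\T = \body(\alpha)$: deleting the two end vertices commutes with transposition, and the $U \leftrightarrow V$ swap on the body's distinguished sides matches exactly the "analogous" right-spider definition of $\body$.

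Finally, I would verify that $\beta' \mapsto (\beta')^\T$ is a bijection $\calI_{\alpha^\T} \to \calI_\alpha$: a proper collapse of an end vertex of $\alpha^\T$ with some other square vertex corresponds, via transposition, to the same collapse on $\alpha$. This bijection preserves the partition into the two subsets, because the condition "collapse into $\body(\alpha^\T) \setminus U_{\alpha^\T}$" translates to "collapse into $\body(\alpha) \setminus V_\alpha$", which is the defining condition of $\calI^{(1)}_\alpha$ for a right spider under the analogous convention. Reindexing the sum by $\beta = (\beta')^\T$ and setting $c_\beta \defeq c_{\beta'}$ preserves all magnitudes, yielding the desired coefficient bounds. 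The entire argument is bookkeeping on top of Lemma 4.19, so I do not anticipate any genuine obstacle; the only subtlety is checking the compatibility of $\body$ and of the $\calI^{(1)}/\calI^{(2)}$ split with transposition.
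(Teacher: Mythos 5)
Your transposition argument is the natural (and essentially the only sensible) route, and the paper itself gives no explicit proof of this corollary, deferring to the same ``transpose the left-spider case'' idea; all the individual claims you make (that $M_\gamma^\T = M_{\gamma^\T}$, that $\body(\alpha^\T)^\T = \body(\alpha)$, and that transposition gives a bijection $\calI_{\alpha^\T}\to\calI_\alpha$ preserving the $\calI^{(1)}/\calI^{(2)}$ split) are correct. However, you should have noticed that your derivation produces $M_{\body(\alpha)}\cdot L_{\abs{U_{\alpha^\T}}}^\T = M_{\body(\alpha)}\cdot L_{\abs{V_\alpha}}^\T$ on the left-hand side, which does \emph{not} literally match the corollary's $L_{\abs{U_\alpha}}^\T$. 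This is not a flaw in your argument but a typo in the statement: for a right spider the end vertices lie in $V_\alpha$, so the completed shape $L_k$ that factors off the right must have $k = \abs{V_\alpha}$ (dimension counting on the index sets confirms this, since $\abs{V_{\body(\alpha)}} = \abs{V_\alpha}-1 = \abs{V_{\ell_k}}$ forces $k=\abs{V_\alpha}$; the two only agree on diagonal blocks). A complete write-up should flag this discrepancy rather than silently assert the paper's formula.
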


\begin{corollary}\label{cor:spider-killing}
    If $x \perp \nullspace(\calM_{fix})$ and $\alpha$ is a spider, then for some $c_\beta$ with the same bounds given in~\cref{lem:collapse-lemma},
    \[x^\top(M_\alpha - \displaystyle\sum_{\beta \in \calI_\alpha} c_\beta M_\beta) x = 0 \]
\end{corollary}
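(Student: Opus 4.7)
The plan is to combine the approximate factorization of a spider given by \cref{lem:collapse-lemma} (for left spiders) or \cref{cor:right-spider-coefs} (for right spiders) with the nullspace identity $\calM_{fix} L_k = 0$ from \cref{lem:completed-left-side}. The idea is that after subtracting off the (properly weighted) collapse terms from $M_\alpha$, what remains is exactly a rank-like product involving $L_{|U_\alpha|}$, whose columns lie in $\nullspace(\calM_{fix})$, so taking the quadratic form against any $x \perp \nullspace(\calM_{fix})$ annihilates it.

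First I would handle the left spider case. \cref{lem:collapse-lemma} produces coefficients $\{c_\beta\}_{\beta \in \calI_\alpha}$ obeying the stated bounds such that
\[
L_{|U_\alpha|} \cdot M_{\body(\alpha)} \;=\; 2 M_\alpha + \sum_{\beta \in \calI_\alpha} c_\beta M_\beta.
\]
Rearranging and dividing by $2$, I get
\[
M_\alpha - \sum_{\beta \in \calI_\alpha} \tilde c_\beta M_\beta \;=\; \tfrac{1}{2}\, L_{|U_\alpha|} \cdot M_{\body(\alpha)},
\]
where $\tilde c_\beta = -c_\beta / 2$. Since halving only shrinks magnitudes, the $\tilde c_\beta$ satisfy the same bounds as in \cref{lem:collapse-lemma}. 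By \cref{lem:completed-left-side}, every column of $L_{|U_\alpha|}$ lies in $\nullspace(\calM_{fix})$. Consequently, the vector $L_{|U_\alpha|} M_{\body(\alpha)} x$ is a linear combination of columns of $L_{|U_\alpha|}$, hence lies in $\nullspace(\calM_{fix})$; since $x \perp \nullspace(\calM_{fix})$, its inner product with $x$ vanishes, giving $x^\T\!\bigl(M_\alpha - \sum_\beta \tilde c_\beta M_\beta\bigr) x = 0$.

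The right spider case is symmetric using \cref{cor:right-spider-coefs}, which provides $\tilde c_\beta$ with
\[
M_\alpha - \sum_{\beta \in \calI_\alpha} \tilde c_\beta M_\beta \;=\; \tfrac{1}{2}\, M_{\body(\alpha)} \cdot L_{|U_\alpha|}^{\T}.
\]
Writing the quadratic form as $x^\T M_{\body(\alpha)} L_{|U_\alpha|}^{\T} x = \bigl(L_{|U_\alpha|} M_{\body(\alpha)}^{\T} x\bigr)^{\T} x$, the vector $L_{|U_\alpha|} M_{\body(\alpha)}^{\T} x$ again lies in the column space of $L_{|U_\alpha|}$, and hence in $\nullspace(\calM_{fix})$, so the inner product with $x$ is zero.

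The argument has no real obstacle: it is essentially a direct consequence of \cref{lem:completed-left-side}, \cref{lem:collapse-lemma}, and \cref{cor:right-spider-coefs}, packaged via \cref{fact:null-space}. The only bookkeeping is checking that the division by $2$ when solving for $M_\alpha$ preserves the bounds on the coefficients, which it trivially does.
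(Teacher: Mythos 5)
Your proof is correct and takes essentially the same approach as the paper: the paper observes $\calM_{fix}\bigl(2M_\alpha + \sum_\beta c_\beta M_\beta\bigr) = \calM_{fix}\, L_{|U_\alpha|} M_{\body(\alpha)} = 0$ and invokes \cref{fact:null-space}, whereas you spell out the same mechanism directly (the columns of $L_{|U_\alpha|}$ lie in $\nullspace(\calM_{fix})$, hence are orthogonal to $x$), and correctly note that the resulting coefficients $-c_\beta/2$ still satisfy the bounds of \cref{lem:collapse-lemma}.
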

\begin{proof}
For a left spider, since
\[\calM_{fix} (2M_\alpha + \displaystyle\sum_{\beta \in \calI_\alpha} c_\beta M_\beta) = \calM_{fix} \cdot L_{\abs{U_\alpha}} \cdot M_{\alpha'} = 0\]
we are in position to use~\cref{fact:null-space}. For a right spider, the proof is analogous.
\end{proof}

\subsection{Killing all the spiders}

The strategy is to start with the moment matrix $\calM$ and apply~\cref{cor:spider-killing} repeatedly until we end up with no spiders in our decomposition. For each spider, killing it via~\cref{cor:spider-killing} leaves only intersection terms. Some of those intersection terms may themselves be smaller spiders, in which case we will apply the corollary again and again until only non-spiders remain. The difficulty during this procedure is to bound the total coefficient accumulated on each non-spider. To capture this process, we define the web of a spider $\alpha$, which will be a directed acyclic graph that will capture the spider killing process. For the sake of distinction, we will call the vertices of this graph ``nodes". 

\begin{definition}[Web of $\alpha$]
    The web $W(\alpha)$ of a spider $\alpha$ is a rooted directed acyclic graph 
    (DAG) whose nodes are shapes and whose root is $\alpha$. Each spider node 
    $\gamma$ has edges to nodes $\beta$ for each shape $\beta \in 
    \calI_{\gamma}$. 
    %For each edge $(\gamma, \beta)$, we call $\beta$ a child of 
    %$\gamma$. 
    The non-spider nodes are leaves/sinks of the DAG.
\end{definition}
\begin{remark}
  The DAG structure arises because each shape in $\calI_\gamma$ has strictly fewer square vertices than $\gamma$ for any spider $\gamma$. As a consequence, the height of a web $W(\alpha)$ is at most $\abs{V(\alpha)}$.
\end{remark}
Each node $\gamma$ of $W(\alpha)$ also has an associated value $v_\gamma$, which is defined by the following process:
\begin{itemize}
	\item Initially, set $v_\alpha = 1$ and for all other $\gamma$, set $v_\gamma = 0$.
	\item Starting from the root and in topological order, each spider node $\gamma$ adds $v_{\gamma} c_\beta$ to $v_\beta$ for each child $\beta \in \calI_{\gamma}$, where the $c_\beta$ are the coefficients from~\cref{cor:spider-killing}.
\end{itemize}

\begin{proposition}\label{prop:web-sum}
	If $x \perp \nullspace(\calM_{fix})$, then 
	\[\displaystyle x^\T(M_\alpha - \sum_{\text{leaves } \gamma\text{ of }W(\alpha)} v_\gamma M_\gamma)x = 0.\]
\end{proposition}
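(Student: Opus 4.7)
The plan is to prove a stronger invariant by induction on a partial topological processing of the web, where each step corresponds to applying~\cref{cor:spider-killing} to one internal (spider) node.

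Fix a topological order $\gamma_1 = \alpha, \gamma_2, \ldots, \gamma_N$ of the nodes of $W(\alpha)$ in which every edge $\gamma_i \to \gamma_j$ satisfies $i < j$. For each $t = 0, 1, \ldots, N$, define intermediate values $\{v_\gamma^{(t)}\}$ as the values computed by running the procedure from the definition of $v_\gamma$ through the first $t$ spider nodes of this ordering (i.e.\ for each $i \le t$ that is a spider, we have added $v_{\gamma_i}^{(i-1)} c_\beta$ to $v_\beta^{(i-1)}$ for each child $\beta \in \calI_{\gamma_i}$, and left everything else unchanged). Initially $v_{\gamma_1}^{(0)} = 1$ and $v_\gamma^{(0)} = 0$ for all $\gamma \neq \alpha$. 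The final values satisfy $v_\gamma^{(N)} = v_\gamma$, and in particular $v_\gamma^{(N)} = v_\gamma$ on the leaves.

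The invariant to maintain is
\[
x^\T\!\Bigl(M_\alpha - \sum_{\gamma \in W(\alpha)} v_\gamma^{(t)} M_\gamma\Bigr)x \;=\; 0
\qquad \text{for all } t = 0, 1, \ldots, N.
\]
The base case $t = 0$ is immediate since $v_{\gamma_1}^{(0)} = 1$ and all other $v^{(0)}_\gamma$ vanish, so the bracket is $M_\alpha - M_\alpha = 0$. For the inductive step, suppose the invariant holds at step $t-1$ and consider $\gamma_t$. If $\gamma_t$ is a leaf (non-spider), then by construction the values are not updated, so $v_\gamma^{(t)} = v_\gamma^{(t-1)}$ for all $\gamma$ and the invariant is preserved trivially. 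If $\gamma_t$ is a spider, then~\cref{cor:spider-killing} applied to $\gamma_t$ gives
\[
x^\T\!\Bigl(M_{\gamma_t} - \sum_{\beta \in \calI_{\gamma_t}} c_\beta M_\beta\Bigr)x \;=\; 0,
\]
so multiplying by the scalar $v_{\gamma_t}^{(t-1)}$ and subtracting from the invariant at step $t-1$ yields exactly the invariant at step $t$, since the update rule adds $v_{\gamma_t}^{(t-1)} c_\beta$ to $v_\beta^{(t-1)}$ for each child $\beta \in \calI_{\gamma_t}$ and leaves $v_{\gamma_t}$ effectively replaced (subtracted out).

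Finally, at $t = N$, every spider has been processed, so after all substitutions the coefficient $v_\gamma^{(N)}$ on any internal (spider) node $\gamma$ is cancelled out by the subtraction at the step when $\gamma$ was processed. More precisely, the invariant at $t=N$ can be rewritten, using $v_\gamma^{(N)} = v_\gamma$ on leaves and $v_\gamma^{(N)} = 0$ on non-leaf spider nodes (since each spider gets ``consumed'' in its own step after collecting contributions from all ancestors, which is legitimate because the topological order processes all ancestors before $\gamma_t$), as
\[
x^\T\!\Bigl(M_\alpha - \sum_{\text{leaves } \gamma \text{ of } W(\alpha)} v_\gamma M_\gamma\Bigr)x \;=\; 0,
\]
which is the desired identity. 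The only subtle point is the bookkeeping that $v_\gamma^{(N)} = 0$ on internal spider nodes; this holds because the invariant together with the update rule effectively zeros out the coefficient of a spider $\gamma_t$ at step $t$, and no later step modifies it since no child-edge is directed back into $\gamma_t$.
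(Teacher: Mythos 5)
Your overall strategy is exactly the paper's: process the web in topological order, use \cref{cor:spider-killing} to replace each spider by its children, and read off the coefficients on the leaves. However, the invariant as you state it is not actually preserved by the procedure as you define it, and this is a genuine gap rather than a typo.

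Your intermediate values $v_\gamma^{(t)}$ are defined by the same rule as the paper's $v_\gamma$ (add $v_{\gamma_i}^{(i-1)}c_\beta$ to each child, ``leave everything else unchanged''), which in particular never zeroes $v_{\gamma_t}$ itself. So on internal spider nodes, $v_\gamma^{(N)} = v_\gamma \neq 0$ in general. With these values, the claimed invariant $x^\T(M_\alpha - \sum_{\gamma \in W(\alpha)} v_\gamma^{(t)} M_\gamma)x = 0$ fails at the inductive step: one computes
\[
x^\T\Bigl(M_\alpha - \sum_\gamma v_\gamma^{(t)} M_\gamma\Bigr)x
= x^\T\Bigl(M_\alpha - \sum_\gamma v_\gamma^{(t-1)} M_\gamma\Bigr)x \;-\; v_{\gamma_t}^{(t-1)}\, x^\T\Bigl(\sum_{\beta \in \calI_{\gamma_t}} c_\beta M_\beta\Bigr)x,
\]
and by \cref{cor:spider-killing} the last term equals $v_{\gamma_t}^{(t-1)} x^\T M_{\gamma_t} x$, which is not zero in general. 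Your parenthetical remark that the update ``leaves $v_{\gamma_t}$ effectively replaced (subtracted out)'' is the source of the confusion: the update rule you wrote does not subtract out $v_{\gamma_t}$, and indeed the conclusion $v_\gamma^{(N)} = 0$ on spiders that you invoke at the end is false for the paper's $v_\gamma$. The fix is to track a second quantity $w_\gamma^{(t)}$ (``current coefficient in the working identity''), defined like $v_\gamma^{(t)}$ except that $w_{\gamma_t}^{(t)}$ is explicitly set to $0$ when the spider $\gamma_t$ is processed at step $t$. Then the invariant $x^\T(M_\alpha - \sum_\gamma w_\gamma^{(t)} M_\gamma)x = 0$ is preserved, $w_\gamma^{(N)} = v_\gamma$ on leaves (since a leaf is never the processed spider, and all contributions to a node arrive before it would be zeroed by the topological order), and $w_\gamma^{(N)} = 0$ on spiders, giving the proposition.
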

\begin{proof}
	Start with the equation $x^\T M_{\alpha} x = x^\T 
	v_{\alpha}M_{\alpha} x$. In each step, we take the topologically first spider $\gamma$, which in this case means the spider closest to the root of $W(\alpha)$, that is present in the right hand side of our equation and using \cref{cor:spider-killing}, we 
	replace $v_{\gamma}M_{\gamma}$ by $\sum_{\beta \in 
	\text{children}(\gamma)} v_\gamma c_\beta M_\beta$. 
	Precisely by the definition of the $v_{\gamma}$, this 
	process ends with the equation
	\[\displaystyle x^\T M_\alpha x = x^\T(\sum_{\text{leaves } \gamma\text{ of }W(\alpha)} v_\gamma M_\gamma)x\]
\end{proof}

\begin{proposition}\label{prop:web-parents}
For any node $\beta$ in $W(\alpha)$, $\abs{\parents(\beta)} \leq 4\abs{V(\alpha)}^3 \cdot \abs{E(\alpha)}^2$ where $parents(\beta)$ is the set of nodes $\gamma$ in $W(\alpha)$ such that $\beta \in \calI_{\gamma}$.
\end{proposition}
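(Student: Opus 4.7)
The plan is to enumerate each parent $\gamma$ of $\beta$ by reconstructing $\gamma$ from $\beta$ via the reverse of the collapse that witnesses $\beta \in \calI_\gamma$. Since $\beta \in \calI_\gamma$ means $\beta$ arises from the spider $\gamma$ by improperly collapsing at least one of the two distinguished end vertices $\square{i}, \square{j}$ of $\gamma$ into some square of $\gamma$ and then Fourier-expanding any resulting parallel edges, recovering $\gamma$ amounts to ``uncollapsing'' these end vertices. The key structural observation is that collapses merge only squares, so the center circle $\circle{u}$ of $\gamma$ (to which both end vertices are adjacent in $\gamma$) must already appear in $V(\beta)$.

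I would first choose the center $\circle{u} \in V(\beta)$, contributing at most $|V(\beta)| \le |V(\alpha)|$ options. Then, for each end vertex $\square{i}$ (and symmetrically $\square{j}$), I would specify its disposition: either (a) it was not collapsed and therefore appears in $\beta$ as a specific degree-$1$ square adjacent to $\circle{u}$, contributing at most $|V(\alpha)|$ options; or (b) it was collapsed into a host square $\square{h} \in V(\beta)$ (at most $|V(\alpha)|$ host options), in which case I also specify the label $l$ of the new end-vertex edge in $\gamma$ and (if positive) the label $l_h$ of the host's original edge to $\circle{u}$ in $\gamma$. The pair $(l, l_h)$ is further constrained by Fourier consistency, namely $|l - l_h| \le l_\beta \le l + l_h$ and $l + l_h \equiv l_\beta \pmod 2$, where $l_\beta$ is the observed label of $\{\square{h}, \circle{u}\}$ in $\beta$ (taking $l_\beta = 0$ if no such edge exists).

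Multiplying the options across the center and the two independent end-vertex dispositions would yield the stated bound $|\parents(\beta)| \le 4\,|V(\alpha)|^3\,|E(\alpha)|^2$: the factor $|V(\alpha)|^3$ collects the center and the two host choices, the factor $|E(\alpha)|^2$ bounds the joint label choices after the Fourier-consistency constraints are applied (one label degree of freedom per end vertex once the host edge at $\circle{u}$ in $\beta$ is fixed), and the constant $4$ absorbs the two-way case split between the ``not collapsed'' and ``collapsed'' alternatives for each end vertex. The main obstacle I expect is the clean enumeration of admissible $(l, l_h)$ pairs under the Fourier-consistency constraint so as to yield only a linear (rather than quadratic) factor of $|E(\alpha)|$ per end vertex, together with bookkeeping to avoid double-counting across the symmetric roles of $\square{i}$ and $\square{j}$; the latter is handled because each spider shape in $W(\alpha)$ carries a fixed designated ordered pair of end vertices.
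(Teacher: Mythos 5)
Your proposal takes essentially the same approach as the paper: enumerate parents of $\beta$ by guessing which circle in $\beta$ was the spider's center and ``uncollapsing'' each end vertex, then multiplying the choices. The counting comes out the same, so the argument is correct.

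Two points of imprecision worth flagging. First, by the definition of a spider, each end vertex has degree exactly $1$, so the edge from an end vertex to the center circle in $\gamma$ always carries label $1$; there is no free parameter $l$ to enumerate. The only label freedom is $l_h$, the original label on the host's edge to $\circle{u}$, which the paper crudely bounds by $\abs{E(\alpha)}$. Consequently the ``clean enumeration under Fourier-consistency'' obstacle you anticipate doesn't actually arise; the linear factor per collapsed end vertex falls out immediately. Second, you don't explicitly account for the fact that $\gamma$ could be either a left or a right spider --- this case split is precisely where the paper's extra factor of $2$ comes from, whereas your constant $4$ is attributed entirely to the collapsed/not-collapsed split per end vertex. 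Your bound still holds because you over-counted on the label side, but the bookkeeping doesn't track the paper's. (One more case to be explicit about in your framework: the two end vertices collapsing together to a single new square in $\beta$; your host-enumeration handles it if you allow the two ``hosts'' to coincide and only uncollapse once, but it's worth stating.)
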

\begin{proof}
The following process covers all parent left spiders $\gamma$ which could possibly collapse their end vertices to form $\beta$. Starting from $\gamma = \beta,$
\begin{itemize}
    \item Pick a circle vertex $\circle{u} \in V(\gamma)$ to be the neighbor of the end vertices.
    \item Pick a square vertex $\square{i} \in V(\gamma)$ to be the collapse of the first end vertex. ``Uncollapse'' it by adding a new square to $U_{\gamma}$ with a single edge to $\circle{u}$ with label $1$. Flip the value of $U_{\gamma}(\square{i})$. Modify the label of $\{\square{i}, \circle{u}\}$ to any number up to $\abs{E(\alpha)}$.
    \item Pick a square vertex $\square{j} \in V(\gamma)$ to be the second end vertex. Optionally uncollapse it by adding a new square to $\gamma$ in the same way as above.
\end{itemize}
The process can be carried out in at most $\abs{V(\alpha)}^3\abs{E(\alpha)}(\abs{E(\alpha)}+1) \leq 2\abs{V(\alpha)}^3\abs{E(\alpha)}^2$ ways. We multiply by 2 to accommodate right spiders.
\end{proof}

Let us label each parent-child edge $(\gamma, \beta$) as either a ``type 1'' edge if $\beta \in \calI_{\gamma}^{(1)}$ or a ``type 2'' edge if $\beta \in \calI_{\gamma}^{(2)}$.

\begin{proposition}\label{prop:web-derivation-number}
  Let $p$ be a path in $W(\alpha)$ with $\#_1(p)$ type 1 edges and $\#_2(p)$ type 2 edges. Then $\#_1(p)~\leq~\abs{E(\alpha)}~+~2\#_2(p)$.
\end{proposition}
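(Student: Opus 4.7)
The plan is to prove this by induction on the length of the path $p$, using a potential function $\Phi \colon \text{shapes} \to \Z_{\ge 0}$ with $\Phi(\alpha) \le \abs{E(\alpha)}$, chosen so that every type 1 edge decreases $\Phi$ by at least $1$ while every type 2 edge increases $\Phi$ by at most $2$. If these two monotonicity properties hold, telescoping along $p$ together with $\Phi(\gamma_{\text{end}}) \ge 0$ gives
\[
0 \le \Phi(\gamma_{\text{end}}) \le \Phi(\alpha) - \#_1(p) + 2\#_2(p) \le \abs{E(\alpha)} - \#_1(p) + 2\#_2(p),
\]
which rearranges to the desired bound.

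The natural first candidate is $\Phi(\gamma) \defeq \abs{E(\gamma)}$. Recall that a web edge corresponds to an improper collapse of one or both end vertices followed by a proper (Fourier) expansion of any resulting parallel edges; improper collapses never create new edges and Fourier expansion of a product $h_{l_1}\cdots h_{l_k}$ produces only terms of label at most $l_1+\cdots+l_k$. Hence $\abs{E(\beta)} \le \abs{E(\gamma)}$ along every web edge, and the type 2 bound follows essentially for free from a quick case analysis (end vertex with a $U_\gamma$-vertex; both end vertices with each other producing $h_1 h_1 = h_2 + h_0$; both with a common third vertex producing $h_1 h_1 h_l = h_{l+2}+\cdots$), where the leading Fourier term always preserves the label sum.

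The type 1 direction is the main obstacle. When an end vertex $\square{i}$ merges with a body vertex $\square{k}\in V(\gamma)\setminus U_\gamma$, the $\calL$-validity conditions (parity plus the no-isolated-vertex-in-$W$ condition, and the fact that end vertices $\square{i}$ satisfy $V_\gamma(\square{i})=0$) force $\deg_\gamma(\square{k})\ge 1$; if $\square{k}$ is adjacent to the central circle $\circle{u}$, the resulting parallel edges can be Fourier-expanded to strictly decrease the label sum. The subtle subcase is when $\square{k}$ is not adjacent to $\circle{u}$ and only one end vertex collapses: then no parallel edges arise and $\abs{E(\beta)}=\abs{E(\gamma)}$.

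To handle this subcase I would refine the potential to
\[
\Phi(\gamma) \defeq \abs{E(\gamma)} + \#\{\square \in V(\gamma)\setminus U_\gamma\},
\]
and exploit the key observation that in any type 1 collapse the merged vertex lands in $U_\beta$, since $U_\beta(\text{merged}) = U_\gamma(\square{i}) \oplus U_\gamma(\square{k}) = 1 \oplus 0 = 1$. Consequently at least one square leaves $V\setminus U$ without replacement and the second term of $\Phi$ strictly decreases, while a case-by-case check shows that type 2 collapses change the second term by at most $+1$. To match the constants in the proposition one then bounds $\Phi(\alpha)$ in terms of $\abs{E(\alpha)}$ using the degree-sum identity $\sum_{\square\in V(\alpha)}\deg(\square) = \abs{E(\alpha)}$ together with the lower bound $\deg(\square) \ge 1$ for squares in $V(\alpha)\setminus U_\alpha$; I expect the cleanest argument to come from a slight reweighting of the two summands in $\Phi$ so as to charge exactly $1$ per type 1 edge and $2$ per type 2 edge against the initial budget $\abs{E(\alpha)}$.
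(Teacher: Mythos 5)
Your overall strategy — a vertex-based potential $\Phi$ that drops by at least $1$ per type 1 edge, rises by at most $2$ per type 2 edge, and starts at most $\abs{E(\alpha)}$ — is exactly the paper's; the paper uses $\Phi(\gamma) = \abs{\calS_\gamma \cap W_\gamma} + \abs{U_\gamma\setminus V_\gamma} + \abs{V_\gamma\setminus U_\gamma}$. However, the specific potential you settle on, $\Phi = \abs{E(\gamma)} + \#\{\square\in V(\gamma)\setminus U_\gamma\}$, breaks the constants you need. You require $\Phi(\alpha)\le\abs{E(\alpha)}$ at the start, but adding the $\abs{E(\gamma)}$ term forces $\Phi(\alpha) = \abs{E(\alpha)} + \#\{\square\in V(\alpha)\setminus U_\alpha\}$, which can be as large as $2\abs{E(\alpha)}$ (the degree-sum argument you invoke only gives $\#\{\square\in V(\alpha)\setminus U_\alpha\}\le\abs{E(\alpha)}$, not $0$). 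Telescoping therefore yields $\#_1(p)\le 2\abs{E(\alpha)}+2\#_2(p)$, a genuinely weaker inequality. You note the mismatch and gesture at ``a slight reweighting,'' but this is the step that actually carries the proposition and it is not done. In fact the correct fix is not a reweighting but a deletion: drop $\abs{E(\gamma)}$ entirely. The quantity $\Phi(\gamma) = \#\{\square\in V(\gamma)\setminus U_\gamma\} = \abs{\calS_\gamma\cap W_\gamma} + \abs{V_\gamma\setminus U_\gamma}$ alone already has all three required properties (this is the paper's potential with the $\abs{U_\gamma\setminus V_\gamma}$ summand removed), since Fourier expansion preserves the vertex set and every square outside $U_\alpha$ has positive degree in a shape from $\calL$.

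Two further points need care. First, your claim that ``type 2 collapses change the second term by at most $+1$'' is wrong: a type 2 step can collapse each end vertex with a distinct square of $U_\gamma\cap V_\gamma$, each merge flipping a $U$-vertex pair into a $V\setminus U$ vertex and incrementing the count, for a total $+2$. This is the true bound and matches the proposition's factor of $2$. Second, your ``key observation'' that the merged vertex always lands in $U_\beta$ because $U_\beta(\text{merged})=U_\gamma(\square{i})\oplus U_\gamma(\square{k})=1\oplus 0=1$ silently assumes $\gamma$ is a left spider, where end vertices lie in $U_\gamma$. For a right spider the end vertices have $U_\gamma(\square{i})=0$, and the type 1 collapse targets lie in $W_\gamma$ or $U_\gamma\setminus V_\gamma$; the merged vertex may or may not land in $U_\beta$. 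The count still decreases by $1$ per collapse in every case (one checks that the merge rule $\mathbb{1}[U(a)\oplus U(b)=0]-\mathbb{1}[U(a)=0]-\mathbb{1}[U(b)=0]$ equals $-1$ unless both merged vertices are in $U_\gamma$, which is excluded in a type 1 merge), but this must be argued for both spider orientations rather than just the left one. The paper sidesteps this by taking a $U\leftrightarrow V$-symmetric potential.
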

    
\begin{proof}
	For a shape $\gamma$, let $S_{\gamma}$ be the set of square vertices in $\gamma$. Then, $S_{\gamma} \cap W_{\gamma}$ will be the set of middle vertices of $\gamma$ which are squares.
  We claim that the quantity $\abs{\calS_\gamma \cap W_\gamma} + \abs{U_\gamma
    \setminus (U_\gamma \cap V_\gamma)} + \abs{V_\gamma \setminus
    (U_\gamma \cap V_\gamma)}$ decreases during a collapse.

Fix a pair of consecutive shapes $(\gamma, \beta)$ which form a type
1 edge. Looking at the definition of $\calI_\gamma^{(1)}$, each end vertex either 
  collapses with (1) nothing, or (2) a vertex of $W_\gamma$, or (3) a vertex from 
  $V_\gamma \setminus U_\gamma$ (if $\gamma$ is a left spider; 
  for a right spider, $U_\gamma \setminus V_\gamma$). 
  Furthermore, case (2) or (3) must occur for at least one of the end vertices and also, they do not collapse together.
  
  If case (2) occurs, then $\abs{\calS_\beta \cap W_\beta} < \abs{\calS_\gamma \cap W_\gamma}$ while $\abs{U_\beta 
  \setminus (U_\beta \cap V_\beta)} = \abs{U_\gamma \setminus (U_\gamma \cap 
  V_\gamma)}$ and $\abs{V_\beta \setminus
    (U_\beta \cap V_\beta)} = \abs{V_\gamma \setminus
    (U_\gamma \cap V_\gamma)}$. 
    If case (3) occurs, then $W_\beta = W_\gamma$ while 
  $\abs{U_\beta \setminus(U_\beta \cap V_\beta)}<\abs{U_\gamma 
  \setminus(U_\gamma \cap V_\gamma)}$ and $\abs{V_\beta \setminus(U_\beta \cap V_\beta)}<\abs{V_\gamma 
    \setminus(U_\gamma \cap V_\gamma)}$. 
    In all cases, $\abs{\calS_\beta \cap W_\beta} + \abs{U_\beta
  	\setminus (U_\beta \cap V_\beta)} + \abs{V_\beta \setminus
  	(U_\beta \cap V_\beta)} < \abs{\calS_\gamma \cap W_\gamma} + \abs{U_\gamma
  	\setminus (U_\gamma \cap V_\gamma)} + \abs{V_\gamma \setminus
  	(U_\gamma \cap V_\gamma)}$ as desired.
  
  Now we bound this expression for $\alpha$. From the definition of $\calL$, \cref{def:calL_valid_shapes}, 
  for spiders appearing in the pseudocalibration,
  the square vertices in $W_{\alpha}$, $U_\alpha \setminus (U_\alpha \cap 
  V_\alpha)$ and $V_\alpha \setminus (U_\alpha \cap 
  V_\alpha)$ have degree at least $1$ and can only be connected to circle vertices.
  Therefore their number is bounded by $\abs{E(\alpha)}$. Hence, initially 
  $\abs{\calS_\alpha \cap W_\alpha} + \abs{U_\alpha \setminus (U_\alpha \cap V_\alpha)} + \abs{V_\alpha \setminus (U_\alpha \cap V_\alpha)} \leq \abs{E(\alpha)}$.

  Finally, each type 2 edge in $p$ can only increase $\abs{\calS_\gamma \cap W_\gamma} + \abs{U_\gamma \setminus (U_\gamma \cap V_\gamma)} + \abs{V_\gamma \setminus (U_\gamma \cap V_\gamma)}$
  by at most 2. Therefore, we have the desired inequality $\#_1(p) \leq \abs{E(\alpha)} + 2\#_2(p)$.
\end{proof}
\begin{corollary}\label{cor:web-derivation-number}
    $\#_2(p) \geq \frac{\abs{p}}{3} - \frac{\abs{E(\alpha)}}{3}$.
\end{corollary}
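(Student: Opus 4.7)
The plan is to derive this corollary as an immediate consequence of Proposition 5.23, by splitting the total path length into its type 1 and type 2 contributions. Every edge of $p$ is labelled either type 1 or type 2, so $\abs{p} = \#_1(p) + \#_2(p)$, which rearranges to $\#_1(p) = \abs{p} - \#_2(p)$.

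Now I would substitute this identity into the bound $\#_1(p) \leq \abs{E(\alpha)} + 2\#_2(p)$ from Proposition 5.23. This yields $\abs{p} - \#_2(p) \leq \abs{E(\alpha)} + 2\#_2(p)$, which upon collecting the $\#_2(p)$ terms on one side gives $\abs{p} - \abs{E(\alpha)} \leq 3\#_2(p)$, and dividing by $3$ gives exactly $\#_2(p) \geq \frac{\abs{p}}{3} - \frac{\abs{E(\alpha)}}{3}$.

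There is no real obstacle here: the corollary is just arithmetic on top of Proposition 5.23. The conceptually meaningful content is already in the proposition — the observation that along any path in the web, each type 1 step strictly decreases the ``square budget'' $\abs{\calS_\gamma \cap W_\gamma} + \abs{U_\gamma \setminus (U_\gamma \cap V_\gamma)} + \abs{V_\gamma \setminus (U_\gamma \cap V_\gamma)}$, while each type 2 step can increase it by at most $2$, with the initial budget bounded by $\abs{E(\alpha)}$. The corollary merely rewrites this as a lower bound on $\#_2(p)$ in terms of the total path length $\abs{p}$, which is the form that will be convenient in the subsequent accumulation argument for coefficients $v_\gamma$ along the web.
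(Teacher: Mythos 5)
Your proof is correct and is exactly the paper's argument: substitute $\abs{p} = \#_1(p) + \#_2(p)$ into \cref{prop:web-derivation-number} and rearrange. The paper simply states ``plug in and rearrange,'' which you have spelled out in full.
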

\begin{proof}
  Plug in $\abs{p} = \#_1(p) + \#_2(p)$ and rearrange.
\end{proof}

Finally, we can bound the accumulation on each non-spider by a term which only depends on the parameters of the spider $\alpha$.
\begin{lemma}\label{lem:web-leaves}
There are absolute constants $C_1, C_2$ so that for all leaves $\gamma$ of $W(\alpha)$,
\[ \abs{v_\gamma} \leq (C_1 \cdot \abs{V(\alpha)} \cdot \abs{E(\alpha)})^{C_2 \abs{E(\alpha)}}.\]
\end{lemma}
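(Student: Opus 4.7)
The plan is to unroll the recursive definition of $v_\gamma$ into a sum over all directed paths in $W(\alpha)$ from the root down to $\gamma$, and then bound, separately, the contribution of each path and the total number of paths.

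\textbf{Step 1 (Path expansion).} Tracing the iterative definition of $v_\gamma$, we have
\[
  v_\gamma \;=\; \sum_{p\,:\,\alpha \leadsto \gamma}\;\prod_{(\gamma',\beta)\in p} c_{\beta \mid \gamma'},
\]
where the outer sum is over directed paths $p$ in $W(\alpha)$ from the root $\alpha$ to the leaf $\gamma$, and $c_{\beta \mid \gamma'}$ is the coefficient with which $\beta \in \calI_{\gamma'}$ appears in the expansion of $\gamma'$ from Lemma~\ref{lem:collapse-lemma}. Hence $|v_\gamma|$ is bounded by the sum over paths of the product of $|c_{\beta \mid \gamma'}|$.

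\textbf{Step 2 (Per-edge coefficient bound).} Any descendant $\gamma'$ of $\alpha$ in $W(\alpha)$ satisfies $|V(\gamma')|\leq |V(\alpha)|$ and $|E(\gamma')|\leq |E(\alpha)|$, because a (properized) collapse only merges vertices and can only decrease the number of edges. Plugging these inequalities into the (larger) bound of Lemma~\ref{lem:collapse-lemma}, every coefficient along an edge of the web is at most $K \defeq 160\,|V(\alpha)|^7 |E(\alpha)|^2$.

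\textbf{Step 3 (Counting paths).} By Proposition~\ref{prop:web-parents}, every node of $W(\alpha)$ has at most $P \defeq 4\,|V(\alpha)|^3 |E(\alpha)|^2$ parents. A length-$\ell$ path ending at $\gamma$ is specified by picking a parent at each of $\ell$ successive steps going up from $\gamma$, so the number of such paths is at most $P^\ell$. Combined with the observation that the height of $W(\alpha)$ is at most $|V(\alpha)|$ (each collapse strictly decreases the number of squares), no path exceeds length $|V(\alpha)|$.

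\textbf{Step 4 (Combine via geometric sum).} Putting the three ingredients together and summing the geometric series gives
\[
|v_\gamma| \;\leq\; \sum_{\ell=0}^{|V(\alpha)|} (PK)^\ell \;\leq\; 2\,(PK)^{|V(\alpha)|} \;=\; 2\bigl(640\,|V(\alpha)|^{10}|E(\alpha)|^4\bigr)^{|V(\alpha)|}.
\]
To recast this as $(C_1\,|V(\alpha)|\,|E(\alpha)|)^{C_2|E(\alpha)|}$, I will absorb the polynomial factors into the base and then exchange exponents using $|V(\alpha)| = O(|E(\alpha)|)$. This last relation follows from the fact that for a spider $\alpha \in \calL$, every non-isolated vertex contributes to $|E(\alpha)|$ and isolated squares in $U_\alpha \cap V_\alpha$ cannot generate new spider structure in the web, so they may be pruned before running the analysis.

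\textbf{Main obstacle.} The delicate point is Step 4: matching the exponent $C_2|E(\alpha)|$ in the target bound against the depth-$|V(\alpha)|$ geometric sum. The straightforward application of Proposition~\ref{prop:web-parents} only yields an exponent $|V(\alpha)|$, which is too weak when $|V(\alpha)| \gg |E(\alpha)|$. The fix is to argue that the only way $|V(\alpha)|$ can exceed $O(|E(\alpha)|)$ is through isolated vertices in $U_\alpha \cap V_\alpha$, and these contribute only trivially to the spider-killing recursion; restricted to the genuinely active part of $\alpha$ (where every square has positive degree or participates as an end vertex), the depth really is $O(|E(\alpha)|)$. Provided this reduction is carried out carefully, taking $C_1, C_2$ to be any sufficiently large absolute constants (say $C_1 = 10^4, C_2 = 20$) suffices to absorb the polynomial prefactors.
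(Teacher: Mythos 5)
Your high‐level structure (expand $v_\gamma$ as a sum over root-to-leaf paths, bound per-edge coefficients, bound the number of paths, sum the geometric series) follows the same scaffolding as the paper, but the crucial quantitative step is wrong, and the paper's proof is fundamentally different precisely at the point where you flag a "main obstacle."

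The problem is that the claim $\abs{V(\alpha)} = O(\abs{E(\alpha)})$ is simply false for spiders $\alpha \in \calL$: Definition~\ref{def:calL_valid_shapes} explicitly permits isolated vertices in $U_\alpha \cap V_\alpha$, and their number can be as large as $n^\delta$ independently of $\abs{E(\alpha)}$. Since the height of the web can be as large as the number of squares of $\alpha$, which includes these isolated vertices, your bound $(PK)^{\abs{V(\alpha)}}$ does not reduce to $(\mathrm{poly})^{\abs{E(\alpha)}}$. Your proposed fix---prune the isolated $U_\alpha \cap V_\alpha$ vertices before running the analysis---is not supported: the end vertices of a spider really can collapse into those isolated vertices (such a collapse is exactly a type-2 collapse in the paper's $\calI_\alpha^{(2)}$, since $U_\alpha \cap V_\alpha \subseteq U_\alpha$), so the web as defined does traverse them, and you have not shown that pruning preserves the recursion or the coefficients.

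The paper handles this by tracking, via Proposition~\ref{prop:web-derivation-number}, the bookkeeping quantity $\abs{\calS_\gamma \cap W_\gamma} + \abs{U_\gamma \setminus (U_\gamma \cap V_\gamma)} + \abs{V_\gamma \setminus (U_\gamma \cap V_\gamma)}$, which is at most $\abs{E(\alpha)}$ initially, decreases on every type-1 edge, and increases by at most 2 on every type-2 edge. This yields $\#_1(p) \le \abs{E(\alpha)} + 2\#_2(p)$. Coupled with the fact that type-2 coefficients carry an extra factor of $1/n$ (Lemma~\ref{lem:collapse-lemma}), one sees that any long path must have many type-2 edges and therefore contributes negligibly; the sum over $r \le 3\abs{E(\alpha)}$ gives the stated polynomial bound, and the tail $r > 3\abs{E(\alpha)}$ is dominated by the powers of $1/n$. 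Your proposal discards exactly the two ingredients---the $\#_1/\#_2$ accounting and the $1/n$ suppression on type-2 edges---that make this work, and the pruning idea cannot substitute for them without essentially redoing the same analysis.
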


\begin{proof}
  To bound $\abs{v_\gamma}$ we will sum the contributions of all paths $p = (\beta_0=\alpha,\dots,\beta_r=\gamma)$ in $W(\alpha)$
  starting from $\alpha$ and ending at $\gamma$. This path contributes a product of coefficients $c_\beta$ towards $v_\gamma$.
  
  \begin{remark}
  Here it is important that type 2 edges have stronger bounds on their coefficients $\abs{c_\beta} \leq  C\cdot (\abs{V(\alpha)}\abs{E(\alpha)})^{O(1)}/n \ll 1$.
  \end{remark}

  Before we proceed with the proof we establish some convenient notation and recall some facts.
  For consecutive shapes $\beta_{i-1},\beta_{i}$ (\ie $\beta_{i}$ is a child of $\beta_{i-1}$),
  we denote by $c_{\beta_i}$ the coefficient from~\cref{cor:spider-killing} applied on $\beta_{i - 1}$.
  By~\cref{prop:web-parents}, the in-degree of $W(\alpha)$ can be bounded as $B_1~\cdot~(\abs{V(\alpha)}\abs{E(\alpha)})^{B_2}$ for some constants $B_1, B_2$. Thus,
  the number of paths of length $r$ ending at $\gamma$ is at most $(B_1\abs{V(\alpha)}\abs{E(\alpha)})^{B_2 r}$. Using \cref{cor:spider-killing}, set $B_1, B_2$ large enough so that $c_{\beta_i}$ is at most $B_1 \cdot (\abs{V(\alpha)}\abs{E(\alpha)})^{B_2}$ for a type $1$ edge (resp. $B_1 \cdot (\abs{V(\alpha)}\abs{E(\alpha)})^{B_2} / n$ for a type $2$ edge).
  {\footnotesize

  \begin{align*}
      \abs{v_\gamma} &\le \sum_{r=0}^\infty \sum_{\substack{p = (\beta_0=\alpha,\dots,\beta_r=\gamma) \\ \textup{path from $\alpha$ to $\gamma$ in } W(\alpha)}} \prod_{i=1}^r \abs{c_{\beta_i}}\\
      &\le \sum_{r=0}^\infty\sum_{\substack{p = (\beta_0=\alpha,\dots,\beta_r=\gamma) \\ \textup{path from $\alpha$ to $\gamma$ in } W(\alpha)}} \left(B_1 \cdot (\abs{V(\alpha)}\abs{E(\alpha)})^{B_2} \right)^{\#_1(p)} \left(B_1 \cdot (\abs{V(\alpha)}\abs{E(\alpha)})^{B_2}/n \right)^{\#_2(p)}  & (\text{\cref{cor:spider-killing}})\\
      &\le \sum_{r=0}^\infty\sum_{\substack{p = (\beta_0=\alpha,\dots,\beta_r=\gamma) \\ \textup{path from $\alpha$ to $\gamma$ in } W(\alpha)}} \left(B_1 \cdot (\abs{V(\alpha)}\abs{E(\alpha)})^{B_2} \right)^{\abs{E(\alpha)} + 2\#_2(p)} \left(B_1 \cdot (\abs{V(\alpha)}\abs{E(\alpha)})^{B_2}/n \right)^{\#_2(p)}  & (\text{\cref{prop:web-derivation-number}})\\
      & = \sum_{r=0}^\infty\sum_{\substack{p = (\beta_0=\alpha,\dots,\beta_r=\gamma) \\ \textup{path from $\alpha$ to $\gamma$ in } W(\alpha)}} \left(B_1 \cdot (\abs{V(\alpha)}\abs{E(\alpha)})^{B_2} \right)^{\abs{E(\alpha)}} \left(B_1' \cdot (\abs{V(\alpha)}\abs{E(\alpha)})^{B_2'}/n \right)^{\#_2(p)}
  \end{align*}
  }
for some constants $B_1', B_2'$. 
We split the above sum into two sums, $r \le 3|E(\alpha)|$ and $r > 3|E(\alpha)|$. For $r \leq 3\abs{E(\alpha)}$, upper bounding the $\#_2(p)$ term by 1 and upper bounding 
  the number of paths by $(B_1\abs{V(\alpha)}\abs{E(\alpha)})^{B_2 r}$ gives a 
  bound of $(B_1''\abs{V(\alpha)}\abs{E(\alpha)})^{B_2'' \abs{E(\alpha)}}$ for some constants $B_1'', B_2''$.
  For larger $r$, we lower bound
  $\#_2(p) \geq r/9 = \abs{E(\alpha)}/3$ using~\cref{cor:web-derivation-number}. Applying the same bound on the number of paths,
  the total contribution of the terms corresponding to larger $r$ is bounded by 
  1 using the power of $n$ in the denominator (assuming $\delta, \tau$ are 
  small enough).
\end{proof}

We define the result of all this spider killing to be a new matrix $\calM^+$.
\begin{definition}
    Define the matrix $\calM^+$ as the result of killing all the spiders,
    \[\calM^+ \defeq \calM - \displaystyle\sum_{\text{spiders }\alpha} \lambda_\alpha \left( M_\alpha - \sum_{\text{leaves }\gamma \text{ of }W(\alpha)} v_\gamma M_\gamma \right)\]
\end{definition}

\subsection{Finishing the proof}\label{sec:finishing-psdness}

The final step of the proof is to argue that, after the spider killing 
process is completed, the newly created non-spider terms in $\calM^+$ also 
have small norm. Towards this, we would like to prove a statement similar 
to~\cref{cor:non_spider_killing}. In that proof, we used special
structural properties of the non-spiders in $\calL$ to 
prove that non-spiders in the pseudocalibration were negligible.
But now, the non-spiders in $\calM^+$ need not have the properties of 
$\calL$ -- for instance, there could be circle vertices of degree $2$ or 
isolated vertices. To handle the potentially larger norms, we will use 
that the coefficients of these new non-spider terms $\beta$ come with the 
coefficients $\lambda_\alpha$ of the spider terms $\alpha$ in whose web 
they lie. Since $\alpha$ has more vertices/edges than $\beta$, the power of $\frac{1}{n}$ in $\lambda_\alpha$ is larger than the ``expected pseudocalibration'' coefficient of $\eta^{\abs{U_\beta} + \abs{V_\beta}} \cdot \frac{1}{n^{\abs{E(\beta)}/2}}$.
We prove that these extra factors of $\frac{1}{n}$ are enough to overpower isolated
vertices or a smaller vertex separator using a careful
charging argument.

\begin{lemma}\label{lem:advanced-charging}
	If $\beta$ is a nontrivial non-spider and $\beta \in W(\alpha)$ for some spider $\alpha \in \calL$, then
	\[\eta^{\abs{U_\alpha} + \abs{V_\alpha}} \cdot \frac{1}{n^{\abs{E(\alpha)}/2}}\cdot n^{\frac{w(V(\beta)) - w(S_{\min}) + w(W_{iso})}{2}} \leq \eta^{\abs{U_\beta} + \abs{V_\beta}}\cdot \frac{1}{n^{\Omega(\eps\abs{E(\alpha)})}}\]
	where $S_{min}$ and $W_{iso}$ are the minimum vertex separator of $\beta$ and the set of isolated vertices of $V(\beta) \setminus (U_\beta \cup V_\beta)$ respectively.
\end{lemma}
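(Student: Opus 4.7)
The plan is to generalize the edge-to-vertex charging scheme of \cref{lem:charging} to handle the additional complications that arise when $\beta$ is a descendant of $\alpha$ in $W(\alpha)$ rather than itself a shape in $\calL$. The descendant $\beta$ can have isolated vertices in $W_\beta$ and circles of degree less than four, both of which were explicitly forbidden for shapes in $\calL$ and both of which can break the simple accounting of \cref{lem:charging}. The slack that will pay for these defects comes from two places: the factor $n^{-\abs{E(\alpha)}/2}$ in $\lambda_\alpha$ is larger than the ``natural'' $n^{-\abs{E(\beta)}/2}$ by $\Delta_E \defeq \abs{E(\alpha)} - \abs{E(\beta)} \geq 0$, and $\Delta_{UV} \defeq (\abs{U_\alpha}+\abs{V_\alpha}) - (\abs{U_\beta}+\abs{V_\beta}) \geq 0$ supplies additional $\eta$ factors.

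After taking $\log_n$ of both sides and using $\eta = n^{-1/2}$, the inequality is equivalent to
\[
w(V(\beta)) - w(S_{\min}) + w(W_{iso}) + \abs{U_\beta} + \abs{V_\beta} \;\leq\; \abs{E(\alpha)} + \abs{U_\alpha} + \abs{V_\alpha} - \Omega(\eps\abs{E(\alpha)}).
\]
First I would verify, by a case analysis over collapses of end vertices with $W$-squares, with $U$-only squares, with $V$-only squares, and with $U\cap V$ squares, that $\abs{U_\gamma}+\abs{V_\gamma}$ and $\abs{E(\gamma)}$ (as a label sum) are both nonincreasing along the web-path, so that $\Delta_E, \Delta_{UV} \geq 0$. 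Then I would set up a distribution scheme for $\beta$ analogous to \cref{lem:charging}: pick a set $S \subseteq U_\beta \cup V_\beta$ with $w(S) \leq w(S_{\min}(\beta))$ that contains $U_\beta \cap V_\beta$ and blocks every length-two ``through a circle'' path from $U_\beta \setminus V_\beta$ to $V_\beta \setminus U_\beta$; then assign each edge of $\beta$ its $n^{-1/2}$ mass to its endpoints in the same three-case manner (edges incident to $S$ send $m^{-1/3}$ to the circle; edges incident to $(U_\beta \cup V_\beta)\setminus S$ send the whole $n^{-1/2}$ to the square; all other edges split $n^{-1/4}, m^{-1/6}$, valid because $m \leq n^{3/2-\eps}$).

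The heart of the argument is to show that this assignment, augmented by the slack pool $\Delta_E + \Delta_{UV}$, covers every vertex outside $S_{\min}$ with at least the required amount, plus an $\Omega(\eps \abs{E(\alpha)})$ buffer. This reduces to verifying three defect-accounting claims: (i) each isolated square in $W_\beta$ contributes $+2$ to the LHS and is paid for by two units drawn from $\Delta_E$, because such a vertex can only arise from a collapse in which a pair of parallel label-$1$ edges has been expanded to a label-$0$ term via $h_1^2 = h_2 + 1$, which decreases the label-sum by exactly $2$; (ii) each circle of degree $k<4$ in $\beta$ costs at most $\lceil(4-k)/2\rceil$ extra edge-factors over what the adapted scheme provides, and the collapse step that created this low-degree circle again contributes the corresponding amount to $\Delta_E$; and (iii) each square in $(U_\beta \cup V_\beta)\setminus S$ of degree $0$ is paid for by one unit of $\Delta_{UV}$, since such a vertex can only appear if a collapse that preserved total $\abs{U}+\abs{V}$ was not the one that produced it. Finally, the $\Omega(\eps \abs{E(\alpha)})$ buffer comes as in \cref{lem:charging} from the difference between $\log_n m \leq 3/2 - \eps$ and $3/2$, applied to those circles of $\beta$ that do survive with degree $\geq 4$.

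The hard part will be the bookkeeping that shows the ``defects consume less slack than they create'' claim uniformly over every collapse sequence realizing $\alpha \to \beta$: in particular, when $\beta$ has both isolated squares and low-degree circles, the corresponding collapses can share edges, and it must be shown that the single charge to $\Delta_E$ for a cancelled edge-pair is not double-counted. I would handle this by traversing the path $\alpha = \beta_0 \to \beta_1 \to \dots \to \beta_k = \beta$ and maintaining, at each step, a ledger bounding $w(V(\beta_i)) - w(S_{\min}(\beta_i)) + w(W_{iso}(\beta_i)) + \abs{U_{\beta_i}} + \abs{V_{\beta_i}}$ from above by $\abs{E(\beta_i)} + \abs{U_{\beta_i}}+\abs{V_{\beta_i}} + (\text{accumulated slack}) - \Omega(\eps \abs{E(\alpha)})$, with the final step $i = k$ recovering the target inequality.
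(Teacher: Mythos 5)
Your high-level plan—tracking the slack $\Delta_E = \abs{E(\alpha)}-\abs{E(\beta)}$ and $\Delta_{UV} = (\abs{U_\alpha}+\abs{V_\alpha})-(\abs{U_\beta}+\abs{V_\beta})$ and charging them against the defects of $\beta$—matches the paper's strategy, and your reduction of the statement to the exponent inequality is correct. But the way you allocate the slack is where the proof breaks. You charge isolated squares in $W_\beta$ against $\Delta_E$ (claim (i)) and also charge low-degree circles against $\Delta_E$ (claim (ii)). These two defects are \emph{not} created by independent collapses: a single collapse of two degree-$1$ squares in $U_\gamma\setminus V_\gamma$ that are both attached to a circle $\circle{u}$ creates an isolated square \emph{and} drops $\circle{u}$'s degree by $2$, while only decreasing the edge-label sum by $2$. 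If both the isolated square and the circle want to draw those same $2$ units of $\Delta_E$, the budget is exhausted. You explicitly flag this as ``the hard part'' but do not resolve it, and the way you have set things up it genuinely does not resolve: the cost of the scenario (two isolated squares plus a now-isolated circle) is about $4 + (3-2\eps)$ while $\Delta_E$ alone is only $4$.

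The fix you are missing is the one the paper uses: charge isolated $W$-squares against $\Delta_{UV}$, not $\Delta_E$. The relevant observation (the paper's \cref{lem:phantom_vertex}) is that a square of degree $0$ landing in $W_\beta$, by the parity invariant of \cref{rmk:parity}, can only arise from merging two squares both in $U_\gamma\setminus(U_\gamma\cap V_\gamma)$ or both in $V_\gamma\setminus(U_\gamma\cap V_\gamma)$, and this step decreases $\abs{U}+\abs{V}$ by exactly $2$—so $\Delta_{UV}\geq 2s_0$ where $s_0$ counts such squares, with no interaction with the edge budget. That leaves the full edge budget $\abs{E(\alpha)} = \abs{E(\widetilde\beta)}$ (real edges plus phantom edges) to pay for circle deficits and non-isolated squares with $\Omega(\eps\abs{E(\alpha)})$ to spare (the paper's \cref{lem:phantom_edge}). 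Separately, your claim (iii) about degree-$0$ squares in $(U_\beta\cup V_\beta)\setminus S$ is vacuous: since $S\supseteq U_\beta\cap V_\beta$, those squares lie in $(U_\beta\cup V_\beta)\setminus(U_\beta\cap V_\beta)$ and therefore have odd degree by \cref{rmk:parity}, so they are never degree $0$. That unit of $\Delta_{UV}$ you reserved there is exactly what you should be spending on the $W_\beta$ isolated squares from (i).
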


\begin{proof}
	We start by giving the idea of the proof. Suppose we try to use the same 
	distribution scheme as in the proof of \cref{lem:charging}. It doesn't work 
	for two reasons. Firstly, the circle vertices in $\beta$ still have even 
	degree, which follows from \cref{rmk:parity}, but now, they could have 
	degrees $0$ or $2$. For the previous distribution scheme to go through, we 
	needed them to have degree at least $4$ which gave the necessary edge decay 
	to handle the norm bounds. Secondly, the square vertices can now have degree 
	$0$ hence getting no decay from the edges.
	
	The first issue is relatively easy to handle. Since $\beta$ was obtained by 
	collapsing $\alpha$, the circle vertices of degrees $0$ or $2$ in $\beta$ 
	must have had degree at least $4$ in $\alpha$ to begin with. Hence, we can 
	fix a particular sequence of collapses from $\alpha$ to $\beta$ and then 
	assume for the sake of analysis that the removed edges are still present. In 
	this case, the same charging argument as in \cref{lem:charging} would go 
	through. This is made formal by looking at the sequence of improper 
	collapses of this chain of collapses.
	
	To handle the second issue, let's analyze more carefully how degree $0$ 
	square vertices appear. Fix a sequence of collapses from $\alpha$ to $\beta$ 
	and consider a specific step where $\gamma$ collapsed to $\gamma'$ and a 
	square vertex of degree $0$ was formed. Let the two square vertices that 
	collapsed in $\gamma$ be $\square{i}, \square{j}$ and let the square vertex 
	of degree $0$ that formed in $\gamma'$ be $\square{k}$. In light of 
	\cref{rmk:parity}, since $\square{k}$ has degree $0$, it must not be in 
	$(U_{\gamma'} \cup V_{\gamma'})\setminus (U_{\gamma'} \cap V_{\gamma'})$ and 
	hence, $U_{\gamma'}(\square{k}) = V_{\gamma'}(\square{k}) = 0$ or 
	$U_{\gamma'}(\square{k}) = V_{\gamma'}(\square{k}) = 1$. But in the latter 
	case, this vertex does not contribute to norm bounds since it's in 
	$U_{\gamma'} \cap V_{\gamma'}$ so it can be safely disregarded. Note that
	it doesn't have to stay in this set since future collapses might collapse 
	this vertex, but this is not a problem as we can charge for this collapse 
	if it happens.
	
	So, assume we have $U_{\gamma'}(\square{k}) = V_{\gamma'}(\square{k}) = 0$. 
	But by the definition of collapse, at least one of $\square{i}$ or 
	$\square{j}$ must have been in $U_\gamma \setminus (U_\gamma \cap V_\gamma)$ 
	or $V_\gamma \setminus (U_\gamma \cap V_\gamma)$. Also from the definition 
	of collapse, we have $U_{\gamma'}(\square{k}) = U_{\gamma}(\square{i}) + 
	U_{\gamma}(\square{j}) (\mod 2)$ and $V_{\gamma'}(\square{k}) = 
	V_{\gamma}(\square{i}) + V_{\gamma}(\square{j}) (\mod 2)$. Putting these 
	together, we immediately get that the only way this could have happened is
	if either $\square{i}, \square{j} \in U_\gamma \setminus (U_\gamma \cap 
	V_\gamma)$ or if $\square{i}, \square{j} \in V_\gamma \setminus (U_\gamma 
	\cap V_{\gamma})$.
	
	When such a collapse happens, observe that $|U_{\gamma}| + |V_{\gamma}| \ge 
	|U_{\gamma'}| + |V_{\gamma'}| + 2$. This is precisely where the decay from 
	our normalization factor $\eta = \frac{1}{\sqrt{n}}$ kicks in. This 
	inequality means that an extra decay factor of $\eta^2 = \frac{1}{n}$ is 
	available to us when we compare to the "expected pseudocalibration" 
	coefficient of $\beta$. We will use this factor to charge the new square 
	vertex of degree $0$.
	
	We now make these ideas formal.
	
	Let $Q = U_\beta \cap V_\beta, P = (U_\beta \cup V_\beta) \setminus Q$ and 
	let $P'$ be the set of degree $1$ square vertices in $\beta$ that are not in 
	$S_{min}$. Let $s_0$ be the number of degree $0$ square vertices in 
	$V(\beta)\setminus Q$. All the square vertices outside $P' \cup Q \cup 
	S_{min}$ have degree at least $2$, let there be $s_{\ge 2}$ of them.
	
	Because of parity constraints, \cref{rmk:parity}, and because there are no 
	circle vertices in $U_{\beta} \cup V_{\beta}$, all circle vertices have even 
	degree in $\beta$. Let $c_0$ be the number of degree $0$ circle vertices in 
	$\beta$. Let $c_2, c_{\ge 4}$ be the number of degree $2$ circle vertices and
	the number of circle vertices of degree at least $4$ in $V(\beta) \setminus 
	S_{min}$ respectively. Then, we have \[n^{\frac{w(V(\beta)) - w(S_{\min}) + 
	w(W_{iso})}{2}} \le n^{\frac{|P'| + s_{\ge 2} + (1.5 - \eps)(c_2 + c_{\ge 
	4})}{2}} \cdot n^{s_0 + (1.5 - \epsilon)c_0}\]
	
	Using $\eta = \frac{1}{\sqrt{n}}$, it suffices to show
	\[\abs{E(\alpha)} + (|U_\alpha| + |V_\alpha| - |U_\beta| - |V_\beta|) \ge |P'| + s_{\ge 2} + (1.5 - \eps)(c_2 + c_{\ge 4}) + 2s_0 + 2(1.5 - \epsilon)c_0 + \Omega(\eps\abs{E(\alpha)})\]
	
	There can be many ways to collapse $\alpha$ to $\beta$, fix any one. We first use a charging argument for the degree $0$ square vertices.
	\begin{lemma}\label{lem:phantom_vertex}
		$|U_\alpha| + |V_\alpha| - |U_\beta| - |V_\beta| \ge 2s_0$
	\end{lemma}
	\begin{proof}
		In the collapse process, in each step, a vertex $\square{i} \in U_\gamma \setminus (U_\gamma \cap V_\gamma)$ or $\square{i} \in V_\gamma \setminus (U_\gamma \cap V_\gamma)$ of degree $1$ in an intermediate shape $\gamma$ collapses with another square vertex $\square{k}$. We have that $|U_\gamma| + |V_\gamma|$ decreases precisely when $\square{i}$ collapses with $\square{k} \in U_\gamma$ (resp. $\square{k} \in V_\gamma$). In either case, the quantity decreases by exactly $2$ which we allocate to this new merged vertex. Each degree $0$ square vertex in $V(\beta) \setminus Q$ must have arisen from a collapse, and hence must have had at least an additive quantity of $2$ allocated to it. This proves that $|U_\alpha| + |V_\alpha| - |U_\beta| - |V_\beta| \ge 2s_0$.
	\end{proof}

	We will now prove a structural lemma.
	\begin{lemma}\label{lem:structure}
		Any vertex $\circle{u}$ that has degree at least $2$ in $V(\beta) \setminus S_{min}$ is adjacent to at most $1$ vertex of $P'$.
	\end{lemma}
	
	\begin{proof}
		Observe that $\circle{u}$ cannot be adjacent to $3$ vertices in $P'$ because otherwise, at least $2$ of them would be in $U_\beta \setminus Q$ or in $V_\beta \setminus Q$ which means $\beta$ would be a spider which is a contradiction. If $\circle{u}$ is adjacent to $2$ vertices in $P'$, then one of them is in $U_\beta \setminus Q$ and the other is in $V_\beta \setminus Q$ respectively. Since both of these vertices are not in $S_{min}$, it follows that $\circle{u}$ is in $S_{min}$ since there is no path from $U_\beta$ to $V_\beta$ that doesn't pass through $S_{min}$. This is a contradiction. Therefore, $\circle{u}$ is adjacent to at most $1$ vertex in $P'$.
	\end{proof}
	This lemma immediately implies $|P'| \le c_2 + c_{\ge 4}$.

	To account for edges of $\alpha$ that are not in $\beta$, we let 
	$\widetilde{\beta}$ be the result of improperly collapsing $\alpha$ to 
	$\beta$; note that $\abs{E(\alpha)} = \abs{E(\widetilde{\beta})}$.
	We call the edges that disappeared when properly collapsing ``phantom'' edges.
	Let $\deg_{\widetilde{\beta}}(\square{i})$ (resp. $\deg_{\widetilde{\beta}}(\circle{u})$) denote the degree of vertex $\square{i}$ (resp. $\circle{u}$) in $\widetilde{\beta}$. Observe that any circle vertex $\circle{u}$ in $V(\beta)$ has $deg_{\widetilde{\beta}}(\circle{u}) \ge 4$.
	
	\begin{lemma}\label{lem:phantom_edge}
		$\abs{E(\alpha)} \ge |P'| + s_{\ge 2} + (1.5 - \eps)(c_2 + c_{\ge 4}) + 2(1.5 - \epsilon)c_0 + \Omega(\eps\abs{E(\alpha)})$
	\end{lemma}
	
	\begin{proof}
		We will use the following charging scheme. Each edge of $\beta$ incident on $P'$ allocates $1$ to the incident square vertex, which is in $P'$. Every other edge of $\beta$ allocates $\frac{1}{2}$ to the incident square vertex and $\frac{1}{2} - \frac{\epsilon}{10}$ to the incident circle vertex.	Each phantom edge allocates $1 - \frac{\eps}{10}$ to the incident circle vertex $\circle{u}$. So, a total of $\frac{\eps}{10}(\abs{E(\alpha)} - |P'|)$ has not been allocated.
	
		All square vertices in $P'$ have been allocated a value of $1$. And observe that all square vertices of degree at least $2$ in $\beta$ have been allocated at least $1$ from the incident edges of $\beta$, for a total value of $s_{\ge 2}$. So, the square vertices get a total allocation of at least $|P'| + s_{\ge 2}$.
	
		Consider any degree-$0$ circle vertex $\circle{u}$ in $V(\beta)$. It must be incident to at least $4$ phantom edges and hence, must be allocated at least a value of $4(1 - \frac{\eps}{10}) > 2(1.5 - \eps)$. Hence, the degree-$0$ circle vertices in $V(\beta$) have a total allocation of at least $2(1.5 - \eps)c_0$.
		
		Suppose the degree of $\circle{u}$ in $V(\beta)$ is $2$. Then, it is incident on at least $2$ phantom edges. 
		By \cref{lem:structure}, it is also adjacent to at most one vertex of $P'$ and so, must have been allocated a value of at least $2(1 - \frac{\epsilon}{10}) + (deg_{\widetilde{\beta}}(\circle{u}) - 3)(\frac{1}{2} - \frac{\eps}{10})$. This is at least $1.5 - \epsilon + \frac{\eps}{10}$.
		
		Suppose the degree of $\circle{u}$ in $V(\beta)$ is at least $4$. By \cref{lem:structure}, it is adjacent to at most one vertex of $P'$. Then it must have been allocated a value of at least $(deg_{\widetilde{\beta}}(\circle{u}) - 1)(\frac{1}{2} - \frac{\eps}{10})$.  Using $deg_{\widetilde{\beta}}(\circle{u}) \ge 4$, this is at least $1.5 - \epsilon + \frac{\eps}{10}$.
		
		This implies
		\[\abs{E(\alpha)} \ge |P'| + s_{\ge 2} + 2(1.5 - \epsilon)c_0 + (1.5 - \eps + \frac{\eps}{10})(c_2 + c_{\ge 4}) + \frac{\eps}{10}(\abs{E(\alpha)} - |P'|)\]
		Using $|P'| \le c_2 + c_{\ge 4}$ completes the proof.
	\end{proof}
	
	Adding \cref{lem:phantom_vertex} and \cref{lem:phantom_edge}, we get the result.
\end{proof}

\begin{corollary}\label{cor:non-spider-norm-bound}
	If $\beta$ is a nontrivial non-spider and $\beta \in W(\alpha)$ for some spider $\alpha \in \calL$, then
	\[\eta^{\abs{U_\alpha} + \abs{V_\alpha}} \cdot \frac{1}{n^{\abs{E(\alpha)}/2}}\norm{M_\beta} \leq \eta^{\abs{U_\beta} + \abs{V_\beta}}\cdot \frac{1}{n^{\Omega(\eps\abs{E(\alpha)})}}\]
\end{corollary}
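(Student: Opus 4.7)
The plan is to combine Lemma~\ref{lem:advanced-charging} with the high-probability graph matrix norm bound directly. By the norm bound (Lemma~\ref{lem:gaussian-norm-bounds}, applied to $\beta$), with high probability
\[\norm{M_\beta} \leq 2 \cdot \left(\abs{V(\beta)} \cdot (1+\abs{E(\beta)}) \cdot \log n\right)^{C\cdot (\abs{V_{rel}(\beta)} + \abs{E(\beta)})} \cdot n^{(w(V(\beta)) - w(S_{\min}) + w(W_{iso}))/2}.\]
Multiplying both sides by $\eta^{\abs{U_\alpha}+\abs{V_\alpha}} \cdot n^{-\abs{E(\alpha)}/2}$ and applying Lemma~\ref{lem:advanced-charging} to the $n$-power factor gives
\[\eta^{\abs{U_\alpha}+\abs{V_\alpha}} \cdot \frac{1}{n^{\abs{E(\alpha)}/2}} \cdot \norm{M_\beta} \leq \eta^{\abs{U_\beta}+\abs{V_\beta}} \cdot \frac{\left(\abs{V(\beta)}(1+\abs{E(\beta)})\log n\right)^{C\cdot(\abs{V_{rel}(\beta)}+\abs{E(\beta)})}}{n^{\Omega(\eps\abs{E(\alpha)})}}.\]
So it remains to argue that the leftover polylogarithmic factor can be absorbed into an expression of the form $n^{-\Omega(\eps\abs{E(\alpha)})}$.

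The absorption step mirrors what is done in the proof of Corollary~\ref{cor:non_spider_killing}. First I would note that since $\beta$ is obtained from $\alpha$ by a sequence of (improper, then proper) collapses, each of which only merges vertices and can only decrease the total sum of edge labels, we have $\abs{V(\beta)} \leq \abs{V(\alpha)}$ and $\abs{E(\beta)} \leq \abs{E(\alpha)}$. Since $\alpha \in \calL$, both quantities are at most $n^{O(\tau)}$, and in particular $\abs{V(\beta)}(1+\abs{E(\beta)})\log n \leq n^{O(\tau)}$. Also, vertices counted in $V_{rel}(\beta)$ must either be incident to an edge or lie in $(U_\beta\cup V_\beta)\setminus(U_\beta\cap V_\beta)$, so $\abs{V_{rel}(\beta)} \leq 2\abs{E(\beta)}+O(1)$ (and the $O(1)$ is dwarfed by any positive power of $\abs{E(\alpha)}$ since $\beta$ is nontrivial, hence $\abs{E(\alpha)} \geq 1$). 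Thus the polylog factor is at most $n^{O(\tau \abs{E(\alpha)})}$.

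Finally, choosing $\tau \ll \eps$ sufficiently small (as is assumed globally in Section~\ref{sec:pseudo_calib}), the factor $n^{O(\tau\abs{E(\alpha)})}$ is absorbed into $n^{\Omega(\eps\abs{E(\alpha)})}$, yielding the claimed bound. I do not expect a genuine obstacle here: the only thing that needs care is bookkeeping to verify that $\abs{V(\beta)}$, $\abs{E(\beta)}$, and $\abs{V_{rel}(\beta)}$ are all controlled in terms of $\abs{V(\alpha)}$ and $\abs{E(\alpha)}$, which follows from the structure of the collapse operation. All the nontrivial work has already been encapsulated in Lemma~\ref{lem:advanced-charging}.
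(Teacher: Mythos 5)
Your overall plan is exactly the paper's: apply the graph matrix norm bound to $\beta$, invoke \cref{lem:advanced-charging} for the $n$-power, and absorb the polylogarithmic prefactor. However, there is a genuine gap in the bookkeeping step for $\abs{V_{rel}(\beta)}$.

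You claim that every vertex in $V_{rel}(\beta)$ is either incident to an edge or lies in $(U_\beta \cup V_\beta)\setminus(U_\beta\cap V_\beta)$, i.e.\ that $W_\beta$ has no isolated vertices, and deduce $\abs{V_{rel}(\beta)} \leq 2\abs{E(\beta)} + O(1)$. This is false for collapsed shapes. Unlike the shapes of $\calL$, a shape $\beta$ produced by collapses can have degree-$0$ vertices in $W_\beta$: when two degree-$1$ squares whose edges go to the same circle merge and the parallel edges linearize to an $h_0$ term, the edges disappear and the incident circle can drop to degree $2$ or $0$; a merged square may similarly end up with degree $0$ and with $U(\square{k}) = V(\square{k}) = 0$, placing it in $W_\beta$. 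These isolated middle vertices are exactly one of the two complications that \cref{lem:advanced-charging} is written to handle (they are what the $w(W_{iso})$ term in the norm bound and the quantities $s_0$, $c_0$ in that lemma's proof account for), so an argument that assumes them away is circular with respect to that lemma's purpose. In particular, a $\beta$ with $\abs{E(\beta)} = 0$ but many isolated circles or squares in $W_\beta$ blows up your claimed bound.

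The correct bound, used in the paper, is $\abs{V_{rel}(\beta)} \leq 2\left(\abs{E(\alpha)} + \abs{E(\beta)}\right)$, justified by the observation that every degree-$0$ vertex of $V_{rel}(\beta)$ must have had vertices of $V_{rel}(\alpha)$ collapse into it (since $\alpha \in \calL$ has no degree-$0$ vertices in $V_{rel}(\alpha)$), so the count of such vertices is controlled by $\abs{V_{rel}(\alpha)} \leq 2\abs{E(\alpha)}$. With this corrected bound the rest of your absorption argument (using $\abs{E(\beta)} \leq \abs{E(\alpha)}$, the $n^{O(\tau)}$ size bounds, and $\tau \ll \eps$) goes through unchanged.
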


\begin{proof}
	
	From \cref{lem:gaussian-norm-bounds}, we have 
	\[ \norm{M_\beta} \leq 2\cdot\left(\abs{V(\beta)} \cdot (1+\abs{E(\beta)}) \cdot \log(n)\right)^{C\cdot (\abs{V_{rel}(\beta)} + \abs{E(\beta)})} \cdot n^{\frac{w(V(\beta)) - w(S_{\min}) + w(W_{iso})}{2}}\]
	
	We have $\abs{V(\beta)}\cdot (1+\abs{E(\beta)}) \cdot \log(n) \le n^{O(\tau)}$. Also, $|V_{rel}(\beta)| \le 2(|E(\alpha)| + |E(\beta)|)$ since all the degree $0$ vertices in $V_{rel}(\beta)$ would have had vertices of $V_{rel}(\alpha)$ collapse into it in the chain of collapses and there are no degree $0$ vertices in $V_{rel}(\alpha)$. Finally, since $|E(\alpha)| \ge |E(\beta)|$, the factor 
	$2\cdot(\abs{V(\beta)} \cdot (1+\abs{E(\beta)}) \cdot \log(n))^{C\cdot (\abs{V_{rel}(\beta)} + \abs{E(\beta)})}$ can be absorbed into $\frac{1}{n^{\Omega(\eps\abs{E(\alpha)})}}$. The result follows from \cref{lem:advanced-charging}.
\end{proof}

\begin{proposition}\label{prop:m+-diag}
If $\beta$ is a trivial shape, $\lambda_\beta^+ = \lambda_\beta$.
\begin{proof}
    A trivial shape cannot appear in $W(\alpha)$ for any $\alpha$, since every collapse of a spider always keeps its circle vertices around.
\end{proof}
\end{proposition}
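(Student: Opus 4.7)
The plan is to show that a trivial shape $\beta$ receives no contribution from the correction sum
\[
\sum_{\text{spiders }\alpha} \lambda_\alpha \Big( M_\alpha - \sum_{\text{leaves }\gamma \text{ of }W(\alpha)} v_\gamma M_\gamma \Big),
\]
so that its coefficient in $\calM^+$ is exactly the coefficient $\lambda_\beta$ it had in $\calM$. Since every shape appearing in the correction sum is either a spider $\alpha$ itself or a node $\gamma$ in some web $W(\alpha)$, it suffices to rule out each of these possibilities for a trivial $\beta$.

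First, I would observe that no spider is trivial. By definition, a spider has two end square vertices, both adjacent (via edges) to a common central circle vertex, so a spider has at least one circle vertex and at least two edges. A trivial shape, however, has $E(\beta) = \emptyset$ and $W_\beta = \emptyset$ with $U_\beta = V_\beta$ containing only square vertices. Thus $\beta$ is not among the $M_\alpha$ terms being subtracted.

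Next, I would argue that $\beta$ cannot appear as a node in any web $W(\alpha)$. The webs are constructed by iteratively passing from a spider $\gamma$ to shapes in $\calI_\gamma$. By the definition of $\calI_\gamma$ (via improper collapses followed by proper expansion), the only operation performed on vertices is the identification of two square vertices into one square vertex; circle vertices are never deleted, split, or created, and the Fourier expansion step only modifies edge labels on edges incident to the existing circle vertices. Therefore, every node $\gamma$ of $W(\alpha)$ inherits the central circle vertex of $\alpha$ (and indeed all of $\alpha$'s circles). In particular, every $\gamma \in W(\alpha)$ has at least one circle vertex and hence cannot be trivial.

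Combining these two observations, $M_\beta$ does not occur anywhere in the correction sum, so $\lambda_\beta^+ = \lambda_\beta$. The only mildly subtle point worth double-checking is the proper-expansion step from improper to proper shapes: one must verify that expanding parallel edges into a Fourier basis indeed preserves the vertex set (only edge labels change), which is immediate from the definition of collapse and from the fact that the Fourier expansion is carried out edge-pair by edge-pair between the same pair of endpoints.
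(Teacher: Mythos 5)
Your argument is correct and is essentially the paper's own proof, just spelled out in more detail: the key observation in both is that collapses only merge square vertices, so circle vertices survive throughout the web, while trivial shapes have no circles. The additional observation that spiders themselves cannot be trivial is a minor preliminary the paper leaves implicit.
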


\begin{lemma}\label{lem:non-spider-psd}
	For $k,l \in \{0, 1, \dots, D/2\}$, let $\calB_{k,l}$ denote the set of nontrivial non-spiders on block $(k, l)$. Then
	\[\displaystyle\sum_{\beta \in \calB_{k,l}} \abs{\lambda_\beta^+}\norm{M_\beta} \leq \eta^{k+l} \cdot \frac{1}{n^{\Omega(\eps)}} \]
\end{lemma}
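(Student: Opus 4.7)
The plan is to decompose $\lambda_\beta^+$ into two contributions and bound each separately. By the definition of $\calM^+$, for any non-spider $\beta$ on block $(k,l)$,
\[
\lambda_\beta^+ \;=\; \lambda_\beta \;+\; \sum_{\substack{\text{spiders }\alpha:\\ \beta \text{ is a leaf of }W(\alpha)}} \lambda_\alpha \, v_\beta^{(\alpha)},
\]
where $v_\beta^{(\alpha)}$ is the value assigned to $\beta$ when viewed as a leaf of the web $W(\alpha)$. The first term contributes $\sum_{\beta \in \calB_{k,l}} \abs{\lambda_\beta}\norm{M_\beta} \le \eta^{k+l}/n^{\Omega(\eps)}$ by~\cref{cor:non-spider-sum}, so it remains to bound the spider-generated portion
\[
\Sigma \;\defeq\; \sum_{\beta \in \calB_{k,l}} \;\sum_{\substack{\text{spiders }\alpha \in \calL: \\ \beta \in W(\alpha)}} \abs{\lambda_\alpha}\,\abs{v_\beta^{(\alpha)}}\,\norm{M_\beta}.
\]

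I would swap the order of summation and, for each fixed spider $\alpha$, combine the three available estimates: the coefficient bound $\abs{\lambda_\alpha} \le \eta^{\abs{U_\alpha}+\abs{V_\alpha}} \abs{E(\alpha)}^{3\abs{E(\alpha)}}/n^{\abs{E(\alpha)}/2}$ from~\cref{prop:coefficient-bound}, the web value bound $\abs{v_\beta^{(\alpha)}} \le (C_1\abs{V(\alpha)}\abs{E(\alpha)})^{C_2\abs{E(\alpha)}}$ from~\cref{lem:web-leaves}, and the crucial ``charging against the spider'' norm bound
\[
\eta^{\abs{U_\alpha}+\abs{V_\alpha}}\cdot \tfrac{1}{n^{\abs{E(\alpha)}/2}}\,\norm{M_\beta} \;\le\; \eta^{\abs{U_\beta}+\abs{V_\beta}}\cdot \tfrac{1}{n^{\Omega(\eps\abs{E(\alpha)})}}
\]
from~\cref{cor:non-spider-norm-bound}. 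Plugging these together and using $\abs{U_\beta}+\abs{V_\beta}=k+l$ for $\beta \in \calB_{k,l}$ gives, for a single pair $(\alpha,\beta)$,
\[
\abs{\lambda_\alpha}\,\abs{v_\beta^{(\alpha)}}\,\norm{M_\beta} \;\le\; \eta^{k+l}\cdot \frac{(C_1\abs{V(\alpha)}\abs{E(\alpha)})^{O(\abs{E(\alpha)})}}{n^{\Omega(\eps\abs{E(\alpha)})}}.
\]
Since $\alpha \in \calL$ has $\abs{V(\alpha)},\abs{E(\alpha)} \le n^{O(\tau)}$, and $\tau \ll \eps$, the polynomial-in-$n^\tau$ growth from the web value and the spider coefficient is absorbed, leaving $\eta^{k+l}/n^{\Omega(\eps\abs{E(\alpha)})}$.

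Next, I count the pairs $(\alpha,\beta)$. Fixing a spider $\alpha$, the number of non-spiders $\beta$ that can appear as a leaf of $W(\alpha)$ is at most the total number of collapses reachable from $\alpha$, which is at most $\abs{V(\alpha)}^{O(\abs{V(\alpha)})}\le n^{O(\tau\abs{E(\alpha)})}$ (each step picks at most two squares to merge; the depth of $W(\alpha)$ is at most $\abs{V(\alpha)}$). Summing over spiders $\alpha \in \calL$ by edge count, via~\cref{prop:edge-shape-count} the number of shapes with at most $L=n^{O(\tau)}$ vertices and exactly $r$ edges is at most $L^{8(r+1)}=n^{O(\tau r)}$. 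Combining,
\[
\Sigma \;\le\; \eta^{k+l}\sum_{r\ge 1} n^{O(\tau r)} \cdot n^{O(\tau r)} \cdot \frac{1}{n^{\Omega(\eps r)}} \;\le\; \eta^{k+l}\sum_{r\ge 1} \frac{1}{n^{\Omega(\eps r)}} \;=\; \eta^{k+l}\cdot\frac{1}{n^{\Omega(\eps)}},
\]
where we used $\tau \le c'\eps$ with $c'$ small enough and that every spider $\alpha \in \calL$ has $\abs{E(\alpha)} \ge 2$ (since both end vertices are adjacent to a common circle).

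The main obstacle, and the reason the argument works at all, is the $n^{-\Omega(\eps\abs{E(\alpha)})}$ decay in~\cref{cor:non-spider-norm-bound}: the newly created leaves $\beta$ may have small vertex separators or isolated vertices, so $\norm{M_\beta}$ alone cannot be charged against $\eta^{\abs{U_\beta}+\abs{V_\beta}}$. It is essential that the charging is done against the larger spider $\alpha$, exploiting the extra factors of $1/n$ in $\lambda_\alpha$ and the $\eta = 1/\sqrt{n}$ scaling that $\alpha$ carries through~\cref{lem:advanced-charging}. Everything else --- bounding $v_\beta^{(\alpha)}$ and counting shapes by edge count --- contributes only $n^{O(\tau r)}$ factors that are absorbed because $\tau \ll \eps$. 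Once this lemma is in hand, combining with the trivial-block identities and~\cref{lem:block-psd} will complete the proof that $\calM^+ \succeq 0$ on $\nullspace(\calM_{fix})^\perp$, and hence that $\calM_{fix} \succeq 0$.
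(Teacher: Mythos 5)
Your proof takes essentially the same route as the paper — same decomposition of $\lambda_\beta^+$, same three lemmas (\cref{prop:coefficient-bound}, \cref{lem:web-leaves}, \cref{cor:non-spider-norm-bound}), same geometric series at the end — with the single cosmetic difference that you swap the order of summation (sum over spiders $\alpha$ outside, then over leaves $\beta$), whereas the paper keeps $\beta$ outside and sums over spiders inside using $\abs{E(\alpha)} \geq \max(\abs{E(\beta)},2)$. Both orderings produce the same bottom-line calculation, so the swap is a matter of taste.

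However, there is a flaw in one intermediate step: the claimed inequality $\abs{V(\alpha)}^{O(\abs{V(\alpha)})} \leq n^{O(\tau\abs{E(\alpha)})}$ is \emph{false} in general. A spider $\alpha \in \calL$ can have $\abs{E(\alpha)} = 2$ while having $\Theta(n^\delta)$ isolated squares in $U_\alpha \cap V_\alpha$, so $\abs{V(\alpha)} \approx n^\delta$; then the left side is roughly $n^{\delta\, n^\delta}$, vastly exceeding $n^{O(\tau)}$. The underlying reason is that you are counting derivation \emph{paths}, whose number really can be super-polynomial in this regime, rather than distinct leaf \emph{shapes}. The correct (and needed) bound $n^{O(\tau\abs{E(\alpha)})}$ on the number of leaves is obtained directly from \cref{prop:edge-shape-count}: each leaf is a proper shape on at most $\abs{V(\alpha)} \leq n^{O(\tau)}$ vertices with at most $\abs{E(\alpha)}$ edges, so there are at most $\left(n^{O(\tau)}\right)^{8(\abs{E(\alpha)}+1)} = n^{O(\tau\abs{E(\alpha)})}$ of them (using $\abs{E(\alpha)} \geq 2$ for spiders). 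Replacing your path-count argument with this shape-count argument fixes the step and the rest goes through exactly as you wrote it.
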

\begin{proof}
 
\begin{align*}
\displaystyle\sum_{\beta \in \calB_{k,l}}\norm{\lambda_\beta^+ M_\beta}
\leq & \sum_{\beta \in \calB_{k,l}} \abs{\lambda_\beta} \norm{M_\beta} + \sum_{\beta \in \calB_{k,l}} \sum_{\substack{\text{spiders }\alpha:\\ \beta \in W(\alpha)}} \abs{v_\beta} \abs{\lambda_\alpha} \norm{M_\beta}
\end{align*}
To bound the first term, we checked previously in~\cref{cor:non-spider-sum} that the total norm of nontrivial non-spiders appearing in the pseudocalibration (i.e. this term) is $\eta^{k+l}o_n(1)$. For the second term, via~\cref{lem:web-leaves} we have a bound on the accumulations $v_\gamma$ of one spider on one non-spider, so it is at most
\[\leq \sum_{\beta \in \calB_{k,l}}\displaystyle\sum_{\substack{\text{spiders }\alpha:\\ \beta \in W(\alpha)}}(C_1\abs{V(\alpha)} \cdot \abs{E(\alpha)})^{C_2 \abs{E(\alpha)}} \cdot \abs{\lambda_\alpha} \norm{M_\beta}.\]
Use the bound on the coefficients $\abs{\lambda_\alpha}$,~\cref{prop:coefficient-bound},
\begin{align*}
\leq  \sum_{\beta \in \calB_{k,l}}\displaystyle\sum_{\substack{\text{spiders }\alpha:\\ \beta \in W(\alpha)}}(C_1\abs{V(\alpha)} \cdot \abs{E(\alpha)})^{C_2 \abs{E(\alpha)}} \cdot\eta^{\abs{U_\alpha} + \abs{V_\alpha}}\cdot  \frac{\abs{E(\alpha)}^{3\abs{E(\alpha)}}}{n^{\abs{E(\alpha)}/2}} \cdot \norm{M_\beta}
\end{align*}
Invoking the norm bound for non-spiders which are collapses, \cref{cor:non-spider-norm-bound},
\begin{align*}
& \leq  \eta^{k+l} \cdot \sum_{\beta \in \calB_{k,l}}\displaystyle\sum_{\substack{\text{spiders }\alpha:\\ \beta \in W(\alpha)}} \left(\frac{C_1\abs{V(\alpha)} \cdot \abs{E(\alpha)}}{n^{\Omega(\eps)}}\right)^{C_2' \abs{E(\alpha)}}\\
& \leq  \eta^{k+l} \cdot \sum_{\beta \in \calB_{k,l}}\displaystyle\sum_{\substack{\text{spiders }\alpha:\\ \beta \in W(\alpha)}} \left(\frac{C_1n^\tau \cdot n^\tau}{n^{\Omega(\eps)}}\right)^{C_2' \abs{E(\alpha)}}.
\end{align*}

Bound the sum over all spiders by the sum over all shapes. By~\cref{prop:edge-shape-count}, the number of shapes with $i$ edges is $n^{O(\tau (i+1))}$. Summing by the number of edges, observe that $\abs{E(\alpha)} \geq \max(\abs{E(\beta)}, 2)$ since spiders always have at least $2$ edges.
\begin{align*}
    & \leq \eta^{k+l}\sum_{\beta \in \calB_{k,l}}\displaystyle\sum_{i=\max(\abs{E(\beta)}, 2)}^\infty 
    n^{O(\tau (i+1))} \cdot \left(\frac{C_1n^\tau \cdot n^{\tau}}{n^{\Omega(\eps)}}\right)^{C_2' i} \\
    &\leq \eta^{k+l}\sum_{\beta \in \calB_{k,l}} \frac{1}{n^{\Omega(\eps \max(\abs{E(\beta)}, 2))}}\\
    & \leq \eta^{k+l}\sum_{i=0}^\infty \frac{n^{O(\delta (i+1))}}{n^{\Omega(\eps \max(i, 2))}}\\
    & = \eta^{k+l} \cdot \frac{1}{n^{\Omega(\eps)}} \qedhere
\end{align*}
\end{proof}

\begin{corollary}\label{cor:m+-diag}
For $k \in \{0, \dots, D/2\}$, the $(k,k)$ block of $\calM^+$ has minimum singular value at least $\eta^{2k}(1 - \frac{1}{n^{\Omega(\eps)}})$, and for $k, l \in \{0, \dots, D/2\}, l \neq k$, the $(k,l)$ off-diagonal block has norm at most $\eta^{k+l} \cdot \frac{1}{n^{\Omega(\eps)}}$.
\end{corollary}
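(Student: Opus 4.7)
The plan is to observe that by construction the matrix $\calM^{+}$ has only two kinds of terms in its graph-matrix decomposition: trivial shapes, and nontrivial non-spiders. All spider terms have been removed, and any term that was created by the spider-killing process and happens to be a non-spider has been absorbed into the coefficient $\lambda_\beta^+$ of some nontrivial non-spider $\beta$. So the whole claim reduces to combining the two pieces of information we already have in hand: Proposition \ref{prop:m+-diag} for the trivial shapes, and Lemma \ref{lem:non-spider-psd} for the nontrivial non-spiders.

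For the diagonal block $(k,k)$, first I would identify the contribution of trivial shapes. The unique trivial shape $\alpha$ with $|U_\alpha|=|V_\alpha|=k$ has $\lambda_\alpha = \eta^{2k}$, and the corresponding graph matrix $M_\alpha$ is precisely the identity on $\binom{[n]}{k}$ (each ribbon of this shape has a single subset $S$ with $A_R=B_R=S$, empty edge set, and trivial Fourier character $1$). By Proposition \ref{prop:m+-diag}, this coefficient is preserved in $\calM^+$. Hence on the $(k,k)$ block
\[
\calM^{+}\big|_{(k,k)} \;=\; \eta^{2k}\, I_k \;+\; \sum_{\beta \in \calB_{k,k}} \lambda_\beta^{+}\, M_\beta,
\]
where $\calB_{k,k}$ is the set of nontrivial non-spiders with $|U_\beta|=|V_\beta|=k$. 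Applying Lemma \ref{lem:non-spider-psd} and the triangle inequality, the second term has norm at most $\eta^{2k}/n^{\Omega(\eps)}$, so by Weyl's inequality the minimum singular value of the block is at least $\eta^{2k}\bigl(1 - 1/n^{\Omega(\eps)}\bigr)$, as claimed.

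For an off-diagonal block $(k,l)$ with $k \neq l$, there are no trivial shapes (a trivial shape has $U_\alpha = V_\alpha$, forcing $|U_\alpha|=|V_\alpha|$). Thus
\[
\calM^{+}\big|_{(k,l)} \;=\; \sum_{\beta \in \calB_{k,l}} \lambda_\beta^{+}\, M_\beta,
\]
and another direct application of Lemma \ref{lem:non-spider-psd} combined with the triangle inequality gives a norm bound of $\eta^{k+l}/n^{\Omega(\eps)}$ on this block.

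The only subtlety — and the one place worth being explicit — is verifying that the decomposition of $\calM^{+}$ really does contain only trivial shapes and nontrivial non-spiders (so that Lemma \ref{lem:non-spider-psd} covers everything that is not an identity summand). By definition $\calM^{+} = \calM - \sum_{\text{spiders }\alpha} \lambda_\alpha \bigl(M_\alpha - \sum_{\gamma \text{ leaf of } W(\alpha)} v_\gamma M_\gamma\bigr)$; the first subtraction cancels every spider term appearing in $\calM$, and the leaves of every web $W(\alpha)$ are non-spiders. A trivial shape never appears as a leaf of any web because every collapse of a spider retains its circle vertices while a trivial shape has none. Therefore trivial coefficients are undisturbed, which is exactly the content of Proposition \ref{prop:m+-diag} and closes the argument. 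No new obstacles are expected; the technical heavy lifting was done in Lemma \ref{lem:non-spider-psd} and the preceding charging arguments.
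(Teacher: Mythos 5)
Your proposal is correct and follows the same route as the paper: the paper's proof of this corollary is a three-line citation of Proposition~\ref{prop:m+-diag} (trivial-shape coefficients unchanged, giving $\eta^{2k} I$ on the diagonal blocks) and Lemma~\ref{lem:non-spider-psd} (total nontrivial non-spider norm on each block is $\eta^{k+l}/n^{\Omega(\eps)}$), which are exactly the two ingredients you invoke. Your extra remarks — that trivial shapes never appear as web leaves because collapses preserve circle vertices, and that Weyl's inequality converts the norm bound on the perturbation into a minimum-singular-value bound — are precisely the content of Proposition~\ref{prop:m+-diag} and the standard perturbation step the paper leaves implicit, so you have correctly filled in the details rather than diverged from the argument.
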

\begin{proof}
    By~\cref{prop:m+-diag} the identity matrix appears on the $(k,k)$ blocks with coefficient $\eta^{2k}$. By construction, $\calM^+$ has no spider shapes. By~\cref{lem:non-spider-psd}, the total norm of the non-spider shapes on the $(k,l)$ block is at most $\eta^{k+l}\cdot \frac{1}{n^{\Omega(\eps)}}$.
\end{proof}

\begin{theorem}
    W.h.p. $\calM_{fix} \psdgeq 0$.
\end{theorem}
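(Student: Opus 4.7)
The plan is to use the decomposition $\calM^+$ produced by the spider-killing procedure, combine it with the block PSD criterion of \cref{lem:block-psd}, and absorb the rounding error $\calE$. First, since $\calM_{fix}$ is symmetric, for any vector $y$ I would decompose $y = y_1 + y_2$ with $y_1 \in \nullspace(\calM_{fix})$ and $y_2 \in \nullspace(\calM_{fix})^\perp$, which gives $y^\T \calM_{fix}\, y = y_2^\T \calM_{fix}\, y_2$. So it suffices to establish $y_2^\T \calM_{fix}\, y_2 \geq 0$ for $y_2 \perp \nullspace(\calM_{fix})$.

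The central structural step is to apply \cref{prop:web-sum} to every spider $\alpha$ appearing in the graph-matrix decomposition $\calM = \sum_\alpha \lambda_\alpha M_\alpha$: for $y_2 \perp \nullspace(\calM_{fix})$ we obtain the exact identity $y_2^\T M_\alpha\, y_2 = \sum_{\gamma \text{ leaf of } W(\alpha)} v_\gamma\, y_2^\T M_\gamma\, y_2$. Summing these identities weighted by $\lambda_\alpha$ and recalling the definition of $\calM^+$ yields $y_2^\T \calM y_2 = y_2^\T \calM^+ y_2$, so that
\[ y_2^\T \calM_{fix}\, y_2 \;=\; y_2^\T \calM^+ y_2 \;+\; y_2^\T \calE\, y_2. \]

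Next, I would invoke \cref{cor:m+-diag}, which says $\calM^+$ satisfies the hypotheses of \cref{lem:block-psd} with parameter $n^{-\Omega(\eps)}$ in place of $1/(D+1)$. Since $D = 2n^\delta$ with $\delta$ a sufficiently small constant compared to $\eps$, we have $n^{-\Omega(\eps)} \ll 1/(D+1)$, so the block PSD hypotheses hold with enormous slack. Tracking this slack through the proof of \cref{lem:block-psd} (the relevant matrix becomes $I - n^{-\Omega(\eps)}\, \mathbf{1}\mathbf{1}^\T$, whose minimum eigenvalue is at least $1 - (D+1)n^{-\Omega(\eps)} \geq \tfrac12$) yields the quantitative lower bound $y_2^\T \calM^+ y_2 \geq \tfrac{1}{2}\sum_k \eta^{2k}\norm{(y_2)_k}^2 \geq \tfrac{1}{2}\eta^{D}\norm{y_2}^2$.

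Finally, I would absorb $y_2^\T \calE\, y_2$ into this bound. Because the pseudocalibration is truncated at Fourier level $n^\tau$ with $\delta \ll \tau$, the rounding of $\pE$ to exactly satisfy the PAP constraints (done in \cref{sec:exact-constraints}) produces an error matrix $\calE$ whose norm is much smaller than $\eta^{D} = n^{-n^\delta}$, so $|y_2^\T \calE\, y_2| \leq \norm{\calE}\norm{y_2}^2$ is dominated by the preceding lower bound on $y_2^\T \calM^+ y_2$, giving $y_2^\T \calM_{fix}\, y_2 \geq 0$. The conceptually central step is the second one---that spider killing, enabled by \cref{fact:null-space} together with the construction of $L_k$ in \cref{lem:completed-left-side}, preserves the quadratic form exactly on $\nullspace(\calM_{fix})^\perp$; the main quantitative hurdle is checking that the two ``small'' quantities, namely the $n^{-\Omega(\eps)}$ slack in \cref{cor:m+-diag} and the norm of $\calE$, are small enough relative to the block scales $\eta^{2k}$, which is exactly what the parameter hierarchy $\delta \ll \tau \ll \eps$ is chosen to guarantee.
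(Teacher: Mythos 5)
Your proposal is correct and follows essentially the same route as the paper: decompose $\calM_{fix} = \calM + \calE$, use \cref{prop:web-sum} to replace $\calM$ by $\calM^+$ on $\nullspace(\calM_{fix})^\perp$, invoke \cref{cor:m+-diag} with \cref{lem:block-psd}, and absorb $\calE$ via the bound from \cref{lem:constraint-fixing}. The only cosmetic difference is that you reopen \cref{lem:block-psd} to extract a quantitative lower bound $y_2^\T \calM^+ y_2 \geq \tfrac12 \eta^D \norm{y_2}^2$ and then subtract $\norm{\calE}\norm{y_2}^2$, whereas the paper simply notes that $\norm{\calE} \ll \eta^D$ leaves the block hypotheses of \cref{lem:block-psd} intact for $\calM^+ + \calE$ and applies the lemma once to that matrix.
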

\begin{proof}
    For any $x \in \nullspace(\calM_{fix})$, we of course have $x^\T \calM_{fix} x = 0$. 
    For any $x \perp \nullspace(\calM_{fix})$ with $\norm{x}_2 = 1$,
    \begin{align*}
        x^\T \calM_{fix} x & = x^\T (\calM + \calE) x\\
        &= x^\T \calM^+ x + x^\T \left(\displaystyle\sum_{\text{spiders }\alpha} \lambda_\alpha\left( \calM_\alpha - \sum_{\text{leaves }\gamma\text{ of }W(\alpha)}v_\gamma M_\gamma \right)\right)x \\
        & \qquad + x^\T \calE x\\
        &= x^\T (\calM^+ +  \calE) x && \text{(\cref{prop:web-sum})}
    \end{align*}
    Because the norm bound on $\calE$ in~\cref{lem:constraint-fixing} is significantly less than $\eta^{D} = n^{-n^{\delta}}$, the bound on the norm of each block of $\calM^+$
    in~\cref{cor:m+-diag} also applies to the blocks of $\calM^+ + \calE$. Therefore, 
    we use~\cref{lem:block-psd} to conclude $\calM^+ + \calE \psdgeq 0$ and the above expression is nonnegative.
\end{proof}

\section{Sherrington-Kirkpatrick Lower Bounds}\label{sec:sk}

Here, we prove~\cref{theo:boolean-subspace} and~\cref{theo:sk-bounds}.

Recall that in the Planted Boolean Vector problem, we wish to optimize
\[
\OPT(V) \defeq  \frac{1}{n}\max_{b \in \{\pm 1\}^n} b^\T \Pi_V b,
\]
where $V$ is a uniformly random $p$-dimensional subspace of $\mathbb{R}^n$.

\booleanSubspace*

\begin{proof}
	We wish to produce an SoS solution $\pE$ on boolean variables $b_1, \ldots, b_n$ such that $\pE[b^\T \Pi_V b] = n$.
        Instead of sampling a uniformly random $p$-dimensional subspace $V$ of $\mathbb{R}^n$, we first sample $d_1,\ldots, d_n$ i.i.d. $p$-dimensional
        Gaussian vectors from $\mathcal{N}(0,I)$, then form an $n$-by-$p$ matrix $A$ with rows $d_1,\dots,d_n$, and finally take
        $V$ to be the span of the columns of $A$. Since the columns of $A$ are isotropic i.i.d. random Gaussian vectors, we have
        that $V$ is a uniform $p$-dimension subspace\footnote{Except for a zero measure event. %\fnote{It might be good to cite some fact here.}
        } of $\mathbb{R}^n$.
	
        We will consider $V$ as the input for the Planted Boolean Vector problem
        while the vectors $d_1,\dots,d_n$ will be used to construct a pseudoexpectation operator for the Planted Affine Planes
        problem\footnote{Note that the vectors $d_u$ are not ``given" in the Planted Boolean Vector problem, though the construction of $\pE$ is not required to be algorithmic in any sense anyway.}.
        Since $n \le p^{3/2 - \Omega(\epsilon)}$, by~\cref{theo:sos-bounds}, for all $\delta \le c\epsilon$ for a constant $c > 0$, \text{w.h.p.}, there exists a degree-$n^{\delta}$ pseudoexpectation operator $\pE'$ on formal variables $v=(v_1,\dots,v_p)$ such that $\pE'[\ip{v}{d_u}^2] = 1$
        for every $u \in [n]$.
	
	Define $\pE$ by $\pE[b_u] \defeq \pE'[\ip{v}{d_u}]$ for all $u \in [n]$ and extending it to all polynomials on $\{b_u\}$ by 
	multilinearity. This is well defined because $\pE'[\ip{v}{d_u}^2] = 1$. Note that $\pE$ is a valid pseudoexpectation operator
        of the same degree as $\pE'$. Finally, observe that 
	\begin{align*}
          \frac{1}{n}\pE [b^\T \Pi_V b] = \frac{1}{n}\pE'[v^\T A^\T \Pi_V Av] = \frac{1}{n}\pE'[v^\T A^\T Av] = 1.
      \end{align*}
\end{proof}

Now we prove lower bounds for the Sherrington-Kirkpatrick problem,
using a reduction and proof due to \cite{MRX20}.  We include it here
for completeness. Recall that the SK problem is to
compute
\[
\OPT(W) \defeq \max_{x \in \{\pm 1\}^n} x^\T W x,
\]
where $W$ is sampled from $\GOE(n)$.

\SKbounds*

We will use the following standard results from random matrix theory of $\GOE(n)$.

\begin{fact}\label{fact:goe}
  Let $\lambda_1 \ge \ldots\ge \lambda_n$ be the eigenvalues of $W \sim \GOE(n)$ with corresponding normalized eigenvectors $w_1, \ldots, w_n$.
  Then,
  \begin{enumerate}
    \item  For every $p \in [n]$, the span of $w_1, \ldots, w_p$ is a uniformly random $p$-dimensional subspace of $\RR^n$ (see e.g.~\cite[Section~2]{OVW16}).\label{fact:goe:1}
    \item  W.h.p., $\lambda_{n^{0.67}} \ge (2 - \littleoh(1))\sqrt{n}$ (Corollary of Wigner's semicircle law~\cite{Wig93})
  \end{enumerate}
\end{fact}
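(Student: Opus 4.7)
The plan is to justify each of the two parts using well-known facts about GOE; the authors have cited references, so the goal is a concise self-contained argument rather than a deep new proof.

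For part (1), the approach is to exploit the orthogonal invariance of $\GOE(n)$. The $\GOE(n)$ density is proportional to $\exp(-\tr(W^2)/4)$, which depends only on the eigenvalues; consequently $OWO^\T$ has the same distribution as $W$ for any deterministic $O \in O(n)$. First I would observe that, with probability $1$, the eigenvalues of $W$ are all distinct, so the (ordered) eigenvector frame $(w_1,\dots,w_n)$ is defined up to sign. Conditioning on the eigenvalue spectrum $\Lambda=\diag(\lambda_1,\dots,\lambda_n)$, the spectral decomposition $W=U\Lambda U^\T$ determines a random $U \in O(n)$; by orthogonal invariance, the conditional distribution of $U$ is invariant under left multiplication by any fixed $O \in O(n)$, hence it is the Haar measure on $O(n)$ (modulo the sign ambiguity in columns, which is irrelevant for the spans considered). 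Since the first $p$ columns of a Haar-random orthogonal matrix span a uniformly random $p$-dimensional subspace of $\RR^n$, the claim follows after marginalizing out $\Lambda$.

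For part (2), the approach is to combine Wigner's semicircle law with a standard concentration/rigidity statement for eigenvalues of Wigner matrices. Normalize by writing $\tilde{W} = W/\sqrt{n}$, whose empirical spectral measure $\mu_n \defeq \frac{1}{n}\sum_i \delta_{\lambda_i(\tilde{W})}$ converges weakly (a.s.) to the semicircle law $\rho(x)\,dx = \frac{1}{2\pi}\sqrt{4-x^2}\,\mathbf{1}_{[-2,2]}(x)\,dx$. Fix any $\eps_n \to 0$ with $\eps_n = \omega(n^{-2/9})$; a short calculation gives $\int_{2-\eps_n}^{2}\rho(x)\,dx = \Theta(\eps_n^{3/2}) = \omega(n^{-1/3})$, so in expectation the number of eigenvalues of $\tilde{W}$ in $[2-\eps_n, 2]$ is $\omega(n^{2/3}) \gg n^{0.67}$. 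Passing from expectation to a w.h.p. statement is the one place real random-matrix input is needed: I would cite eigenvalue rigidity for Wigner matrices (e.g., Erdős--Yau type results, or the simpler tail bounds in Tao's notes), which shows that for each $k$ in the bulk/edge regime, $\lambda_k(\tilde{W})$ concentrates around the corresponding semicircle quantile $\gamma_k$ to within $o(1)$ with overwhelming probability. Applied to $k = n^{0.67}$, this gives $\lambda_{n^{0.67}}(\tilde{W}) \ge 2 - \littleoh(1)$ w.h.p., equivalently $\lambda_{n^{0.67}}(W) \ge (2-\littleoh(1))\sqrt{n}$.

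The only nontrivial obstacle is part (2): weak convergence of $\mu_n$ to the semicircle is not by itself enough to control an individual eigenvalue. The cleanest route is to cite a rigidity result, but one can also give a self-contained argument by bounding $\tr f(\tilde{W})$ for a suitable smooth indicator $f$ of $[2-\eps_n, 2+\eps_n]$, using that $\E\tr f(\tilde W) = n\int f\,d\rho + \bigoh(1)$ and $\Var \tr f(\tilde W) = \bigoh(\log n)$ for nice $f$ (a standard Gaussian Poincaré / concentration bound for linear eigenvalue statistics). Chebyshev then converts the first-moment estimate $\E |\{i : \lambda_i(\tilde W) \in [2-\eps_n, 2]\}| = \omega(n^{2/3})$ into a w.h.p. lower bound on the count, which in particular forces $\lambda_{n^{0.67}}(\tilde W) \ge 2-\eps_n$.

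Part (1) is essentially formal once orthogonal invariance is noted, and the entire fact is packaged as a ``standard results'' black box in the paper, so the proposal above is really just sketching the citations \cite{OVW16} and \cite{Wig93} (together with an eigenvalue rigidity reference) into an explicit one-paragraph justification.
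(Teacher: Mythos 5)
The paper offers no proof of this statement; it is packaged as a Fact with citations, so there is nothing to compare line-by-line. Your part (1) is the standard argument (orthogonal invariance of $\GOE(n)$ implies the eigenbasis is Haar-distributed conditionally on the spectrum) and is correct. For part (2), your approach works but is heavier than necessary, and contains one small arithmetic slip: with $\eps_n = \omega(n^{-2/9})$ the expected count of eigenvalues of $W/\sqrt{n}$ in $[2-\eps_n,2]$ is $\omega(n^{2/3})$, which does \emph{not} imply it exceeds $n^{0.67}$ (since $n^{0.67} > n^{2/3}$); you need $\eps_n$ polynomially larger than $n^{-0.22}$, e.g. $\eps_n = n^{-1/5}$, which still tends to $0$. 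More importantly, because $n^{0.67} = o(n)$, you do not need a shrinking window at all, and hence no rigidity or linear-statistics concentration: for each \emph{fixed} $\eps>0$, the classical semicircle law (convergence in probability of the empirical spectral distribution) already gives that w.h.p. a constant fraction $c(\eps)n \gg n^{0.67}$ of the eigenvalues of $W/\sqrt{n}$ lie in $[2-\eps, 2+\eps]$, so $\lambda_{n^{0.67}} \ge (2-\eps)\sqrt{n}$ w.h.p.; a standard diagonalization over $\eps \to 0$ then yields the $(2-\littleoh(1))\sqrt{n}$ statement. This is presumably why the paper calls it a corollary of Wigner's theorem rather than of edge rigidity. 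Your rigidity-based route is valid as a citation, but the self-contained variance argument you sketch is overkill for the exponent $0.67$.
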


\begin{proofof}{\cref{theo:sk-bounds}}
	Let $p = n^{0.67}$ and $W \sim \GOE(n)$. Let $\lambda_1\ge \ldots \ge \lambda_n$ be the eigenvalues of $W$ with corresponding orthonormal set of
        eigenvectors $w_1, \ldots, w_n$. By~\cref{fact:goe}, we have that $\lambda_p \ge (2 - \littleoh(1))\sqrt{n}$ and that $w_1, \ldots, w_p$ span a
        uniformly random $p$-dimensional subspace $V$ of $\RR^n$.
	
	We consider $V$ as the input of the Boolean Planted Vector problem and by~\cref{theo:boolean-subspace}, for some constant $\delta > 0$,
        \text{w.h.p.} there exists a degree-$n^{\delta}$ pseudoexpectation operator $\pE$ such that $\pE[x_i^2] = 1$ and
        $\pE[\sum_{i = 1}^p\ip{x}{w_i}^2] = \pE[x^\T \Pi_V x] = n$. Now,
	\begin{align*}
	\pE[x^\T Wx] = \pE[\sum_{i = 1}^n \lambda_i \ip{x}{w_i}^2] &\ge \lambda_p\pE[x^\T\Pi_V x] - \abs{\lambda_n} \pE[\sum_{i = p + 1}^n\ip{x}{w_i}^2]\\
	&\ge (2 - \littleoh(1))n^{3/2} - \abs{\lambda_n}\pE[\ip{x}{x} - \sum_{i = 1}^p\ip{x}{w_i}^2] = (2 - \littleoh(1))n^{3/2}.
	\end{align*}
\end{proofof}

\begin{remk}
  Using the same proof as above, we can obtain~\cref{theo:sk-bounds} even if we were only able to prove SoS lower
  bounds for Planted Affine Planes for some $m = \omega(n)$. So, pushing the value of $m$ up to $n^{3/2 - \epsilon}$, which
  is~\cref{theo:sos-bounds}, offers only a modest improvement.
\end{remk}

\section{Satisfying the Constraints Exactly}
\label{sec:exact-constraints}

After pseudocalibration, the PAP constraints ``$\ip{v}{d_u}^2 = 1$'' are not exactly
satisfied by the pseudocalibration,
but they are satisfied up to truncation error $\pE[\ip{v}{d_u}^2 -1] = n^{-\Omega(n^\tau)}$. This is enough to produce a Sherrington-Kirkpatrick solution that is \textit{almost} boolean, meaning $\pE[x_i^2] = 1 \pm n^{-\Omega(n^\tau)}$ where the pseudocalibration is truncated to degree $n^\tau$. To satisfy the constraints exactly, and produce an SK solution which is \textit{exactly} boolean, we can project the pseudocalibration operator. The goal of this section is to prove the following lemma for the PAP problem,

\begin{lemma}\label{lem:pseudoexpectation-rounding}
W.h.p. for the PAP problem there is $\pE' \in \RR^{\binom{[n]}{\leq D}}$ such that $\norm{\pE - \pE'}_2 \leq \frac{1}{n^{\Omega(n^\tau)}}$ and $\pE'$ exactly satisfies the constraints ``$\ip{v}{d_u}^2 = 1"$.
\end{lemma}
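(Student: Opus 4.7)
The plan is to define a constraint matrix $Q$ whose kernel is exactly the set of $\pE$ satisfying all PAP constraints, take $\pE'$ to be the orthogonal projection of $\pE$ onto $\nullspace(Q)$, and then bound the projection error by combining a super-polynomially small numerator with an at-worst polynomially small denominator.

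First I would view $\pE \in \RR^{\binom{[n]}{\leq D}}$ as a vector indexed by multilinear monomials (using the already-satisfied booleanity $v_i^2 = 1/n$), and set up $Q$ as follows: rows are indexed by pairs $(I, u)$ with $I \subseteq [n]$, $|I| \leq D-2$, and $u \in [m]$; columns by $J \subseteq [n]$ with $|J| \leq D$; the entry $Q[(I, u), J]$ equals the coefficient of $v^J$ in the multilinear reduction of $v^I(\ip{v}{d_u}^2 - 1)$. Then $Q\pE = 0$ iff $\pE$ satisfies the PAP constraints. Define $\pE' := \pE - Q^+ Q \pE$, where $Q^+$ is the Moore--Penrose pseudoinverse; then $\pE'$ is the orthogonal projection of $\pE$ onto $\nullspace(Q)$, $Q\pE' = 0$ by construction, and the standard bound
\[
\norm{\pE - \pE'}_2 \;\leq\; \frac{\norm{Q\pE}_2}{\sigma_{\min}(Q)}
\]
holds, where $\sigma_{\min}(Q)$ denotes the smallest strictly positive singular value of $Q$.

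Next I would bound the numerator. Each entry $(Q\pE)_{(I,u)} = \pE[v^I(\ip{v}{d_u}^2 - 1)]$ is a polynomial in $d$ that vanishes identically on the planted distribution, so by the truncation clause of \cref{lem:pE-constraints} its Fourier coefficients are supported only on characters $\alpha$ with $|\alpha| \in [n^\tau-2,\, n^\tau+2]$. Each surviving coefficient is super-polynomially small: in the Gaussian setting the pseudocalibration formula gives magnitude bounded by $|\alpha|^{|\alpha|} n^{-|\alpha|/2}$ via the $|h_k(1)| \leq k^k$ bound used in \cref{prop:coefficient-bound}, and in the boolean setting \cref{cor:bound_on_coeff_e_k} controls $\prod_u e(|\alpha_u|)$ by the same quantity up to polynomial factors. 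Since $\tau < 1/2$, each such coefficient is at most $1/n^{\Omega(n^\tau)}$, and summing the squared Fourier spectrum across the $n^{O(n^\tau)}$ characters and $n^{O(D)}$ rows (with $D \ll n^\tau$) gives w.h.p.\ $\norm{Q\pE}_2 \leq 1/n^{\Omega(n^\tau)}$ via a straightforward second-moment / Markov argument.

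The main obstacle is the denominator: I need a polynomial-in-$n$ lower bound $\sigma_{\min}(Q) \geq 1/n^{\poly(D)}$, which suffices because any such denominator is dominated by the super-polynomially small numerator to produce the claimed error $1/n^{\Omega(n^\tau)}$. To obtain this I would decompose $QQ^\T$ in the graph-matrix framework: each row of $Q$ is a realization of the constraint shape $\ell_{|I|+2}$ from \cref{def:lk} (the core component of $L_k$ used throughout \cref{sec:single-spider}), so $QQ^\T$ admits a graph-matrix decomposition whose diagonal blocks are dominated by trivial shapes contributing polynomially-scaled identities, while the off-diagonal graph matrices are bounded using \cref{lem:gaussian-norm-bounds} / \cref{lem:norm-bounds} and the charging argument of \cref{lem:charging}. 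A block-PSD argument in the spirit of \cref{lem:block-psd} then delivers the polynomial lower bound. The most delicate point I anticipate is that $Q$ is genuinely rank-deficient --- its kernel is precisely what the spider-killing argument of \cref{sec:single-spider} exploits --- so extracting the smallest \emph{nonzero} singular value requires carefully restricting the block-PSD argument to $\nullspace(Q)^\perp$ and verifying that the dominant identity terms survive this restriction. Once this is handled, combining the numerator and denominator bounds yields $\norm{\pE - \pE'}_2 \leq n^{\poly(D)}/n^{\Omega(n^\tau)} = 1/n^{\Omega(n^\tau)}$, as required.
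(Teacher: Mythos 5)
Your overall framework matches the paper: define the constraint matrix $Q$, set $\pE' = \pE - Q^+Q\pE$ (the paper writes $Q^\T(QQ^\T)^+Q\pE$, the same thing), bound $\norm{Q\pE}$ by the truncation argument, and divide by the smallest nonzero singular value. Your $Q$, with rows indexed by $(I,u)$ and columns by multilinear monomials, is the same object as the paper's $Q = \sum_k L_k^\T$ once you unwind the graph-matrix notation, and your numerator bound is essentially \cref{lem:approximate-constraints}.

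The gap is in the denominator, and it is a genuine one, not a detail. You claim the graph-matrix decomposition of $QQ^\T$ has diagonal blocks dominated by polynomially-scaled identities with all other shapes negligible by norm bounds, and that the ``delicate point'' of rank-deficiency can be resolved by ``carefully restricting the block-PSD argument to $\nullspace(Q)^\perp$.'' This does not work as stated. In the decomposition of $L_k^\T L_k$ (see \cref{QQTdecompositionlemma} and \cref{QQTapproximationcorollary}), the shape $M_{\alpha_6}$ appears with coefficient $4$, and $\norm{M_{\alpha_6}} = \tilde{O}(n^2)$ --- the \emph{same} order as the identity contribution $2n^2 M_{\mathrm{Id}}$. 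So $M_{\alpha_6}$ is not an off-diagonal error term that the charging argument of \cref{lem:charging} can absorb; naively it could cancel the identity outright. Nor can you ``restrict to $\nullspace(Q)^\perp$'' directly, because that null space is a random, data-dependent subspace with no a priori basis the norm bounds can see; the restriction is not an operation your toolkit can perform.

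What the paper actually does is construct an \emph{explicit} matrix $N_k$ (a specific linear combination of graph matrices, mirroring the structure of $L_k$) whose columns span $\nullspace(L_k)$, and then prove (\cref{NNTMinuslemma}, \cref{NNTapproximationcorollary}) that $N_k N_k^\T$ contains the term $-4M_{\alpha_6}$, so that $L_k^\T L_k + N_k N_k^\T \approx 2n^2 M_{\mathrm{Id}} + (\text{PSD terms}) + \tilde{O}(n\sqrt{m})$. Since the min eigenvalue of $L_k^\T L_k + N_k N_k^\T$ lower-bounds the min \emph{nonzero} eigenvalue of $L_k^\T L_k$, this yields $\sigma_{\min}(Q)^2 \ge \frac{n^2}{2} - \tilde{O}(n\sqrt m)$. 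The construction of $N_k$ and the cancellation of $M_{\alpha_6}$ is the heart of the argument, and it is exactly the piece your proposal does not supply. Your observation that a polynomially small $\sigma_{\min}$ would suffice is correct and does give you some slack, but it does not by itself remove the need for the $N_k$ device: without the cancellation you have no lower bound at all on the restricted quadratic form, polynomial or otherwise.
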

\begin{remark}
Note that $\pE'$ is syntactically guaranteed to still satisfy the constraints ``$v_i^2 = \frac{1}{n}$".
\end{remark}

\begin{corollary}\label{lem:constraint-fixing}
There is an $\binom{[n]}{\leq D/2} \times \binom{[n]}{\leq D/2}$ matrix $\calE$ with $\norm{\calE} \leq \frac{1}{n^{\Omega(n^\tau)}}$ such that the matrix $M_{fix}~\defeq~M~+~\calE$ is SoS-symmetric and exactly satisfies the constraints ``$\ip{v}{d_u}^2 = 1$".
\end{corollary}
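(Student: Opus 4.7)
\begin{proofof}{\cref{lem:constraint-fixing} (Proof plan)}
The plan is to take $\pE'$ from \cref{lem:pseudoexpectation-rounding} and simply let $\calM_{fix}$ be its moment matrix, i.e., $\calM_{fix}[I,J] \defeq \pE'[v^I v^J]$. SoS-symmetry is then immediate, because any matrix built this way only depends on the product $v^I v^J$, hence only on the multiset union of $I$ and $J$ (equivalently, on the pair $(I \triangle J, I \cap J)$). Exact satisfaction of the PAP constraints is also immediate, since it is inherited from $\pE'$. So the only thing left to check is the norm bound on $\calE \defeq \calM_{fix} - \calM$.

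To bound $\norm{\calE}$ I would pass through the Frobenius norm. The key observation is that on the pseudoexpectation operator, one can reduce $v_i^2 = 1/n$ freely, so
\[
\calE[I,J] \;=\; (\pE' - \pE)[v^I v^J] \;=\; \frac{1}{n^{\abs{I \cap J}}} \cdot e[I \triangle J],
\]
where $e \in \RR^{\binom{[n]}{\leq D}}$ denotes the coefficient vector of $\pE' - \pE$ on the (multilinear) monomial basis. Hence
\[
\norm{\calE}_F^2 \;=\; \sum_{I,J} \frac{e[I \triangle J]^2}{n^{2 \abs{I \cap J}}} \;=\; \sum_{K} e[K]^2 \cdot \sum_{t \ge 0} \frac{N_{K,t}}{n^{2t}},
\]
where $N_{K,t}$ counts the number of pairs $(I,J)$ with $I \triangle J = K$, $\abs{I \cap J} = t$, and $\abs{I},\abs{J} \le D/2$.

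The combinatorial factor is easy to tame: once $K$ and $t$ are fixed, the pair $(I,J)$ is determined by the choice of $I \cap J \subseteq [n] \setminus K$, so $N_{K,t} \le \binom{n - \abs{K}}{t} \le n^t$. Thus $\sum_t N_{K,t}/n^{2t} \le \sum_t n^{-t} \le 2$, which gives
\[
\norm{\calE}_F^2 \;\le\; 2 \sum_K e[K]^2 \;=\; 2 \cdot \norm{\pE' - \pE}_2^2 \;\le\; \frac{1}{n^{\Omega(n^\tau)}},
\]
using \cref{lem:pseudoexpectation-rounding}. Since $\norm{\calE} \le \norm{\calE}_F$, this is the claimed bound. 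I do not anticipate any real obstacle here: the hard work has already been packaged into \cref{lem:pseudoexpectation-rounding}, and the only thing to verify is that projecting in the $\ell_2$ norm on pseudoexpectations translates into a comparable spectral-norm change on moment matrices, which the above computation confirms.
\end{proofof}
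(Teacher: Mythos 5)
Your plan is the natural one and is essentially the argument the paper leaves implicit: take $\calM_{fix}$ to be the moment matrix of $\pE'$ from \cref{lem:pseudoexpectation-rounding}, note that SoS-symmetry and exact constraint satisfaction come for free, and control $\norm{\calE}$ through the Frobenius norm using $\norm{\pE'-\pE}_2$. The identity $\calE[I,J] = n^{-\abs{I\cap J}}\,e[I\triangle J]$ is correct because both $\pE$ and $\pE'$ satisfy ``$v_i^2 = 1/n$''.

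However, your combinatorial step contains a real (if harmless) error. You claim that once $K = I\triangle J$ and $t = \abs{I\cap J}$ are fixed, the pair $(I,J)$ is determined by the choice of $I\cap J \subseteq [n]\setminus K$, giving $N_{K,t} \le \binom{n-\abs{K}}{t}$. This is not true: fixing $K$ and $I\cap J$ determines $I\cup J$ but not the partition of $K$ into $I\setminus J$ and $J\setminus I$, so there are up to $2^{\abs{K}}$ pairs $(I,J)$ per choice of $I\cap J$. The correct bound is $N_{K,t} \le 2^{\abs{K}}\binom{n-\abs{K}}{t}$, hence $\sum_t N_{K,t}/n^{2t} \le 2\cdot 2^{\abs{K}} \le 2^{D+1}$. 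This extra factor is still dwarfed by $n^{-\Omega(n^\tau)}$ since $D = 2n^\delta$ and $\delta \ll \tau$, so your conclusion $\norm{\calE} \le \norm{\calE}_F \le n^{-\Omega(n^\tau)}$ stands; but you should state the correct count (or, more crudely, bound $\norm{\calE}_F^2$ by the number of matrix entries, at most $\binom{n}{\le D/2}^2 \le n^D$, times $\max_K e[K]^2 \le \norm{\pE'-\pE}_2^2$, which also gives $n^{-\Omega(n^\tau)}$ with no delicate counting).
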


We view the operators $\pE$ as vectors in $\RR^{\binom{[n]}{\leq D}}$. The approach we take is to define a ``check matrix" $Q$ such that $\pE$ satisfies the necessary constraints iff $\pE \in \nullspace(Q)$. When the constraints are functions of $v$ only, the matrix $Q$ would be filled with constants. Since the constraints depend on the inputs $d_u$, the matrix $Q$ is also a function of the $d_u$. This allows us to deconstruct it as a sum of graph matrices -- and in fact it is made out of graph matrices which we have seen already.

\begin{definition}
We let $Q$ be the matrix
\[Q \defeq \displaystyle\sum_{k=2}^{D}L_k^\T \]
where the matrices $L_k$ are defined in~\cref{sec:psd}.
\end{definition}

\begin{lemma}
$Q\pE = 0$ iff $\pE$ exactly satisfies the constraints ``$\ip{v}{d_u}^2 = 1$".
\end{lemma}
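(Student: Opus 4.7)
The plan is to match each entry of $Q\pE$ with the vanishing of $\pE$ on a specific test polynomial of the form $v^S (\ip{v}{d_u}^2 - 1)$, and then conclude the ``iff'' by linearity. First, I would observe that the nonzero rows of $L_k^\T$ are exactly those indexed by pairs $(u,S)$ with $u \in [m]$ and $|S| = k - 2$, so as $k$ varies over $\{2, \ldots, D\}$ the row-index sets of the $L_k^\T$ are disjoint. Consequently $Q\pE = 0$ is equivalent to $L_k^\T \pE = 0$ for each individual $k$, and it suffices to analyze $L_k^\T \pE$ entry by entry.

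The key identification, which I expect to be the main step, is the formula
\[
(L_k^\T \pE)_{(u, S)} \;=\; \pE\bigl[v^S \cdot (\ip{v}{d_u}^2 - 1)\bigr]
\qquad \text{for every } u \in [m], \; |S| = k - 2.
\]
This is essentially what was established inside the proof of \cref{lem:completed-left-side}, but read one entry at a time rather than as a row-sum of $\calM_{fix} L_k$. There, after fixing a partial realization of $\circle{u}$ and $S = \{\square{j_3}, \ldots, \square{j_k}\}$, the five shape families $L_{k,1}, \ldots, L_{k,5}$ are enumerated and their contributions shown to cover every term of $v^S (\ip{v}{d_u}^2 - 1) = v^S \sum_{j_1, j_2} v_{j_1} v_{j_2} d_{u,j_1} d_{u,j_2} - v^S$: the ribbons of $\ell_k$ cover pairs $(j_1, j_2)$ with $j_1 \ne j_2$ and $j_1, j_2 \notin S$, while $L_{k,2}$ and $L_{k,3}$ handle the cases where one or both of $j_1, j_2$ coincide with an element of $S$ (using booleanity $v_i^2 = 1/n$ to absorb $v_{j_i}^2$), and $L_{k,4}$ and $L_{k,5}$ handle the diagonal $j_1 = j_2$ (using $d_{u,j}^2 - 1$). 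Reinterpreted at the polynomial level, this shows $\sum_J L_k[J, (u, S)] v^J \equiv v^S (\ip{v}{d_u}^2 - 1)$ modulo the ideal $\langle v_i^2 - 1/n \rangle$, so applying $\pE$ (which satisfies booleanity by \cref{pe:boolean}) to both sides gives the displayed identity.

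To finish, since $\pE$ satisfies booleanity, any polynomial $f$ of degree at most $D - 2$ reduces via multilinearization to a linear combination of monomials $v^S$ with $|S| \leq D - 2$ when placed inside $\pE[\,\cdot\,(\ip{v}{d_u}^2 - 1)]$. Therefore $L_k^\T \pE = 0$ for all $k \in \{2, \ldots, D\}$ is equivalent to $\pE[v^S(\ip{v}{d_u}^2 - 1)] = 0$ for all $u$ and all $|S| \leq D - 2$, which in turn is equivalent to $\pE$ satisfying the constraints ``$\ip{v}{d_u}^2 = 1$'' in the SoS sense (\cref{pe:feasible}). Threading these equivalences through the disjoint row-index decomposition of $Q$ proves the lemma. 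The only obstacle is the entry-wise identification above, and this is essentially the case analysis already carried out in the proof of \cref{lem:completed-left-side}, merely repackaged as a polynomial identity rather than an identity of scalars.
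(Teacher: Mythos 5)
Your proposal is correct and follows essentially the same route as the paper: the paper's own proof is a one-line remark that, reading the computation inside \cref{lem:completed-left-side}, the entries of $Q\pE$ are exactly the constraint errors $\pE\bigl[v^S(\ip{v}{d_u}^2-1)\bigr]$. You have merely unpacked this remark (the disjointness of the $L_k^\T$ row-index blocks, the entrywise identification, and the use of booleanity and linearity to pass from multilinear monomials $v^S$ to arbitrary $f$ of degree at most $D-2$), all of which is implicit in the paper's reference.
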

\begin{proof}
One can see in the proof of~\cref{lem:completed-left-side} that the entries of $Q\pE$ measure exactly the error in the constraints.
\end{proof}

The natural choice of $\pE'$ is therefore the projection of $\pE$ to the nullspace. This is defined by
\[\pE' \defeq \pE - Q^\T(QQ^\T)^+Q\pE \]
where we take the pseudo-inverse of $QQ^\T$ as it will turn out not to be invertible. 
%Note also that, for any constraint which is purely a function of $v$ and was satisfied by $\pE$, the constraint will also be satisfied by $\pE'$. This can seen by writing the constraint as a vector and taking the inner product on the right.

To prove~\cref{lem:pseudoexpectation-rounding}, we must decompose the second 
term in terms of graph matrices and show it has small norm.

As a warm-up, we end this outline by 
showing a simpler projection argument in the Planted Boolean Vector domain is 
sufficient if one just wants to satisfy the boolean constraints in the Planted Boolean Vector problem rather than the constraints of the PAP problem.\footnote{Using the translation between the two problems in~\cref{sec:sk}, this would allow us to exactly satisfy ``$\ip{v}{d_u}^2 =1"$ for the PAP problem. Unfortunately, the constraints ``$v_i^2 =\frac{1}{n}$" might be broken.}

Let $\pE_{\text{PBV}}$ be a candidate, not-yet-boolean, degree-$D$ pseudoexpectation operator for the Planted Boolean Vector problem, $D = 2\cdot n^\delta$. $\pE_{\text{PBV}}$ has an entry for each monomial $b^\alpha$, therefore it 
is $\multiset{n}{\leq D}$-dimensional. Let $Q_{bool}$ be the ``check  
matrix'' for the boolean constraints. $Q_{bool}$ has $n~\cdot~\multiset{n}{\leq D-2}$ rows. The $(i, \alpha)$ row checks $\pE[b^\alpha \cdot b_i^2] = \pE[b^\alpha]$. It has entry 1 in column $\alpha$ and entry $-1$ in column $\alpha \cup \{i, i\}$. 

\begin{lemma}
Assume that $\pE_{\text{PBV}}$ approximately satisfies the boolean constraints:
\[\pE_{\text{PBV}} [b^\alpha \cdot (b_i^2 - 1)] \leq n^{-\Omega(n^\tau)}\]
for any $b^\alpha$ with degree at most $D-2$. Then letting $\pE_{\text{PBV}}'$ be the projection to $\nullspace(Q_{bool})$, we have
\[\norm{\pE_{\text{PBV}} - \pE_{\text{PBV}}'}_2 \leq n^{-\Omega(n^\tau)}.\]
\end{lemma}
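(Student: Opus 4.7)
The plan is to exploit the combinatorial structure of $\nullspace(Q_{bool})$ and then pass from the entrywise approximate constraints to a global $\ell_2$ bound via a diameter argument on the equivalence graph of monomials.

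First, I would identify $\nullspace(Q_{bool})$ explicitly. Each row of $Q_{bool}$ enforces $\pE[b^\alpha] = \pE[b^{\alpha \cup \{i,i\}}]$, so the null space consists of exactly those vectors which are constant on each equivalence class of multisets $\alpha$ under the relation generated by $\alpha \sim \alpha \cup \{i,i\}$. Equivalently, each equivalence class is indexed by a subset $S \subseteq [n]$ (the reduction of $\alpha$ mod $2$), together with the constraint $|\alpha| \le D$. Because $\pE_{\text{PBV}}'$ is the \emph{orthogonal} projection of $\pE_{\text{PBV}}$ onto this subspace, it must replace the entries in each equivalence class by their mean, so $\|\pE_{\text{PBV}} - \pE_{\text{PBV}}'\|_2^2$ equals the sum, over all equivalence classes $C$, of the within-class variance $\sum_{\alpha \in C}(\pE_{\text{PBV}}[b^\alpha] - \overline{\pE_{\text{PBV}}}_C)^2$.

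Next I would bound the within-class variances using the approximate constraints. The hypothesis says that across every edge $\alpha \sim \alpha \cup \{i,i\}$ of the equivalence graph, the two entries of $\pE_{\text{PBV}}$ differ by at most $n^{-\Omega(n^\tau)}$. Any two monomials in the same class with total degree at most $D$ can be connected by a path of length at most $D$ in this graph (move one $b_i^2$ in or out at a time), so all values in a single equivalence class lie in an interval of length at most $D \cdot n^{-\Omega(n^\tau)}$. Consequently each term $(\pE_{\text{PBV}}[b^\alpha] - \overline{\pE_{\text{PBV}}}_C)^2$ is at most $D^2 \cdot n^{-\Omega(n^\tau)}$.

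Finally, summing over at most $\multiset{n}{\le D} \le n^{O(D)}$ monomials and taking a square root gives
\[
\|\pE_{\text{PBV}} - \pE_{\text{PBV}}'\|_2 \le n^{O(D)} \cdot D \cdot n^{-\Omega(n^\tau)} = n^{-\Omega(n^\tau)},
\]
using $D = 2 n^\delta$ with $\delta \ll \tau$ to absorb the prefactor. The only step that requires any care is the diameter bound on the equivalence graph; if one prefers an algebraic route, the same conclusion follows from the identity $\pE_{\text{PBV}} - \pE_{\text{PBV}}' = Q_{bool}^\T (Q_{bool} Q_{bool}^\T)^+ Q_{bool} \pE_{\text{PBV}}$ together with $\|Q_{bool} \pE_{\text{PBV}}\|_2 \le n^{O(D)/2} n^{-\Omega(n^\tau)}$ and a lower bound on the smallest nonzero eigenvalue of $Q_{bool} Q_{bool}^\T$ (which is the combinatorial Laplacian of the equivalence graph, and has the same diameter-driven spectral gap). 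I expect the combinatorial averaging argument to be the cleanest and to be the main, albeit mild, obstacle.
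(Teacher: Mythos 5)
Your argument is essentially the paper's: both identify $\nullspace(Q_{bool})$ as the vectors constant on each equivalence class of monomials under $\alpha \sim \alpha \cup \{i,i\}$, identify the orthogonal projection with in-class averaging, chain at most $D$ approximate constraints to bound the spread within a class by $D\cdot n^{-\Omega(n^\tau)}$, and conclude by summing over the $\multiset{n}{\le D}$ monomials with $D \ll n^\tau$ absorbing the prefactor. One small slip in your parenthetical alternative route: $Q_{bool}Q_{bool}^{\T}$ is not the combinatorial Laplacian of the equivalence graph — that would be $Q_{bool}^{\T}Q_{bool}$, which is indexed by vertices rather than by constraint rows — although since the two products share the same nonzero spectrum, the spectral-gap conclusion you draw from it is unaffected.
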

\begin{proof}
The effect of projecting $\pE$ to $\nullspace(Q_{bool})$ is to symmetrize $\pE[b^{\alpha + 2\beta}]$ across all $\beta$; average all entries $\pE[1], \pE[b_1^2], \pE[b_2^2], \pE[b_1^6b_7^4b_{10}^2]$ etc, average $\pE[b_1], \pE[b_1b_3^2], \pE[b_1b_3^4b_4^4]$ etc, and so on. One can see this because this is a linear map which fixes $\nullspace(Q_{bool})$ and takes all vectors into $\nullspace(Q_{bool})$.

By assumption, there is additive error $n^{-\Omega(n^\tau)}$ between $\pE_{\text{PBV}}[b^\alpha]$ and 
$\pE_{\text{PBV}}[b^\alpha \cdot b_i^2]$. As the size of $\beta$ is at most $D \ll n^\tau$, we still easily have $\pE_{\text{PBV}}[b^{\alpha + 2\beta}] = \pE_{\text{PBV}}[b^\alpha] \pm n^{-\Omega(n^\tau)}$ for all $\beta$. 
Therefore 
averaging these entries changes each of them by at most 
$n^{-\Omega(n^\tau)}$. Thus,
\[\norm{\pE_{\text{PBV}} - \pE_{\text{PBV}}'}_2 \leq \multichoose{n}{\leq D} \cdot \norm{\pE_{\text{PBV}} - \pE_{\text{PBV}}'}_\infty\]
\[\leq n^{O(n^{\delta})} \cdot n^{-\Omega(n^\tau)} = n^{-\Omega(n^\tau)}\]
\end{proof}

\subsection{Truncation error in the pseudocalibration}

The constraint ``$\ip{v}{d_u}^2 = 1$'' isn't exactly satisfied, but a general property of pseudocalibration is that it's satisfied up to truncation error, which is small w.h.p. We show a quantitative version of this bound.

We introduce the notation
\[\mu_{I, \alpha} \defeq \E_{\text{pl}} [v^I \chi_{\alpha}(d)] \]
where $\chi_\alpha(d) = h_\alpha(d)$ in the Gaussian case and $\chi_\alpha(d) = d^\alpha$ in the boolean case.

\begin{lemma}\label{lem:boolean-approximate-constraints}
    Let $p(d,v)$ such that $p$ is uniformly zero on the planted distribution. Let $\deg_d(p) = D$. For any $I \subseteq [n]$, the only nonzero Fourier coefficients of $\pE[v^I p]$ are those with size between $n^\tau \pm D$.

    Furthermore, the nonzero coefficients are bounded in absolute value by
    \[ M \cdot L \cdot 2^D  e^{mn}\cdot \max_I\max_{\abs{\alpha} \in n^\tau \pm 2D} \abs{\mu_{I, \alpha}} \]
    where $M$ is the number of nonzero monomials of $p$ and $L$ is the largest coefficient of $p$ (in absolute value).
\end{lemma}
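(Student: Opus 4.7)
The plan is to leverage the defining property of pseudocalibration: the \emph{untruncated} pseudoexpectation $\pE_\infty$, defined by $\pE_\infty[v^J] = \sum_{\alpha} \mu_{J,\alpha}\chi_\alpha(d)/\alpha!$ (with $\alpha!=1$ in the boolean case), reproduces the planted conditional expectation, i.e.\ $\pE_\infty[f(v)] = \E_{\text{pl}}[f(v)\mid d]$ as a function of $d$. Writing $p(d,v) = \sum_\beta c_\beta \chi_\beta(d)\,v^{T(\beta)}$ in the monomial-Fourier basis, with at most $M$ nonzero $\beta$, each $|c_\beta|\leq L$ and $|\beta|\leq D$, linearity yields
\[
\pE[v^I p(d,v)] \;=\; \sum_\beta c_\beta\,\chi_\beta(d)\,\pE\bigl[v^{J(\beta)}\bigr],
\qquad
\pE_\infty[v^I p(d,v)] \;=\; \E_{\text{pl}}[v^I p\mid d] \;=\; 0,
\]
where $J(\beta)$ is the reduction of $I\cup T(\beta)$ modulo $v_i^2=\tfrac{1}{n}$; the right-hand equality holds because $p$ vanishes on the planted support. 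Hence $\pE[v^I p]=(\pE-\pE_\infty)[v^I p]$ is built entirely from the high-frequency tail $|\alpha|>n^\tau$ of $\pE_\infty[v^{J(\beta)}]$, multiplied by $\chi_\beta$.

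For the Fourier-support bound, the upper bound $|\gamma|\leq n^\tau+D$ is immediate from the truncated form: $\pE[v^{J(\beta)}]$ has Fourier support in $|\alpha|\leq n^\tau$, and multiplication by $\chi_\beta$ adds at most $|\beta|\leq D$. For the lower bound, suppose $|\gamma|<n^\tau-D$. In both settings, every $\alpha$ that can contribute to $\chi_\gamma$ inside $\chi_\beta\chi_\alpha$ satisfies $|\alpha|\in[\,|\gamma|-|\beta|,|\gamma|+|\beta|\,]$, so $|\alpha|\leq n^\tau-1$, i.e.\ such $\alpha$'s lie strictly below the truncation threshold. Consequently $[\chi_\gamma]\pE[v^I p]=[\chi_\gamma]\pE_\infty[v^I p]=0$, as the truncation does not remove any of the contributing modes. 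This gives $|\gamma|\in n^\tau\pm D$.

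For the magnitude bound I would fix $\gamma$ with $|\gamma|\in n^\tau\pm D$ and apply the triangle inequality to
\[
[\chi_\gamma]\pE[v^I p] \;=\; \sum_\beta c_\beta \sum_{\alpha} \frac{\mu_{J(\beta),\alpha}}{\alpha!}\,[\chi_\gamma](\chi_\beta\chi_\alpha).
\]
Only $\alpha$ with $|\alpha|\in n^\tau\pm 2D$ contribute since $|\alpha|\in[|\gamma|-|\beta|,|\gamma|+|\beta|]$, so each coefficient $|\mu_{J(\beta),\alpha}/\alpha!|$ is controlled by $\max_{I,|\alpha|\in n^\tau\pm 2D}|\mu_{I,\alpha}|$. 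The sum over $\beta$ contributes a factor of at most $M\cdot L$. In the boolean case the product $\chi_\beta\chi_\alpha=\chi_{\alpha\oplus\beta}$ pins down a unique $\alpha=\gamma\oplus\beta$ with coefficient $1$, giving the bound directly. In the Gaussian case I would count contributing $\alpha$'s coordinate-wise: for each $(u,i)$, $\alpha_{u,i}$ lies in an interval of length $\leq 2\beta_{u,i}+1$ with fixed parity, yielding at most $\beta_{u,i}+1\leq 2^{\beta_{u,i}}$ options, hence at most $2^{|\beta|}\leq 2^D$ choices overall; the Hermite linearization coefficients $[\chi_\gamma](\chi_\beta\chi_\alpha)=\prod_{u,i} a_{\beta_{u,i},\alpha_{u,i},\gamma_{u,i}}$ are bounded via Proposition \ref{prop:hermite-product-coefficients} coordinate-wise, and their product (together with any remaining combinatorial overhead) is absorbed into the crude factor $e^{mn}$, which is negligible compared with the $n^{-\Omega(n^\tau)}$ savings in $\max|\mu_{I,\alpha}|$.

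\textbf{Main obstacle.} The only delicate point is the coordinate-wise Gaussian bookkeeping: verifying that the product of Hermite linearization coefficients over all $(u,i)$, combined with the $2^D$ overcount for contributing $\alpha$'s, comfortably fits inside the loose $2^D\cdot e^{mn}$ factor. Everything else is routine Fourier bookkeeping once the vanishing $\pE_\infty[v^I p]=0$ is in hand.
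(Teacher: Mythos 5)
Your argument is correct and takes essentially the same route as the paper: the cancellation for $\abs{\gamma} < n^\tau - D$ reduces in both proofs to the identity $\E_{\text{pl}}\left[v^I \chi_\gamma(d)\, p(d,v)\right] = 0$ (since $p$ vanishes on the planted support), and the boundary bounds rest on noting that only $\alpha$ with $\abs{\alpha} \in n^\tau \pm 2D$ contribute together with crude control of the Hermite linearization coefficients, all of which matches the paper's bookkeeping up to presentation. One small caveat: in the Gaussian case you should treat $\pE_\infty$ as a purely formal device that agrees with $\pE$ on each Fourier coefficient below the truncation threshold, rather than as a literal function of $d$, since the untruncated Hermite series need not converge pointwise and the planted $d$-marginal is mutually singular with $\calN(0,I)$ (so $\E_{\text{pl}}[v^J\mid d]$ is not defined for a.e.\ $d$); your proof only uses the coefficient-level identity, so the conclusion is unaffected.
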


\begin{proof} We divide the calculations into boolean and Gaussian cases. For each case we compute that Fourier coefficients below the truncation threshold neatly cancel and bound the coefficients at the threshold.

\noindent\textbf{(Boolean case)} 
Expand $p(d,v) = \displaystyle\sum_{\abs{J} \leq D} d^J p_J(v)$. By linearity,
\[\pE[v^Ip] =  \displaystyle\sum_{\abs{J} \leq D} d^J \pE[v^Ip_J(v)].\]
The $\alpha$-th Fourier coefficient gets a contribution from the $J$-th term equal to the $(\alpha \oplus J)$-th Fourier coefficient of $\pE[v^Ip_J(v)]$. Expand the polynomial $p_J$ in the $J$-th term,
\[ \pE[v^Ip_J(v)] = \sum_{K} c_{J,K} \pE[v^Iv^K]\]
The $(\alpha \oplus J)$-th coefficient of $\pE[v^Iv^K]$ is defined by pseudocalibration to be
\begin{equation}\label{eq:boolean-fourier-cases}
\left\{
\begin{array}{lr}
    \mu_{I+K,\alpha\oplus J} & \abs{\alpha \oplus J} \leq n^\tau\\
    0 & \abs{\alpha \oplus J} > n^\tau
\end{array}\right.
\end{equation}
For $\abs{\alpha} \leq n^\tau - D$ we are guaranteed to be in the first case. For this case the total $\alpha$-th Fourier coefficient is
\begin{align*}
    \displaystyle\sum_{\abs{J} \leq D} \sum_K c_{J,K}\mu_{I+K, \alpha \oplus J} &= \sum_{\abs{J} \leq D} \sum_K c_{J,K}\E_{\text{pl}}[v^{I}v^K d^{J \oplus \alpha}]\\
    &= \sum_{\abs{J} \leq D} \sum_K c_{J,K}\E_{\text{pl}}[v^{I}v^K d^\alpha d^{J}]\\
    &= \E_{\text{pl}}[v^{I}d^{\alpha}p(d,v)]\\
    & = 0.
\end{align*}
For $\abs{\alpha} > n^\tau + D$, we are guaranteed to be in the second case of~\cref{eq:boolean-fourier-cases}, in which case the total Fourier coefficient will also be zero. For $\abs{\alpha}$ within $D$ of the truncation parameter, some terms $J$ will not contribute their coefficients towards cancellation. We bound the Fourier coefficient for these $\alpha$,
\begin{align*}
    \abs{\displaystyle\sum_{\substack{J:\abs{J} \leq D,\\ \abs{\alpha \oplus J} \leq n^\tau}} \sum_K c_{J,K}\cdot\mu_{I+K, \alpha \oplus J}} &\leq \displaystyle\sum_{\abs{J} \leq D} \sum_K \abs{c_{J,K}\cdot\mu_{I+K, \alpha \oplus J}}\\
    &\leq M \cdot L \cdot \max_{I} \max_{\abs{\alpha} \in n^\tau \pm 2D} \abs{\mu_{I,\alpha}}.
\end{align*}

\noindent\textbf{(Gaussian case)} Expand $p(d, v) = \displaystyle\sum_{\abs{\beta} \leq D}h_\beta(d) p_\beta(v) = \sum_{\abs{\beta} \leq D}h_\beta(d) \sum_K c_{\beta, K} v^K$. The pseudoexpectation is
\begin{align*}
\pE[v^I p(d, v)] &= \displaystyle\sum_{\abs{\beta} \leq D}h_\beta(d) \pE[v^Ip_\beta(v)]\\
&= \sum_{\abs{\beta} \leq D}h_\beta(d) \sum_{K}c_{\beta,K}\pE[v^Iv^K]\\
&= \sum_{\abs{\beta} \leq D}h_\beta(d) \sum_K c_{\beta, K}\sum_{\abs{\alpha} \leq n^\tau} \mu_{I+K, \alpha}\frac{h_\alpha(d)}{\alpha!}.
\end{align*}
Let $l_{\alpha,\beta,\gamma}$ be the coefficient of $h_\gamma$ in the Hermite product $h_\alpha \cdot h_\beta$.
\begin{align*}
   \pE[v^I p(d, v)] &= \displaystyle\sum_{\abs{\beta} \leq D} \sum_K c_{\beta, K}\sum_{\abs{\alpha} \leq n^\tau} \mu_{I+K, \alpha}\sum_{\gamma} l_{\alpha,\beta,\gamma}\frac{h_\gamma(d)}{\alpha!} 
\end{align*}
In the case $\abs{\gamma} > n^\tau + D$, the coefficient of $h_\gamma(d)$ is zero because the max degree of a Hermite polynomial appearing in $h_{\alpha}\cdot h_{\beta}$ is at most $\abs{\alpha}+\abs{\beta} \leq n^\tau + D$. We show cancellations occur when $\abs{\gamma} \leq n^\tau - D$.
Moving the summations around, the coefficient of $h_\gamma$ is,
\begin{align*}
    &\displaystyle\sum_{\abs{\beta} \leq D} \sum_K c_{\beta, K}\sum_{\abs{\alpha} \leq n^\tau} \mu_{I+K, \alpha} \cdot  l_{\alpha,\beta,\gamma}\frac{1}{\alpha!}\\
    = & \sum_{\abs{\beta} \leq D} \sum_K c_{\beta, K}\sum_{\abs{\alpha} \leq n^\tau} \E_{\text{pl}}[v^Iv^K h_{\alpha}(d)] \cdot l_{\alpha,\beta,\gamma}\frac{1}{\alpha!}\\
    = & \E_{\text{pl}}v^I\sum_{\abs{\beta} \leq D} \sum_K c_{\beta, K}v^K\sum_{\abs{\alpha} \leq n^\tau} l_{\alpha,\beta,\gamma}\frac{h_{\alpha}(d) }{\alpha!}.
\end{align*}

We need an explicit formula for $l_{\alpha, \beta, \gamma}$ from~\cite[p.~92]{roman2005umbral},

\begin{proposition}\label{prop:multiply-coefficients}
\[l_{\alpha,\beta,\alpha+\beta - 2\delta} = \displaystyle\prod_{u,i}\binom{\alpha_{ui}}{\delta_{ui}}\binom{\beta_{ui}}{\delta_{ui}} \delta_{ui}! \]
\end{proposition}

\begin{proposition}\label{prop:hermite-product}
\[\sum_{\alpha} l_{\alpha,\beta,\gamma}\frac{h_{\alpha}(d) }{\alpha!} = h_{\beta}(d)\cdot \frac{h_\gamma(d)}{\gamma!}\]
\end{proposition}
\begin{proof}
Compute using~\cref{prop:multiply-coefficients}.
\end{proof}

In~\cref{prop:hermite-product}, the summation is actually finite. The largest $\alpha$ with $l_{\alpha,\beta,\gamma}$ nonzero has $\abs{\alpha}~\leq~\abs{\beta}~+~\abs{\gamma}$. Since we have $\abs{\beta}\leq D$ (the constraint only has degree $D$), as long as $\abs{\gamma} \leq n^\tau - D$, the above equality applies, in which case continuing the calculation for this case,
\begin{align*}
   \pE[v^Ip] &= \E_{\text{pl}}v^I\sum_{\abs{\beta} \leq D} \sum_K c_{\beta, K}v^K \cdot h_{\beta}(d)\cdot \frac{h_\gamma(d)}{\gamma!}\\
   &=  \E_{\text{pl}}v^I \cdot \frac{h_\gamma(d)}{\gamma!} \cdot \sum_{\abs{\beta} \leq D} \sum_K c_{\beta, K}v^K \cdot h_{\beta}(d)\\
   &= \E_{\text{pl}}v^I \cdot \frac{h_\gamma(d)}{\gamma!} \cdot p(d,v)\\
   &= 0.
\end{align*}
We now bound the coefficients that appear in the remaining case when $n^\tau - D < \abs{\gamma} \leq n^\tau + D$.
\begin{align*}
    \abs{\displaystyle\sum_{\abs{\beta} \leq D} \sum_K c_{\beta, K}\sum_{\abs{\alpha} \leq n^\tau} \mu_{I+K, \alpha} \cdot  l_{\alpha,\beta,\gamma}\frac{1}{\alpha!}} & \leq \displaystyle\sum_{\abs{\beta} \leq D} \sum_K \abs{c_{\beta, K}}\sum_{\abs{\alpha} \leq n^\tau} \abs{\mu_{I+K, \alpha}} \cdot  l_{\alpha,\beta,\gamma}\frac{1}{\alpha!}
\end{align*}
If $l_{\alpha,\beta,\gamma}> 0$ then we must have $\abs{\alpha} \geq \abs{\gamma} - \abs{\beta} \geq n^\tau - 2D$.
\begin{align*}
    &\leq \sum_{\abs{\beta} \leq D} \sum_K \abs{c_{\beta, K}} \cdot \left(\max_I \max_{\abs{\alpha} \in n^\tau \pm 2D}\abs{\mu_{I, \alpha}} \right) \sum_{\alpha}   l_{\alpha,\beta,\gamma}\frac{1}{\alpha!}
\end{align*}

\begin{proposition}
   \[\sum_{\alpha} l_{\alpha,\beta,\gamma}\frac{1}{\alpha!} = e^{mn} \prod_{u,i}\binom{\beta_{ui}}{\frac{\alpha_{ui} + \beta_{ui} -\gamma_{ui}}{2}}\]
\end{proposition}
\begin{proof}
Compute using~\cref{prop:multiply-coefficients}.
\end{proof}
Using the proposition,
\[\leq \sum_{\abs{\beta} \leq D} \sum_K \abs{c_{\beta, K}} \cdot \left(\max_I \max_{\abs{\alpha} \in n^\tau \pm 2D}\abs{\mu_{I, \alpha}} \right)  e^{mn}\prod_{u,i}\binom{\beta_{ui}}{\frac{\alpha_{ui} + \beta_{ui} -\gamma_{ui}}{2}}  \]
We can bound 
\[\prod_{u,i}\binom{\beta_{ui}}{k_{ui}} \leq \prod_{u,i} 2^{\beta_{ui}} = 2^{\abs{\beta}} \leq 2^D.\]
In total, letting $M$ be the number of nonzero coefficients in the constraint $p$ and $L$ be the largest coefficient, this Fourier coefficient is at most,
\[ M \cdot L \cdot 2^D  e^{mn}\cdot \max_I\max_{\abs{\alpha}\in n^\tau \pm 2D} \abs{\mu_{I+K,\alpha}} .\]
\end{proof}

\begin{lemma}\label{lem:approximate-constraints}
W.h.p. $\norm{Q\pE} \leq \frac{1}{n^{\Omega(n^\tau)}}$
\end{lemma}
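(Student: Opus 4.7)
The plan is to express each coordinate of $Q\pE$ as a pseudoexpectation of a polynomial that vanishes on the planted distribution, bound its Fourier coefficients via \cref{lem:boolean-approximate-constraints}, and then control $\norm{Q\pE}$ through a second-moment computation and Markov's inequality.

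The first step is to identify each entry of $Q\pE$. The matrix $Q = \sum_{k=2}^D L_k^\T$ has rows indexed by pairs $(u, S)$ with $u \in [m]$ and $S \in \binom{[n]}{\le D-2}$, where the row $(u, S)$ lives inside $L_{\abs{S}+2}^\T$. Viewing $\pE$ as the $\emptyset$-indexed row of $\calM_{fix}$, the identity proven inside \cref{lem:completed-left-side} (specialized to $I = \emptyset$) immediately gives
\[(Q\pE)_{(u, S)} \;=\; \pE\!\left[\, v^{S}\!\left(\ip{v}{d_u}^2 - 1\right)\right].\]
Every entry of $Q\pE$ is thus the pseudoexpectation of a polynomial $q_{u, S}(v, d) = v^{S}(\ip{v}{d_u}^2 - 1)$ that vanishes identically on the planted distribution, has $d$-degree $2$, at most $O(n^2)$ monomials, and unit coefficients.

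The next step is to apply \cref{lem:boolean-approximate-constraints} to $q_{u, S}$. The conclusion is that each entry of $Q\pE$, viewed as a polynomial in $d$, has Fourier support concentrated on characters $\alpha$ with $\abs{\alpha} \in n^\tau \pm O(1)$, and every Fourier coefficient is bounded in absolute value by a truncation error $\epsilon_{\textup{trunc}}$ that is controlled by $\max_{I,\, \abs{\alpha} \in n^\tau \pm O(1)} \abs{\mu_{I, \alpha}}$. From the explicit pseudocalibration formulas in \cref{lem:gaussian-pseudocal} and \cref{lem:boolean-pseudocalibration}, together with the coefficient estimates $\abs{h_k(1)} \le k^{k}$ and $\abs{e(k)} \le k^{3k} n^{-k/2}$, one obtains $\max \abs{\mu_{I, \alpha}} \le n^{-(1/2 - O(\tau)) n^\tau}$ for $\abs{\alpha} \approx n^\tau$; provided $\tau$ is chosen sufficiently small, this doubly-exponentially small factor absorbs the remaining polynomial and combinatorial prefactors in \cref{lem:boolean-approximate-constraints}, yielding $\epsilon_{\textup{trunc}} \le n^{-\Omega(n^\tau)}$.

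Finally I would bound $\norm{Q\pE}$ through a second-moment argument. By Parseval applied coordinate-wise,
\[\E_{d}\!\left[ (Q\pE)^2_{(u, S)} \right] \;\le\; \#\{\alpha : \abs{\alpha} \in n^\tau \pm O(1)\} \cdot \max_\alpha \alpha! \cdot \epsilon_{\textup{trunc}}^2,\]
where the factorial factor arises from the Hermite normalization in the Gaussian setting (and is trivial in the Boolean setting). Since $\#\{\alpha : \abs{\alpha} \in n^\tau \pm O(1)\} \le (mn)^{O(n^\tau)}$ and $\max_\alpha \alpha! \le n^{O(\tau n^\tau)}$, both quantities are $n^{O(n^\tau)}$ and are easily absorbed by $\epsilon_{\textup{trunc}}^2 = n^{-\Omega(n^\tau)}$ for small $\tau$. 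Summing over the at most $m \cdot n^{O(D)} = n^{O(n^\delta)}$ pairs $(u, S)$ and using $\delta \ll \tau$ yields $\E_d[\norm{Q\pE}^2] \le n^{-\Omega(n^\tau)}$, and Markov's inequality then delivers $\norm{Q\pE} \le n^{-\Omega(n^\tau)}$ with probability $1 - o(1)$. The main delicate point is the second step: carefully tracking all the polynomial and combinatorial prefactors produced by \cref{lem:boolean-approximate-constraints} and verifying that the doubly-exponentially small estimate for $\abs{\mu_{I, \alpha}}$ at $\abs{\alpha} \approx n^\tau$ overwhelms them. Once this is done, the remaining arguments reduce to routine Parseval and Markov applications.
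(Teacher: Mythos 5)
Your overall plan---entry-by-entry identification via \cref{lem:completed-left-side}, applying \cref{lem:boolean-approximate-constraints} to bound the surviving Fourier coefficients, then controlling the $\ell_2$ norm---is a sensible alternative strategy, but there is a quantitative gap in the final Parseval step that makes the argument fail as written.

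The problem is the claim that $\#\{\alpha : \abs{\alpha} \in n^\tau \pm O(1)\} \le (mn)^{O(n^\tau)}$ and $\max_\alpha \alpha! \le n^{O(\tau n^\tau)}$ ``are easily absorbed by $\epsilon_{\textup{trunc}}^2 = n^{-\Omega(n^\tau)}$ for small $\tau$.'' Making $\tau$ small does not shrink the implicit constant in $(mn)^{O(n^\tau)}$. Counting multi-indices $\alpha \in (\N^n)^m$ with $\abs{\alpha}\approx n^\tau$ gives roughly $\binom{mn+n^\tau}{n^\tau} \approx (mn)^{n^\tau}/(n^\tau)! = n^{(5/2-\tau+o(1))n^\tau}$ when $m\approx n^{3/2}$, whereas $\epsilon_{\textup{trunc}}^2 \approx n^{-(1-O(\tau))n^\tau}$. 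The product is therefore of order $n^{(3/2+o(1))n^\tau}$, which is exponentially \emph{large}, not small. So the crude count does not close the argument, even after the $\alpha!$ factor is accounted for.

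What makes this quantity actually small is that almost all of those $\alpha$ are not in the support: pseudocalibration forces $\abs{\alpha_u}$ even, and since $h_2(1)=0$ (resp. $e(2)=0$), every contributing circle has $\abs{\alpha_u}\ge 4$, and every non-boundary square column has degree $\ge 2$. Hence the number of distinct circles is at most $n^\tau/4$ and the number of squares is at most $n^\tau/2 + O(n^\delta)$, which cuts the count of realizations down to about $m^{n^\tau/4}\, n^{n^\tau/2 + O(n^\delta)} = n^{(7/8 + o(1))n^\tau}$, and now the product with $\epsilon_{\textup{trunc}}^2$ really is $n^{-\Omega(n^\tau)}$. This is precisely the arithmetic the paper performs, but it is baked into the graph-matrix approach: the paper decomposes $Q\pE$ into graph \emph{vectors} indexed by shapes (only $n^{O(\tau)n^\tau}$ of them), and applies the high-probability norm bound to each shape, which already reflects the degree constraints (the $\frac{1}{4}\abs{E}$ circles, the $\frac{1}{2}\abs{E}+O(n^\delta)$ squares). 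So your second-moment/Markov route can in principle replace the graph-matrix norm bounds here, but you must incorporate the structural support restrictions into the count of nonzero coefficients; the naive $(mn)^{O(n^\tau)}$ estimate is fatally lossy.
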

\begin{proof}
    Via~\cref{lem:boolean-approximate-constraints} the only nonzero Fourier characters that appear in $Q\pE$ are those of size $n^\tau \pm 2$. Their coefficient in the lemma is at most 
    \begin{align*}
        & C \cdot  e^{mn}\cdot \max_I\max_{\abs{\alpha}\in n^\tau \pm 4} \abs{\mu_{I+K,\alpha}}\\
        \leq & C \cdot e^{mn} \cdot \frac{(n^{\tau} - 4)^{3(n^\tau - 4)}}{n^{(n^\tau - 4)/2}} && (\text{\cref{prop:coefficient-bound}})\\
        \leq & \frac{n^{3\tau n^\tau}}{n^{(\frac{1}{2} + o(1)) n^\tau }}
    \end{align*}
    Therefore we can express $Q\pE$ as a sum of graph matrices\footnote{Graph vectors, since $Q\pE$ is a vector.} of this size, with coefficients bounded by the above quantity. Now we bound the total norm by summing over all graphs.
    
    The number of graph matrices of this size is at most $n^{O(\tau) \cdot n^\tau}$.
    
    The norm of each term can be bounded using norm bounds by $n^{\frac{w(V(\alpha)) - w(S_{\min}) + w(W_{iso})}{2}}$ w.h.p. Note that there is no minimum vertex separator since $V = \emptyset$, and there are $O(1)$ isolated vertices when multiplying graphs in $Q$ with graphs in the pseudocalibration (which have no isolated vertices). The number of circle vertices can be bounded by $\frac{1}{4}\abs{E(\alpha)} \leq \frac{1}{4}n^\tau$. The number of square vertices can be bounded by $O(n^\delta) + \frac{1}{2}\abs{E(\alpha)} \leq 0.52n^\tau$. Therefore the norms are at most $m^{\frac{1}{8}n^\tau}~n^{0.26n^\tau + O(1)}~\leq~n^{0.49n^\tau}$. Notably this is significantly less than the denominator of the graph matrix coefficient, which is $n^{(0.5+o(1))n^\tau}$. Assuming $\delta$ and $\tau$ are small enough, the denominator is enough to overpower all terms multiplied together.
\end{proof}

\subsection{Analyzing $Q{Q^T}$}
The main theorem of this subsection is that the minimum nonzero eigenvalue of $QQ^T$ is large.

\begin{theorem}\label{lem:pseudoinverse-lower-bound}
The minimum nonzero eigenvalue of $QQ^T$ is $\frac{n^2}{2} - \tilde{O}(n\sqrt{m})$. 
\end{theorem}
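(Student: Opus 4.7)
The plan is to analyze $QQ^T$ via the graph-matrix framework. Since the row $(u,S)$ of $Q$ comes exclusively from $L_{|S|+2}^T$, the rows of $Q$ partition by $|S|$, and $QQ^T$ is block-diagonal in $|S|$ with the $|S|=k-2$ block equal to $L_k^T L_k$. It therefore suffices to bound the minimum nonzero eigenvalue of each $L_k^T L_k$ and take the minimum. Expanding $L_k$ according to \cref{def:lk} and multiplying out $L_k^T L_k$ using the graph-matrix multiplication identity (\cref{prop:graph-matrix-multiplication}), the dominant contribution comes from $M_{\ell_k^T}\cdot M_{\ell_k}$; its leading graph matrix is the shape I will call $\gamma$, whose left and right sides each contain $k-2$ pass-through squares in $U\cap V$ together with one distinguished circle, and whose two middle-vertex squares are each joined by a single edge to both circles. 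The crucial structural property is that $M_\gamma[(u,S),(u',S')]$ vanishes unless $S=S'$, in which case it equals $\sum_{\{i,j\}\subset[n]\setminus S,\,i\ne j} d_{u,i}d_{u,j}d_{u',i}d_{u',j}$.

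For fixed $S$ this defines an $m\times m$ matrix. Standard $\chi^2$ (Gaussian) or Hoeffding (boolean) concentration for $\sum_i d_{u,i}^2$ shows the diagonal entries ($u=u'$) concentrate at $\binom{n-|S|}{2}(1\pm o(1)) = \tfrac{n^2}{2}(1\pm o(1))$ w.h.p. The off-diagonal entries ($u\neq u'$), which equal $\tfrac12(\langle d_u,d_{u'}\rangle^2 - \sum_i d_{u,i}^2 d_{u',i}^2)$, are mean-zero with typical magnitude $\tilde O(n)$; applying matrix Bernstein (or Hanson--Wright to the Hermitian noise matrix) gives operator-norm bound $\tilde O(n\sqrt m)$ for the off-diagonal part. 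Thus the fixed-$S$ block of the leading graph matrix is $\tfrac{n^2}{2}\cdot I_m$ plus error of norm $\tilde O(n\sqrt m)$.

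It remains to bound the remaining graph-matrix terms produced by $L_k^T L_k$: (i) the proper collapses of the main gluing that identify a middle-vertex square with a boundary square (each picking up an extra $1/n$ factor), and (ii) all cross-products involving at least one of $M_{L_{k,2}},\dots,M_{L_{k,5}}$ (each carrying coefficient at most $2/n$ by \cref{def:lk}). For every such residual shape, \cref{lem:gaussian-norm-bounds} yields an operator-norm bound dominated by $\tilde O(n\sqrt m)$: the factor $n\sqrt m$ arises naturally from one free square vertex (contributing $\sqrt n$) times one free circle vertex (contributing $\sqrt m$) outside the minimum vertex separator, and the extra $1/n$ of coefficient decay or the improved separator absorbs any additional factors. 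Combined with the noise bound above and block-diagonality over $|S|$, this yields $\lambda_{\min}^{\neq 0}(QQ^T) = \tfrac{n^2}{2} - \tilde O(n\sqrt m)$.

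The main technical obstacle will be the finite case analysis of all residual shapes produced in the expansion of $L_k^T L_k$, each of which must be individually checked against the $\tilde O(n\sqrt m)$ threshold via \cref{lem:gaussian-norm-bounds}. Since we work in the regime $m \le n^{3/2-\eps}$ we have $n\sqrt m = o(n^2)$, providing headroom, but the threshold is delicate: any shape with a too-weak vertex separator and no compensating coefficient decay from \cref{def:lk} could push the error term above $\tfrac{n^2}{2}$ and break the argument. A second subtlety is applying matrix concentration cleanly to the noise matrix, whose entries are correlated through the shared $d_u$'s; writing the noise matrix as $D^T D - \diag(\cdots)$ with $D$ a suitable $n\times m$ random matrix recovers the $\tilde O(n\sqrt m)$ spectral bound via standard random-matrix estimates.
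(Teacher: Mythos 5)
Your proposal correctly identifies the block structure of $QQ^T$ over $|S|$ and the Gram-like structure of the dominant diagonal-in-$S$ gluing, and your heuristic for bounding residual shapes by $\tilde O(n\sqrt m)$ is reasonable in spirit. But there is a genuine gap: the claim that every graph-matrix term in $L_k^T L_k$ other than the block-diagonal piece is dominated by $\tilde O(n\sqrt m)$ is false for $k \geq 4$.

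The issue is that $M_{\ell_k^T}\cdot M_{\ell_k}$ contains several \emph{distinct gluings}, not just the one with $S=S'$ and its collapses. When the bijection between $V_{\ell_k^T}$ and $U_{\ell_k}$ identifies the two edge-bearing squares of $\ell_k^T$ with two \emph{pass-through} squares of $\ell_k$, one obtains a shape (called $\alpha_6$ in the paper's Lemma~\ref{QQTdecompositionlemma}) whose $U\setminus V$ and $V\setminus U$ each contain one circle and two degree-$2$ squares, with $W = \emptyset$. Its minimum vertex separator is the two circles together with $U\cap V$, leaving all four non-pass-through squares to contribute to the norm: $\norm{M_{\alpha_6}} = \tilde O(n^2)$, the \emph{same order} as the $\binom{n}{2}M_{Id}$ identity term. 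This shape is a gluing, not a collapse, so it carries no compensating $1/n$ coefficient, and it comes from $4\,M_{\ell_k^T}M_{\ell_k}$ at full strength. It cannot be absorbed into $\tilde O(n\sqrt m)$. Worse, its presence is unavoidable: for $k\ge 4$ the matrix $L_k$ has a nontrivial nullspace (arising from the symmetry $\pE[v^S(\ip{v}{d_i}^2-1)(\ip{v}{d_{i'}}^2-1)] = \pE[v^S(\ip{v}{d_{i'}}^2-1)(\ip{v}{d_i}^2-1)]$), and if all non-identity terms were $\tilde O(n\sqrt m)$, your argument would force $\lambda_{\min}(L_k^T L_k) \ge \frac{n^2}{2} - \tilde O(n\sqrt m) > 0$, contradicting the existence of this nullspace.

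The paper handles this by constructing an explicit $N_k$ whose columns lie in $\nullspace(L_k)$, showing that $N_kN_k^T$ contains the term $-4M_{\alpha_6}$ (plus PSD pieces and small error), and then bounding $\lambda_{\min}(L_k^TL_k + N_kN_k^T)$, which is valid because adding $N_kN_k^T$ does not change the restriction of $L_k^TL_k$ to $\nullspace(L_k)^\perp$. Your proposal misses both the large non-block-diagonal gluing and the nullspace cancellation, and as written would not produce the claimed bound for $k \ge 4$.
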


\begin{proof}[Proof of~\cref{lem:pseudoexpectation-rounding} assuming~\cref{lem:pseudoinverse-lower-bound}]
\begin{align*}
    \norm{\pE - \pE'} & = \norm{Q^\T (QQ^\T)^+ Q\pE} \leq \norm{Q} \cdot \norm{(QQ^\T)^+} \cdot\norm{Q\pE}\\
    & \leq n^{O(1)} \cdot \frac{1}{n^2} \cdot \frac{1}{n^{\Omega(n^\tau)}}\\
    & = \frac{1}{n^{\Omega(n^\tau)}}
\end{align*}
\end{proof}

% Before we prove the theorem, we describe the strategy. Interestingly, the high-level strategy is similar to the overall proof of PSD-ness. Using the expression for $Q$ in terms of graph matrices, we can express 
% $QQ^\T$ as a sum of graph matrices. One of these terms will be an identity matrix which has coefficient $2n^2$. It would be ideal if the 
% remaining terms had norms $\littleoh_n(n^2)$ w.h.p., but as $QQ^\T$ has a nontrivial null space, this is not the case.

% We explicitly give a basis for the null space of $QQ^\T$; it will be a collection of vectors $\{N_t\}_{t \in \calT}$. Then we appeal 
% to~\cref{fact:null-space}: for $x \perp \nullspace(QQ^\T)$, it is equivalent to consider the matrix $QQ^\T~+~\sum_{t \in \calT}~N_tN_t^\T$. 
% This will essentially cancel the non-identity terms with $\Omega(n^2)$ norm in $QQ^\T$, showing that $QQ^\T$ is dominated by an identity matrix with coefficient $\Theta(n^2)$ on $\nullspace(QQ^\T)$.

Recall that $Q = \sum_{k}{L^T_k}$. Let us refer to the five shapes in~\cref{def:lk} as $\alpha_1$ through $\alpha_5$, and their coefficients as $c_{\alpha_i}$. Observe that the dominant part of $L_k$ is $2M_{\alpha_1}$ which has norm $\tilde{O}(n)$. The norm bounds for the other components of $L_k$ are as follows:
\begin{enumerate}
\item $\norm{c_{\alpha_2}M_{\alpha_2}}$ is $\tilde{O}\left(\frac{1}{n} \cdot \sqrt{mn}\right) = \tilde{O}\left(\frac{\sqrt{m}}{\sqrt{n}}\right)$
\item $\norm{c_{\alpha_3}M_{\alpha_3}}$ is $\tilde{O}\left(\frac{1}{n^2} \cdot n\sqrt{m}\right) = \tilde{O}\left(\frac{\sqrt{m}}{n}\right)$
\item $\norm{c_{\alpha_4}M_{\alpha_4}}$ is $\tilde{O}\left(\frac{1}{n} \cdot \sqrt{mn}\right) = \tilde{O}\left(\frac{\sqrt{m}}{\sqrt{n}}\right)$
\item $\norm{c_{\alpha_5}M_{\alpha_5}}$ is $\tilde{O}\left(\frac{1}{n} \cdot \sqrt{m}\right) = \tilde{O}\left(\frac{\sqrt{m}}{n}\right)$
\end{enumerate}

We start by analyzing $QQ^T = \sum_{k}{L^T_k{L_k}}$.

From the above, taking $\alpha = \alpha_1$, the dominant term of $L_k$ is $2M_{\alpha}$ where $U_{\alpha} = \{j_1,\ldots,j_{k}\}$, $V_{\alpha} = \{j_3,\ldots,j_{k}\} \cup \{u\}$, and $E(\alpha) = \{(j_1,u),(j_2,u)\}$. Since $\norm{M_{\alpha}}$ is $\tilde{O}(n)$ and $\norm{L_k - 2M_{\alpha}}$ is $\tilde{O}(\frac{\sqrt{m}}{\sqrt{n}})$, this implies that for each $k$, $\norm{L_{k}^T{L_k} - 4M^T_{\alpha}M_{\alpha}}$ is $\tilde{O}(\sqrt{mn})$. Thus, it is sufficient to analyze $M^T_{\alpha}M_{\alpha}$.

\begin{lemma}\label{QQTdecompositionlemma}
Taking $\alpha$ to be the shape such that $U_{\alpha} \setminus V_{\alpha} = \{j_1,j_2\}$, $U_{\alpha} \cap V_{\alpha} = \{j_3,j_4,\ldots,j_{k}\}$, $V_{\alpha} \setminus U_{\alpha} = \{i_{circ}\}$, and $E(\alpha) = \{(j_1,i),(j_2,i)\}$,
\[
M^{T}_{\alpha}M_{\alpha} = M_{\alpha_1} + M_{\alpha_2} + M_{\alpha_3} + M_{\alpha_4} + M_{\alpha_5} + M_{\alpha_6}
\] where $\alpha_1,\alpha_2,\alpha_3,\alpha_4,\alpha_5,\alpha_6$ are the following shapes. Note that $\alpha_1$ and $\alpha_2$ are improper shapes.
\begin{enumerate}
\item $U_{\alpha_1} \setminus V_{\alpha_1} = \emptyset$, $U_{\alpha_1} \cap V_{\alpha_1} = \{j_3,j_4,\ldots,j_{k}\} \cup \{i_{circ}\}$, $V_{\alpha_1} \setminus U_{\alpha_1} = \emptyset$, $V(\alpha_1) \setminus (U_{\alpha_1} \cup V_{\alpha_1}) = \{j_1,j_2\}$, and $E(\alpha_1) = \{(i,j_1),(i,j_1),(i,j_2),(i,j_2)\}$.
\item $U_{\alpha_2} \setminus V_{\alpha_2} = \{j_1\}$, $U_{\alpha_2} \cap V_{\alpha_2} = \{j_4,\ldots,j_{k}\} \cup \{i_{circ}\}$, $V_{\alpha_2} \setminus U_{\alpha_2} = \{j'_1\}$, $V(\alpha_2) \setminus (U_{\alpha_2} \cup V_{\alpha_2}) = \{j_2\}$, and $E(\alpha_2) = \{(i,j_1),(i,j'_1),(i,j_2),(i,j_2)\}$.
\item $U_{\alpha_3} \setminus V_{\alpha_3} = \{j_1,j_2\}$, $U_{\alpha_3} \cap V_{\alpha_3} = \{j_5,\ldots,j_{k}\} \cup \{i_{circ}\}$, $V_{\alpha_3} \setminus U_{\alpha_3} = \{j'_1,j'_2\}$, and $E(\alpha_3) = \{(i,j_1),(i,j_2),(i,j'_1),(i,j'_2)\}$.
\item $U_{\alpha_4} \setminus V_{\alpha_4} = \{i_{circ}\}$, $U_{\alpha_4} \cap V_{\alpha_4} = \{j_3,j_4,\ldots,j_{k}\}$, $V_{\alpha_4} \setminus U_{\alpha_4} = \{i'_{circ}\}$, $V(\alpha_4) \setminus (U_{\alpha_4} \cup V_{\alpha_4}) = \{j_1,j_2\}$, and $E(\alpha_4) = \{(i,j'_1),(i,j'_2),(i',j_1),(i',j_2)\}$.
\item $U_{\alpha_5} \setminus V_{\alpha_5} = \{j_1\} \cup \{i_{circ}\}$, $U_{\alpha_5} \cap V_{\alpha_5} = \{j_4,\ldots,j_{k}\}$, $V_{\alpha_5} \setminus U_{\alpha_5} = \{j'_1\} \cup \{i'_{circ}\}$, $V(\alpha_5) \setminus (U_{\alpha_5} \cup V_{\alpha_5}) = \{j_2\}$, and $E(\alpha_5) = \{(i,j'_1),(i,j_2),(i',j_1),(i',j_2)\}$.
\item $U_{\alpha_6} \setminus V_{\alpha_6} = \{j_1,j_2\} \cup \{i_{circ}\}$, $U_{\alpha_6} \cap V_{\alpha_6} = \{j_5,\ldots,j_{k}\}$, $V_{\alpha_6} \setminus U_{\alpha_6} = \{j'_1,j'_2\}  \cup \{i'_{circ}\}$, and $E(\alpha_6) = \{(i',j_1),(i',j_2),(i,j'_1),(i,j'_2)\}$.
\end{enumerate}
\end{lemma}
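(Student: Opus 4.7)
The plan is to compute $M_\alpha^T M_\alpha$ directly by expanding as a sum over pairs of ribbons and classifying them into six cases matching the six shapes on the right-hand side. Concretely, I will write
\[
M_\alpha^T M_\alpha = \sum_{R, R'} M_R^T M_{R'},
\]
where $R, R'$ range over ribbons of shape $\alpha$. A single product $M_R^T M_{R'}$ is nonzero iff $A_R = A_{R'} = Z$ for some common set of $k$ squares, in which case it contributes the single entry $d_{i,j_1} d_{i,j_2} \cdot d_{i',j_1'} d_{i',j_2'}$ at position $(B_R, B_{R'})$. Here $\{j_1, j_2\} \subseteq Z$ and $i \in [m]$ parameterize $R$ (with $B_R = (Z \setminus \{j_1, j_2\}) \cup \{i\}$), and likewise $\{j_1', j_2'\}, i'$ parameterize $R'$. (This is essentially the content of Proposition~\ref{prop:graph-matrix-multiplication}, except that its hypothesis ``$V(\alpha) \setminus V_\alpha$ has only square vertices'' fails here because of the circle $i_{circ} \in V_\alpha$; I will instead perform the bookkeeping by hand.)

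Second, I will partition the pairs $(R, R')$ into six mutually exclusive cases determined by (a) whether $i = i'$ or $i \neq i'$, and (b) the size $|\{j_1, j_2\} \cap \{j_1', j_2'\}| \in \{0,1,2\}$. The six cases correspond to $\alpha_1, \ldots, \alpha_6$ in the order listed: the three $i = i'$ cases (with two, one, or zero shared middle squares) produce the single-circle shapes $\alpha_1, \alpha_2, \alpha_3$, of which the first two are improper because the repeated circle together with repeated middle squares creates parallel edges; the three $i \neq i'$ cases produce the two-circle shapes $\alpha_4, \alpha_5, \alpha_6$. In each case I will verify by direct enumeration that the map from pairs $(R, R')$ to ribbons of $\alpha_i$ is a bijection onto the set of ribbons of $\alpha_i$, so the sum contributes exactly $M_{\alpha_i}$ with coefficient $1$. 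For instance, in the $\alpha_2$ case, given a ribbon $T$ of $\alpha_2$ the values of $R$ and $R'$ are recovered uniquely: $Z$ is the union of the squares in $U_T \cup V_T$ together with the image of the middle vertex $j_2$; the middle pair of $R$ is $\{\text{image of } j_1', \text{image of } j_2\}$ and the middle pair of $R'$ is $\{\text{image of } j_1, \text{image of } j_2\}$; the Fourier character $d_{i,j_1'} d_{i,j_2} \cdot d_{i,j_1} d_{i, j_2} = d_{i,j_1} d_{i,j_1'} d_{i,j_2}^2$ matches the edge multiset of $\alpha_2$.

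\textbf{Main obstacle.} The principal difficulty is multiplicity bookkeeping: ensuring exactly coefficient $1$ on each $M_{\alpha_i}$. This is most delicate for the improper shapes $\alpha_1, \alpha_2$, whose parallel edges generate extra automorphisms, and for the symmetric shapes $\alpha_3, \alpha_4, \alpha_6$ where swapping symmetric pairs of middle squares or the two circles could double-count. The remedy is to invert the assignment $(R, R') \mapsto T$ case by case, using the fact that an ordered pair $(R, R')$ and its swap $(R', R)$ contribute to transposed entries $(B_R, B_{R'})$ and $(B_{R'}, B_R)$; combined with the observation that each $\alpha_i$ is either symmetric under transpose or has $U \setminus V$ and $V \setminus U$ interchangeable, this shows each ribbon of $\alpha_i$ is hit by a unique pair $(R,R')$. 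Once this case-by-case bijection is established the decomposition follows immediately, as the six cases are exhaustive and disjoint.
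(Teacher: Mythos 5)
Your proof is correct and follows essentially the same route as the paper: both expand $M_\alpha^T M_\alpha$ as a sum over pairs of ribbons, partition the pairs into six classes (your $(i=i'$ or not$) \times |\{j_1,j_2\}\cap\{j_1',j_2'\}| \in \{0,1,2\}$ is exactly the paper's enumeration, just stated up front rather than shape-by-shape), and verify that the map $(R,R')\mapsto T$ is a bijection onto the ribbons of each $\alpha_i$, giving coefficient $1$ throughout. The only organizational difference is that the paper goes shape-by-shape and exhibits the unique factorization $T = R_1\circ R_2$ for each, whereas you classify pairs first; the underlying bookkeeping is identical.
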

For pictures of these shapes, see~\cref{fig:QQTdominantfigure} below.
\begin{figure}[ht]
\centerline{\includegraphics[height=10cm]{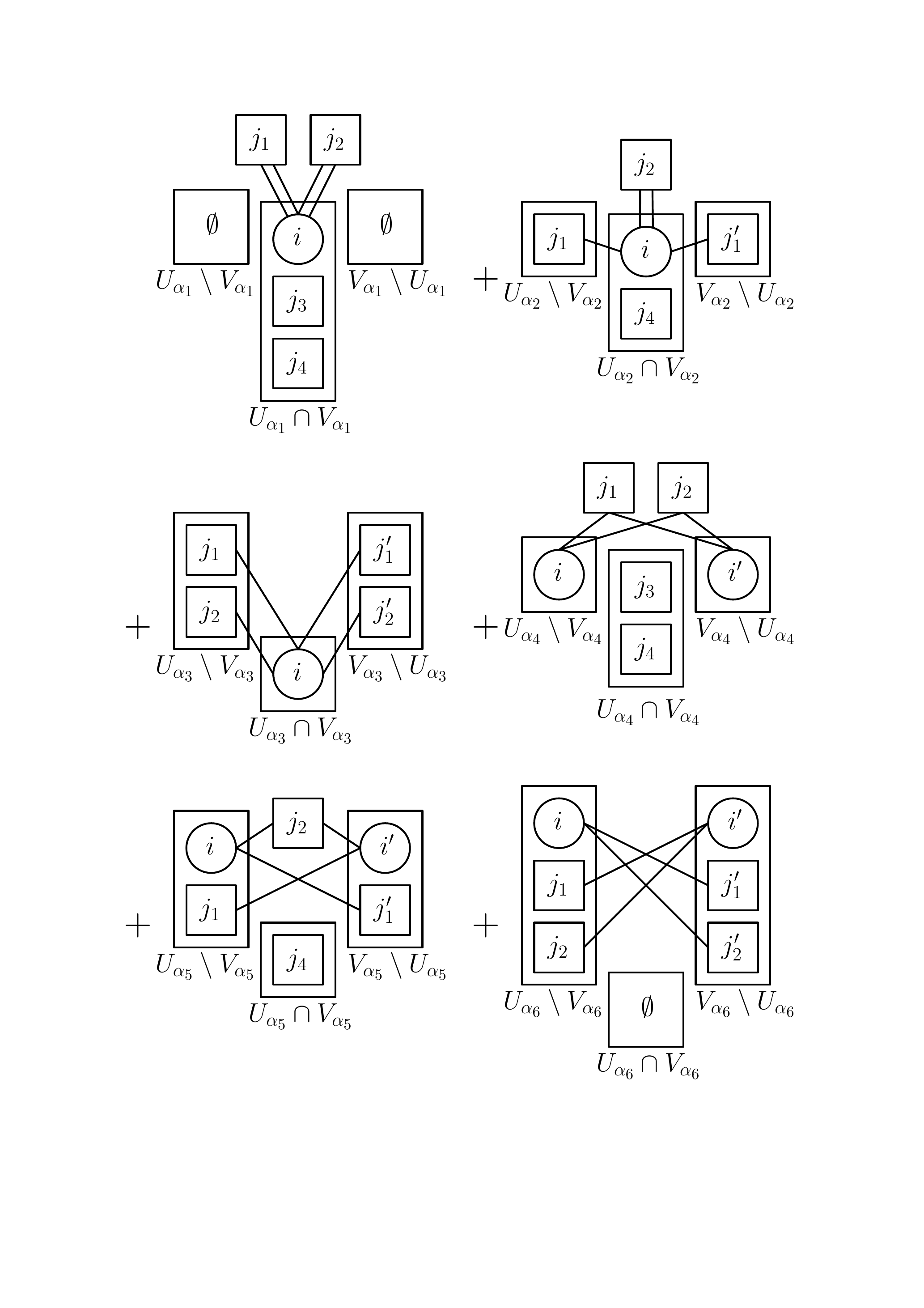}}
\caption{This figure shows the decomposition of $M^T_{\alpha}M_{\alpha}$.}
\label{fig:QQTdominantfigure}
\end{figure}
\begin{remark}
For $k = 2$, only shapes $\alpha_1$ and $\alpha_4$ are present and for $k = 3$, only shapes $\alpha_1,\alpha_2,\alpha_4,\alpha_5$ are present.
\end{remark}
\begin{proof}[Proof of Lemma \ref{QQTdecompositionlemma}]
We compute $M_{\alpha^T}M_{\alpha}$ by considering the ribbons which appear in $M_{\alpha^T}M_{\alpha}$.
\begin{enumerate}
\item Each ribbon $R$ with $A_R \setminus B_R = \emptyset$, $A_R \cap B_R = \{j_3,j_4,\ldots,j_{k}\} \cup \{i_{circ}\}$, $B_R \setminus A_R = \emptyset$, $V(R) \setminus (A_R \cup B_R) = \{j_1,j_2\}$, and $E(R) = \{(i,j_1),(i,j_1),(i,j_2),(i,j_2)\}$ appears in exactly one way as the composition of the ribbons $R_1$ and $R_2$ where 
\begin{enumerate}
\item $A_{R_1} \setminus B_{R_1} = \{i_{circ}\}$, $A_{R_1} \cap B_{R_1} = \{j_3,j_4,\ldots,j_{k}\}$, $B_{R_1} \setminus A_{R_1} = \{j_1,j_2\}$, and $E(R_1) = \{(i,j_1),(i,j_2)\}$.
\item $A_{R_2} \setminus B_{R_2} = \{j_1,j_2\}$, $A_{R_2} \cap B_{R_2} = \{j_3,j_4,\ldots,j_{k}\}$, $B_{R_2} \setminus A_{R_2} = \{i_{circ}\}$, and $E(R_2) = \{(i,j_1),(i,j_2)\}$.
\end{enumerate}
\item Each ribbon $R$ with $A_R \setminus B_R = \{j_1\}$, $A_R \cap B_R = \{j_4,\ldots,j_{k}\} \cup \{i_{circ}\}$, $B_R \setminus A_R = \{j'_1\}$, $V(R) \setminus (A_R \cup B_R) = \{j_2\}$, and $E(R) = \{(i,j_1),(i,j'_1),(i,j_2),(i,j_2)\}$ appears in exactly one way as the composition of the ribbons $R_1$ and $R_2$ where 
\begin{enumerate}
\item $A_{R_1} \setminus B_{R_1} = \{i_{circ}\}$, $A_{R_1} \cap B_{R_1} = \{j_1,j_4,\ldots,j_{k}\}$, $B_{R_1} \setminus A_{R_1} = \{j'_1,j_2\}$, and $E(R_1) = \{(i,j'_1),(i,j_2)\}$.
\item $A_{R_2} \setminus B_{R_2} = \{j_1,j_2\}$, $A_{R_2} \cap B_{R_2} = \{j'_1,j_4,\ldots,j_{k}\}$, $B_{R_2} \setminus A_{R_2} = \{i_{circ}\}$, and $E(R_2) = \{(i,j_1),(i,j_2)\}$.
\end{enumerate}
\item Each ribbon $R$ with $A_R \setminus B_R = \{j_1,j_2\}$, $A_R \cap B_R = \{j_5,\ldots,j_{k}\} \cup \{i_{circ}\}$, $B_R \setminus A_R = \{j'_1,j'_2\}$, $V(R) \setminus (A_R \cup B_R) = \emptyset$, and $E(R) = \{(i,j_1),(i,j'_1),(i,j_2),(i,j'_2)\}$ appears in exactly one way as the composition of the ribbons $R_1$ and $R_2$ where 
\begin{enumerate}
\item $A_{R_1} \setminus B_{R_1} = \{i_{circ}\}$, $A_{R_1} \cap B_{R_1} = \{j_1,j_2,j_5,\ldots,j_{k}\}$, $B_{R_1} \setminus A_{R_1} = \{j'_1,j'_2\}$, and $E(R_1) = \{(i,j'_1),(i,j'_2)\}$.
\item $A_{R_2} \setminus B_{R_2} = \{j_1,j_2\}$, $A_{R_2} \cap B_{R_2} = \{j'_1,j'_2,j_5,\ldots,j_{k}\}$, $B_{R_2} \setminus A_{R_2} = \{i_{circ}\}$, and $E(R_2) = \{(i,j_1),(i,j_2)\}$.
\end{enumerate}
\item Each ribbon $R$ with $A_R \setminus B_R = \{j_1,j_2\} \cup  \{i_{circ}\}$, $A_R \cap B_R = \{j_5,\ldots,j_{k}\}$, $B_R \setminus A_R = \{j'_1,j'_2\}  \cup  \{i'_{circ}\}$, $V(R) \setminus (A_R \cup B_R) = \emptyset$, and $E(R) = \{(i,j'_1),(i,j'_2),(i',j_1),(i',j_2)\}$ appears in exactly one way as the composition of the ribbons $R_1$ and $R_2$ where 
\begin{enumerate}
\item $A_{R_1} \setminus B_{R_1} = \{i_{circ}\}$, $A_{R_1} \cap B_{R_1} = \{j_1,j_2,j_5,\ldots,j_{k}\}$, $B_{R_1} \setminus A_{R_1} = \{j'_1,j'_2\}$, and $E(R_1) = \{(i,j'_1),(i,j'_2)\}$.
\item $A_{R_2} \setminus B_{R_2} = \{j_1,j_2\}$, $A_{R_2} \cap B_{R_2} = \{j'_1,j'_2,j_5,\ldots,j_{k}\}$, $B_{R_2} \setminus A_{R_2} = \{i'_{circ}\}$, and $E(R_2) = \{(i,j_1),(i,j_2)\}$.
\end{enumerate}
\item Each ribbon $R$ with $A_R \setminus B_R = \{i_{circ}\}$, $A_R \cap B_R = \{j_3,j_4,\ldots,j_{k}\}$, $B_R \setminus A_R = \{i'_{circ}\}$, $V(R) \setminus (A_R \cup B_R) = \{j_1,j_2\}$, and $E(R) = \{(i,j_1),(i,j_2),(i',j_1),(i',j_2)\}$ appears in exactly one way as the composition of the ribbons $R_1$ and $R_2$ where 
\begin{enumerate}
\item $A_{R_1} \setminus B_{R_1} = \{i_{circ}\}$, $A_{R_1} \cap B_{R_1} = \{j_3,j_4,\ldots,j_{k}\}$, $B_{R_1} \setminus A_{R_1} = \{j_1,j_2\}$, and $E(R_1) = \{(i,j_1),(i,j_2)\}$.
\item $A_{R_2} \setminus B_{R_2} = \{j_1,j_2\}$, $A_{R_2} \cap B_{R_2} = \{j_3,j_4,\ldots,j_{k}\}$, $B_{R_2} \setminus A_{R_2} = \{i'_{circ}\}$, and $E(R_2) = \{(i',j_1),(i',j_2)\}$.
\end{enumerate}
\item Each ribbon $R$ with $A_R \setminus B_R = \{j_1\}$, $A_R \cap B_R = \{j_4,\ldots,j_{k}\} \cup \{i_{circ}\}$, $B_R \setminus A_R = \{j'_1\}$, $V(R) \setminus (A_R \cup B_R) = \{j_2\}$, and $E(R) = \{(i,j_1),(i,j'_1),(i,j_2),(i,j_2)\}$ appears in exactly one way as the composition of the ribbons $R_1$ and $R_2$ where 
\begin{enumerate}
\item $A_{R_1} \setminus B_{R_1} = \{i_{circ}\}$, $A_{R_1} \cap B_{R_1} = \{j_1,j_4,\ldots,j_{k}\}$, $B_{R_1} \setminus A_{R_1} = \{j'_1,j_2\}$, and $E(R_1) = \{(i,j'_1),(i,j_2)\}$.
\item $A_{R_2} \setminus B_{R_2} = \{j_1,j_2\}$, $A_{R_2} \cap B_{R_2} = \{j'_1,j_4,\ldots,j_{k}\}$, $B_{R_2} \setminus A_{R_2} = \{i_{circ}\}$, and $E(R_2) = \{(i,j_1),(i,j_2)\}$.
\end{enumerate}
\end{enumerate}
Based on these cases, we have that $M_{\alpha^T}M_{\alpha} = M_{\alpha_1} + M_{\alpha_2} + M_{\alpha_3} + M_{\alpha_4} + M_{\alpha_5} + M_{\alpha_6}$.
\end{proof}
We now analyze each of the matrices $M_{\alpha_1}, M_{\alpha_2}, M_{\alpha_3}, M_{\alpha_4}, M_{\alpha_5}, M_{\alpha_6}$.
\begin{lemma}
Taking $Id_{k-2,1}$ to be the shape where $U_{Id_{k-2,1}} \setminus V_{Id_{k-2,1}} = \emptyset$, $U_{Id_{k-2,1}} \cap V_{Id_{k-2,1}} = \{j_3,j_4,\ldots,j_{k}\} \cup \{i_{circ}\}$, $V_{Id_{k-2,1}} \setminus U_{Id_{k-2,1}} = \emptyset$, and $E(Id_{k-2,1}) = \emptyset$, $\norm{M_{\alpha_1} - \binom{n-k+2}{2}M_{Id_{k-2,1}}}$ is $\tilde{O}\left(n^{\frac{3}{2}}\right)$.
\end{lemma}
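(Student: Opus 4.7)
The plan is to expand the improper graph matrix $M_{\alpha_1}$ (with its two parallel label-1 edges on each of $j_1$ and $j_2$) into proper graph matrices using the Hermite identity $h_1(x)^2 = h_2(x) + 1$, and then apply the norm bounds to the resulting pieces.

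First, I would analyze the entries of $M_{\alpha_1}$ explicitly. Since $A_R = B_R$ for every ribbon of shape $\alpha_1$, the matrix $M_{\alpha_1}$ is supported on the diagonal indexed by sets $I$ of the form $\{j_3,\dots,j_k\} \cup \{i_{circ}\}$, and summing over the possible placements of the middle vertices $j_1, j_2$ gives
\[ (M_{\alpha_1})_{I,I} = \sum_{\{j_1,j_2\} \subseteq [n]\setminus I_{sq}} d_{uj_1}^2 d_{uj_2}^2, \]
where $u$ is the circle in $I$ and $I_{sq}$ are the squares in $I$. Using $d_{uj}^2 = h_2(d_{uj}) + 1$, I would expand the product and regroup:
\[ (M_{\alpha_1})_{I,I} = \binom{n-k+2}{2} + (n-k+1)\sum_{j \in [n]\setminus I_{sq}} h_2(d_{uj}) + \sum_{\{j_1,j_2\} \subseteq [n]\setminus I_{sq}} h_2(d_{uj_1})h_2(d_{uj_2}). \]

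Second, I would read off the corresponding proper graph matrix decomposition. The constant term contributes exactly $\binom{n-k+2}{2} M_{Id_{k-2,1}}$. The linear term in $h_2$ equals $(n-k+1) M_{\beta_1}$, where $\beta_1$ has the same $U\cap V$ as $\alpha_1$, one middle square vertex, and a single edge of label $2$ to the circle $i_{circ} \in U\cap V$. The quadratic term equals $M_{\beta_2}$, where $\beta_2$ has the same $U\cap V$, two middle squares, and a label-$2$ edge from each to $i_{circ}$. Therefore
\[ M_{\alpha_1} - \tbinom{n-k+2}{2} M_{Id_{k-2,1}} = (n-k+1) M_{\beta_1} + M_{\beta_2}. \]

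Finally, I would bound each piece using the graph matrix norm bounds from \cref{app:norm_bounds}. In both $\beta_1$ and $\beta_2$, the circle $i_{circ}$ lies in $U \cap V$ and the only middle vertices are squares attached by a single edge to it; hence $S_{\min} = U\cap V$ and $w(W_{iso}) = 0$. For $\beta_1$ there is one extra square, so $w(V(\beta_1)) - w(S_{\min}) = 1$ and $\|M_{\beta_1}\| \le \tilde{O}(\sqrt{n})$, giving $(n-k+1)\|M_{\beta_1}\| \le \tilde{O}(n^{3/2})$. For $\beta_2$ there are two extra squares, so $w(V(\beta_2)) - w(S_{\min}) = 2$ and $\|M_{\beta_2}\| \le \tilde{O}(n)$. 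Adding the two contributions gives the desired bound $\tilde{O}(n^{3/2})$. The step I expect to require the most care is the bookkeeping in the Fourier expansion of the improper shape---making sure the symmetry factor $\binom{n-k+2}{2}$ emerges correctly and that the multiplicities $(n-k+1)$ and $1$ on $M_{\beta_1}$, $M_{\beta_2}$ are right---but none of it is genuinely hard.
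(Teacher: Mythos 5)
Your proposal is correct and follows essentially the same approach as the paper: expand the improper shape $\alpha_1$ into proper ribbons via $h_1^2 = h_2 + 1$, observe that the trivial ribbons accumulate with multiplicity $\binom{n-k+2}{2}$, and bound the remaining pieces (which necessarily carry a label-$2$ edge) using the graph matrix norm bounds. You simply make the graph matrix decomposition fully explicit — identifying the residual as $(n-k+1)M_{\beta_1} + M_{\beta_2}$ with one and two label-$2$ middle squares respectively, and computing $w(V) - w(S_{\min})$ equals $1$ and $2$ for these — whereas the paper's one-line proof only asserts that the non-constant pieces all have a label-$2$ edge and hence norm $\tilde O(n^{3/2})$.
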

\begin{proof}
To convert an improper shape $\alpha_1$ to a sum of proper shapes, we take each ribbon of shape $\alpha_1$ and decompose it into a sum of proper ribbons. Decomposing $M_{\alpha_1}$ in this way, each ribbon $R$ with $A_R \setminus B_R = \emptyset$, $A_R \cap B_R = \{j_3,j_4,\ldots,j_{k}\} \cup \{i_{circ}\}$, $B_R \setminus A_R = \emptyset$, and $E(R) = \emptyset$ appears $\binom{n-k+2}{2}$ times, once for each pair $j_1,j_2$ such that $j_1 < j_2$ and $j_1,j_2 \notin \{j_3,\ldots,j_{k}\}$. The other ribbons which arise all have an edge with label $2$ incident with $j_1$ or $j_2$ and thus the resulting terms have norm $\tilde{O}\left(n^{\frac{3}{2}}\right)$.
\end{proof}
\begin{definition}
Define $\beta_2$ to be the shape such that $U_{\beta_2} \setminus V_{\beta_2} = \{j_1\}$, $U_{\beta_2} \cap V_{\beta_2} = \{j_4,\ldots,j_{k}\} \cup \{i_{circ}\}$, $V_{\beta_2} \setminus U_{\beta_2} = \emptyset$, and $E(\beta_1) = \{(i,j_1)\}$.
\end{definition}
\begin{lemma}
$\norm{M_{\alpha_2} - (n-k+1)M_{\beta_2}M^T_{\beta_2}}$ is $\tilde{O}\left(n^{\frac{3}{2}}\right)$.
\end{lemma}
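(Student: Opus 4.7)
The plan is to mimic the approach used for $M_{\alpha_1}$ in the preceding lemma, exploiting the fact that $\alpha_2$ is improper (it has parallel edges between $i$ and $j_2$, both with label $1$) and that $M_{\beta_2} M_{\beta_2}^T$, by~\cref{prop:graph-matrix-multiplication}, expands into a main gluing term plus a lower-order intersection term.

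First I would decompose $M_{\alpha_2}$ into proper shapes by expanding the doubled edge via the Hermite identity $h_1(d_{i,j_2})^2 = h_2(d_{i,j_2}) + 1$. For each proper ribbon of the underlying shape with fixed $j_1, j_1', j_4, \ldots, j_k, i$, the ``$+1$'' piece contributes a ribbon that no longer involves $j_2$, and summing over the $n-k+1$ valid choices of $j_2$ yields exactly $(n-k+1)M_\gamma$, where $\gamma$ is the proper shape obtained from $\alpha_2$ by deleting $j_2$ (so $U_\gamma = \{j_1\} \cup \{j_4,\ldots,j_k\} \cup \{i_{circ}\}$, $V_\gamma = \{j_1'\} \cup \{j_4,\ldots,j_k\} \cup \{i_{circ}\}$, with edges $(i,j_1)$ and $(i,j_1')$). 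The ``$h_2$'' piece contributes a single proper shape $\alpha_2^{(2)}$ with a label-$2$ edge $(i,j_2)_2$, whose norm I would bound via~\cref{lem:gaussian-norm-bounds}: the minimum vertex separator is $\{j_4,\ldots,j_k\} \cup \{i_{circ}\}$, so $n^{(w(V(\alpha_2^{(2)})) - w(S_{\min}))/2} = \tilde O(n)$, which is well within $\tilde O(n^{3/2})$.

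Next I would compute $M_{\beta_2} M_{\beta_2}^T$ via~\cref{prop:graph-matrix-multiplication}. The unique main gluing identifies $V_{\beta_2}$ with $U_{\beta_2^T}$ and reproduces precisely $M_\gamma$. The only available intersection term collapses the ``left leaf'' $j_1 \in V(\beta_2) \setminus V_{\beta_2}$ with the ``right leaf'' $j_1' \in V(\beta_2^T) \setminus U_{\beta_2^T}$, producing an improper shape with $U = V = \{j_1\} \cup \{j_4,\ldots,j_k\} \cup \{i_{circ}\}$ and a doubled edge between $j_1$ and $i$. Expanding this doubled edge with $h_1^2 = h_2 + 1$ yields a trivial-style identity piece (coefficient $1$) and a proper shape carrying a single label-$2$ edge on $(i,j_1)$; since both have $U=V$ their norms are each $\tilde O(1)$ by the separator computation.

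Combining these two expansions, the $(n-k+1)M_\gamma$ contributions cancel exactly, leaving $M_{\alpha_2} - (n-k+1)M_{\beta_2}M_{\beta_2}^T$ equal to $M_{\alpha_2^{(2)}}$ minus $(n-k+1)$ times a sum of proper shapes each of norm $\tilde O(1)$. The triangle inequality then gives a total bound of $\tilde O(n) + (n-k+1) \cdot \tilde O(1) = \tilde O(n) \le \tilde O(n^{3/2})$ with high probability. The main bookkeeping hurdle is verifying that no other intersection terms appear in $M_{\beta_2}M_{\beta_2}^T$ and that the coefficients from~\cref{prop:graph-matrix-multiplication} and~\cref{prop:expand-improper} indeed match the ones in the expansion of $M_{\alpha_2}$; once those coefficients are checked, the norm bounds above complete the proof.
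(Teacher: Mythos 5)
Your proposal takes essentially the same route as the paper: expand the doubled edge $(i,j_2)$ in $\alpha_2$ to isolate a main term $(n-k+1)M_{\alpha_2'}$ (your $M_\gamma$) plus an $h_2$-error term, and then compare $M_{\alpha_2'}$ against $M_{\beta_2}M_{\beta_2}^\T$ noting that the only intersection term is the collapse of $j_1$ with $j_1'$, which has $U=V$ and hence norm $\tilde O(1)$. The one slip is the norm estimate for the $h_2$-piece $\alpha_2^{(2)}$: with $S_{\min} = \{j_4,\dots,j_k\}\cup\{i_{circ}\}$ there are \emph{three} squares outside the separator, namely $j_1$, $j_1'$, and the middle vertex $j_2$ that still carries the label-$2$ edge, so $n^{(w(V(\alpha_2^{(2)}))-w(S_{\min}))/2} = n^{3/2}$, not $n$. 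This agrees with the paper's $\tilde O(n^{3/2})$ for that term and still lands inside the target $\tilde O(n^{3/2})$, so the conclusion is unaffected, but the intermediate arithmetic should be corrected.
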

\begin{proof}
Again, to convert an improper shape $\alpha_2$ to a sum of proper shapes, we take each ribbon of shape $\alpha_2$ and decompose it into a sum of proper ribbons. Decomposing $M_{\alpha_2}$ in this way, each ribbon $R$ with $A_R \setminus B_R = \{j_1\}$, $A_R \cap B_R = \{j_4,\ldots,j_{k}\} \cup \{i_{circ}\}$, $B_R \setminus A_R = \emptyset$, and $E(R) = \{(i,j_1),(i,j'_1)\}$ appears $(n-k+1)$ times, once for each $j_2 \in [n] \setminus \{j_1,j'_1,j_4,\ldots,j_k\}$. The other ribbons which arise have an edge with label $2$ incident with $j_2$ and thus the resulting terms have norm $\tilde{O}\left(n^{\frac{3}{2}}\right)$. This implies that if we take $\alpha'_2$ to be the shape where $U_{\alpha_2} \setminus V_{\alpha_2} = \{j_1\}$, $U_{\alpha_2} \cap V_{\alpha_2} = \{j_4,\ldots,j_{k}\} \cup \{i_{circ}\}$, $V_{\alpha_2} \setminus U_{\alpha_2} = \{j'_1\}$, and $E(\alpha_2) = \{(i,j_1),(i,j'_1)\}$ then $\norm{M_{\alpha_2} - (n-k+1)M_{\alpha'_2}}$ is $\tilde{O}\left(n^{\frac{3}{2}}\right)$.

$\norm{M_{\alpha'_2}}$ is $\tilde{O}(n)$, so this term cannot be ignored. To handle this, we observe that $M_{\alpha'_2}$ is approximately equal to a PSD matrix. More precisely, $\norm{M_{\alpha'_2} - M_{\beta_2}M^T_{\beta_2}}$ is $\tilde{O}(1)$. To see this, note that when we expand out $M_{\beta_2}M^T_{\beta_2}$, the ribbons which result when there are no collisions give $M_{\alpha'_2}$ and for each ribbon $R$ which results from a collision, $A_R \setminus B_R = B_R \setminus A_R = \emptyset$ so the resulting terms have norm $\tilde{O}(1)$. 
\end{proof}
\begin{figure}[ht]\label{BetaTwoFigure}
\centerline{\includegraphics[height=3cm]{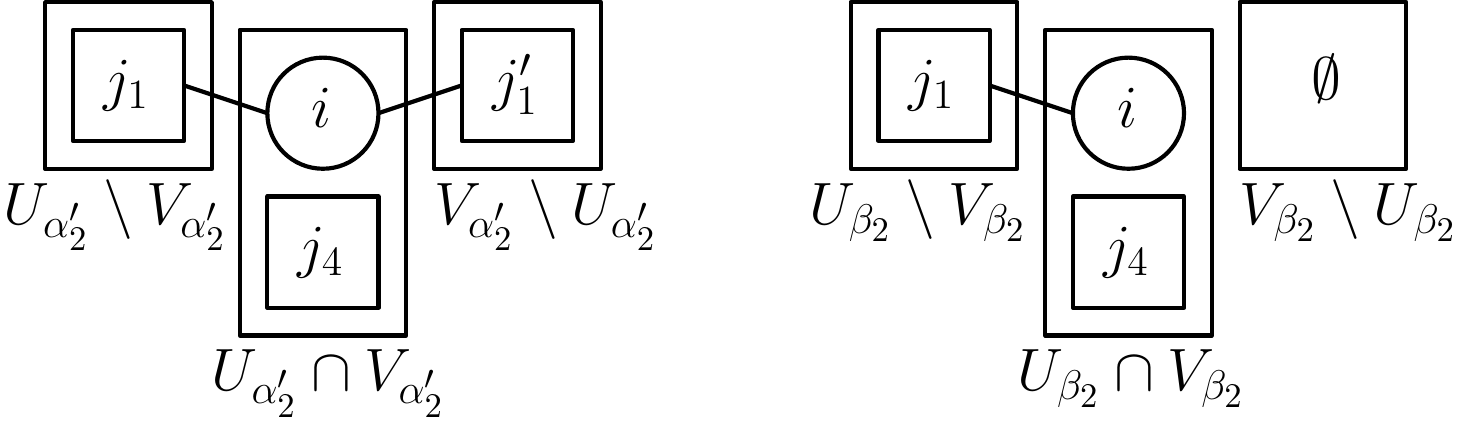}}
\caption{This figure shows $\alpha'_2$ and $\beta_2$ for $k = 4$.}
\end{figure}
$\norm{M_{\alpha_3}}$ is $\tilde{O}(n^2)$, so this term cannot be ignored. To handle this, we observe that $M_{\alpha_3}$ is approximately equal to a PSD matrix. More precisely, we have the following lemma.
\begin{definition}
Define $\beta_3$ to be the shape such that $U_{\beta_3} \setminus V_{\beta_3} = \{j_1,j_2\}$, $U_{\beta_3} \cap V_{\beta_3} = \{j_5,\ldots,j_{k}\} \cup \{i_{circ}\}$, $V_{\beta_3} \setminus U_{\beta_3} = \emptyset$, and $E(\beta_3) = \{(i,j_1),(i,j_2)\}$.
\end{definition}
\begin{lemma}
$\norm{M_{\alpha_3} - M_{\beta_3}M^T_{\beta_3}}$ is $\tilde{O}(n)$.
\end{lemma}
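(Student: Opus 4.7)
\begin{proofsketch}
The plan is to apply Proposition~\ref{prop:graph-matrix-multiplication} with the composable shapes $\beta_3$ and $\beta_3^\T$. Both are composable since $V_{\beta_3} = U_{\beta_3^\T} = \{j_5,\ldots,j_k, i_{circ}\}$, and $V(\beta_3)\setminus V_{\beta_3} = \{j_1,j_2\}$ consists only of square vertices, so the proposition applies. Its output has the form
\[
M_{\beta_3}\, M_{\beta_3}^\T \;=\; M_{\alpha_3} \;+\; \sum_{\gamma \in \widetilde{\calI}} c_\gamma\, M_\gamma,
\]
where the sole gluing of $\beta_3$ with $\beta_3^\T$ (no identification of squares across the two sides) is precisely $\alpha_3$, and $\widetilde{\calI}$ ranges over improper shapes formed by identifying a nonempty subset of $\{j_1,j_2\}$ with distinct squares in $\{j_1',j_2'\}$.

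Next I would enumerate the constantly-many improper intersection shapes: four single identifications (one of $j_1,j_2$ merged with one of $j_1',j_2'$) and two double identifications ($\{j_1=j_1',\, j_2=j_2'\}$ and $\{j_1=j_2',\, j_2=j_1'\}$). Each identification merges a pair of squares lying at a single circle $i_{circ}$ and produces a pair of parallel label-$1$ edges there; for double identifications, two such pairs appear. Using Proposition~\ref{prop:expand-improper} I would expand each improper shape as a linear combination of proper shapes. The Fourier-coefficient blow-up $C_{Fourier}$ is $O(1)$ since we only encounter $h_1(z)^2$ (Proposition~\ref{prop:hermite-product-coefficients}), and the automorphism ratio $C_{Aut}$ is polynomial in $k$ by Proposition~\ref{prop:automorphism-ratio}. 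The coefficient bound of Proposition~\ref{prop:graph-matrix-multiplication} also gives $|c_\gamma|$ at most polynomial in $k$.

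The key calculation, and the place where the bound $\tilde O(n)$ comes from, is the norm of each proper intersection shape via Lemma~\ref{lem:gaussian-norm-bounds}. Every identification forces a square from $\{j_1,j_2\}\cup\{j_1',j_2'\}$ into $U\cap V$, hence into $S_{\min}$. For a single identification, only two squares remain outside $S_{\min}$, so $n^{(w(V)-w(S_{\min}))/2} = \tilde O(n)$; for a double identification, no squares remain outside $S_{\min}$ and $i_{circ}\in U\cap V$ is not in $W$, so the bound is $\tilde O(1)$ (even in the sub-case where every parallel pair expands to label $0$ and $i_{circ}$ becomes isolated in the underlying graph, it does not contribute to $W_{iso}$). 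I do not expect a real obstacle here, only the bookkeeping to verify that no intersection term can have more than two ``free'' squares.

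Finally, applying the triangle inequality to the $O(1)$ proper shapes in the expansion, each with coefficient $O(\poly(k))$ and norm $\tilde O(n)$, yields
\[
\bigl\| M_{\alpha_3} - M_{\beta_3}\,M_{\beta_3}^\T \bigr\| \;=\; \Bigl\| \sum_{\gamma\in\widetilde{\calI}} c_\gamma M_\gamma\Bigr\| \;=\; \tilde O(n),
\]
as claimed. This mirrors the proof pattern already used for $\alpha_2 \approx \beta_2\,\beta_2^\T$ above, with the only new ingredients being the additional intersection case coming from two pairs of squares instead of one.
\end{proofsketch}
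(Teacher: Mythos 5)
Your proposal is correct and essentially reproduces the paper's (terser) proof. The paper simply notes that non-collision ribbons reproduce $M_{\alpha_3}$ and collision ribbons produce shapes with $|A_R\setminus B_R| = |B_R\setminus A_R|\le 1$, hence $\tilde O(n)$ norm; you reach the same conclusion through the explicit machinery of \cref{prop:graph-matrix-multiplication} and \cref{prop:expand-improper}, correctly identifying that a collision moves a square into $U\cap V\subseteq S_{\min}$ (shrinking the free-vertex count to at most two) and correctly noting that the degree-$0$ vertex arising from the $h_0$ expansion of parallel edges sits in $U\cap V$ and so never enters $W_{iso}$.
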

\begin{proof}
To see this, note that when we expand out $M_{\beta_3}M^T_{\beta_3}$, the ribbons which result when there are no collisions give $M_{\alpha_3}$ and for each ribbon $R$ which results from a collision, $|A_R \setminus B_R| = |B_R \setminus A_R| \leq 1$ so the resulting terms have norm $\tilde{O}(n)$. 
\end{proof}
\begin{figure}[ht]\label{BetaThreeFigure}
\centerline{\includegraphics[height=3cm]{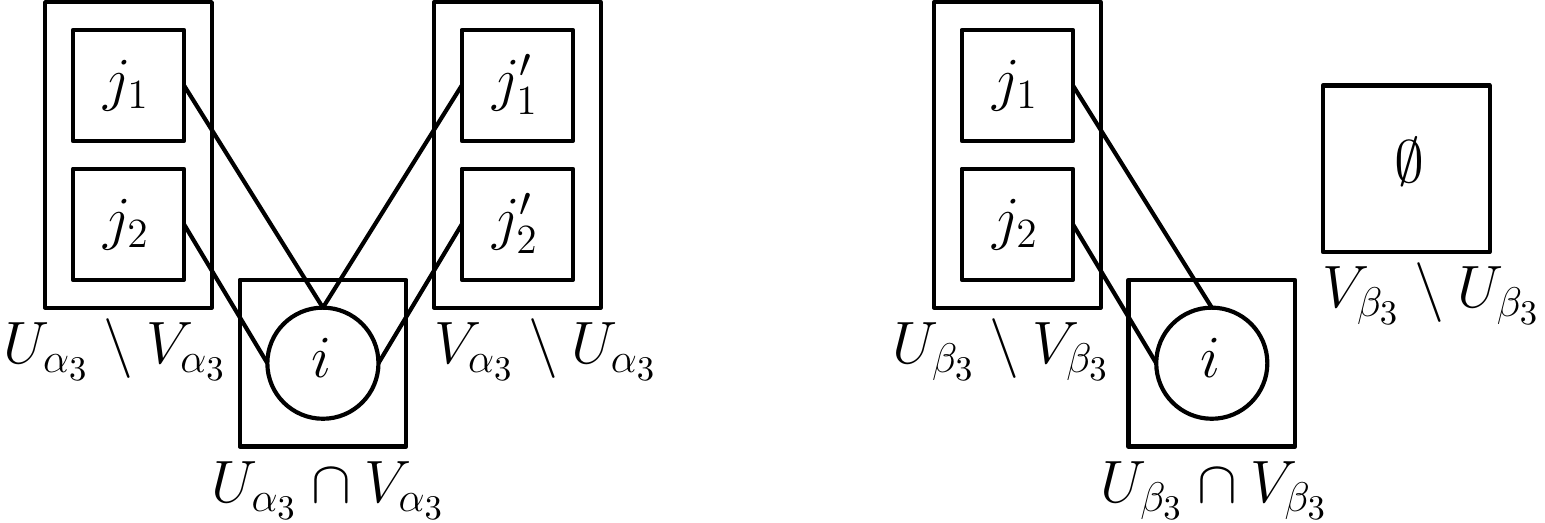}}
\caption{This figure shows $\alpha_3$ and $\beta_3$ for $k = 4$.}
\end{figure}
We now consider the norms of $M_{\alpha_4}$, $M_{\alpha_5}$, and $M_{\alpha_6}$.
\begin{enumerate}
\item $\norm{M_{\alpha_4}}$ is $\tilde{O}(n\sqrt{m})$.
\item $\norm{M_{\alpha_5}}$ is $\tilde{O}(n\sqrt{m})$.
\item $\norm{M_{\alpha_6}}$ is $\tilde{O}(n^2)$.
\end{enumerate}
This means that $M_{\alpha_4}$ and $M_{\alpha_5}$ can be ignored but $M_{\alpha_6}$ cannot be ignored. In fact, there is a very good reason for this. In particular, for $k \geq 4$, $L_k$ has a non-trivial nullspace $N_k$, so we cannot show that the minimum nonzero eigenvalue of ${L_k^T}{L_k}$ is large without taking this nullspace into account. We handle this nullspace $N_k$ in the next two subsubsections.

Putting everything together, we have the following corollary:
\begin{corollary}\label{QQTapproximationcorollary} \ 
\begin{enumerate}
\item For $k = 2$, $\norm{{L_k^T}L_k - 2{n^2}M_{Id_{k-2,1}}}$ is $\tilde{O}(n\sqrt{m})$
\item For $k = 3$, $\norm{{L_k^T}L_k - 2{n^2}M_{Id_{k-2,1}} - 4nM_{\beta_2}M^T_{\beta_2}}$ is $\tilde{O}(n\sqrt{m})$
\item For $k \geq 4$, $\norm{{L_k^T}L_k - 2{n^2}M_{Id_{k-2,1}} - 4nM_{\beta_2}M^T_{\beta_2} - 4M_{\beta_3}M^T_{\beta_3} - 4M_{\alpha_6}}$ is $\tilde{O}(n\sqrt{m})$
\end{enumerate}
\end{corollary}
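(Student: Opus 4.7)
The corollary follows by expanding the product $L_k^T L_k$, applying \cref{QQTdecompositionlemma} to the dominant term, and then applying the individual approximation lemmas for $M_{\alpha_1}, M_{\alpha_2}, M_{\alpha_3}$ (while bounding $M_{\alpha_4}, M_{\alpha_5}$ directly by their norms). The plan is as follows.

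\textbf{Step 1: Reduce to the dominant term.} Write $L_k = 2M_{\alpha} + R_k$, where $\alpha$ is the dominant shape (the $\alpha_1$ in the definition of $L_k$) and $R_k$ collects the four lower-order terms $c_{\alpha_2}M_{\alpha_2} + c_{\alpha_3}M_{\alpha_3} + c_{\alpha_4}M_{\alpha_4} + c_{\alpha_5}M_{\alpha_5}$. From the norm estimates listed just before \cref{QQTdecompositionlemma} we have $\norm{2M_\alpha} = \tilde O(n)$ and $\norm{R_k} = \tilde O(\sqrt{m/n})$. Expanding
\[ L_k^T L_k = 4\, M_\alpha^T M_\alpha + 2M_\alpha^T R_k + 2 R_k^T M_\alpha + R_k^T R_k, \]
the three error terms have norm at most $\tilde O(n \cdot \sqrt{m/n}) + \tilde O(m/n) = \tilde O(\sqrt{mn})$ (using $m \geq n$). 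So it suffices to analyze $4\, M_\alpha^T M_\alpha$ up to an additive error of $\tilde O(n\sqrt{m})$.

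\textbf{Step 2: Apply the decomposition and replace the shapes.} By \cref{QQTdecompositionlemma},
\[ 4\, M_\alpha^T M_\alpha = 4 M_{\alpha_1} + 4 M_{\alpha_2} + 4 M_{\alpha_3} + 4 M_{\alpha_4} + 4 M_{\alpha_5} + 4 M_{\alpha_6}, \]
where the $\alpha_i$ here are the six shapes from the lemma (note: different from the $\alpha_i$ in the definition of $L_k$). I then apply the three individual approximation lemmas: $\norm{M_{\alpha_1} - \binom{n-k+2}{2}M_{Id_{k-2,1}}} = \tilde O(n^{3/2})$; $\norm{M_{\alpha_2} - (n-k+1) M_{\beta_2} M_{\beta_2}^T} = \tilde O(n^{3/2})$; and $\norm{M_{\alpha_3} - M_{\beta_3} M_{\beta_3}^T} = \tilde O(n)$. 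Since $\binom{n-k+2}{2} = \tfrac{n^2}{2} - O(n)$, replacing the coefficient by $n^2/2$ introduces further error $O(n) \cdot \norm{M_{Id_{k-2,1}}} = \tilde O(n)$, and similarly $(n-k+1) = n - O(1)$ introduces error $O(1) \cdot \norm{M_{\beta_2}M_{\beta_2}^T} = \tilde O(n)$ (using that $\norm{M_{\beta_2}}^2 \approx \norm{M_{\alpha_2'}} = \tilde O(n)$ from the proof of the $\alpha_2$-lemma). For the remaining shapes, the norm bounds $\norm{M_{\alpha_4}}, \norm{M_{\alpha_5}} = \tilde O(n\sqrt{m})$ let me absorb $4M_{\alpha_4}$ and $4M_{\alpha_5}$ directly into the error term, while $M_{\alpha_6}$ with $\norm{M_{\alpha_6}} = \tilde O(n^2)$ is too large to drop and must be retained in the $k \geq 4$ case.

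\textbf{Step 3: Case split on $k$.} By the remark following \cref{QQTdecompositionlemma}, only $\alpha_1, \alpha_4$ are present when $k = 2$; only $\alpha_1, \alpha_2, \alpha_4, \alpha_5$ are present when $k = 3$; and all six appear when $k \geq 4$. Combining the substitutions from Step 2 with the error from Step 1 and aggregating all additive errors (each bounded by $\tilde O(n\sqrt{m})$, since $m \geq n$ means $n\sqrt{m} \gtrsim n^{3/2}, \sqrt{mn}, n$), I obtain the three advertised approximations.

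\textbf{Main obstacle.} There is no substantial obstacle beyond careful bookkeeping — the combinatorial content is already encoded in \cref{QQTdecompositionlemma} and the three subsidiary lemmas. The only point demanding minor care is verifying that every approximation error introduced (the cross terms in Step~1, the coefficient roundings from $\binom{n-k+2}{2}$ and $(n-k+1)$, and the norms of the discarded $M_{\alpha_4}, M_{\alpha_5}$) is dominated by the claimed $\tilde O(n\sqrt{m})$ bound, which follows uniformly from the assumption $m \geq n$.
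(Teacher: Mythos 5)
Your proposal is correct and matches what the paper intends: the corollary is stated without a written-out proof precisely because it is the straightforward aggregation of the preceding lemmas, which is exactly what you do — peel off the dominant term $2M_\alpha$ from $L_k$, apply \cref{QQTdecompositionlemma} to $4M_\alpha^T M_\alpha$, substitute the three approximation lemmas for $M_{\alpha_1}, M_{\alpha_2}, M_{\alpha_3}$, drop $M_{\alpha_4}, M_{\alpha_5}$ by their norm bounds, retain $M_{\alpha_6}$, round the coefficients $\binom{n-k+2}{2}$ and $(n-k+1)$, and split by $k$ using the remark. Your bookkeeping of the error terms is sound (including the observation that $m \ge n$ makes $n\sqrt{m}$ dominate $\sqrt{mn}$, $n^{3/2}$, and $n$); one nitpick is that "$\norm{M_{\beta_2}}^2 \approx \norm{M_{\alpha_2'}}$" is better phrased as $\norm{M_{\beta_2}M_{\beta_2}^T} \le \norm{M_{\alpha_2'}} + \tilde O(1) = \tilde O(n)$, which is what the paper's remark uses directly, but the intent and the bound are the same.
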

\begin{remark}
We replaced $\binom{n-k+2}{2}$ with $\frac{n^2}{2}$ as $\norm{M_{Id_{k-2,1}}} = 1$ and $|\frac{n^2}{2} - \binom{n-k+2}{2}|$ is $\tilde{O}(n)$. Similarly, we replaced $(n-k+1)$ with $n$ as $\norm{M_{\alpha'_2}}$ is $\tilde{O}(n)$ and $|n - (n-k+1)|$ is $\tilde{O}(1)$
\end{remark}
\subsubsection{The Null Space $N_k$}
We now construct a matrix $N_k$ for each $k \geq 4$ such that ${L_k}{N_k} = 0$ and the columns of ${N_k}$ span the nullspace of $L_k$. To do this, we construct $N_k$ so that the entries of each column of $N_k$ is indexed by a subset $S = \{j_3,\ldots,j_{k}\} \subseteq [n]$ and an ordered tuple of circle indices $(i,i')$ where $i < i'$. We then want that if we view $\tilde{E}$ as a vector,
\[
({\tilde{E}^T}{L_k}N_k)_{S,(i,i')} = \tilde{E}\left[v^{S}\left({\ip{v}{d_i}}^2 - 1\right)\left({\ip{v}{d_{i'}}}^2 - 1\right)\right] - \tilde{E}\left[v^{S}\left({\ip{v}{d_{i'}}}^2 - 1\right)\left({\ip{v}{d_i}}^2 - 1\right)\right] = 0
\]
\begin{lemma}
$N_k = c_{\alpha_1}(M_{\alpha_1^+} - M_{\alpha_1^-}) + c_{\alpha_2}(M_{\alpha_2^+} - M_{\alpha_2^-}) + c_{\alpha_3}(M_{\alpha_3^+} - M_{\alpha_3^-}) + c_{\alpha_4}(M_{\alpha_4^+} - M_{\alpha_4^-}) + c_{\alpha_5}(M_{\alpha_5^+} - M_{\alpha_5^-})$ for the following shapes $\alpha_1^{+},\ldots,\alpha_5^{+}, \alpha_1^{-},\ldots,\alpha_5^{-}$ and coefficients $c_{\alpha_1},\ldots,c_{\alpha_5}$. Unless stated otherwise, all of these shapes have no middle vertices.
\begin{enumerate}
\item $U_{\alpha_1^+} \setminus V_{\alpha_1^+} = \{j_1,j_2\}$, $U_{\alpha_1^+} \cap V_{\alpha_1^+} = \{j_3,j_4,\ldots,j_{k}\} \cup \{i'_{circ}\}$, $V_{\alpha_1^+} \setminus U_{\alpha_1^+} = \{i_{circ}\}$, $E(\alpha_1^+) = \{(j_1,i),(j_2,i)\}$, and $c_{\alpha_1} = 2$.
\item $U_{\alpha_2^+} \setminus V_{\alpha_2^+} = \{j_2\}$, $U_{\alpha_2^+} \cap V_{\alpha_2^+} = \{j_4,\ldots,j_{k}\} \cup \{i'_{circ}\}$, $V_{\alpha_2^+} \setminus U_{\alpha_2^+} = \{j_3\} \cup \{i_{circ}\}$, $E(\alpha_2^+) = \{(j_3,i),(j_2,i)\}$, and $c_{\alpha_2} = \frac{2}{n}$.
\item $U_{\alpha_3^+} \setminus V_{\alpha_3^+} = \emptyset$, $U_{\alpha_3^+} \cap V_{\alpha_3^+} = \{j_5,\ldots,j_{k}\} \cup \{i'_{circ}\}$, $V_{\alpha_3^+} \setminus U_{\alpha_3^+} = \{j_3,j_4\} \cup \{i_{circ}\}$, $E(\alpha_3^+) = \{(j_3,i),(j_4,i)\}$, and $c_{\alpha_3} = \frac{2}{n^2}$.
\item $U_{\alpha_4^+} \setminus V_{\alpha_4^+} = \emptyset$, $U_{\alpha_4^+} \cap V_{\alpha_4^+} = \{j_3,j_4,\ldots,j_{k}\} \cup \{i'_{circ}\}$, $V_{\alpha_4^+} \setminus U_{\alpha_4^+} = \{i_{circ}\}$, $V(\alpha_4^+) \setminus (U_{\alpha_4^+} \cup V_{\alpha_4^+}) = \{j_1\}$ $E(\alpha_4^+) = \{(j_1,i)_2\}$, and $c_{\alpha_4} = \frac{1}{n}$.
\item $U_{\alpha_5^+} \setminus V_{\alpha_5^+} = \emptyset$, $U_{\alpha_5^+} \cap V_{\alpha_5^+} = \{j_3,j_4,\ldots,j_{k}\} \cup \{i'_{circ}\}$, $V_{\alpha_5^+} \setminus U_{\alpha_5^+} = \{i_{circ}\}$, $E(\alpha_5^+) = \{(j_3,i)_2\}$, and $c_{\alpha_5} = \frac{1}{n}$.
\end{enumerate}
where for all of these shapes, $(i_{circ},i'_{circ})$ is a tuple in the right side and $i < i'$. $\alpha_1^{-},\ldots,\alpha_5^{-}$ are the same as $\alpha_1^{+},\ldots,\alpha_5^{+}$ except that $i$ and $i'$ are swapped.
\end{lemma}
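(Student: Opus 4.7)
My plan is to verify the graph matrix decomposition of $N_k$ by direct computation, using the characterizing property of $N_k$ set up just before the lemma: for each column indexed by $(S = \{j_3, \dots, j_k\}, (i, i'))$ with $i < i'$, we must check that
\[
(\tilde{E}^T L_k N_k)_{S,(i,i')} = \pE\!\left[v^S(\ip{v}{d_i}^2 - 1)(\ip{v}{d_{i'}}^2 - 1)\right] - \pE\!\left[v^S(\ip{v}{d_{i'}}^2 - 1)(\ip{v}{d_i}^2 - 1)\right],
\]
which of course vanishes by commutativity. So the task is to show that the RHS of the lemma, as a linear combination of graph matrices, produces this exact expression when sandwiched between $\tilde{E}^T L_k$ on the left and the indicator of column $(S,(i,i'))$ on the right.

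The key structural observation is that each pair $(\alpha_j^+, \alpha_j^-)$ is obtained from the $j$-th shape component of $L_k$ in \cref{def:lk} by adjoining one extra \emph{passive} circle vertex into $U \cap V$ with no incident edges: this passive vertex is $i'$ in $\alpha_j^+$ and $i$ in $\alpha_j^-$. The other (active) circle on the right side is the one that plays the role of $u$ in the definition of $L_k$: namely $i$ in $\alpha_j^+$ and $i'$ in $\alpha_j^-$. Because the passive circle is isolated within each $\alpha_j^\pm$, it carries the trivial Fourier character $h_0 = 1$ and contributes no combinatorial factor; it simply threads through as an extra index attached to $v^S$ in the resulting pseudoexpectation.

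Consequently, summing $c_{\alpha_j} (\tilde{E}^T L_k M_{\alpha_j^+})_{S,(i,i')}$ over the five shapes reproduces exactly the five-way case analysis appearing in the proof of \cref{lem:completed-left-side}, the only changes being that the outer $v^S$ is replaced by $v^S \cdot (\ip{v}{d_{i'}}^2 - 1)$ (reflecting the passive $i'$) and that the active circle $u$ is instantiated as $i$. Crucially, the coefficients $c_{\alpha_1}=2,\ c_{\alpha_2}=2/n,\ c_{\alpha_3}=2/n^2,\ c_{\alpha_4}=c_{\alpha_5}=1/n$ coincide term-for-term with those assigned to $L_{k,1},\dots,L_{k,5}$ in \cref{def:lk}, which is precisely what is required for the end-vertex collapses with elements of $S$, the label-$2$ self-collapses, and the triple collapses to balance and yield the complete factor $(\ip{v}{d_i}^2 - 1)$. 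The "+" contribution therefore totals $\pE[v^S(\ip{v}{d_i}^2 - 1)(\ip{v}{d_{i'}}^2 - 1)]$.

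By the symmetric calculation (just exchanging $i \leftrightarrow i'$), the five "$-$" shapes contribute $\pE[v^S(\ip{v}{d_{i'}}^2 - 1)(\ip{v}{d_i}^2 - 1)]$, so the signed sum is the claimed commutator. The main obstacle I anticipate is tracking all intersection terms cleanly across the five collapse cases; this is where the hypothesis $i < i'$ is essential, since it guarantees that $i$ and $i'$ are realized as distinct circle vertices and prevents any cross-collapse between them. Similarly, because the passive circle is disconnected from every other vertex in each $\alpha_j^\pm$, it cannot participate in any Fourier-expansion step or any new collapse, so no extra shapes beyond those already handled in \cref{lem:completed-left-side} appear, and the proof reduces cleanly to the argument there.
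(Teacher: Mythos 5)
Your proposal is correct and matches the paper's proof in substance: the paper also verifies the decomposition by running the same five-case ribbon analysis as in \cref{lem:completed-left-side}, checking for each way the end vertices $j_1,j_2$ can coincide with each other or with $S$ that the coefficients $2,\ 2/n,\ 2/n^2,\ 1/n,\ 1/n$ reassemble the full factor $(\ip{v}{d_i}^2-1)$, now multiplied by the extra $(\ip{v}{d_{i'}}^2-1)$ coming from the passive circle $i'$ in the row index, with the ``$-$'' shapes giving the $i\leftrightarrow i'$ term. Your observation that each $\alpha_j^{\pm}$ is literally the $j$-th component of $L_k$ with an isolated circle adjoined to $U\cap V$ (so the earlier computation can be cited rather than repeated) is a clean way to package what the paper writes out explicitly, but it is the same argument.
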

\begin{remark}
Note that $\alpha_1^{+},\ldots,\alpha_5^{+}$ are the same shapes which appear in the decomposition of $L_k$ except that the intersection of $U$ and $V$ now contains $i'_{circ}$ and we require that $i < i'$.
\end{remark}
For pictures of these shapes, see Figure \ref{NDecompositionFigure} below.
\begin{figure}[ht]
\centerline{\includegraphics[height=10cm]{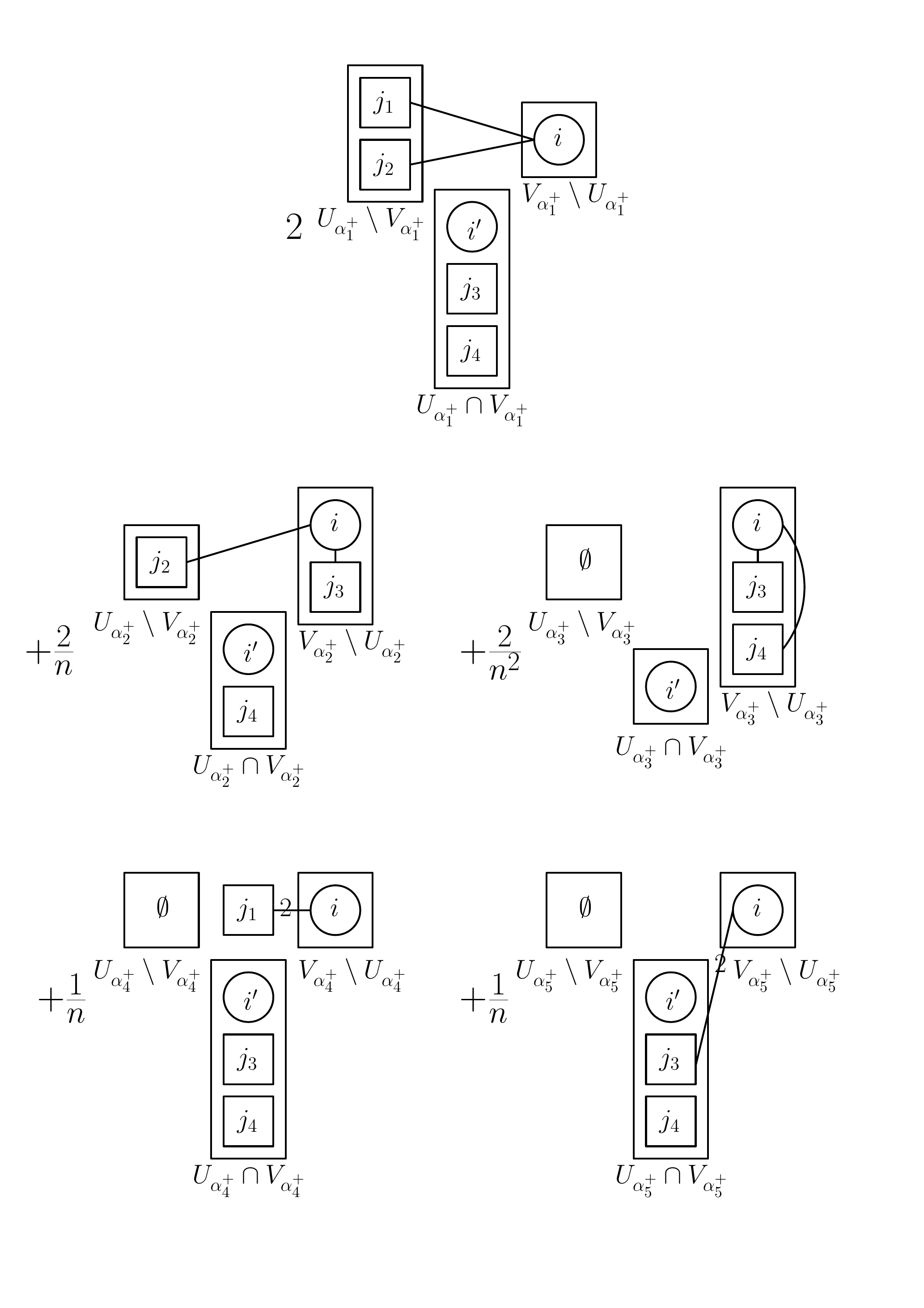}}
\caption{This figure shows the decomposition of $N_k$ for $k = 4$. Here we always have that $i < i'$. If $i$ and $i'$ are swapped then this flips the signs but these parts are not shown to save space.}
\label{NDecompositionFigure}
\end{figure}
\begin{proof}
To determine $N_k$, we analyze the ribbons which $N_k$ is composed of. Let $S = \{j_3,j_4,\ldots,j_k\}$.
\begin{enumerate}
\item If we take a ribbon $R$ with $A_R = \{j_1,\ldots,j_{k}\} \cup \{i'_{circ}\}$, $B_R = \{j_3,\ldots,j_{k}\} \cup (i_{circ},i'_{circ})$, and $E(R) = \{(j_1,i),(j_2,i)\}$ where $j_1 \neq j_2$ and $j_1,j_2 \notin S$ then 
\[
({\tilde{E}^T}{L_k}M_R)_{S,i,i'} = \tilde{E}\left[v^{S}v_{j_1}v_{j_2}(d_{i})_{j_1}(d_i)_{j_2}\left({\ip{v}{d_{i'}}}^2 - 1\right)\right]
\]
Each such term appears with a coefficient of $2$ in $\tilde{E}\left[v^{S}\left({\ip{v}{d_i}}^2 - 1\right)\left({\ip{v}{d_{i'}}}^2 - 1\right)\right]$, so we want each such ribbon $R$ to appear with a coefficient of $2$ in $N_k$.

Similarly, we want each ribbon $R$ with $A_R = \{j_1,\ldots,j_{k}\} \cup \{i_{circ}\}$, $B_R = \{j_3,\ldots,j_{k}\} \cup (i_{circ},i'_{circ})$, and $E(R) = \{(j_1,i'),(j_2,i')\}$ where $j_1 \neq j_2$ and $j_1,j_2 \notin S$ to appear with a coefficient of $-2$ in $N_k$.
\item If we take a ribbon $R$ with $A_R = (\{j_1,\ldots,j_{k}\} \setminus \{j_1,j_3\}) \cup \{i'_{circ}\}$, $B_R = \{j_3,\ldots,j_{k}\} \cup (i_{circ},i'_{circ})$, and $E(R) = \{(j_3,i),(j_2,i)\}$ where $j_1 = j_3 \in S$ and $j_2 \notin S$ then 
\begin{align*}
({\tilde{E}^T}{L_k}M_R)_{S,i,i'} &= \tilde{E}\left[v^{S \setminus \{j_1,j_3\}}v_{j_2}(d_{i})_{j_3}(d_i)_{j_2}\left({\ip{v}{d_{i'}}}^2 - 1\right)\right]\\
&=n\tilde{E}\left[v^{S}v_{j_1}v_{j_2}(d_{i})_{j_1}(d_i)_{j_2}\left({\ip{v}{d_{i'}}}^2 - 1\right)\right]
\end{align*}
Each such term appears with a coefficient of $\frac{2}{n}$ in $\tilde{E}\left[v^{S}\left({\ip{v}{d_i}}^2 - 1\right)\left({\ip{v}{d_{i'}}}^2 - 1\right)\right]$, so we want each such ribbon $R$ to appear with a coefficient of $\frac{2}{n}$ in $N_k$.

Similarly, we want each ribbon $R$ with $A_R = (\{j_1,\ldots,j_{k}\} \setminus \{j_1,j_3\}) \cup \{i_{circ}\}$, $B_R = \{j_3,\ldots,j_{k}\} \cup (i_{circ},i'_{circ})$, and $E(R) = \{(j_3,i'),(j_2,i')\}$ where $j_1 = j_3 \in S$ and $j_2 \notin S$ to appear with a coefficient of $-\frac{2}{n}$ in $N_k$.
\item If we take a ribbon $R$ with $A_R = (\{j_1,\ldots,j_{k}\} \setminus \{j_1,j_2,j_3,j_4\}) \cup \{i'_{circ}\}$, $B_R = \{j_3,\ldots,j_{k}\} \cup (i_{circ},i'_{circ})$, and $E(R) = \{(j_3,i),(j_4,i)\}$ where $j_1 = j_3 \in S$ and $j_2 = j_4 \in S$ then 
\begin{align*}
({\tilde{E}^T}{L_k}M_R)_{S,i,i'} &= \tilde{E}\left[v^{S \setminus \{j_1,j_2,j_3,j_4\}}(d_{i})_{j_3}(d_i)_{j_4}\left({\ip{v}{d_{i'}}}^2 - 1\right)\right]\\
&=n^2\tilde{E}\left[v^{S}v_{j_1}v_{j_2}(d_{i})_{j_1}(d_i)_{j_2}\left({\ip{v}{d_{i'}}}^2 - 1\right)\right]
\end{align*}
Each such term appears with a coefficient of $\frac{2}{n^2}$ in $\tilde{E}\left[v^{S}\left({\ip{v}{d_i}}^2 - 1\right)\left({\ip{v}{d_{i'}}}^2 - 1\right)\right]$, so we want each such ribbon $R$ to appear with a coefficient of $\frac{2}{n^2}$ in $N_k$.

Similarly, we want each ribbon $R$ with $A_R = (\{j_1,\ldots,j_{k}\} \setminus \{j_1,j_2,j_3,j_4\}) \cup \{i_{circ}\}$, $B_R = \{j_3,\ldots,j_{k}\} \cup (i_{circ},i'_{circ})$, and $E(R) = \{(j_3,i'),(j_4,i')\}$ where $j_1 = j_3 \in S$ and $j_2 = j_4 \in S$ to appear with a coefficient of $-\frac{2}{n^2}$ in $N_k$.
\item If we take a ribbon $R$ with $A_R = (\{j_1,\ldots,j_{k}\} \setminus \{j_1,j_2\})  \cup \{i'_{circ}\}$, $B_R = \{j_3,\ldots,j_{k}\} \cup (i_{circ},i'_{circ})$, and $E(R) = \{(j_1,i)_2\}$ where $j_1 = j_2 \notin S$ then 
\begin{align*}
({\tilde{E}^T}{L_k}M_R)_{S,i,i'} &= \tilde{E}\left[v^{S \setminus \{j_1,j_2\}}((d_{i})^2_{j_1} - 1)\left({\ip{v}{d_{i'}}}^2 - 1\right)\right]\\
&=n\tilde{E}\left[v^{S}v_{j_1}^2((d_{i})^2_{j_1} - 1)\left({\ip{v}{d_{i'}}}^2 - 1\right)\right]
\end{align*}
Each such term appears with a coefficient of $\frac{1}{n}$ in $\tilde{E}\left[v^{S}\left({\ip{v}{d_i}}^2 - 1\right)\left({\ip{v}{d_{i'}}}^2 - 1\right)\right]$ so we want each such ribbon $R$ to appear with a coefficient of $\frac{1}{n}$ in $N_k$.

Similarly, we want each ribbon $R$ with $A_R = (\{j_1,\ldots,j_{k}\} \setminus \{j_1,j_2\})  \cup \{i'_{circ}\}$, $B_R = \{j_3,\ldots,j_{k}\} \cup (i_{circ},i'_{circ})$, and $E(R) = \{(j_1,i)_2\}$ where $j_1 = j_2 \notin S$ to appear with a coefficient of $-\frac{1}{n}$ in $N_k$.
\item If we take a ribbon $R$ with $A_R = (\{j_1,\ldots,j_{k}\} \setminus \{j_1,j_2\}) \cup \{i'_{circ}\}$, $B_R = \{j_3,\ldots,j_{k}\} \cup (i_{circ},i'_{circ})$, and $E(R) = \{(j_3,i)_2\}$ where $j_1 = j_2 = j_3 \in S$ then 
\begin{align*}
({\tilde{E}^T}{L_k}M_R)_{S,i,i'} &= \tilde{E}\left[v^{S \setminus \{j_1,j_2\}}((d_{i})^2_{j_3} - 1)\left({\ip{v}{d_{i'}}}^2 - 1\right)\right]\\
&=n\tilde{E}\left[v^{S}v_{j_1}^2((d_{i})^2_{j_1} - 1)\left({\ip{v}{d_{i'}}}^2 - 1\right)\right]
\end{align*}
Each such term appears with a coefficient of $\frac{1}{n}$ in $\tilde{E}\left[v^{S}\left({\ip{v}{d_i}}^2 - 1\right)\left({\ip{v}{d_{i'}}}^2 - 1\right)\right]$ so we want each such ribbon $R$ to appear with a coefficient of $\frac{1}{n}$ in $N_k$.

Similarly, we want each ribbon $R$ with $A_R = (\{j_1,\ldots,j_{k}\} \setminus \{j_1,j_2\}) \cup \{i'_{circ}\}$, $B_R = \{j_3,\ldots,j_{k}\} \cup (i_{circ},i'_{circ})$, and $E(R) = \{(j_3,i)_2\}$ where $j_1 = j_2 = j_3 \in S$ to appear with a coefficient of $-\frac{1}{n}$ in $N_k$.
\end{enumerate}
\end{proof}
Observe that the dominant part of $N_k$ is $2(M_{\alpha_1^+} - M_{\alpha_1^-})$ which has norm $\tilde{O}(n)$. The norm bounds for the other components of $N_k$ are as follows:
\begin{enumerate}
\item $\norm{c_{\alpha_2}(M_{\alpha_2^+} - M_{\alpha_2^-})}$ is $\tilde{O}\left(\frac{1}{n} \cdot \sqrt{mn}\right) = \tilde{O}\left(\frac{\sqrt{m}}{\sqrt{n}}\right)$
\item $\norm{c_{\alpha_3}(M_{\alpha_3^+} - M_{\alpha_3^-})}$ is $\tilde{O}\left(\frac{1}{n^2} \cdot n\sqrt{m}\right) = \tilde{O}\left(\frac{\sqrt{m}}{n}\right)$
\item $\norm{c_{\alpha_4}(M_{\alpha_4^+} - M_{\alpha_4^-})}$ is $\tilde{O}\left(\frac{1}{n} \cdot \sqrt{mn}\right) = \tilde{O}\left(\frac{\sqrt{m}}{\sqrt{n}}\right)$
\item $\norm{c_{\alpha_5}(M_{\alpha_5^+} - M_{\alpha_5^-})}$ is $\tilde{O}\left(\frac{1}{n} \cdot \sqrt{m}\right) = \tilde{O}\left(\frac{\sqrt{m}}{n}\right)$
\end{enumerate}
\subsubsection{Analyzing $N_k{N^T_k}$}
The dominant terms of $N_k$ are $2M_{\alpha^{+}} - 2M_{\alpha^{-}}$ where 
\begin{enumerate}
\item $U_{\alpha^{+}} \setminus V_{\alpha^{+}} = \{j_1,j_2\}$, $U_{\alpha^{+}} \cap V_{\alpha^{+}} = \{j_3,\ldots,j_{k}\} \cup \{i'_{circ}\}$, $V_{\alpha^{+}} \setminus U_{\alpha^{+}} = \{i_{circ}\}$, and $E(\alpha^{+}) = \{(j_1,i),(j_2,i)\}$. Note that here $i < i'$ and $(i_{circ},i'_{circ})$ appears as a tuple in $V_{\alpha^{+}}$.
\item $U_{\alpha^{-}} \setminus V_{\alpha^{-}} = \{j_1,j_2\}$, $U_{\alpha^{-}} \cap V_{\alpha^{-}} = \{j_3,\ldots,j_{k}\} \cup \{i_{circ}\}$, $V_{\alpha^{-}} \setminus U_{\alpha^{-}} = \{i'_{circ}\}$, and $E(\alpha^{-}) = \{(j_1,i'),(j_2,i')\}$. Note that here $i < i'$ and $(i'_{circ},i_{circ})$ appears as a tuple in $V_{\alpha^{-}}$.
\end{enumerate}
\begin{lemma}\label{NNTPluslemma}
$M_{\alpha^{+}}M_{\alpha^{+}}^T + M_{\alpha^{-}}M_{\alpha^{-}}^T = M_{\alpha_1} + M_{\alpha_2} + M_{\alpha_3}$ where $\alpha_1,\alpha_2,\alpha_3$ are the following shapes.
\begin{enumerate}
\item $U_{\alpha_1} \setminus V_{\alpha_1} = \emptyset$, $U_{\alpha_1} \cap V_{\alpha_1} = \{j_1,j_2,j_3,j_4,\ldots,j_{k}\} \cup \{i_{circ}\}$, $V_{\alpha_1} \setminus U_{\alpha_1} = \emptyset$, $V(\alpha_1) \setminus (U_{\alpha_1} \cup V_{\alpha_1}) = \{i'_{circ}\}$, and $E(\alpha_1) = \{(i',j_1),(i',j_1),(i',j_2),(i',j_2)\}$.
\item $U_{\alpha_2} \setminus V_{\alpha_2} = \{j_1\}$, $U_{\alpha_2} \cap V_{\alpha_2} = \{j_2,j_4,\ldots,j_{k}\} \cup \{i_{circ}\}$, $V_{\alpha_2} \setminus U_{\alpha_2} = \{j'_1\}$, $V(\alpha_2) \setminus (U_{\alpha_2} \cup V_{\alpha_2}) = \{i'_{circ}\}$, and $E(\alpha_2) = \{(i',j_1),(i',j'_1),(i',j_2),(i',j_2)\}$.
\item $U_{\alpha_3} \setminus V_{\alpha_3} = \{j_1,j_2\}$, $U_{\alpha_3} \cap V_{\alpha_3} = \{j_5,\ldots,j_{k}\} \cup \{i_{circ}\}$, $V_{\alpha_3} \setminus U_{\alpha_3} = \{j'_1,j'_2\}$, $V(\alpha_3) \setminus (U_{\alpha_3} \cup V_{\alpha_3}) = \{i'_{circ}\}$, and $E(\alpha_3) = \{(i',j_1),(i',j_2),(i',j'_1),(i',j'_2)\}$.
\end{enumerate}
Note that for these shapes, we do not assume that $i < i'$. Also note that $\alpha_1$ and $\alpha_2$ are improper shapes, though this does not matter for us.
\end{lemma}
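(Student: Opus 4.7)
I would adapt the ribbon-enumeration argument used in the proof of Lemma~\ref{QQTdecompositionlemma}. For each of the two products $M_{\alpha^+} M_{\alpha^{+T}}$ and $M_{\alpha^-} M_{\alpha^{-T}}$, and for each target shape $\alpha_j$ on the right-hand side, I will show that each ribbon $R$ of shape $\alpha_j$ admits a unique factorization $R = R_1 \cdot R_2$ with factor ribbons of the prescribed shapes respecting the constraint $i < i'$. Summing across the two products, this will give coefficient exactly $1$ on each $M_{\alpha_j}$.

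The structural fact driving the proof is that in $\alpha^+$, both circles $i_{circ}$ (the active one, incident to $j_1, j_2$) and $i'_{circ}$ (the bystander, isolated) lie in the middle $V_{\alpha^+} = U_{\alpha^{+T}}$ of the product. Any factorization must therefore identify both circles across the two factor ribbons; unlike Lemma~\ref{QQTdecompositionlemma}, no ``circles-not-identified'' case arises here. After gluing, the bystander circle lands in $U \cap V$ of the output while the active circle becomes a middle vertex. The only remaining freedom is how the ``free'' squares $\{j_1, j_2\}$ on the left interact with $\{j_1'', j_2''\}$ on the right, producing exactly three cases: no collapse (yielding shape $\alpha_3$), exactly one pair collapsing (yielding $\alpha_2$), or both pairs collapsing (yielding the improper shape $\alpha_1$ with double edges from the active middle circle to two vertices of $U \cap V$).

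In $M_{\alpha^+} M_{\alpha^{+T}}$, the constraint $i < i'$ forces the active (middle) circle of every output ribbon to have smaller index than its bystander; in $M_{\alpha^-} M_{\alpha^{-T}}$, the roles of $i$ and $i'$ are swapped by the definition of $\alpha^-$, so the middle circle has larger index. Since the target shapes $\alpha_1, \alpha_2, \alpha_3$ impose no ordering between their two circles, each output ribbon arises from exactly one of the two products (the one matching its ordering), and the sum covers every target ribbon once. I expect the main obstacle to be the careful verification of uniqueness of factorization for the improper shape $\alpha_1$: there one might fear two distinct factorizations from the bijections $j_1 \leftrightarrow j_1'',\ j_2 \leftrightarrow j_2''$ versus $j_1 \leftrightarrow j_2'',\ j_2 \leftrightarrow j_1''$, but both choices produce the same labeled pair $(R_1, R_2)$ because the resulting edge sets coincide. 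With this uniqueness established in each case, summing the ribbon contributions gives $M_{\alpha_1} + M_{\alpha_2} + M_{\alpha_3}$ as claimed.
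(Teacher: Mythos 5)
Your proposal is correct and follows essentially the same ribbon-enumeration-and-unique-factorization argument as the paper's proof, including the key observations that both circles must be identified during gluing (so the bystander lands in $U\cap V$ and the active circle becomes a middle vertex) and that the $i<i'$ ordering cleanly partitions output ribbons between the $\alpha^+$ and $\alpha^-$ products. The explicit check that the two bijections $\{j_1,j_2\}\leftrightarrow\{j_1'',j_2''\}$ give the same factor ribbons in the $\alpha_1$ case is a detail the paper states without comment, and your justification is the right one.
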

\begin{remark}
Actually, we do not need to do this computation as we will just use that $M_{\alpha^{+}}M_{\alpha^{+}}^T + M_{\alpha^{-}}M_{\alpha^{-}}^T \succeq 0$, but we include it anyways to show the similarity with the decomposition of ${L_k^T}L_k$.
\end{remark}
For pictures of these shapes, see Figure \ref{NNTPlusfigure} below.
\begin{figure}[ht]\label{NNTPlusfigure}
\centerline{\includegraphics[height=6cm]{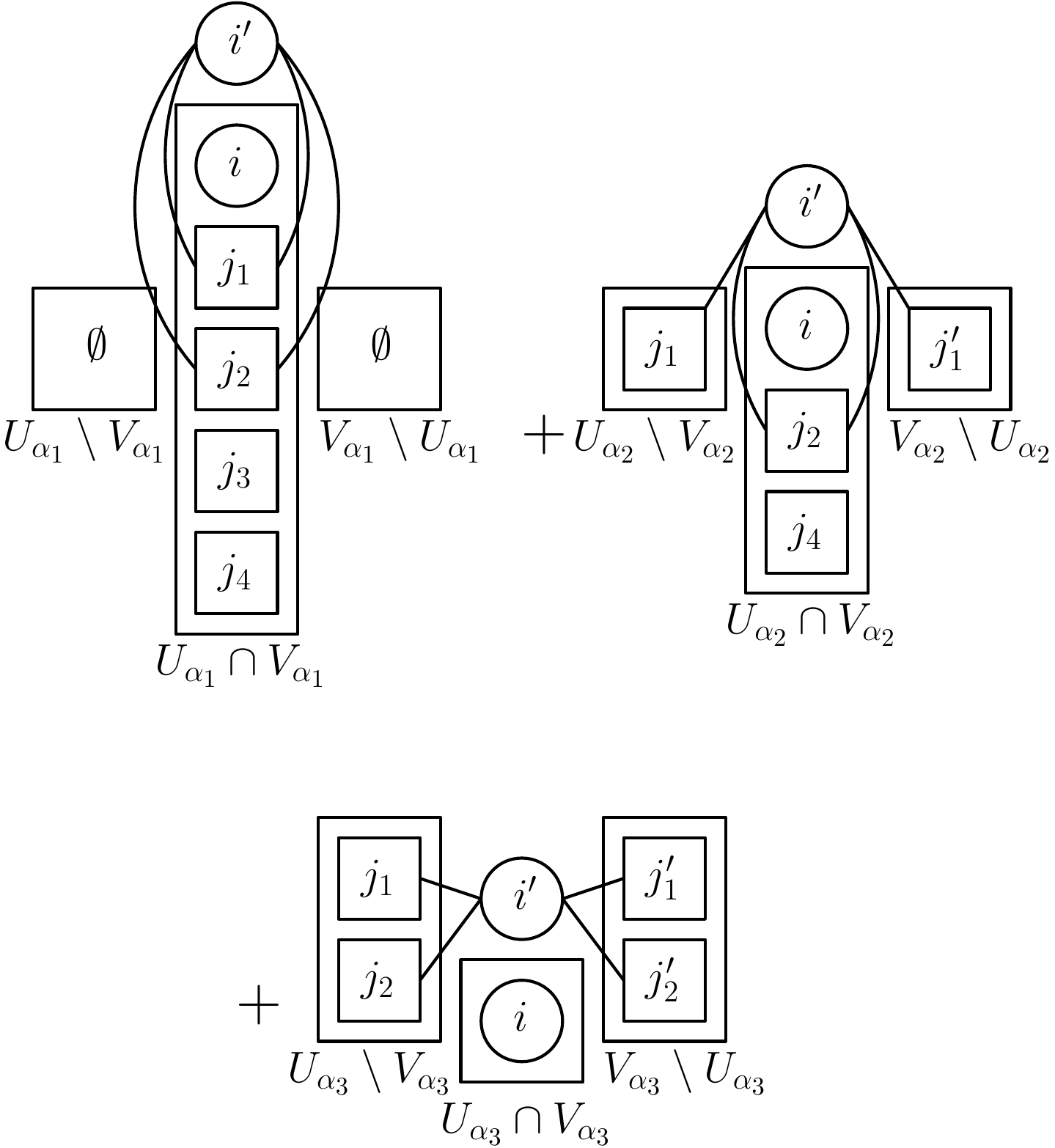}}
\caption{This figure shows the decomposition of $M_{\alpha^{+}}M_{\alpha^{+}}^T + M_{\alpha^{-}}M_{\alpha^{-}}^T$.}
\end{figure}
\begin{proof}[Proof of Lemma \ref{NNTPluslemma}]
We compute $M_{\alpha^{+}}M_{\alpha^{+}}^T + M_{\alpha^{-}}M_{\alpha^{-}}^T$ by considering the ribbons which appear in $M_{\alpha^{+}}M_{\alpha^{+}}^T + M_{\alpha^{-}}M_{\alpha^{-}}^T$. For these ribbons, we do not assume that $i < i'$.
\begin{enumerate}
\item Each ribbon $R$ with $A_R \setminus B_R = \emptyset$, $A_R \cap B_R = \{j_1,j_2,j_3,j_4,\ldots,j_{k}\} \cup \{i_{circ}\}$, $B_R \setminus A_R = \emptyset$, $V(R) \setminus (A_R \cup B_R) = \{i'_{circ}\}$, and $E(R) = \{(i',j_1),(i',j_1),(i',j_2),(i',j_2)\}$ appears in exactly one way as the composition of the ribbons $R_1$ and $R_2$ where 
\begin{enumerate}
\item $A_{R_1} \setminus B_{R_1} = \{i_{circ}\}$, $A_{R_1} \cap B_{R_1} = \{j_3,j_4,\ldots,j_{k}\}$, $B_{R_1} \setminus A_{R_1} = \{j_1,j_2\}$, and $E(R_1) = \{(i,j_1),(i,j_2)\}$.
\item $A_{R_2} \setminus B_{R_2} = \{j_1,j_2\}$, $A_{R_2} \cap B_{R_2} = \{j_3,j_4,\ldots,j_{k}\}$, $B_{R_2} \setminus A_{R_2} = \{i_{circ}\}$, and $E(R_2) = \{(i,j_1),(i,j_2)\}$.
\end{enumerate}
Whether $i < i'$ or $i' < i$ only affects whether this ribbon appears in $M_{\alpha^{+}}M_{\alpha^{+}}^T$ or $M_{\alpha^{-}}M_{\alpha^{-}}^T$.
\item Each ribbon $R$ with $A_R \setminus B_R = \{j_1\}$, $A_R \cap B_R = \{j_4,\ldots,j_{k}\} \cup \{i_{circ}\}$, $B_R \setminus A_R = \{j'_1\}$, $V(R) \setminus (A_R \cup B_R) = \{i'_{circ}\}$, and $E(R) = \{(i,j_1),(i,j'_1),(i,j_2),(i,j_2)\}$ appears in exactly one way as the composition of the ribbons $R_1$ and $R_2$ where 
\begin{enumerate}
\item $A_{R_1} \setminus B_{R_1} = \{i_{circ}\}$, $A_{R_1} \cap B_{R_1} = \{j_1,j_4,\ldots,j_{k}\}$, $B_{R_1} \setminus A_{R_1} = \{j'_1,j_2\}$, and $E(R_1) = \{(i,j'_1),(i,j_2)\}$.
\item $A_{R_2} \setminus B_{R_2} = \{j_1,j_2\}$, $A_{R_2} \cap B_{R_2} = \{j'_1,j_4,\ldots,j_{k}\}$, $B_{R_2} \setminus A_{R_2} = \{i_{circ}\}$, and $E(R_2) = \{(i,j_1),(i,j_2)\}$.
\end{enumerate}
Whether $i < i'$ or $i' < i$ only affects whether this ribbon appears in $M_{\alpha^{+}}M_{\alpha^{+}}^T$ or $M_{\alpha^{-}}M_{\alpha^{-}}^T$.
\item Each ribbon $R$ with $A_R \setminus B_R = \{j_1,j_2\}$, $A_R \cap B_R = \{j_5,\ldots,j_{k}\} \cup \{i_{circ}\}$, $B_R \setminus A_R = \{j'_1,j'_2\}$, $V(R) \setminus (A_R \cup B_R) = \{i'_{circ}\}$, and $E(R) = \{(i,j_1),(i,j'_1),(i,j_2),(i,j'_2)\}$ appears in exactly one way as the composition of the ribbons $R_1$ and $R_2$ where 
\begin{enumerate}
\item $A_{R_1} \setminus B_{R_1} = \{i_{circ}\}$, $A_{R_1} \cap B_{R_1} = \{j_1,j_2,j_5,\ldots,j_{k}\}$, $B_{R_1} \setminus A_{R_1} = \{j'_1,j'_2\}$, and $E(R_1) = \{(i,j'_1),(i,j'_2)\}$.
\item $A_{R_2} \setminus B_{R_2} = \{j_1,j_2\}$, $A_{R_2} \cap B_{R_2} = \{j'_1,j'_2,j_5,\ldots,j_{k}\}$, $B_{R_2} \setminus A_{R_2} = \{i_{circ}\}$, and $E(R_2) = \{(i,j_1),(i,j_2)\}$.
\end{enumerate}
Whether $i < i'$ or $i' < i$ only affects whether this ribbon appears in $M_{\alpha^{+}}M_{\alpha^{+}}^T$ or $M_{\alpha^{-}}M_{\alpha^{-}}^T$.
\end{enumerate}
\end{proof}
\begin{lemma}\label{NNTMinuslemma}
$M_{\alpha^{+}}M_{\alpha^{-}}^T + M_{\alpha^{-}}M_{\alpha^{+}}^T = M_{\alpha_4} + M_{\alpha_5} + M_{\alpha_6}$ where $\alpha_4,\alpha_5,\alpha_6$ are the following shapes.
\begin{enumerate}
\item $U_{\alpha_4} \setminus V_{\alpha_4} = \{i_{circ}\}$, $U_{\alpha_4} \cap V_{\alpha_4} = \{j_1,j_2,j_3,j_4,\ldots,j_{k}\}$, $V_{\alpha_4} \setminus U_{\alpha_4} = \{i'_{circ}\}$, and $E(\alpha_4) = \{(i,j_1),(i,j_2),$ $(i',j_1),(i',j_2)\}$.
\item $U_{\alpha_5} \setminus V_{\alpha_5} = \{j_1\} \cup \{i_{circ}\}$, $U_{\alpha_5} \cap V_{\alpha_5} = \{j_2,j_4,\ldots,j_{k}\}$, $V_{\alpha_5} \setminus U_{\alpha_5} = \{j'_1\} \cup \{i'_{circ}\}$, and $E(\alpha_5) = \{(i,j'_1),(i,j_2),(i',j_1),(i',j_2)\}$.
\item $U_{\alpha_6} \setminus V_{\alpha_6} = \{j_1,j_2\} \cup \{i_{circ}\}$, $U_{\alpha_6} \cap V_{\alpha_6} = \{j_5,\ldots,j_{k}\}$, $V_{\alpha_6} \setminus U_{\alpha_6} = \{j'_1,j'_2\}  \cup \{i'_{circ}\}$, and $E(\alpha_6) = \{(i',j_1),(i',j_2),(i,j'_1),(i,j'_2)\}$.
\end{enumerate}
\end{lemma}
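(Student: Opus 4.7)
The plan is to mimic the proof of the preceding \cref{NNTPluslemma} almost verbatim, but now tracking the ``ordering'' of the two circle vertices in $V$ carefully, since that is what distinguishes $\alpha^+$ from $\alpha^-$. Concretely, I would fix ribbons $R_1$ of shape $\alpha^+$ and $R_2$ of shape $\alpha^-$ and analyze when $M_{R_1}M_{R_2}^{\T}$ is nonzero; by definition this requires $B_{R_1}=B_{R_2}$, which forces the squares $\{j_3,\dots,j_k\}$ and the unordered pair of circles $\{c,c'\}$ in $V$ to agree between the two ribbons. The composed ribbon $T$ then has $A_T=A_{R_1}$ and $B_T=A_{R_2}$, with the Fourier character being the (disjoint) union of the edges of $R_1$ and $R_2$.

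The key observation that organizes the enumeration is the following. In a ribbon of shape $\alpha^+$, the two edges go to the vertex realizing $i_{circ}$, which under the constraint $i<i'$ is the \emph{smaller} of the two circles in $V$; in a ribbon of shape $\alpha^-$, the edges go to the \emph{larger} one. Consequently, for $M_{\alpha^+}M_{\alpha^-}^{\T}$ the resulting ribbon $T$ has its larger circle on the left side $A_T$ (inherited from $U_{\alpha^+}\cap V_{\alpha^+}=\{i'_{circ}\}$) and its smaller circle on the right side $B_T$, while $M_{\alpha^-}M_{\alpha^+}^{\T}$ contributes exactly the ribbons with the opposite orientation. Since the shapes $\alpha_4,\alpha_5,\alpha_6$ do \emph{not} impose an ordering between $i_{circ}$ and $i'_{circ}$, summing the two products will cover both orientations, and each ribbon of $\alpha_4,\alpha_5,\alpha_6$ should appear in exactly one of the two summands.

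I would then do a case analysis on the overlap between the squares $\{j_1,j_2\}=A_{R_1}\setminus B_{R_1}$ and $\{j'_1,j'_2\}=A_{R_2}\setminus B_{R_2}$, exactly paralleling the three cases in \cref{NNTPluslemma}. When $\{j_1,j_2\}\cap\{j'_1,j'_2\}=\emptyset$ the resulting ribbon has four distinct ``outer'' squares and matches $\alpha_6$; when $|\{j_1,j_2\}\cap\{j'_1,j'_2\}|=1$ one square lies in $A_T\cap B_T$ with edges to both circles, matching $\alpha_5$; and when the two pairs coincide the resulting ribbon has both $j_1,j_2$ in $A_T\cap B_T$ with double edges to each circle, matching $\alpha_4$. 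In each of these three cases one verifies that, given a target ribbon $T$ of the corresponding shape, the factorization $(R_1,R_2)$ is uniquely determined: the edges incident to the smaller circle must come from the $\alpha^+$ factor and those incident to the larger one from the $\alpha^-$ factor (or vice versa, for the other product).

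The only real subtlety — and the step that I expect to require the most care — is exactly this bookkeeping of the circle ordering together with the gluing: one must check that for every ribbon $T$ of shape $\alpha_4,\alpha_5,\alpha_6$, the ``splitting'' of its edges into the two groups attached to the two circles in $V(T)$ is forced, so that no ribbon is double-counted and none is missed. Once that is verified case by case, the identity $M_{\alpha^{+}}M_{\alpha^{-}}^{\T}+M_{\alpha^{-}}M_{\alpha^{+}}^{\T}=M_{\alpha_4}+M_{\alpha_5}+M_{\alpha_6}$ follows immediately by summing over all ribbons, exactly as in the proof of \cref{NNTPluslemma}.
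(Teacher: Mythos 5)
Your proposal is correct and follows essentially the same ribbon-enumeration argument as the paper: in each of the three overlap cases (zero, one, or two shared outer squares) between the $\alpha^+$ and $\alpha^-$ factors, the factorization of a composed ribbon $T$ is forced by which circle each edge is incident to, yielding $\alpha_6,\alpha_5,\alpha_4$ respectively, and the two summands $M_{\alpha^+}M_{\alpha^-}^{\T}$ and $M_{\alpha^-}M_{\alpha^+}^{\T}$ cover the two possible circle orderings so that every ribbon is counted exactly once. One small caution on wording: in the full-overlap case the phrase ``double edges to each circle'' could be misread as parallel edges, but $\alpha_4$ is a proper shape with four distinct single edges $\{(i,j_1),(i,j_2),(i',j_1),(i',j_2)\}$ --- unlike the Plus lemma where the two composing ribbons share a circle and thus genuinely produce improper shapes $\alpha_1,\alpha_2$, here the edges go to two \emph{different} circles so no multiedges arise.
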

For pictures of these shapes, see Figure \ref{NNTMinusfigure} below.
\begin{figure}[ht]\label{NNTMinusfigure}
\centerline{\includegraphics[height=6cm]{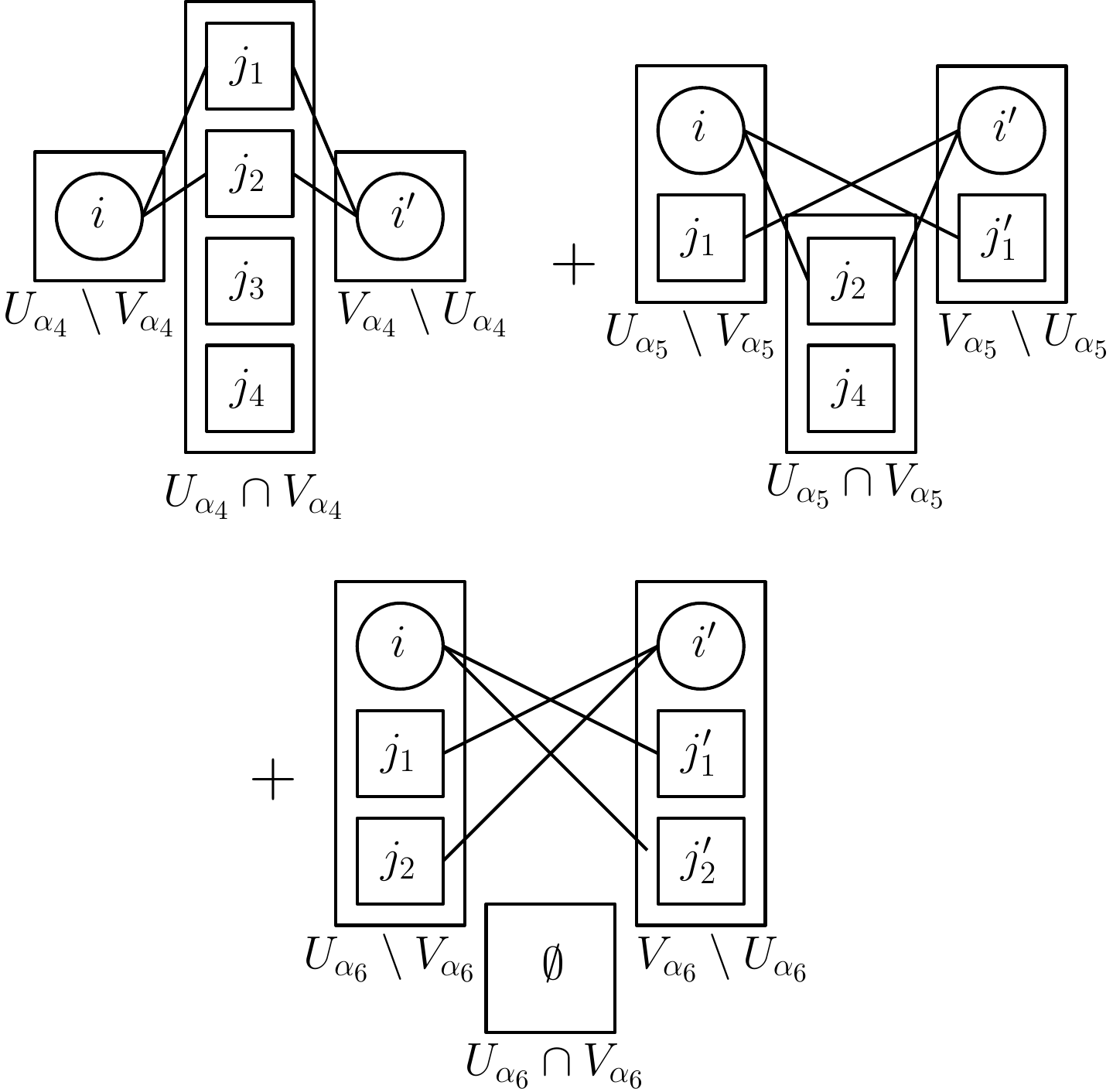}}
\caption{This figure shows the decomposition of $M_{\alpha^{+}}M_{\alpha^{+}}^T + M_{\alpha^{-}}M_{\alpha^{-}}^T$.}
\end{figure}
\begin{proof}[Proof of Lemma \ref{NNTMinuslemma}]
We compute $M_{\alpha^{+}}M_{\alpha^{-}}^T + M_{\alpha^{-}}M_{\alpha^{+}}^T$ by considering the ribbons which appear in $M_{\alpha^{+}}M_{\alpha^{-}}^T + M_{\alpha^{-}}M_{\alpha^{+}}^T$. For these ribbons, we do not assume that $i < i'$.
\begin{enumerate}
\item Each ribbon $R$ with $A_R \setminus B_R = \{i_{circ}\}$, $A_R \cap B_R = \{j_1,j_2,j_3,j_4,\ldots,j_{k}\}$, $B_R \setminus A_R = \{i'_{circ}\}$, $V(R) \setminus (A_R \cup B_R) = \emptyset$, and $E(R) = \{(i,j_1),(i,j_2),(i',j_1),(i',j_2)\}$ appears in exactly one way as the composition of the ribbons $R_1$ and $R_2$ where 
\begin{enumerate}
\item $A_{R_1} \setminus B_{R_1} = \{j_1,j_2\}$, $A_{R_1} \cap B_{R_1} = \{j_3,j_4,\ldots,j_{k}\} \cup \{i_{circ}\}$, $B_{R_1} \setminus A_{R_1} = \{i'_{circ}\}$, and $E(R_1) = \{(i',j_1),(i',j_2)\}$.
\item $A_{R_2} \setminus B_{R_2} =  \{i_{circ}\}$, $A_{R_2} \cap B_{R_2} = \{j_3,j_4,\ldots,j_{k}\} \cup \{i'_{circ}\}$, $B_{R_2} \setminus A_{R_2} = \{j_1,j_2\}$, and $E(R_2) = \{(i,j_1),(i,j_2)\}$.
\end{enumerate}
Whether $i < i'$ or $i' < i$ only affects whether this ribbon appears in $M_{\alpha^{+}}M_{\alpha^{-}}^T$ or $M_{\alpha^{-}}M_{\alpha^{+}}^T$.
\item Each ribbon $R$ with $A_R \setminus B_R = \{j_1\} \cup \{i_{circ}\}$, $A_R \cap B_R = \{j_2,j_4,\ldots,j_{k}\}$, $B_R \setminus A_R = \{j'_1\} \cup \{i'_{circ}\}$, $V(R) \setminus (A_R \cup B_R) = \emptyset$, and $E(R) = \{(i,j_1),(i,j_2),(i',j'_1),(i',j_2)\}$ appears in exactly one way as the composition of the ribbons $R_1$ and $R_2$ where 
\begin{enumerate}
\item $A_{R_1} \setminus B_{R_1} = \{j_1,j_2\}$, $A_{R_1} \cap B_{R_1} = \{j'_1,j_4,\ldots,j_{k}\} \cup \{i_{circ}\}$, $B_{R_1} \setminus A_{R_1} = \{i'_{circ}\}$, and $E(R_1) = \{(i',j_1),(i',j_2)\}$.
\item $A_{R_2} \setminus B_{R_2} =  \{i_{circ}\}$, $A_{R_2} \cap B_{R_2} = \{j_1,j_4,\ldots,j_{k}\} \cup \{i'_{circ}\}$, $B_{R_2} \setminus A_{R_2} = \{j'_1,j_2\}$, and $E(R_2) = \{(i,j'_1),(i,j_2)\}$.
\end{enumerate}
Whether $i < i'$ or $i' < i$ only affects whether this ribbon appears in $M_{\alpha^{+}}M_{\alpha^{-}}^T$ or $M_{\alpha^{-}}M_{\alpha^{+}}^T$.
\item Each ribbon $R$ with $A_R \setminus B_R = \{j_1,j_2\} \cup  \{i_{circ}\}$, $A_R \cap B_R = \{j_5,\ldots,j_{k}\}$, $B_R \setminus A_R = \{j'_1,j'_2\}  \cup  \{i'_{circ}\}$, $V(R) \setminus (A_R \cup B_R) = \emptyset$, and $E(R) = \{(i,j'_1),(i,j'_2),(i',j_1),(i',j_2)\}$ appears in exactly one way as the composition of the ribbons $R_1$ and $R_2$ where 
\begin{enumerate}
\item $A_{R_1} \setminus B_{R_1} = \{j_1,j_2\}$, $A_{R_1} \cap B_{R_1} = \{j_1,j_2,j_5,\ldots,j_{k}\} \cup \{i_{circ}\}$, $B_{R_1} \setminus A_{R_1} = \{i'_{circ}\}$, and $E(R_1) = \{(i,j'_1),(i,j'_2)\}$.
\item $A_{R_2} \setminus B_{R_2} = \{i_{circ}\}$, $A_{R_2} \cap B_{R_2} = \{j'_1,j'_2,j_5,\ldots,j_{k}\}  \cup \{i'_{circ}\}$, $B_{R_2} \setminus A_{R_2} = \{j'_1,j'_2\}$, and $E(R_2) = \{(i,j_1),(i,j_2)\}$.
\end{enumerate}
Whether $i < i'$ or $i' < i$ only affects whether this ribbon appears in $M_{\alpha^{+}}M_{\alpha^{-}}^T$ or $M_{\alpha^{-}}M_{\alpha^{+}}^T$.
\end{enumerate}
\end{proof}
We now consider the norm bounds for these terms
\begin{enumerate}
\item $\norm{M_{\alpha_4}}$ is $\tilde{O}(m)$.
\item $\norm{M_{\alpha_5}}$ is $\tilde{O}(m)$.
\item $\norm{M_{\alpha_6}}$ is $\tilde{O}(n^2)$.
\end{enumerate}
This means that $M_{\alpha_6}$ cannot be ignored, but this is fine. In fact, the $M_{\alpha_6}$ here will cancel with the $M_{\alpha_6}$ that appears in the decomposition of ${L_k^T}L_k$
\begin{corollary}\label{NNTapproximationcorollary}
For all $k \geq 4$, $\norm{{N_k}{N_k}^T - 4M_{\alpha^{+}}M_{\alpha^{+}}^T - 4M_{\alpha^{-}}M_{\alpha^{-}}^T + 4M_{\alpha_6}}$ is $\tilde{O}(m)$.
\end{corollary}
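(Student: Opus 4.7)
\begin{proofsketch}
The plan is to expand $N_k N_k^T$ bilinearly using the decomposition of $N_k$, isolate the dominant contribution, and then bound all remaining terms in operator norm by $\tilde{O}(m)$.

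First I would write
\[
N_k N_k^T = \sum_{i,j = 1}^{5} c_{\alpha_i} c_{\alpha_j} \left( M_{\alpha_i^+} - M_{\alpha_i^-} \right) \left( M_{\alpha_j^+} - M_{\alpha_j^-} \right)^T,
\]
and identify the $(i,j) = (1,1)$ summand as the only contribution which is not already $\tilde{O}(m)$. Using $c_{\alpha_1} = 2$ together with the shape identifications $\alpha_1^+ = \alpha^+$ and $\alpha_1^- = \alpha^-$, this summand equals
\[
4\left( M_{\alpha^+} M_{\alpha^+}^T + M_{\alpha^-} M_{\alpha^-}^T \right) - 4\left( M_{\alpha^+} M_{\alpha^-}^T + M_{\alpha^-} M_{\alpha^+}^T \right).
\]
Applying \cref{NNTMinuslemma} replaces the second group by $4(M_{\alpha_4} + M_{\alpha_5} + M_{\alpha_6})$, so the $(1,1)$ term already accounts for the stated $-4M_{\alpha_6}$ on the left-hand side of the corollary, plus an extra $-4(M_{\alpha_4} + M_{\alpha_5})$ which has operator norm $\tilde{O}(m)$ by the stated norm bounds $\norm{M_{\alpha_4}}, \norm{M_{\alpha_5}} = \tilde{O}(m)$.

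Next I would bound all remaining summands $(i,j) \ne (1,1)$. The four scaled graph matrices $c_{\alpha_j}(M_{\alpha_j^+} - M_{\alpha_j^-})$ for $j \in \{2,3,4,5\}$ have norms at most $\tilde{O}(\sqrt{m/n})$ (as listed just after the decomposition of $N_k$), while $2(M_{\alpha_1^+} - M_{\alpha_1^-})$ has norm $\tilde{O}(n)$. By the triangle inequality and sub-multiplicativity of the operator norm, every cross term with $i = 1$ or $j = 1$ has norm $\tilde{O}(n \cdot \sqrt{m/n}) = \tilde{O}(\sqrt{mn})$, and every cross term with $i, j \in \{2,3,4,5\}$ has norm $\tilde{O}(m/n)$. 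The main thing to check is that $\sqrt{mn} = \tilde{O}(m)$, which holds because $m \ge n$ in our regime, so that $\sqrt{mn} \le m$. The $\tilde{O}(m/n)$ terms are also clearly $\tilde{O}(m)$. Summing the $O(1)$ many error contributions gives the claimed $\tilde{O}(m)$ bound.

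There is no real obstacle here; everything reduces to bookkeeping with the already-established norm bounds on $M_{\alpha_j^\pm}$ and the shape-product identities \cref{NNTPluslemma} and \cref{NNTMinuslemma}. The one step that deserves care is noting that the $-4M_{\alpha_6}$ correction in the statement exactly cancels the $M_{\alpha_6}$ piece produced by \cref{NNTMinuslemma}; without this cancellation the error would be $\tilde{O}(n^2)$ instead of $\tilde{O}(m)$, since $\norm{M_{\alpha_6}} = \tilde{O}(n^2)$ and (for $m \le n^{3/2 - \eps}$) $n^2 \gg m$. This is precisely the phenomenon that forces $M_{\alpha_6}$ to appear on both sides of \cref{QQTapproximationcorollary} and \cref{NNTapproximationcorollary}, and it is what will allow the $\alpha_6$ contributions to cancel out when we subsequently combine ${L_k^T}L_k$ with a suitable multiple of $N_k N_k^T$ to lower bound the minimum nonzero eigenvalue of $QQ^T$.
\end{proofsketch}
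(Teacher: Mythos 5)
Your proof is correct and matches the implicit argument the paper leaves to the reader: expand $N_k N_k^T$ bilinearly, isolate the $(1,1)$ block $4(M_{\alpha^+}-M_{\alpha^-})(M_{\alpha^+}-M_{\alpha^-})^T$, apply Lemma~\ref{NNTMinuslemma} to produce the $-4M_{\alpha_6}$ cancellation, and bound the residual $-4(M_{\alpha_4}+M_{\alpha_5})$ and all cross terms by the listed component norms together with $m\ge n$. The observation that $M_{\alpha_6}$ must cancel (since $\norm{M_{\alpha_6}}=\tilde{O}(n^2)\gg m$) is exactly the point the paper flags as motivating the statement.
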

\subsubsection{Putting Everything Together}
We now put everything together to prove that for all $k$, the minimum nonzero eigenvalue of ${L_k^T}{L_k}$ is $2n^2 - \tilde{O}(n\sqrt{m})$.
\begin{enumerate}
\item For $k = 2$, by Corollary \ref{QQTapproximationcorollary}, $\norm{{L_k^T}L_k - 2{n^2}M_{Id_{k-2,1}}}$ is $\tilde{O}(n\sqrt{m})$ so the minimum eigenvalue of ${L_k^T}{L_k}$ is $2n^2 - \tilde{O}(n\sqrt{m})$.
\item For $k = 3$, by Corollary \ref{QQTapproximationcorollary}, $\norm{{L_k^T}L_k - 2{n^2}M_{Id_{k-2,1}} - 4nM_{\beta_2}M^T_{\beta_2}}$ is $\tilde{O}(n\sqrt{m})$ so the minimum eigenvalue of ${L_k^T}{L_k}$ is $2n^2 - \tilde{O}(n\sqrt{m})$.
\item For $k \geq 4$, by Corollary \ref{QQTapproximationcorollary}, $\norm{{L_k^T}L_k - 2{n^2}M_{Id_{k-2,1}} - 4nM_{\beta_2}M^T_{\beta_2} - 4M_{\beta_3}M^T_{\beta_3} - 4M_{\alpha_6}}$ is $\tilde{O}(n\sqrt{m})$. By Corollary \ref{NNTapproximationcorollary}, $\norm{{N_k}{N_k}^T - 4M_{\alpha^{+}}M_{\alpha^{+}}^T - 4M_{\alpha^{-}}M_{\alpha^{-}}^T + 4M_{\alpha_6}}$ is $\tilde{O}(m)$. Combining these equations, 
\[
\norm{{L_k^T}L_k + {N_k}{N_k}^T - 2{n^2}M_{Id_{k-2,1}} - 4nM_{\beta_2}M^T_{\beta_2} - 4M_{\beta_3}M^T_{\beta_3} - 4M_{\alpha^{+}}M_{\alpha^{+}}^T - 4M_{\alpha^{-}}M_{\alpha^{-}}^T}
\] is $\tilde{O}(n\sqrt{m})$ so the minimum eigenvalue of ${L_k^T}L_k + {N_k}{N_k}^T$ is $2n^2 - \tilde{O}(n\sqrt{m})$. Since the minimum nonzero eigenvalue of ${L_k^T}L_k$ is at least as large as the minimum eigenvalue of ${L_k^T}L_k + {N_k}{N_k}^T$, the nonzero eigenvalue of ${L_k^T}{L_k}$ is $2n^2 - \tilde{O}(n\sqrt{m})$.
\end{enumerate}
This implies that the minimum nonzero eigenvaue of $QQ^T$ is $2n^2 - \tilde{O}(n\sqrt{m})$, as needed.

\section{Open Problems}\label{sec:open-problems}

We conjecture that for the Planted Affine Planes problem, the problem remains difficult even with the number of vectors increased to $m = n^{2 - \epsilon}$.

\begin{conjecture}
	\cref{theo:sos-bounds} holds with the bound on the number of sampled vectors $m$ loosened to $m \leq n^{2-\eps}$. 
\end{conjecture}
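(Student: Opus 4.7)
The plan is to follow the same overall strategy as the main proof, reusing the pseudocalibration (\cref{sec:pseudo_calib}) and exact-constraint projection (\cref{sec:exact-constraints}) essentially verbatim, and attempting to extend only the PSD analysis of \cref{sec:psd}. The bottleneck that currently caps $m$ at $n^{3/2-\eps}$ is the charging argument in \cref{lem:charging} (and its refined version \cref{lem:advanced-charging}): after spider-killing, a remaining non-spider shape can still have a circle $\circle{u}$ adjacent to exactly one end vertex (a degree-$1$ square in $U_\alpha\cup V_\alpha$; the case of end vertices on both sides is handled by the $S$-trick). Such a degree-$\ge 4$ circle has only $3$ useful edges available to absorb its $\sqrt{m}$ contribution, forcing the edge-distribution balance $\alpha+\beta = \tfrac14 + \tfrac{\log_n m}{6} \le \tfrac12$, which binds precisely at $m=n^{3/2}$.

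To push to $m\le n^{2-\eps}$, every remaining shape must have every circle free of end vertices, so that $4$ useful edges are available and the balance becomes $\tfrac14 + \tfrac{\log_n m}{8} \le \tfrac12$, binding at $m=n^2$. I would attempt this by enriching the nullspace-identity framework. First, for each $k\ge 2$, the identity $\pE[p\cdot(\ip{v}{d_u}^2-1)^k]=0$ yields a $k$-th-order analogue of $L_k$ whose columns lie in $\nullspace(\calM_{fix})$ but supported on shapes where $\circle{u}$ carries up to $2k$ end vertices; combined with the existing $L_k$, this family suffices to kill any shape in which some circle has $\ge 2$ end vertices on a single side. Second, to attack the genuinely hard case of circles carrying exactly one end vertex, I would employ multi-circle identities $\pE[p\cdot\prod_{r=1}^{s}(\ip{v}{d_{u_r}}^2-1)]=0$: expanding the product produces nullspace generators supported on shapes in which several distinct circles each carry a single end vertex, and their proper collapses should allow these substructures to be recursively eliminated. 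After these additional killings, the refined $\calM^+$ would contain only non-spider shapes in which every circle is middle-dominated, and the charging of \cref{lem:charging} and \cref{lem:advanced-charging} would then yield the desired $m\le n^{2-\eps}$ bound.

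The main obstacle lies in controlling coefficient accumulation in the resulting enlarged web. The current analysis relies crucially on \cref{prop:web-derivation-number}, which provides a monovariant guaranteeing that type-$2$ collapses (carrying a $1/n$ factor) occur frequently enough along any long path to keep \cref{lem:web-leaves}'s bound polynomial. With multi-circle and higher-order generators in play, the web becomes much denser: a single non-spider leaf is reachable via many more intermediate paths, and the relevant monovariant must track joint end-vertex structure across several circles simultaneously. Proving such a stronger monovariant, together with a generalization of the approximate factorization $L_k\cdot M_{\body(\alpha)}\approx 2M_\alpha$ from \cref{lem:collapse-lemma} to multi-circle factorizations whose intersection terms are still controllable, is where I expect the proof to be genuinely hard and where new combinatorial ideas---likely inspired by the refined intersection-pattern analyses developed for Planted Clique---will be needed.
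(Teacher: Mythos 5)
This statement is a \emph{conjecture} in the paper; the authors provide no proof, only a heuristic pointer (the $\pE[1]$ calculation of \cref{rmk:pe-one}) for why $m=n^{2-\eps}$ should be the right threshold. So there is no proof of theirs for you to recover or diverge from, and your proposal has to be judged on its own merits as a research plan.

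Your diagnosis of the bottleneck is correct: in the charging argument of \cref{lem:charging}, a degree-$4$ circle adjacent to exactly one degree-$1$ square in $U_\alpha\cup V_\alpha$ has only three ``useful'' edges available to absorb its $\sqrt m$ contribution, and the resulting constraint $n^{1/4}m^{1/6}\le n^{1/2}$ is exactly what caps $m$ at $n^{3/2}$. However, your proposed repair does not obviously reach the shapes that actually cause the trouble. Your first idea, nullspace elements generated by $(\ip{v}{d_u}^2-1)^k$, would target shapes in which a single circle carries $\ge 2$ degree-$1$ squares on one side --- but those are already spiders and are already killed. The genuinely problematic shapes are non-spiders in which a circle carries exactly one degree-$1$ square per side, and here the nullspace-factorization philosophy runs into a conceptual obstruction: the constraint $\ip{v}{d_u}^2-1=0$ inherently produces \emph{paired} degree-$1$ edges at a circle, which is precisely the spider substructure and precisely why $\calM_{fix}L_k=0$ holds. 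A circle with a single degree-$1$ edge is not the footprint of any constraint polynomial, so there is no analogous identity that puts such a shape (even approximately) into $\nullspace(\calM_{fix})$, and you cannot ``kill'' it the way spiders are killed. Your multi-circle identities $\prod_r(\ip{v}{d_{u_r}}^2-1)=0$ still place two end vertices per circle, so it is not at all clear they help; you would at minimum need to exhibit a collapse pattern of such a product that produces the single-end-vertex substructure and argue it lands in the nullspace after collapses, which you do not do.

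You are also candid, and correct, that the remaining pieces --- a generalization of the approximate factorization in \cref{lem:collapse-lemma} to the enlarged generator family, and a replacement for the monovariant in \cref{prop:web-derivation-number} controlling coefficient accumulation over the much denser web --- are entirely unproven. The proposal should therefore be read as an honest sketch of one direction of attack, with the central new ideas (what kills a single-end-vertex circle, and what monovariant bounds the enlarged web) still missing. That is consistent with the fact that the paper itself leaves this as an open problem.
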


The reason for the upper bound comes from \cref{rmk:pe-one}. Analyzing $\pE[1]$ is an established way to hypothesize about the power of SoS in hypothesis testing problems (see \cite{hop17}, \cite{hop18}).

Dual to the Planted Affine Planes problem, we conjecture a similar
bound for Planted Boolean Vector problem whenever $d \geq
n^{1/2+\eps}$.

\begin{conjecture}
	\cref{theo:boolean-subspace} holds with the bound on the dimension $p$ of a random subspace
  loosened to $p \geq n^{1/2+\eps}$.
\end{conjecture}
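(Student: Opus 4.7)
The plan is to leverage the reduction from Planted Boolean Vector to Planted Affine Planes used in the proof of \cref{theo:boolean-subspace}. Under that reduction, the present conjecture is implied by proving SoS lower bounds for PAP with $m$ vectors in $\RR^p$ up to $m \leq p^{2-\Omega(\eps)}$, which is precisely the first conjecture of \cref{sec:open-problems}. I would therefore aim to extend the framework of \cref{sec:psd} to the regime $n^{3/2-\eps} < m \leq n^{2-\eps}$, keeping the overall three-stage structure (pseudocalibrate, decompose into graph matrices, kill the obstructions to PSD-ness) but enriching the second and third stages.

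The first step is to diagnose precisely where the current argument fails once $m \gg n^{3/2}$. The charging argument in \cref{lem:charging} relies on each circle of degree $\geq 4$ absorbing four edge factors of $n^{-1/2}$, which exactly balances a vertex factor of $\sqrt{m}$ at $m = n^{3/2}$. Beyond this threshold, the same analysis reveals a broader family $\calB$ of ``bad'' shapes whose norm exceeds the coefficient of the identity on their block; these include multi-spiders (a circle with many low-degree squares attached), shapes with several circles sharing degree-$1$ squares, and more subtle configurations forced by parity. I would begin by cataloguing $\calB$ and proving a sharper structural analogue of \cref{lem:charging} showing that these are the only obstructions.

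The second and most delicate step is to construct null-space completions for every shape in $\calB$, generalizing the role of $L_k$ from \cref{sec:single-spider}. The matrix $L_k$ encodes the single constraint $\ip{v}{d_u}^2 = 1$ together with its collapses; for richer bad shapes one needs \emph{iterated} constraints, for instance $\prod_{j}(\ip{v}{d_{u_j}}^2-1) = 0$. In fact the two-constraint product already surfaces implicitly in \cref{sec:exact-constraints} through the analysis of $N_k$, which was used there to identify the null space of $QQ^\T$. I would formalize a family of completion matrices $L_{k_1,\dots,k_r}$ indexed by tuples of distinguished circles, show that they lie in $\nullspace(\calM_{fix})$ via an extension of \cref{lem:completed-left-side}, and verify that their approximate factorizations recover every bad shape in $\calB$ (up to intersection terms) in the spirit of \cref{lem:collapse-lemma}.

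The third step is to run the recursive killing procedure of \cref{sec:psd} with this enlarged library of completions, and to bound the accumulated coefficients on the remaining non-bad shapes. Here \cref{lem:web-leaves} must be upgraded: there are now many types of killings, and the type-$1$/type-$2$ dichotomy used in \cref{prop:web-derivation-number} must be replaced by a finer potential function that still strictly decreases along every killing step. Likewise, the charging inequality of \cref{lem:advanced-charging} must be redone in this richer setting so that the extra powers of $1/n$ arising from multi-constraint completions precisely compensate for the weaker norm bounds available for shapes with low-degree circles. The main obstacle I anticipate is exactly this combinatorial bookkeeping: the web of collapses grows substantially more branchy, and proving that it terminates with only $\poly(n)$-type overhead per edge of the root shape---while simultaneously maintaining the $\eta^{|U|+|V|}$ scaling required by \cref{lem:block-psd}---will likely demand a genuinely new charging scheme rather than a direct extension of the present one.
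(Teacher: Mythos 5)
This statement is a \emph{conjecture}; the paper does not prove it, so there is no proof of record to compare against. What you have written is a research program, not a proof, and you acknowledge as much in the final paragraph. A few remarks on the plan itself.

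Your first observation is correct and matches the logic of the paper: the proof of \cref{theo:boolean-subspace} instantiates the PAP lower bound with $n$ Gaussian vectors in $\RR^p$, and the constraint $p\geq n^{2/3+\eps}$ is exactly the rearrangement of $n\leq p^{3/2-\Omega(\eps)}$; loosening to $p\geq n^{1/2+\eps}$ therefore requires the PAP lower bound with $m\leq n^{2-\eps}$, i.e.\ the first conjecture of \cref{sec:open-problems}. So the two conjectures are linked exactly as you say. But this reduction simply transfers the problem; it does not close it.

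The substantive gaps are in stages two and three of your plan, and they are the entirety of the mathematical content. (a) You assert that beyond $m=n^{3/2}$ the obstructions are ``multi-spiders'' and a few related families, and that one can prove a sharper structural analogue of \cref{lem:charging} showing these are the \emph{only} shapes that fail the norm bound. This classification is not carried out and is not obvious; in particular, the charging in \cref{lem:charging} exploits that $\deg(\circle{u})\geq 4$ and that no circle touches two degree-$1$ squares in the same side, and relaxing either assumption (which you must once you allow more shapes to be killed rather than bounded) opens a much wider space of configurations than the three families you name. (b) The proposed completion matrices $L_{k_1,\dots,k_r}$ built from products $\prod_j(\ip{v}{d_{u_j}}^2-1)$ are plausible, and you are right that $N_k$ in \cref{sec:exact-constraints} hints at them, but $N_k$ was introduced only to understand $\nullspace(QQ^\T)$ and is never used as a killing vector. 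Showing $\calM_{fix}L_{k_1,\dots,k_r}=0$, controlling the zoo of collapse terms in the analogue of \cref{lem:collapse-lemma}, and especially bounding the coefficients that accumulate, would all require new arguments; the monovariant in \cref{prop:web-derivation-number} degenerates once several circles can participate in a single killing. (c) The analogue of \cref{lem:advanced-charging} must now compensate for shapes with degree-$0$ and degree-$2$ circles created by a much larger set of collapse moves, and you give no candidate potential function to replace the $|\calS\cap W|+|U\setminus(U\cap V)|+|V\setminus(U\cap V)|$ quantity. These are not bookkeeping details; they are the heart of the open problem, and the proposal leaves all of them unresolved.
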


We conjecture that the Planted Boolean Vector problem/Planted Affine 
Planes problem is still hard for SoS if the input is no longer i.i.d. 
Gaussian or boolean entries, but is drawn from a ``random enough'' 
distribution. For example, if in the random instance of PAP the vectors 
$d_u$ are i.i.d. samples from $S^n$, or a random orthonormal system, 
degree $n^\delta$ SoS should still believe the instance is satisfiable 
(after appropriate normalization of $v$). Or, taking the view of Planted 
Boolean Vector, if the subspace is the eigenspace of the bottom 
eigenvectors of a random adjacency matrix, the instance should still be 
difficult. This last setting arises in MaxCut, for which we conjecture the following.

\begin{conjecture}
    Let $d \geq 3$, and let $G$ be a random $d$-regular graph on $n$ vertices. For some $\delta > 0$, w.h.p. there is a degree-$n^\delta$ pseudoexpectation operator $\pE$ on boolean variables $x_i$ with MaxCut value at least
    \[ \frac{1}{2} + \frac{\sqrt{d-1}}{d}(1 - \littleoh_{d,n}(1)) \]
\end{conjecture}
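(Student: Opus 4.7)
The strategy mirrors the deduction of \cref{theo:sk-bounds} from \cref{theo:boolean-subspace}, but with the Gaussian random subspace replaced by the bottom eigenspace of the adjacency matrix. Write $A$ for the adjacency matrix of $G$, with eigenvalues $\lambda_1\ge\cdots\ge\lambda_n$ and orthonormal eigenvectors $w_1,\ldots,w_n$. For $x \in \{\pm 1\}^n$ the induced cut has size $|E|/2 - \tfrac{1}{4}x^\T A x$, so on the $d$-regular graph $G$ we have $\mathrm{cut}(x)/|E| = \tfrac12 - x^\T A x/(2dn)$, and it suffices to exhibit a degree-$n^\delta$ pseudoexpectation $\pE$ with $\pE[x_i^2]=1$ for all $i$ and
\[\pE[x^\T A x] \le -2\sqrt{d-1}\cdot(1-o_{d,n}(1))\cdot n.\]

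The plan is to take $p = n^{2/3+\eps}$ and let $V = \mathrm{span}(w_{n-p+1},\ldots,w_n)$ be the span of the $p$ most negative eigenvectors of $A$. Suppose one could construct $\pE$ via an analogue of the Planted Boolean Vector lower bound (\cref{theo:boolean-subspace}) in which the random subspace is replaced by $V$, obtaining $\pE[x^\T \Pi_V x] = n$ at degree $n^\delta$. Since $\{w_i\}$ is orthonormal, $\pE[x^\T x] = \sum_i \pE[\ip{x}{w_i}^2] = n$, and by PSDness each $\pE[\ip{x}{w_i}^2]\ge 0$; combined with $\pE[x^\T\Pi_V x] = \sum_{i\ge n-p+1}\pE[\ip{x}{w_i}^2] = n$, this forces $\pE[\ip{x}{w_i}^2]=0$ for every $i \le n-p$. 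In particular the Perron direction $\mathbf{1}/\sqrt{n}$ contributes nothing, and
\[\pE[x^\T A x] = \sum_{i=n-p+1}^{n}\lambda_i\,\pE[\ip{x}{w_i}^2] \le \lambda_{n-p+1}\cdot n.\]
The Kesten--McKay density $\rho(t)\propto \sqrt{4(d-1)-t^2}/(d^2-t^2)$ behaves like $\sqrt{\,t+2\sqrt{d-1}\,}$ near the lower edge, so the number of eigenvalues of $A$ in $[-2\sqrt{d-1},-2\sqrt{d-1}+\Delta]$ is, w.h.p.\ by the quantitative local laws for random $d$-regular graphs, of order $n\Delta^{3/2}$. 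Choosing $\Delta$ with $n\Delta^{3/2} = p$ gives $\Delta = n^{-2/9+2\eps/3}=o(1)$ and hence $\lambda_{n-p+1}\le -2\sqrt{d-1}\,(1-o_{d,n}(1))$, yielding the claimed certified cut fraction $\tfrac12 + \tfrac{\sqrt{d-1}}{d}\,(1-o_{d,n}(1))$.

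The crux is therefore the construction of $\pE$ itself: one must extend \cref{theo:boolean-subspace} from a subspace spanned by i.i.d.\ standard Gaussian basis vectors to the bottom eigenspace of the adjacency matrix of a random $d$-regular graph, precisely the content of the second open conjecture above. The natural approach is to pseudocalibrate against a planted distribution where $x\in\{\pm 1\}^n$ is sampled uniformly and $G$ is a uniformly random $d$-regular graph conditioned on $x$ lying in its bottom eigenspace, and then to rerun the graph-matrix expansion and spider-killing analysis of \cref{sec:pseudo_calib,sec:psd}. Making this work requires two universality-type ingredients: (i)~an explicit Fourier-style description of the resulting pseudocalibrated moment matrix, in analogy with \cref{lem:gaussian-pseudocal}; and (ii)~norm bounds on the associated graph matrices mimicking \cref{lem:gaussian-norm-bounds}, even though the ``entries'' are now highly correlated polynomial functions of the random graph. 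Recent progress on eigenvector delocalization and local laws for random regular graphs suggests these estimates should be attainable in principle, but identifying the correct analogue of the spider substructure coming from the new constraints---and proving the corresponding null-space decay that compensates for the large-norm spider graph matrices---appears to be the principal technical hurdle.
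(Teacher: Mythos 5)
The statement you are trying to prove is presented in the paper as an open conjecture, not a theorem; the paper itself offers only heuristic evidence for it (namely, that the only property of the inputs used in the PAP analysis was a family of graph-matrix norm bounds). So there is no paper proof to compare against, and your proposal—by your own admission—reduces the conjecture to a different open problem rather than proving it.

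That said, the reduction you set up is sound and mirrors the deduction of \cref{theo:sk-bounds} from \cref{theo:boolean-subspace}. Taking $V$ to be the span of the $p = n^{2/3+\eps}$ most negative eigenvectors of $A$, the spectral step
\[
\pE[x^\T A x] = \sum_{i}\lambda_i \pE[\ip{x}{w_i}^2] \le \lambda_{n-p+1}\cdot n
\]
(using PSDness of $\pE$ together with $\pE[x^\T\Pi_V x]=n$ to force $\pE[\ip{x}{w_i}^2]=0$ for $i\le n-p$) is correct, and the Kesten--McKay edge-count computation $\Delta \asymp (p/n)^{2/3} = n^{-2/9+2\eps/3} = o(1)$ gives $\lambda_{n-p+1}\le -2\sqrt{d-1}\,(1-o_{d,n}(1))$ as needed—modulo the quantitative near-edge local law for fixed-degree random regular graphs, which you correctly flag as an assumed ingredient. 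The translation into the cut fraction is also right.

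The genuine gap is exactly the ingredient you single out as the ``crux.'' For the GOE reduction, the paper invokes \cref{theo:boolean-subspace} precisely because the bottom $p$-eigenspace of a $\GOE$ matrix is a \emph{uniformly random} $p$-dimensional subspace (\cref{fact:goe}). The bottom eigenspace of a random $d$-regular adjacency matrix is emphatically not a uniformly random subspace: its basis vectors are highly correlated with the graph and with each other, and no analogue of the Planted Affine Planes pseudocalibration and spider-killing machinery currently applies. Reconstructing that machinery—a usable Fourier basis for the conditioned random graph, norm bounds for the resulting correlated graph matrices in the spirit of \cref{lem:gaussian-norm-bounds}, and the null-space structure replacing the spiders—is exactly what remains open, and none of these steps is actually carried out in your proposal. (A smaller inaccuracy: the missing lemma is not ``the second open conjecture above,'' which concerns pushing $p$ down to $n^{1/2+\eps}$ for a \emph{uniform} random subspace, but rather the paper's informal remark that PBV should remain hard when the subspace is the bottom eigenspace of a random adjacency matrix.) So what you have is a correct spectral reduction wrapped around an unproved core, not a proof of the conjecture.
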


The above expression is w.h.p. the value of the spectral relaxation for MaxCut, therefore qualitatively this conjecture expresses that degree $n^\delta$ SoS cannot significantly tighten the basic spectral relaxation. 

We should remark that, with respect to the goal of showing SoS cannot significantly outperform the Goemans-Williamson relaxation, random instances are not integrality gap instances. The main difficulty in comparing (even degree 4) SoS to the Goemans-Williamson algorithm seems to be the lack of a candidate hard input distribution.

Evidence for this conjecture comes from the fact that the only property 
required of the random inputs $d_1, \dots, d_m$ was that norm bounds hold for 
the graph matrix with Hermite polynomial entries. When the variables 
$\{d_{u,i}\}$ are i.i.d from some other distribution, if we use graph matrices 
for the orthonormal polynomials under the distribution and assuming suitable 
bounds on the moments of the distribution, the same norm bounds 
hold~\cite{AMP20}.
When $d_u \unif S^n$ or another distribution for which the coordinates are not 
i.i.d, it seems likely that if we use e.g. the spherical harmonics then 
similar norm bounds hold, but this is not proven.

\subsection*{Acknowledgements}

We thank Madhur Tulsiani and Pravesh K. Kothari for several enlightening discussions in the initial phases of this work. G.R. thanks Sidhanth Mohanty for useful discussions regarding
the Sherrington-Kirkpatrick problem. We also thank the anonymous reviewers for their useful suggestions on improving the text.

\bibliographystyle{alphaurl}
\bibliography{macros,madhur}

\appendix

\section{Norm Bounds}\label{app:norm_bounds}

The norm bounds we use come from applying the trace power method
in~\cite{AMP20}. The paper~\cite{AMP20} uses a slightly different
definition of matrix index. They define a \textit{matrix index piece}
as a tuple of distinct elements from either $\calC_m$ or $\calS_n$
along with a fixed integer denoting multiplicity. A matrix index is
then a set of matrix index pieces. Our graph matrix $M_\alpha$ appears
as a submatrix of those matrices: for a given set of square vertices,
order the squares in increasing order in a tuple, and assign it
multiplicity 1. Hence the same norm bounds apply.

Boolean norm bounds:
\begin{lemma}\label{lem:norm-bounds}
Let $V_{rel}(\alpha) \defeq V(\alpha) \setminus (U_\alpha \cap V_\alpha)$. There is a universal constant $C$ such that the following norm bound holds for all proper shapes $\alpha$ w.h.p.:
\[\norm{M_\alpha} \leq 2\cdot\left(\abs{V(\alpha)} \cdot \log(n)\right)^{C\cdot \abs{V_{rel}(\alpha)}} \cdot n^{\frac{w(V(\alpha)) - w(S_{\min}) + w(W_{iso})}{2}} \]
\end{lemma}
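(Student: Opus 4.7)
The plan is to establish this via the trace power method, following the framework developed for graph matrices in the planted clique literature and generalized in~\cite{AMP20}. Specifically, for an even integer $q$ to be chosen, I would bound
\[\norm{M_\alpha}^{2q} \le \Tr\bigl((M_\alpha M_\alpha^\T)^q\bigr),\]
and then expand the right-hand side as a sum over length-$2q$ closed sequences of ribbons $R_1, R_1', R_2, R_2', \dots, R_q, R_q'$ of shapes $\alpha, \alpha^\T, \alpha, \alpha^\T, \ldots$, where consecutive ribbons must glue along their matching index sides. Taking expectations, a term survives only if every edge $(\square{i},\circle{u})$ used is traversed an even number of times across the sequence, since the Rademacher variables $d_{u,i}$ have odd moments equal to zero. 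This reduces the problem to counting even-edge-covered closed walks of ribbons and bounding the resulting contributions.

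For the counting, I would adapt the encoding scheme of~\cite{AMP20}: process the ribbons in order, and at each step record (a) the "new" vertices (those appearing for the first time) and (b) the identifications of "old" vertices with previously seen ones. The total number of realizations is roughly $\prod n^{\#\text{new squares}} m^{\#\text{new circles}}$ per walk, and identifications incur a $\log n$ overhead absorbed into the $(\abs{V(\alpha)}\log n)^{C \abs{V_{rel}(\alpha)}}$ factor. The crucial combinatorial claim, which drives the $w(S_{\min})$ term, is that in any even-edge-covered closed walk the vertices in a minimum weighted vertex separator $S_{\min}$ of $\alpha$ must be "pinned" between each consecutive $(R_i, R_i')$ pair, so they can be amortized and counted only once overall rather than $q$ times. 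Thus each of the $q$ copies of $\alpha$ contributes a factor $n^{w(V(\alpha)) - w(S_{\min})}$ to the total vertex-choice count, while $S_{\min}$ contributes $n^{w(S_{\min})}$ just once, matching the exponent $\tfrac{1}{2}(w(V(\alpha))-w(S_{\min}))$ after taking a $2q$-th root.

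The isolated-vertex correction $w(W_{iso})$ arises because vertices in $W_\alpha$ with no incident edges are unconstrained by the Fourier-character condition, so each isolated middle vertex contributes an additional free factor of $n$ or $m$ per appearance that cannot be amortized via separator pinning; these contribute $n^{q \cdot w(W_{iso})}$ to the trace, giving the additive $w(W_{iso})$ in the exponent after the $2q$-th root. Vertices in $U_\alpha \cap V_\alpha$ do not cause blow-up because they are "pinned" by the matrix index structure on both sides of the sequence, which is why the polylogarithmic overhead only pays $\abs{V_{rel}(\alpha)}$ in the exponent rather than $\abs{V(\alpha)}$.

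The main obstacle will be the careful bookkeeping in the encoding to ensure that the combinatorial factor comes out as $(\abs{V(\alpha)}\log n)^{C \abs{V_{rel}(\alpha)}}$ rather than something worse. In particular, one must be careful that each identification event can be encoded in $\bigoh(\log(\abs{V(\alpha)} \cdot n))$ bits uniformly and that the number of such events across the whole walk is bounded by $\bigoh(q\abs{V_{rel}(\alpha)})$, so that the resulting $(\abs{V(\alpha)}\log n)^{O(q \abs{V_{rel}(\alpha)})}$ becomes $(\abs{V(\alpha)}\log n)^{O(\abs{V_{rel}(\alpha)})}$ after the $2q$-th root. Choosing $q = \Theta(\log n)$ and applying Markov's inequality then upgrades the expectation bound to a high-probability bound, with the factor of $2$ in front absorbing the $(1+\littleoh(1))$ loss. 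Since all of this machinery is already established in~\cite{AMP20}, the actual proof amounts to verifying that our two-type (circle/square) setting fits their framework and invoking their main theorem; the structural definitions of $w(\cdot)$, $V_{rel}(\alpha)$, and $W_{iso}$ here are designed precisely to match their hypotheses.
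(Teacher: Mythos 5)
Your overall approach matches the paper's: the paper also does not reprove the trace-power estimate from scratch but simply cites Corollary~8.13 of~\cite{AMP20} (with your two-type vertex weighting $w(\cdot)$ designed to match their setup), so your plan of ``invoking their main theorem'' is exactly right, and the trace-method exposition you give is a reasonable sketch of why that corollary holds.

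The one genuine gap is the uniformity over shapes. The lemma asserts the bound holds \emph{simultaneously for all proper shapes} w.h.p., and your argument, by choosing $q = \Theta(\log n)$ and invoking Markov, only yields a per-shape failure probability of roughly $n^{-O(1)}$. That is nowhere near small enough: the shapes arising in this paper can have up to order $n^{\tau}$ vertices, and the number of distinct shapes on $k$ vertices grows like $2^{\Theta(k^2)}$, so the naive union bound over shapes of size $k$ needs failure probability around $2^{-k^2}$ per shape. The paper handles this by explicitly applying the [AMP20] bound with $\eps = 1/(mn\,N_{\abs{V(\alpha)}})$, where $N_k \le 8^k 2^{k^2}$ is its count of shapes on $k$ vertices, and then noting that $\eps$ enters only through $\log(1/\eps)$, so the extra $\abs{V(\alpha)}^2\log(mn)$ cost can be absorbed into the prefactor $\left(\abs{V(\alpha)}\log n\right)^{C\abs{V_{rel}(\alpha)}}$. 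In trace-method language this corresponds to taking $q$ scaling like $\abs{V(\alpha)}^2$ (much larger than $\log n$), not just $\Theta(\log n)$. You should state the shape-counting proposition, fix the choice of $\eps$ (equivalently, $q$) accordingly, and carry out the union bound; without that, the ``for all shapes'' quantifier is unjustified.

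A minor arithmetic point in your exposition: when you attribute $n^{q\,w(W_{iso})}$ to the isolated-vertex cost, be careful that the trace is over $2q$ ribbons (alternating shapes $\alpha$ and $\alpha^\T$), so the bookkeeping for how many times isolated middle vertices are ``charged'' versus reused needs more care than stated. Since you defer to~\cite{AMP20} for the precise bound anyway, this does not affect the final result, but the informal accounting as written is not quite consistent.
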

\begin{proof}
From Corollary 8.13 of~\cite{AMP20}, with probability at least $1-\eps$ for a fixed shape $\alpha$,
\[\norm{M_\alpha} \leq 2 \abs{V(\alpha)}^{\abs{V_{rel}(\alpha)}}\cdot \left( 6e \ceil{\frac{\log\left(\frac{n^{w(S_{\min})}}{\eps}\right)}{6\abs{V_{rel}(\alpha)}}}\right)^{\abs{V_{rel}(\alpha)}} \cdot n^{\frac{w(V(\alpha)) - w(S_{\min}) + w(W_{iso})}{2}}\]
Letting $N_k$ be the number of distinct shapes on $k$ vertices (either
circles or squares), we apply the corollary with $\eps = 1/(mn
N_{\abs{V(\alpha)}})$. Union bounding, the failure probability across
all shapes of size $k$ is at most $1/mn$, and since the number of
vertices in a shape is at most $m + n \leq 2m$, we have a bound that
holds with high probability for all shapes. It remains to simplify the
exact bound.

\begin{proposition}\label{prop:boolean-shape-counting}
$N_k \leq 8^k 2^{k^2}$
\begin{proof}
The following process forms all shapes on $k$ vertices: starting from $k$ formal variables, assign each variable to be either a circle or a square, decide whether each variable is in $U_\alpha$ and/or $V_\alpha$, then among the $k^2$ variable pairs put any number of edges.
\end{proof}
\end{proposition}
We also bound $n^{w(S_{\min})} \leq (mn)^{\abs{V(\alpha)}}$.
\begin{align*}
    \norm{M_\alpha} & \leq 2 \abs{V(\alpha)}^{\abs{V_{rel}(\alpha)}}\cdot \left( 6e \ceil{\frac{\log\left(n^{w(S_{\min})} \cdot mn N_{\abs{V(\alpha)}}\right)}{6\abs{V_{rel}(\alpha)}}}\right)^{\abs{V_{rel}(\alpha)}} \cdot n^{\frac{w(V(\alpha)) - w(S_{\min}) + w(W_{iso})}{2}} \\
    & \leq 2 \abs{V(\alpha)}^{\abs{V_{rel}(\alpha)}}\cdot \left( 12e \log\left(n^{w(S_{\min})} \cdot mn N_{\abs{V(\alpha)}}\right)\right)^{\abs{V_{rel}(\alpha)}} \cdot n^{\frac{w(V(\alpha)) - w(S_{\min}) + w(W_{iso})}{2}} \\
    & \leq 2 \abs{V(\alpha)}^{\abs{V_{rel}(\alpha)}}\cdot \left( 12e \log\left((mn)^{\abs{V(\alpha)}} \cdot mn\cdot 8^{\abs{V(\alpha)}} 2^{\abs{V(\alpha)}^2}\right)\right)^{\abs{V_{rel}(\alpha)}} \cdot n^{\frac{w(V(\alpha)) - w(S_{\min}) + w(W_{iso})}{2}}\\
    &  \leq 2 \abs{V(\alpha)}^{\abs{V_{rel}(\alpha)}}\cdot \left( 100e \abs{V(\alpha)}^2 \log\left(mn\right)\right)^{\abs{V_{rel}(\alpha)}} \cdot n^{\frac{w(V(\alpha)) - w(S_{\min}) + w(W_{iso})}{2}}\\
    & \leq 2\cdot\left(\abs{V(\alpha)} \cdot \log(mn)\right)^{3\cdot \abs{V_{rel}(\alpha)}} \cdot n^{\frac{w(V(\alpha)) - w(S_{\min}) + w(W_{iso})}{2}}
\end{align*}
Note that we now assume $m \leq n^2$.
\end{proof}

We have the following norm bound for Hermite shapes. For a Hermite shape $\alpha$, define the \textit{total size} to be $\abs{U_\alpha} + \abs{V_\alpha} + \abs{W_\alpha} + \abs{E(\alpha)}$.
\begin{lemma}\label{lem:gaussian-norm-bounds}
Let $V_{rel}(\alpha) \defeq V(\alpha) \setminus (U_\alpha \cap V_\alpha)$ as sets. There is a universal constant $C$ such that the following norm bound holds for all proper shapes $\alpha$ with total size at most $n$ w.h.p.:
\[ \norm{M_\alpha} \leq 2\cdot\left(\abs{V(\alpha)} \cdot (1+\abs{E(\alpha)}) \cdot \log(n)\right)^{C\cdot (\abs{V_{rel}(\alpha)} + \abs{E(\alpha)})} \cdot n^{\frac{w(V(\alpha)) - w(S_{\min}) + w(W_{iso})}{2}}\]
\end{lemma}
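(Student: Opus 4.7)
The plan is to mirror the proof strategy of the boolean norm bound \cref{lem:norm-bounds}, with two modifications that account for the Hermite (labeled-edge) setting: we invoke the Hermite version of the trace-power-method norm bound from \cite{AMP20}, and we need a more careful counting of shapes, since edges now carry positive integer labels.

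First, I would apply the general trace-power-method norm bound from \cite{AMP20} (the Hermite analog of their Corollary~8.13) specialized to a fixed Hermite shape $\alpha$. This gives, for any failure parameter $\eps > 0$,
\[ \norm{M_\alpha} \leq 2 \abs{V(\alpha)}^{\abs{V_{rel}(\alpha)}}\cdot \left( C_0 \cdot (1+\abs{E(\alpha)}) \cdot \ceil{\frac{\log(n^{w(S_{\min})}/\eps)}{\abs{V_{rel}(\alpha)} + \abs{E(\alpha)}}}\right)^{\abs{V_{rel}(\alpha)}+\abs{E(\alpha)}} \cdot n^{\frac{w(V(\alpha)) - w(S_{\min}) + w(W_{iso})}{2}} \]
with probability $1-\eps$, for an absolute constant $C_0$. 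The extra factor $(1+\abs{E(\alpha)})$ and the appearance of $\abs{E(\alpha)}$ in the exponent come from the fact that the $2q$-th moment of a Hermite polynomial $h_k(g)$ for $g \sim \calN(0,1)$ grows like $(k \cdot q)^{O(k)}$, so each Hermite-labeled edge contributes an extra moment factor in the trace power expansion.

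Second, I would count the total number of proper Hermite shapes of total size at most $n$, in order to apply a union bound. Following the counting in \cref{prop:boolean-shape-counting} and \cref{prop:edge-shape-count}, given $k$ vertices and a target edge-label sum of at most $L$, one can specify a shape by: (i) a type and $U_\alpha,V_\alpha$ membership for each vertex (contributing $16^k$); (ii) a multiset of at most $L$ labeled edges among the $k^2$ ordered circle-square pairs (contributing $(k^2 L)^{O(L)}$). Thus the number of Hermite shapes of total size at most $s$ is at most $s^{O(s)}$. Setting $\eps = n^{-s^{2}}$ (which suffices to union bound over all shapes up to total size $s = n$ and then extend with a further union over $s$) and using $n^{w(S_{\min})} \le (mn)^{\abs{V(\alpha)}}$ together with $m \le n^{2}$, the $\log(1/\eps)$ term becomes $O(s^2 \log n) = O(\abs{V(\alpha)}\abs{E(\alpha)} \log n)$ which is absorbed into the base of the exponent, producing the stated bound after adjusting the constant $C$.

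The main obstacle I expect is getting clean control of the Hermite moment factors in step one: verifying that the $\abs{E(\alpha)}$ dependence enters only as a polynomial factor $(1+\abs{E(\alpha)})^{O(\abs{V_{rel}(\alpha)}+\abs{E(\alpha)})}$ rather than super-polynomially, so that edges with large labels do not destroy the bound. This is essentially a moment computation for products of Hermite polynomials along the edges of closed walks that arise in the trace method, and is handled in the general framework of \cite{AMP20}; the remaining counting and simplification steps are routine generalizations of the boolean case.
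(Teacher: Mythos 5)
Your proposal takes essentially the same route as the paper: invoke the Hermite-polynomial version of the trace-power bound from \cite{AMP20} (their Corollary 8.15, which is the Hermite analog of the Corollary 8.13 you reference), replace the shape count from \cref{prop:boolean-shape-counting} with a count over labeled-edge shapes, and union bound. One small attribution detail worth noting: the extra factor $(1+\abs{E(\alpha)})^{\abs{E(\alpha)}}$ in the paper's bound comes specifically from the fact that \cite{AMP20} states their bound for \emph{normalized} Hermite polynomials while this paper uses unnormalized ones (the discrepancy is exactly $\prod_{e \in E(\alpha)} l(e)!$), rather than from moment growth in the trace expansion as you describe; the appearance of $\abs{E(\alpha)}$ in the exponent is already built into Corollary 8.15. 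The final bound and the overall argument are nonetheless correct.
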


The proof performs the same calculation starting from~\cite[Corollary 8.15]{AMP20}. Note that in our notation, $l(\alpha) = \abs{E(\alpha)}$. There is a further difference which is that~\cite{AMP20} uses normalized Hermite polynomials whereas we use unnormalized Hermite polynomials; this contributes the additional term $\prod_{e \in E(\alpha)} l(e)! \leq (1+\abs{E(\alpha)})^{\abs{E(\alpha)}}$. We must replace Proposition~\ref{prop:boolean-shape-counting} with the following:
\begin{proposition}\label{prop:gaussian-shape-counting}
The number of Hermite shapes with total size $k$ is at most $k2^k(k+1)^{2k+k^2}$.
\begin{proof}
Such a shape has at most $k$ distinct variable vertices. Each of these is either a circle or a square. Each variable can be in $U_\alpha$ with multiplicity between 0 and (at most) $k$, and also in $V_\alpha$ with multiplicity between 0 and $k$. The $k^2$ possible pairs of vertices can have edge multiplicity in $E(\alpha)$ between 0 and $k$. 
\end{proof}
\end{proposition}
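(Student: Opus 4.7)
The plan is to upper-bound the number of Hermite shapes of total size $k$ by encoding every such shape through an explicit sequence of choices whose cardinalities multiply out to exactly $k \cdot 2^k \cdot (k+1)^{2k + k^2}$. The key observation is that the total size $\abs{U_\alpha}+\abs{V_\alpha}+\abs{W_\alpha}+\abs{E(\alpha)} \le k$ forces $\abs{V(\alpha)} \le k$, since $V(\alpha) = U_\alpha \cup V_\alpha \cup W_\alpha$. This also caps each individual $U_\alpha$-multiplicity, $V_\alpha$-multiplicity, and edge label by $k$, because each of these quantities contributes its own value to the total size. These are the only constraints we need to exploit.

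Concretely, I would describe the following enumeration process that produces (with possible overcounting) every Hermite shape of total size exactly $k$, and then multiply the number of choices at each step.
\begin{enumerate}
\item Choose the number of distinct formal variable vertices $v \defeq \abs{V(\alpha)} \in \{1, 2, \dots, k\}$. This contributes a factor of $k$.
\item For each of the $v$ variables, decide whether it is a circle or a square. This contributes a factor of $2^v \le 2^k$.
\item For each of the $v$ variables, choose its multiplicity in $U_\alpha$ from $\{0, 1, \dots, k\}$. This contributes a factor of $(k+1)^v \le (k+1)^k$.
\item Analogously, for each of the $v$ variables, choose its multiplicity in $V_\alpha$ from $\{0, 1, \dots, k\}$, contributing another factor of $(k+1)^k$.
\item For each of the at most $v^2 \le k^2$ ordered vertex pairs, choose an edge multiplicity (label) in $\{0, 1, \dots, k\}$, where $0$ denotes the absence of an edge. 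This contributes a factor of $(k+1)^{k^2}$.
\end{enumerate}
Multiplying these bounds yields $k \cdot 2^k \cdot (k+1)^{2k + k^2}$, which is the desired upper bound.

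Two small conceptual points should be stated to make the argument airtight but neither is technically difficult. First, the process is phrased using multiplicities because shapes come from the AMP20 graph matrix framework where $U_\alpha, V_\alpha$ are treated as matrix indices that can carry multiplicities; even if we specialize to the set-valued $U_\alpha, V_\alpha$ used in the main body, restricting multiplicities to $\{0,1\}$ only reduces the count, so the bound still holds. Second, the process may produce the same abstract shape from several distinct sequences of choices (for instance under relabelings of the formal variables, or because edges between circles or between squares are forbidden by~\cref{def:shape} and hence any pair that gets a nonzero label between two same-type vertices can simply be discarded), so the result is an overcount; but we only need an upper bound. There is no main obstacle here — this is a routine enumeration argument — so the only care required is to check that the cap on each "coordinate'' (number of vertices, vertex multiplicities, edge labels) is genuinely at most $k$, which follows immediately from the definition of total size.
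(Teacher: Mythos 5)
Your proposal is correct and is essentially identical to the paper's own proof: both enumerate shapes by choosing the number of distinct vertices (factor $k$), a circle/square type for each (factor $2^k$), $U_\alpha$- and $V_\alpha$-multiplicities in $\{0,\dots,k\}$ (factor $(k+1)^{2k}$), and an edge label in $\{0,\dots,k\}$ for each of the at most $k^2$ vertex pairs (factor $(k+1)^{k^2}$), accepting overcounting since only an upper bound is needed. No issues.
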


\section{Properties of $e(k)$}

We establish some properties of the $e(k)$ used in the
analysis. Recall that $e(k) = \E_{x \in \mathcal{S}(\sqrt{n})}\left[x_1\dots x_k\right]$ where
$\mathcal{S}(\sqrt{n}) \coloneqq \set{x \in \set{\pm
1}^n \mid \sum_{i=1}^n x_i = \sqrt{n}}$.

\begin{claim}\label{claim:e2}
  $e(2)=0$.
\end{claim}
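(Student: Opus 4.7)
The plan is to exploit the defining slice constraint $\sum_{i=1}^n x_i = \sqrt{n}$ by squaring it and using the symmetry of the uniform distribution over $\mathcal{S}(\sqrt{n})$.

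First I would observe that for any $x \in \mathcal{S}(\sqrt{n})$, squaring the slice identity gives
\[ n \;=\; \Bigl(\sum_{i=1}^n x_i\Bigr)^{2} \;=\; \sum_{i=1}^n x_i^2 \;+\; \sum_{i \neq j} x_i x_j \;=\; n \;+\; \sum_{i \neq j} x_i x_j, \]
using $x_i^2 = 1$. Hence $\sum_{i \neq j} x_i x_j = 0$ pointwise on the slice. Taking expectation over $x \unif \mathcal{S}(\sqrt{n})$ kills the left-hand side identically.

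Next I would invoke the permutation-invariance of the uniform measure on $\mathcal{S}(\sqrt{n})$: for any pair $i \neq j$, the random variable $x_i x_j$ is distributed identically to $x_1 x_2$ under the slice distribution. Thus $\E[x_i x_j] = e(2)$ for every $i \neq j$, and the identity from the previous paragraph becomes $n(n-1) \cdot e(2) = 0$, which forces $e(2) = 0$.

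There is no real obstacle here; the only mild caveat is ensuring that $\mathcal{S}(\sqrt{n})$ is nonempty (so that the expectation is defined), which is handled by the standing assumption in the boolean setting that $\sqrt{n}$ is an integer of the correct parity — the pseudocalibration formula is in any case defined via the symbolic formulas regardless of whether the slice is nonempty for that particular $n$.
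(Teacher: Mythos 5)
Your proof is correct and is essentially the same argument as the paper's: square the slice identity to conclude $\sum_{i\neq j} x_i x_j = 0$ pointwise, then invoke permutation-invariance of the uniform measure on $\mathcal{S}(\sqrt{n})$ to deduce $n(n-1)\,e(2)=0$. The paper phrases it by fixing a single point $y$ and averaging over coordinate permutations, but that is the same symmetry argument viewed from the dual side.
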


\begin{proof}
  Fix $y \in \mathcal{S}(\sqrt{n})$. Note that $(\sum_{i=1}^n y_i)^2 = n$ implying
  $\sum_{i < j} y_i y_j = 0$. Using this fact, we get
  $$
  \E_{x \in \mathcal{S}(\sqrt{n})}\left[x_1 x_2\right] = \E_{\sigma \in S_n} y_{\sigma(1)} y_{\sigma(2)} = 0,
  $$
  concluding the proof.
\end{proof}

\begin{definition}
   We say that a tuple $\lambda = (\lambda_1,\dots,\lambda_k)$ of non-negative integers is a partition of $k$
   provided $\sum_{i=1}^k \lambda_i = k$ and $\lambda_1 \ge \cdots \ge \lambda_k$. We use the notation $\lambda \vdash k$
   to denote a partition of $k$. We refer to $\lambda_i$ as a row/part of $\lambda$. 
\end{definition}

In the following, we will dealing polynomials that can be indexed by integer partitions.
For this reason, we now fix a notation for partitions and some associated objects.

\begin{definition}
  The transpose of partition $\lambda = (\lambda_1,\dots,\lambda_k)$ is denoted $\lambda^t$ and defined as
  $\lambda^t_i = \abs{\set{j \in [k] \mid \lambda_j \ge i}}$.
\end{definition}

\begin{remark}
  For a partition $\lambda \vdash k$, $\lambda^t_1$ is the number of rows/parts of $\lambda$.
\end{remark}

\begin{definition}
  The automorphism group of a partition $\aut(\lambda) \leq S_{\lambda^t_1}$ is the group generated by transpositions $(i,j)$
  of rows $\lambda_i = \lambda_j$.
\end{definition}

\begin{remark}
  Let $\lambda \vdash k$ and $p_1(\lambda),\dots,p_k(\lambda)$ be such that $p_i(\lambda) = \abs{\set{j \in [\lambda_1^t] \mid \lambda_j = i}}$.
  Then $\aut(\lambda) \simeq S_{p_1} \times \cdots \times S_{p_k}$.
\end{remark}

\begin{lemma}\label{lem:slice_inv_exact}
  We have
  \[
  \sum_{\lambda \vdash k} \frac{\lambda!}{\lambda_1!\cdots \lambda_k!} \cdot \frac{(n)_{\lambda^t_1}}{\abs{\aut(\lambda)}} \cdot \E_{x \in \mathcal{S}(\sqrt{n})}\left[x_1^{\lambda_1} \dots x_k^{\lambda_k} \right] = n^{k/2}.
  \]
\end{lemma}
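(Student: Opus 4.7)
The strategy is to start from the single deterministic identity that defines the slice and read off the claimed formula by counting tuples according to their multiplicity pattern. Concretely, every $x \in \mathcal{S}(\sqrt{n})$ satisfies $\sum_{i=1}^n x_i = \sqrt{n}$, so raising both sides to the $k$-th power and taking expectation yields
\[
n^{k/2} \;=\; \E_{x \in \mathcal{S}(\sqrt{n})}\!\left[\Big(\sum_{i=1}^n x_i\Big)^{\!k}\right] \;=\; \sum_{(i_1,\ldots,i_k)\in[n]^k} \E_{x \in \mathcal{S}(\sqrt{n})}\!\left[x_{i_1}\cdots x_{i_k}\right].
\]
The plan is to classify the $k$-tuples by the partition $\lambda \vdash k$ which records the (sorted) multiplicities of repeated coordinates, and then to evaluate both the number of tuples in each class and the common value of their expectation.

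First I would exploit the symmetry of $\mathcal{S}(\sqrt{n})$ under permutations of the coordinates: if $(i_1,\ldots,i_k)$ has multiplicity type $\lambda = (\lambda_1,\ldots,\lambda_{\lambda^t_1})$, then by renaming coordinates,
\[
\E_{x \in \mathcal{S}(\sqrt{n})}\!\left[x_{i_1}\cdots x_{i_k}\right] \;=\; \E_{x \in \mathcal{S}(\sqrt{n})}\!\left[x_1^{\lambda_1}\cdots x_k^{\lambda_k}\right],
\]
where the $\lambda_i$ with $i>\lambda^t_1$ are zero by convention. So every tuple of type $\lambda$ contributes the same expectation.

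Next I would count the tuples of type $\lambda$. Writing $\ell = \lambda^t_1$ for the number of distinct indices used, such a tuple is determined by: (a) choosing an ordered $\ell$-tuple of distinct values in $[n]$, which gives $(n)_\ell$ choices; (b) partitioning the $k$ positions into $\ell$ unordered blocks whose sizes realize the multiset $\{\lambda_1,\ldots,\lambda_\ell\}$, which gives $\frac{k!}{\lambda_1!\cdots\lambda_\ell!\cdot |\aut(\lambda)|}$ choices. However, step (a) overcounts by exactly $|\aut(\lambda)|$ because permuting two chosen values with equal multiplicity yields the same tuple; pushing that factor into (b) gives the cleaner count
\[
\frac{(n)_{\lambda^t_1}}{|\aut(\lambda)|}\cdot \frac{k!}{\lambda_1!\cdots\lambda_k!}.
\]
(I will double-check this by comparing with small cases, e.g.\ $k=2$ gives $\binom{n}{2}\cdot 2 + n$ tuples of types $(1,1)$ and $(2)$ respectively, and $k=3$ gives $(n)_3 + 3(n)_2 + n$.)

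Substituting into the expansion produces
\[
n^{k/2} \;=\; \sum_{\lambda \vdash k} \frac{k!}{\lambda_1!\cdots\lambda_k!} \cdot \frac{(n)_{\lambda^t_1}}{|\aut(\lambda)|} \cdot \E_{x \in \mathcal{S}(\sqrt{n})}\!\left[x_1^{\lambda_1}\cdots x_k^{\lambda_k}\right],
\]
which is the claim (interpreting $\lambda! = |\lambda|! = k!$ in the numerator, consistent with the usual multi-index convention). There is no real obstacle: the whole proof is a bookkeeping exercise once one notices that $(\sum_i x_i)^k$ is constantly $n^{k/2}$ on the slice. The only point that deserves care is the automorphism factor when distinct values of $[n]$ receive equal multiplicities; I would verify this once on $k=4$ (where $\lambda = (2,2)$ and $\lambda = (2,1,1)$ give nontrivial $|\aut(\lambda)|$) before declaring the count correct.
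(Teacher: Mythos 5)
Your proof is correct and follows the same route as the paper: raise $\sum_i x_i = \sqrt{n}$ to the $k$-th power, take expectations, and group the $n^k$ terms by multiplicity type using symmetry of the slice. The paper's proof states this in one line ("after appropriately collecting terms"); you supply the counting argument that justifies the coefficient $\frac{k!}{\lambda_1!\cdots\lambda_k!}\cdot\frac{(n)_{\lambda^t_1}}{|\aut(\lambda)|}$, correctly accounting for the automorphism overcounting (though note that reading $\lambda! = k!$ is a notational peculiarity of this paper rather than "the usual multi-index convention," which would give $\prod_i \lambda_i!$).
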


\begin{proof}
  For $x \in \mathcal{S}(\sqrt{n})$, we have $(\sum_{i=1}^n x_i)^k =
  n^{k/2}$. Then expanding $(\sum_{i=1}^n x_i)^k$ in the previous equations and
  taking the expectation over $\mathcal{S}(\sqrt{n})$ on both sides
  yields the result of the lemma (after appropriately collecting
  terms).
\end{proof}

\begin{claim}\label{claim:bound_prod_exp_ff}
  Let $\lambda \vdash k$. We have
  $$
  (n)_{\lambda^t_1} \cdot \abs{\E_{x \in \mathcal{S}(\sqrt{n})}\left[x_1^{\lambda_1} \dots x_k^{\lambda_k} \right]} \le 3^{k^3} \cdot n^{k/2}.
  $$
\end{claim}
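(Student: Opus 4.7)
The plan is to exploit $x \in \set{\pm 1}^n$ to collapse the polynomial $x_1^{\lambda_1}\cdots x_k^{\lambda_k}$ to a parity monomial, and then apply the earlier bound on $e(\cdot)$. Since $x_i^2 = 1$, we have $x_i^{\lambda_i} = x_i^{\lambda_i \bmod 2}$, so $\prod_{i=1}^{k} x_i^{\lambda_i} = \prod_{i:\lambda_i \text{ odd}} x_i$. Let $r$ be the number of odd (positive) parts of $\lambda$. By the permutation-invariance of the uniform distribution on $\mathcal{S}(\sqrt{n})$, the expectation depends only on the number of distinct variables multiplied, so $\E_{x \in \mathcal{S}(\sqrt{n})}[x_1^{\lambda_1}\cdots x_k^{\lambda_k}] = e(r)$.

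Next I would invoke $\cref{cor:bound_on_coeff_e_k}$, which gives $\abs{e(r)} \le r^{3r} \cdot n^{-r/2}$ (with the convention $0^0 = 1$), together with the trivial bound $(n)_{\lambda_1^t} \le n^{\lambda_1^t}$. Combining these yields
$$
(n)_{\lambda_1^t} \cdot \abs{\E_{x \in \mathcal{S}(\sqrt{n})}[x_1^{\lambda_1}\cdots x_k^{\lambda_k}]} \le r^{3r} \cdot n^{\lambda_1^t - r/2}.
$$

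The heart of the argument is showing $\lambda_1^t - r/2 \le k/2$. Write $s$ for the number of odd positive parts and $t'$ for the number of even positive parts of $\lambda$. Then $\lambda_1^t = s + t'$ and $r = s$. Since each odd positive part is at least $1$ and each even positive part is at least $2$, we have $k = \sum_i \lambda_i \ge s + 2t'$, so $\lambda_1^t - r/2 = s/2 + t' = (s + 2t')/2 \le k/2$. This gives $n^{\lambda_1^t - r/2} \le n^{k/2}$.

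Finally, I would bound $r^{3r} \le k^{3k} \le 3^{k^3}$: the first inequality uses $r \le k$ and monotonicity of $x \mapsto x^{3x}$ on $[1,\infty)$ (with $0^0 = 1$ for $r = 0$), and the second reduces to checking $3k \ln k \le k^3 \ln 3$, which is immediate for $k = 1, 2$ and then follows from the fact that $k^2 \ln 3$ dominates $3\ln k$ for all $k \ge 2$. Putting the estimates together produces the claimed bound $3^{k^3} \cdot n^{k/2}$. There is no serious obstacle; the only subtlety is carefully verifying the part-sum inequality $\lambda_1^t - r/2 \le k/2$ by separating odd and (positive) even parts, which is where the factor $n^{k/2}$ (rather than something larger) emerges.
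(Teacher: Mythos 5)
Your proof is correct, and it takes a genuinely different route from the paper. You collapse the multivariate moment to a parity monomial via $x_i^2 = 1$, getting $\E[x_1^{\lambda_1}\cdots x_k^{\lambda_k}] = e(r)$ where $r$ is the number of odd parts, then invoke the sharp bound $\abs{e(r)} \le r^{3r} n^{-r/2}$ from \cref{cor:bound_on_coeff_e_k} and a clean counting argument $\lambda_1^t - r/2 \le k/2$ (separating odd from even positive parts). The paper instead proves the claim by a self-contained induction on $k$ with three cases on $\lambda_1$ (remove two boxes from $\lambda_1$ when $\lambda_1 \ge 3$; remove the whole row when $\lambda_1 = 2$; handle $\lambda_1 = 1$ via \cref{lem:slice_inv_exact}). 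Your route is arguably cleaner and in fact proves the stronger bound $r^{3r} n^{k/2}$, but note it forward-references \cref{cor:bound_on_coeff_e_k}, which is established \emph{after} this claim in the paper's exposition. Tracing dependencies, there is no circularity — \cref{cor:bound_on_coeff_e_k} rests on \cref{lem:slice_inv_exp}, which in turn uses only \cref{claim:e2} and \cref{lem:slice_inv_exact}, not this claim — so your argument is logically sound. However, the paper's apparent purpose for \cref{claim:bound_prod_exp_ff} and \cref{claim:crude_bound_e} is as crude warm-ups to motivate the sharper \cref{lem:slice_inv_exp}; deriving the crude bound from the refined one makes the claim redundant rather than foundational, so whether your proof is appropriate depends on whether one cares about that pedagogical ordering.
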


\begin{proof}
  We induct on $k$. For $k=1$, we have $n \cdot \abs{\E_{x \in \mathcal{S}(\sqrt{n})}\left[x_1\right]} = \sqrt{n} \le 3 \cdot n^{1/2}$.
  Now, suppose $k \ge 2$. We consider three cases:
  \begin{enumerate}
    \item Case $\lambda_1 \ge 3$: Let $\lambda'$ be the partition obtained from $\lambda$ by removing two boxes from $\lambda_1$.
          Note that $\lambda_1^t = (\lambda')^t_1 \le k-2$ and
          $\E_{x \in \mathcal{S}(\sqrt{n})}\left[x_1^{\lambda_1'} \dots x_{k-2}^{\lambda_{k-2}'} \right] = \E_{x \in \mathcal{S}(\sqrt{n})}\left[x_1^{\lambda_1} \dots x_{k-2}^{\lambda_{k-2}} \right]$.
          By the induction hypothesis, we have $(n)_{(\lambda')^t_1} \cdot \abs{\E_{x \in \mathcal{S}(\sqrt{n})}\left[x_1^{\lambda_1'} \dots x_{k-2}^{\lambda_{k-2}'} \right]} \le 3^{(k-2)^2} \cdot n^{(k-2)/2}$.
    \item Case $\lambda_1 = 2$: Let $\lambda'$ be the partition obtained from $\lambda$ by removing $\lambda_1$.
          Note that $\lambda_1^t = (\lambda')^t_1 + 1 \le k-2$. By the induction hypothesis, we have
          $$
         (n)_{\lambda^t_1} \cdot \abs{\E_{x \in \mathcal{S}(\sqrt{n})}\left[x_1^{\lambda_1} \dots x_{k-2}^{\lambda_{k-2}} \right]} \le n \cdot (n)_{(\lambda')^t_1} \cdot \abs{\E_{x \in \mathcal{S}(\sqrt{n})}\left[x_1^{\lambda_1'} \dots x_{k-2}^{\lambda_{k-2}'} \right]} \le 3^{(k-2)^3} \cdot n^{k/2}.
          $$
    \item Case  $\lambda_1 = 1$: To bound $(n)_k \cdot \E_{x \in \mathcal{S}(\sqrt{n})}\left[x_1^{\lambda_1} \dots x_k^{\lambda_k} \right]$, we use~\cref{lem:slice_inv_exact}
          and the two preceding cases. Let $p(k)$ be the partition function, i.e., $p(k) = \abs{\set{\lambda \vdash k}}$. We  deduce that
            \begin{align*}
             (n)_k \cdot \abs{\E_{x \in \mathcal{S}(\sqrt{n})}\left[x_1^{\lambda_1} \dots x_k^{\lambda_k}\right]}  & \le n^{k/2} + \sum_{\lambda \vdash k \colon \lambda_1 \ge 2} \frac{\lambda!}{\lambda_1!\cdots \lambda_k!} \cdot \frac{(n)_{\lambda^t_1}}{\abs{\aut(\lambda)}} \cdot \abs{\E_{x \in \mathcal{S}(\sqrt{n})}\left[x_1^{\lambda_1} \dots x_k^{\lambda_k} \right]} \\
                                                                                                 & \le n^{k/2} + k!  \sum_{\lambda \vdash k \colon \lambda_1 \ge 2} (n)_{\lambda^t_1} \cdot \abs{\E_{x \in \mathcal{S}(\sqrt{n})}\left[x_1^{\lambda_1} \dots x_k^{\lambda_k} \right]} \\
                                                                                                 & \le n^{k/2} + k! \sum_{\lambda \vdash k \colon \lambda_1 \ge 3} (n)_{\lambda^t_1} \cdot \abs{\E_{x \in \mathcal{S}(\sqrt{n})}\left[x_1^{\lambda_1} \dots x_k^{\lambda_k} \right]}  + \\
                                                                                                         & \qquad \qquad k !\sum_{\lambda \vdash k \colon \lambda_1 = 2} (n)_{\lambda^t_1} \cdot \abs{\E_{x \in \mathcal{S}(\sqrt{n})}\left[x_1^{\lambda_1} \dots x_k^{\lambda_k} \right]}\\                                                                                                 
                                                                                                 & \le 3^{(k-2)^3} \cdot k! \cdot (1 + p(k) + k) \cdot n^{k/2} \le 3^{k^3} \cdot n^{k/2},
          \end{align*}
          as desired.
  \end{enumerate}
\end{proof}

\begin{claim}\label{claim:crude_bound_e}
  Suppose $k < \sqrt{n}/2$. We have
  $$
  \abs{\E_{x \in \mathcal{S}(\sqrt{n})}\left[x_1\dots x_k\right]} \le 2 \cdot 3^{k^3} \cdot n^{-k/2}.
  $$
\end{claim}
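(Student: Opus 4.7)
The plan is to deduce this directly from the preceding claim \cref{claim:bound_prod_exp_ff} by specializing to the all-ones partition $\lambda = (1,1,\dots,1) \vdash k$ and controlling the falling factorial $(n)_k$ from below.

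First I would apply \cref{claim:bound_prod_exp_ff} with $\lambda = (1,1,\dots,1)$, which has $\lambda_1^t = k$ and for which $x_1^{\lambda_1}\cdots x_k^{\lambda_k} = x_1\cdots x_k$. This immediately gives
\[
(n)_k \cdot \abs{\E_{x \in \mathcal{S}(\sqrt{n})}\left[x_1\cdots x_k\right]} \le 3^{k^3}\cdot n^{k/2},
\]
so the entire task reduces to showing $(n)_k \ge n^k/2$ under the hypothesis $k < \sqrt{n}/2$.

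For the lower bound on $(n)_k$, I would write $(n)_k = n^k \prod_{i=0}^{k-1}(1 - i/n)$ and use the elementary inequality $1-x \ge e^{-2x}$ valid on $[0,1/2]$. Since $k < \sqrt{n}/2$ implies $i/n < 1/(2\sqrt{n}) \le 1/2$ for all $i < k$, we obtain
\[
\prod_{i=0}^{k-1}\left(1-\frac{i}{n}\right) \;\ge\; \exp\!\left(-\frac{2}{n}\sum_{i=0}^{k-1} i\right) \;=\; \exp\!\left(-\frac{k(k-1)}{n}\right) \;\ge\; \exp(-1/4) \;>\; \tfrac{1}{2},
\]
where in the second-to-last step we used $k(k-1)/n < k^2/n < 1/4$. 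Combining the two inequalities yields
\[
\abs{e(k)} \;=\; \abs{\E_{x \in \mathcal{S}(\sqrt{n})}\left[x_1\cdots x_k\right]} \;\le\; \frac{3^{k^3}\cdot n^{k/2}}{(n)_k} \;\le\; \frac{3^{k^3}\cdot n^{k/2}}{n^k/2} \;=\; 2\cdot 3^{k^3}\cdot n^{-k/2}.
\]

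There is essentially no obstacle here: once \cref{claim:bound_prod_exp_ff} is in hand, the claim is a one-line consequence plus a routine estimate of the falling factorial. The only thing to be careful about is ensuring the regime $k < \sqrt{n}/2$ is exactly what is needed to make $\prod(1-i/n)$ bounded below by a constant (here $1/2$), which is precisely why the hypothesis is stated as it is.
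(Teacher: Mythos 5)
Your proof is correct and takes exactly the route the paper intends: the paper's proof reads only ``Follows from~\cref{claim:bound_prod_exp_ff} and the bound on $k$,'' and you have supplied precisely the missing details (specializing to $\lambda = (1,\dots,1)$ and showing $(n)_k \ge n^k/2$ when $k < \sqrt{n}/2$).
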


\begin{proof}
  Follows from~\cref{claim:bound_prod_exp_ff} and the bound on $k$.
\end{proof}

\begin{remark}
  In~\cref{claim:crude_bound_e}, the factor $3^{k^3}$ is too lossy to
  allow a meaningful bound with $k = n^{\epsilon}$, where $\epsilon >
  0$ is a constant.
\end{remark}

Refining the ideas of~\cref{claim:bound_prod_exp_ff}, we prove a
stronger lemma below which will imply a tighter bound on $e(k)$
sufficient for our application.
\begin{lemma}\label{lem:slice_inv_exp}
   There exists an universal constant $C \ge 1$ such that
   \begin{equation}\label{eq:abs_e_k_sum}
     \sum_{\lambda \vdash k} \frac{\lambda!}{\lambda_1!\cdots \lambda_k!} \cdot \frac{(n)_{\lambda^t_1}}{\abs{\aut(\lambda)}} \cdot \abs{\E_{x \in \mathcal{S}(\sqrt{n})}\left[x_1^{\lambda_1} \dots x_k^{\lambda_k} \right]} \le  k^{C \cdot k} \cdot n^{k/2}.
  \end{equation}
  In particular, for $n \ge 6$,~\cref{eq:abs_e_k_sum} holds with $C=2$.
\end{lemma}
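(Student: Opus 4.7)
The plan is to prove the lemma by strong induction on $k$, leveraging \cref{lem:slice_inv_exact} together with the crucial fact that on the boolean slice $x_i^2 = 1$. Writing $c_\lambda \defeq \frac{\lambda!}{\lambda_1!\cdots\lambda_k!} \cdot \frac{(n)_{\lambda^t_1}}{\abs{\aut(\lambda)}}$ and isolating the all-ones term $\lambda = (1^k)$, which has $c_{(1^k)} = (n)_k$ and $\E[x^{(1^k)}] = e(k)$, the identity in \cref{lem:slice_inv_exact} combined with the triangle inequality gives
\[
(n)_k\abs{e(k)} \;\le\; n^{k/2} + \sum_{\substack{\lambda \vdash k \\ \lambda_1 \ge 2}} c_\lambda\cdot\abs{\E_{x \in \mathcal{S}(\sqrt n)}[x_1^{\lambda_1}\cdots x_k^{\lambda_k}]}.
\]
Letting $B(k)$ denote the LHS of the lemma and $S(k)$ the sum over $\lambda$ with $\lambda_1 \ge 2$, this yields $B(k) \le n^{k/2} + 2 S(k)$.

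Since $x_i^2 = 1$, the quantity $\abs{\E[x^\lambda]}$ depends only on the number $m(\lambda)$ of odd parts of $\lambda$; explicitly $\abs{\E[x^\lambda]} = \abs{e(m(\lambda))}$. Define the reduction $r\from\{\lambda \vdash k : \lambda_1 \ge 2\} \to \{\mu \vdash k-2\}$ by decreasing $\lambda_1$ by $2$ and re-sorting (dropping the part if $\lambda_1 = 2$); this preserves the parity pattern, so $\abs{\E[x^\lambda]} = \abs{\E[x^{r(\lambda)}]}$. Grouping $S(k)$ by $\mu = r(\lambda)$ gives
\[
S(k) = \sum_{\mu \vdash k-2} c_\mu \abs{\E[x^\mu]}\cdot R(\mu), \qquad R(\mu) \defeq \sum_{\lambda \from r(\lambda) = \mu} c_\lambda/c_\mu.
\]

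The technical heart of the argument is a uniform bound on $R(\mu)$. I would enumerate $r^{-1}(\mu)$ explicitly: it consists of (i) partitions obtained from $\mu$ by incrementing one existing part of value $v \in \{\mu_1-2,\mu_1-1,\mu_1\}$ by $2$, giving at most three distinct $\lambda$ (one per value that actually appears in $\mu$); and (ii) the partition obtained by prepending a new part of size $2$, which is valid only when $\mu_1 \le 2$. Thus $\abs{r^{-1}(\mu)} \le 4$. Comparing the four factors in $c_\lambda/c_\mu$ separately ($k!/(k-2)!$; $\prod\lambda_i!/\prod\mu_i!$; $(n)_{\lambda^t_1}/(n)_{\mu^t_1}$; $\abs{\aut(\mu)}/\abs{\aut(\lambda)}$) shows that case (i) contributes at most $O(k^3)$ with no factor of $n$ (since $\lambda^t_1 = \mu^t_1$), while case (ii) contributes $O(k^2 n)$ coming from the one additional index in the falling factorial. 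Summing yields $R(\mu) \le C_0(k^2 n + k^3)$ for an absolute constant $C_0$.

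Plugging in the inductive hypothesis $B(k-2) \le (k-2)^{C(k-2)} n^{(k-2)/2}$ gives
\[
B(k) \;\le\; n^{k/2}\,\bigl(1 + 2C_0(k^2 + k^3/n)\,(k-2)^{C(k-2)}\bigr),
\]
and the identity $k^{Ck}/\bigl(k^2(k-2)^{C(k-2)}\bigr) = k^{2C-2}\cdot(k/(k-2))^{C(k-2)} \ge k^{2C-2}$ shows that any $C \ge 2$ closes the induction once $k$ exceeds an absolute threshold. Small $k$ are dispatched directly: $e(2) = 0$ (\cref{claim:e2}) annihilates several low-order terms, and the exact identity determines $e(1), e(3)$, etc., explicitly. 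The main obstacle I foresee is the explicit constant tracking required to sharpen ``$C \ge 2$ for $k$ large enough'' into ``$C = 2$ for $n \ge 6$'': the ratios of automorphism groups across the preimage cases must be bounded sharply, and the $n$-factor from case (ii) must be matched exactly by the gain $n^{(k-2)/2} \to n^{k/2}$ with essentially no slack. Handling the boundary regime $k$ small with $n$ just above $6$ is where the bookkeeping becomes most delicate.
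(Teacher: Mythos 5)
Your proposal is correct and proves the lemma, but it organizes the induction differently from the paper. Both start from \cref{lem:slice_inv_exact} to bound the all-ones term $(n)_k\,\abs{e(k)}$ by $n^{k/2}$ plus the contribution of $\lambda$ with $\lambda_1 \ge 2$, and both induct on $k$. The paper then splits $\{\lambda : \lambda_1 \ge 2\}$ into $\Lambda_{\ge 3}$ and $\Lambda_2$: for $\lambda_1 \ge 3$ it removes two boxes from $\lambda_1$ (the same step as your $r$), crudely bounding the preimage by $k$; for $\lambda_1 = 2$ it strips \emph{all} size-two parts at once, reducing to an all-ones partition of $k-2i$, and sums a geometric series over $i$ (this invokes the induction hypothesis at every depth $k-2i$ simultaneously). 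You instead use the single reduction $r$ uniformly for all $\lambda_1 \ge 2$, observe the preimage of any $\mu\vdash k-2$ has size at most $4$ (in fact at most $3$), and bound $R(\mu) = O(k^2 n + k^3)$ once, invoking the induction hypothesis only at level $k-2$. Your route is arguably cleaner — one reduction, one ratio bound, a tighter preimage count — at the cost of having to match the lone factor of $n$ from the case $\mu_1 \le 2$ exactly against $n^{(k-2)/2} \to n^{k/2}$, whereas the paper's $\Lambda_2$ geometric series absorbs that factor automatically. Your verification that $\abs{\E[x^{\lambda}]}$ depends only on the number of odd parts (via $x_i^2 = 1$ on the slice) and that $r$ preserves this count is exactly the invariant both proofs rely on, and your coefficient-ratio accounting is correct: case (i) leaves $\lambda^t_1$ unchanged so contributes no factor of $n$, while case (ii) increments $\lambda^t_1$ by one and picks up a single $n$ from the falling factorial together with a factor $\tfrac{k(k-1)}{2}$. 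The remaining constant-chasing to get $C=2$ for $n\ge 6$ (which you flag as the delicate part) does in fact close: with $R(\mu)\le \tfrac32 k^3 + \tfrac12 k^2 n$ one needs $\tfrac{1}{(k-2)^{2(k-2)}} + 3k^2 + 3k^3/n \le k^4$, which holds for all $k\ge 3$ when $n\ge 6$, and $k=1,2$ are trivial base cases (with $e(2)=0$ by \cref{claim:e2}).
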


\begin{proof}
  We induct on $k$. For $k = 1$, we have $n \cdot \abs{\E_{x \in \slice(\sqrt{n})} x_1} \le \sqrt{n}$ as desired.
  Using $e(2) = 0$ from~\cref{claim:e2} and the case $k=1$ of~\cref{eq:abs_e_k_sum}, we get that~\cref{lem:slice_inv_exp}
  also holds for $k=2$. Now, consider $k \ge 3$. Let $\Lambda_1 = \set{\lambda \vdash k \mid \lambda_1 = 1}$,
  $\Lambda_2 = \set{\lambda \vdash k \mid \lambda_1 = 2}$ and $\Lambda_{\ge 3} = \set{\lambda \vdash k \mid \lambda_1 = 3}$.
  Note that $\Lambda_1 \sqcup \Lambda_2 \sqcup \Lambda_{\ge 3} = \set{\lambda \vdash k}$ and $\abs{\Lambda_1} = 1$.

  For convenience define $a_{\lambda}$ to be the term associated to $\lambda \vdash k$ on the LHS of~\cref{eq:abs_e_k_sum}, i.e.,
  $$
  a_{\lambda}
  = \frac{\lambda!}{\lambda_1!\cdots \lambda_k!} \cdot \frac{(n)_{\lambda^t_1}}{\abs{\aut(\lambda)}} \cdot \abs{\E_{x \in \mathcal{S}(\sqrt{n})}\left[x_1^{\lambda_1} \dots
  x_k^{\lambda_k} \right]}.
  $$ 

  First we bound the contribution of the terms associated to
  partitions from $\Lambda_{\ge 3}$ in the LHS
  of~\cref{eq:abs_e_k_sum}. Let $\lambda'$ be the partition obtained
  from $\lambda$ by removing two boxes from $\lambda_1$.  Note that
  $\lambda_1^t = (\lambda')^t_1 \le k-2$ and
  $\E_{x \in \mathcal{S}(\sqrt{n})}\left[x_1^{\lambda_1'} \dots
  x_{k-2}^{\lambda_{k-2}'} \right]
  = \E_{x \in \mathcal{S}(\sqrt{n})}\left[x_1^{\lambda_1} \dots
  x_{k-2}^{\lambda_{k-2}} \right]$. Thus,
  \begin{align*}
  &a_{\lambda} = \frac{\lambda!}{\lambda_1!\cdots \lambda_k!} \cdot \frac{(n)_{\lambda^t_1}}{\abs{\aut(\lambda)}} \cdot \abs{\E_{x \in \mathcal{S}(\sqrt{n})}\left[x_1^{\lambda_1} \dots x_k^{\lambda_k} \right]}  \\
  &  \qquad   = \frac{k(k-1)}{\lambda_1 (\lambda_1-1)} \cdot \frac{\abs{\aut(\lambda')}}{\abs{\aut(\lambda)}} \frac{\lambda'!}{\lambda_1'!\cdots \lambda_k'!} \cdot \frac{(n)_{(\lambda')^t_1}}{\abs{\aut(\lambda')}} \cdot \abs{\E_{x \in \mathcal{S}(\sqrt{n})}\left[x_1^{\lambda_1'} \dots x_{k-2}^{\lambda_{k-2}'} \right]}  \\
  & \qquad = k^2 \cdot \frac{\abs{\aut(\lambda')}}{\abs{\aut(\lambda)}} \cdot a_{\lambda'} \le k^3 \cdot a_{\lambda'},
  \end{align*}
  since $\abs{\aut(\lambda')}/\abs{\aut(\lambda)} \le k-2 \le k$.
  For each $\lambda' \vdash k-2$, we can form a partition $\lambda \vdash k$ in $k-2 \le k$
  ways by adding two blocks to a single row of $\lambda'$. Hence, we have
  \begin{equation}\label{eq:lamb_ge_3_contrib}
    \sum_{\lambda \in \Lambda_{\ge 3}} a_{\lambda} \le k \cdot \sum_{\lambda' \vdash k-2} k^3 \cdot a_{\lambda'} \le k^4 \cdot k^{C \cdot (k-2)} \cdot n^{(k-2)/2},
  \end{equation}
  where the last equality follows from the induction hypothesis.

  Now we bound the contribution of the terms $a_{\lambda}$ associated to partitions $\lambda$
  from $\Lambda_{2}$ in the LHS of~\cref{eq:abs_e_k_sum}. Let $i \ge 1$
  be the number of parts of size two of $\lambda$
  and let $\lambda'$ be the partition obtained
  from $\lambda$ by removing these $i$ parts of size two.  Note that $\lambda_1^t =
  (\lambda')^t_1 + i \le k-1$. We have
  \begin{align*}
  &a_{\lambda} = \frac{\lambda!}{\lambda_1!\cdots \lambda_k!} \cdot \frac{(n)_{\lambda^t_1}}{\abs{\aut(\lambda)}} \cdot \abs{\E_{x \in \mathcal{S}(\sqrt{n})}\left[x_1^{\lambda_1} \dots x_k^{\lambda_k} \right]}  \\
  &  \qquad   \le n^i \cdot \frac{(k)_i}{2^i} \cdot \frac{\abs{\aut(\lambda')}}{\abs{\aut(\lambda)}} \cdot \frac{\lambda'!}{\lambda_1'!\cdots \lambda_k'!} \cdot \frac{(n)_{(\lambda')^t_1}}{\abs{\aut(\lambda')}} \cdot \abs{\E_{x \in \mathcal{S}(\sqrt{n})}\left[x_1 \dots x_{k-2i} \right]} \\
  &  \qquad   = n^i \cdot \frac{(k)_i}{2^i} \cdot \frac{1}{i!} \cdot \frac{\lambda'!}{\lambda_1'!\cdots \lambda_k'!} \cdot \frac{(n)_{(\lambda')^t_1}}{\abs{\aut(\lambda')}} \cdot \abs{\E_{x \in \mathcal{S}(\sqrt{n})}\left[x_1 \dots x_{k-2i} \right]},\\
  \end{align*} 
  where in the last equality we used $\abs{\aut(\lambda')}/\abs{\aut(\lambda)} = 1/(i!)$.
  Since $\lambda \in \Lambda_2$ is uniquely specified by its number of parts of size two, applying the induction hypothesis we have
  \begin{align*}
    \sum_{\lambda \in \Lambda_2} a_{\lambda} & \le \sum_{i=1}^{\lfloor k/2 \rfloor} n^i \cdot \frac{(k)_i}{2^i} \cdot \frac{1}{i!} \cdot \left(\frac{\lambda'!}{\lambda_1'!\cdots \lambda_k'!} \cdot \frac{(n)_{(\lambda')^t_1}}{\abs{\aut(\lambda')}} \cdot \abs{\E_{x \in \mathcal{S}(\sqrt{n})}\left[x_1 \dots x_{k-2i} \right]} \right)\\
    &\le \sum_{i=1}^{\lfloor k/2 \rfloor} n^i \cdot \frac{(k)_i}{2^i} \cdot \frac{1}{i!} \cdot k^{C\cdot(k-2i)} \cdot n^{(k-2i)/2}\\
    &\le  k^{C \cdot (k-1) } \cdot n^{k/2} \cdot \sum_{i=0}^{\infty} k^{- C \cdot i} \le  \frac{3}{2} \cdot k^{C \cdot (k-1) } \cdot n^{k/2},
  \end{align*}
  where in the last inequality we used $k \ge 3$ and $C \ge 1$.

  Finally, we consider the case $\lambda_1 = 1$. To bound $a_{\lambda}$, we use~\cref{lem:slice_inv_exact}
  and the two preceding cases. We deduce that
  \begin{align*}
    a_{\lambda} & \le n^{k/2} + \sum_{\mu \in \Lambda_2} a_{\mu} + \sum_{\mu \in \Lambda_{\ge 3}} a_{\mu} \le n^{k/2} + k^4 \cdot k^{C \cdot (k-2)} \cdot n^{(k-2)/2} + \frac{3}{2} \cdot k^{C \cdot (k-1) } \cdot n^{k/2}\\
              & = k^{C \cdot k} \cdot n^{k/2} \left( \frac{1}{k^{C \cdot k}} + \frac{k^4}{n \cdot k^{2 \cdot C}} + \frac{3}{2 \cdot k^{C}} \right).
  \end{align*}
   We can bound the LHS of~\cref{eq:abs_e_k_sum} as
  \begin{align*}
    \sum_{\mu \in \Lambda_1} a_{\mu} +  \sum_{\mu \in \Lambda_2} a_{\mu} + \sum_{\mu \in \Lambda_{\ge 3}} a_{\mu} &\le 
               k^{C \cdot k} \cdot n^{k/2} \left( \frac{1}{k^{C \cdot k}} + \frac{2 \cdot k^4}{n \cdot k^{2 \cdot C}} + \frac{3}{k^{C}} \right)\\
               &\le k^{C \cdot k} \cdot n^{k/2},
  \end{align*}
  provided $C > 0$ is a sufficiently large constant. In particular, the constant $C$ can be taken to be $2$ for $n \ge 6$.
\end{proof}

\begin{corollary}\label{cor:bound_on_coeff_e_k}
  We have
  $$
  \abs{\E_{x \in \mathcal{S}(\sqrt{n})}\left[x_1\dots x_k\right]} \le k^{3\cdot k} \cdot n^{-k/2}.
  $$  
\end{corollary}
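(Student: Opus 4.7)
My plan is to derive the corollary by isolating a single term from the sum in \cref{lem:slice_inv_exp}. Specifically, consider the partition $\lambda = (1,1,\ldots,1)$ with $k$ parts of size $1$. For this partition, all $\lambda_i! = 1$ so the multinomial coefficient $\lambda!/(\lambda_1!\cdots\lambda_k!)$ equals $k!$; the transpose has a single row so $\lambda^t_1 = k$ and $(n)_{\lambda^t_1} = (n)_k$; and since all parts are equal, $\aut(\lambda) = S_k$ has order $k!$. Thus the $\lambda$-term in the sum of \cref{lem:slice_inv_exp} collapses to exactly
\[
k! \cdot \frac{(n)_k}{k!} \cdot \abs{\E_{x \in \slice(\sqrt{n})}[x_1 \cdots x_k]} \;=\; (n)_k \cdot \abs{\E_{x \in \slice(\sqrt{n})}[x_1 \cdots x_k]}.
\]

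Since every summand in \cref{lem:slice_inv_exp} is nonnegative, this single term is bounded by the entire sum:
\[
(n)_k \cdot \abs{\E_{x \in \slice(\sqrt{n})}[x_1 \cdots x_k]} \;\le\; k^{Ck} \cdot n^{k/2}.
\]
Next, the plan is to lower-bound $(n)_k$ in terms of $n^k$. Using $(n)_k / n^k = \prod_{i=0}^{k-1}(1 - i/n) \ge (1 - k/n)^k \ge 2^{-k}$ whenever $k \le n/2$, we obtain
\[
\abs{\E_{x \in \slice(\sqrt{n})}[x_1 \cdots x_k]} \;\le\; \frac{2^k \cdot k^{Ck}}{n^{k/2}}.
\]
Taking the constant $C = 2$ valid for $n \ge 6$ (as stated in the lemma), and using $2^k \le k^k$ for $k \ge 2$, this yields $k^{3k} \cdot n^{-k/2}$.

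The remaining work is only to handle small cases and the $k > n/2$ regime. For $k = 0, 1$ the bound is immediate by direct calculation (indeed $\E[x_1] = 1/\sqrt{n}$). For $k > n/2$, the slice expectation is trivially at most $1$ in absolute value, and one checks that $k^{3k} n^{-k/2} \ge 1$ comfortably in that regime. Since no step presents a genuine obstacle — the hardest ingredient, \cref{lem:slice_inv_exp}, is already in hand — the proof is essentially bookkeeping; the only mild subtlety is ensuring the constants line up so that $2^k k^{Ck} \le k^{3k}$, which is why the stated bound uses the exponent $3k$ rather than the tighter $2k$ one would get directly from the lemma.
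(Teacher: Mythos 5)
Your proof takes essentially the same route as the paper: both extract the $\lambda=(1,1,\dots,1)$ term from the sum in \cref{lem:slice_inv_exp}, observe that it equals $(n)_k\cdot\abs{\E_{x\in\slice(\sqrt n)}[x_1\cdots x_k]}$, exploit nonnegativity of the remaining summands, and then divide by the falling factorial. The one substantive difference is the threshold: the paper works under $k\le\sqrt n$, which makes $\prod_{i=1}^{k-1}(1-i/n)^{-1}\le 2$, while you use $k\le n/2$ and pay the coarser $2^k$; both costs are absorbed into $k^{3k}$ once $k\ge 2$. A small bookkeeping gap remains, though: your case split ($k\in\{0,1\}$; or $k>n/2$; or $2\le k\le n/2$ together with $n\ge 6$) silently omits the pairs $n\in\{4,5\}$, $k=2$, which satisfy none of the three conditions. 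These are immediate --- either from $e(2)=0$ (\cref{claim:e2}) or simply because $k^{3k}n^{-k/2}=64/n>1$ there --- but they should be named to close the argument.
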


\begin{proof}
   Suppose $k \le \sqrt{n}$.
   Note that~\cref{lem:slice_inv_exp} implies that for $\lambda \vdash k$ with $\lambda_1$
   there exists a constant $C > 0$ such that
   \begin{align*}
     \frac{\lambda!}{\lambda_1!\cdots \lambda_k!} \cdot \frac{(n)_{\lambda^t_1}}{\abs{\aut(\lambda)}} \cdot \abs{\E_{x \in \mathcal{S}(\sqrt{n})}\left[x_1^{\lambda_1} \dots x_k^{\lambda_k} \right]} &= (n)_k \cdot \abs{\E_{x \in \mathcal{S}(\sqrt{n})}\left[x_1\dots x_k\right]}\\
      &\le  k^{C \cdot k} \cdot n^{k/2}.
   \end{align*}
   Simplifying and using the assumption $k \le \sqrt{n}$, we obtain
   $$
   \abs{\E_{x \in \mathcal{S}(\sqrt{n})}\left[x_1\dots x_k\right]} \le \frac{k^{C \cdot k} \cdot n^{-k/2}}{\prod_{i=1}^{k-1} \left(1 - \frac{i}{n}\right)} \le 2 \cdot k^{C \cdot k} \cdot n^{-k/2}.
   $$
   Furthermore, for $n \ge 6$,~\cref{lem:slice_inv_exp} allows us to choose $C=2$.
   Since $\abs{\E_{x \in \mathcal{S}(\sqrt{n})}\left[x_1 \right]} = 1/\sqrt{n}$, the simpler
   bound applies for all values of $k$
   $$
   \abs{\E_{x \in \mathcal{S}(\sqrt{n})}\left[x_1\dots x_k\right]} \le  k^{3 \cdot k} \cdot n^{-k/2},
   $$
   Now the assumption $n \ge 6$ can be removed since, for $k \ge 2$, we have$(k^{3}/\sqrt{n})^{k} \ge 1$,
   where $1$ is the trivial bound. Similarly, our initial assumption of $k \le \sqrt{n}$ can also be removed
   as the bound also becomes trivial in the regime $k > \sqrt{n}$.
\end{proof}

\section{Combinatorial Proof of Lemma \ref{lem:fixed-moments}}\label{app:combinatorialpseudocalibration}
In this appendix, we give a combinatorial proof of Lemma \ref{lem:fixed-moments}. We recall the statement of Lemma \ref{lem:fixed-moments} here.
\begin{lemma}
Let $\alpha \in \N^n$. When $v$ is fixed and $b$ is fixed (not necessarily +1 or -1) and $d \sim N(0, I)$ conditioned on $\ip{v}{d} = b\norm{v}$, 
\[\E_{d}[h_{\alpha}(d)] = \frac{v^\alpha}{\norm{v}^{\abs{\alpha}}} \cdot h_{\abs{\alpha}}(b).\]
\end{lemma}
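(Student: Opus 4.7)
The plan is to give a Wick/Feynman-diagram style argument: expand each univariate Hermite polynomial combinatorially as a signed sum over partial matchings, apply Wick's theorem to evaluate Gaussian moments as sums over pair partitions, then observe that the two kinds of ``pairings'' combine cleanly.

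First, I would start from the standard combinatorial expansion of the probabilists' Hermite polynomials: $h_k(x)=\sum_{M\in \mathcal{M}_k}(-1)^{|M|}\,x^{k-2|M|}$, where $\mathcal{M}_k$ denotes the set of partial matchings on $[k]$. Multiplying over coordinates,
\[
h_\alpha(d)=\prod_{i=1}^n h_{\alpha_i}(d_i)=\sum_{M=(M_1,\dots,M_n)}(-1)^{\sum_i|M_i|}\prod_{i=1}^n d_i^{\alpha_i-2|M_i|}.
\]
Next, I would apply Wick's theorem to each monomial $\prod_i d_i^{\alpha_i-2|M_i|}$ using the fact that conditionally on $\langle v,d\rangle=b\|v\|$, the vector $d$ is Gaussian with mean $b\hat v$ (where $\hat v=v/\|v\|$) and covariance $\Sigma=I-\hat v\hat v^{\mathsf T}$. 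The generalized Wick formula then writes the expectation as a sum over partitions of the remaining labels into pairs (weighted by $\Sigma_{ij}$) and singletons (weighted by $b\hat v_i$).

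The key step is to combine both sums into one global sum indexed by partitions of the multiset $S$ containing $\alpha_i$ copies of each index $i$ into unordered pairs and singletons. A pair $\{s,t\}$ of the \emph{same} type $i$ receives contributions from two sources: a Hermite-matching contribution of weight $-1$ and a Wick contribution of weight $\Sigma_{ii}=1-\hat v_i^{\,2}$; these sum to $-\hat v_i^{\,2}=-\hat v_s\hat v_t$. A pair $\{s,t\}$ of distinct types $i\neq j$ can only arise from a Wick pairing and has weight $\Sigma_{ij}=-\hat v_i\hat v_j=-\hat v_s\hat v_t$. A singleton of type $i$ has weight $b\hat v_i$. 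Thus uniformly, pairs contribute $-\hat v_s\hat v_t$ and singletons contribute $b\hat v_s$, yielding
\[
\E_d[h_\alpha(d)]=\sum_{P}\;\prod_{\{s,t\}\in P_{\mathrm{pair}}}(-\hat v_s\hat v_t)\prod_{s\in P_{\mathrm{sing}}}(b\hat v_s),
\]
where $P$ ranges over all partitions of $S$ into pairs and singletons.

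Finally, since every element of $S$ of type $i$ contributes exactly one factor of $\hat v_i$, we can factor out $\hat v^{\alpha}=\prod_i \hat v_i^{\alpha_i}=v^{\alpha}/\|v\|^{|\alpha|}$ and are left with
\[
\E_d[h_\alpha(d)]=\hat v^{\alpha}\sum_{j=0}^{\lfloor|\alpha|/2\rfloor}(-1)^j\,\binom{|\alpha|}{2j}(2j-1)!!\,b^{|\alpha|-2j}=\frac{v^{\alpha}}{\|v\|^{|\alpha|}}\,h_{|\alpha|}(b),
\]
recognizing the combinatorial expansion of $h_{|\alpha|}(b)$ in the last step. The main obstacle is the bookkeeping in the combination step: one must carefully verify that the Hermite-internal pairings and the Wick diagonal pairings combine to give exactly the weight $-\hat v_s\hat v_t$ for same-type pairs, so that the final identity collapses the apparent dependence on the individual $\alpha_i$ into a function of only $|\alpha|$.
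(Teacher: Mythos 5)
Your proposal is correct and is essentially the combinatorial proof the paper gives in Appendix~\ref{app:combinatorialpseudocalibration}: you expand each $h_{\alpha_i}$ as a signed sum over partial matchings, condition $d$ to be Gaussian with mean $b\hat v$ and covariance $I-\hat v\hat v^{\mathsf T}$, apply Wick/Isserlis, and then combine the Hermite-internal ``same-type'' pairings (weight $-1$) with the Wick diagonal pairings (weight $1-\hat v_i^2$) into a single weight $-\hat v_i^2$, matching the cross-type weight $-\hat v_i\hat v_j$; factoring out $\hat v^\alpha$ collapses everything to $h_{|\alpha|}(b)$. The paper's main-text proof is a shorter generating-function computation (complete the square in $\E_x\exp(\langle bv+(I-vv^{\mathsf T})x,\,t\rangle-\tfrac12\|t\|^2)$ and read off Taylor coefficients), but the combinatorial route you chose is the one recorded in the appendix and matches it step for step.
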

\begin{proof}
Again, it is sufficient to prove this lemma when $\norm{v} = 1$. For this proof, we need the following description of Hermite polynomials in terms of matchings and Isserlis' Theorem/Wick's Theorem.
\begin{fact}
\[
h_k(x) = \sum_{M: M \text{ is a matching on } [k]}{(-1)^{|M|}x^{k - 2|M|}}
\]
\end{fact}
\begin{theorem}[Isserlis' Theorem/Wick's Theorem]
For any vectors $u_1,\ldots,u_k$,
\[
E_{x \sim N(0,I)}\left[\prod_{j=1}^{k}{\ip{x}{u_j}}\right] = \sum_{M: M \text{ is a perfect matching on } [k]}\prod_{(i,j) \in M}{\ip{u_i}{u_j}}
\]
\end{theorem}
The idea behind this proof is to break up each coordinate vector $e_i$ into a component which is parallel to $v$ and a component which is perpendicular to $v$.
\begin{definition}
For each coordinate $i$, define $e_{i}^{\perp} = e_i - {v_i}v$
\end{definition}
\begin{proposition}
For any coordinate $i$, $\ip{e_{i}^{\perp}}{e_{i}^{\perp}} = 1-{v_i}^2$. For any pair of distinct coordintes $i$ and $i'$, $\ip{e_{i}^{\perp}}{e_{i'}^{\perp}} = -{v_i}{v_{i'}}$
\end{proposition}
\begin{proof}
Observe that for all i,
\[
\ip{e_{i}^{\perp}}{e_{i}^{\perp}} = \ip{e_{i} - {v_i}v}{e_{i} - v_{i}v} = \ip{e_{i}}{e_{i}}  - 2{v_i}\ip{v}{e_{i}} + {v_i}^2\ip{v}{v} = 1-{v_i}^2
\]
and if $i$ and $i'$ are distinct then
\[
\langle{e_{i}^{\perp},e_{i'}^{\perp}}\rangle = \langle{e_{i} - {v_i}v,e_{i'} - v_{i'}v}\rangle = \langle{e_{i},e_{i'}}\rangle  - {v_i}\langle{v,e_{i'}}\rangle - {v_{i'}}\langle{e_i,v}\rangle + {v_i}{v_{i'}}\langle{v,v}\rangle = -{v_i}{v_{i'}}
\]
\end{proof}
To evaluate $\E_{d}[h_{\alpha}(d)]$, we proceed as follows:
\begin{enumerate}
\item Break up each $d_i = \ip{d}{e_i}$ as $d_i = \ip{bv}{e_{i}} + \ip{d^{\perp}}{e_{i}} = bv_{i} + \ip{d^{\perp}}{e_{i}^{\perp}}$
where $d^{\perp}$ is the component of $d$ which is orthogonal to $v$.
\item Observe that since each $e^{\perp}_{i}$ is orthogonal to $v$, we can replace $d^{\perp}$ by a random vector $d' \sim N(0,I)$.
\item Apply Isserlis' Theorem/Wick's Theorem to evaluate these terms.
\end{enumerate}
For this calculation, it is convenient to think of $\alpha$ as a tuple of $|\alpha|$ elements where each $i \in [n]$ appears $\alpha_i$ times.
\begin{definition}
For each $j \in [\abs{\alpha}]$, we define $\alpha(j)$ to be the index $i$ such that $\sum_{i' = 1}^{i-1}{\alpha_{i'}} < j$ and $\sum_{i'=1}^{i}{\alpha_{i'}} \geq j$. For example, if $\alpha = (2,1,0,3)$ then $\alpha(1) = \alpha(2) = 1$, $\alpha(3) = 2$, and $\alpha(4) = \alpha(5) = \alpha(6) = 4$.
\end{definition}
In the special case when $\alpha(1),\ldots,\alpha(\abs{\alpha})$ are all distinct, 
\[
\E_{d}[h_{\alpha}(d)] = \E_d\Big[\prod_{j=1}^{\abs{\alpha}}{\ip{d}{e_{\alpha(j)}}}\Big] = \E_{d' \sim N(0,I)}\Big[\prod_{j=1}^{\abs{\alpha}}{\left(bv_{\alpha(j)} + \ip{d'}{e_{\alpha(j)}^{\perp}}\right)}\Big]
\]
In this case, we can associate a matching $M$ to each term we get after applying Isserlis' Theorem/Wick's Theorem as follows:
\begin{enumerate}
\item For each $j \in \abs{\alpha}$ where we have the $bv_{\alpha(j)}$ term, we take $j$ to be isolated.
\item For each pair of distinct $j,j' \in \abs{\alpha}$ such that we have the term $\ip{e_{\alpha
(j)}^{\perp}}{e_{\alpha(j')}^{\perp}}$ (which only happens if we start with the $\ip{d'}{e_{\alpha(j)}^{\perp}}$ and $\ip{d'}{e_{\alpha(j')}^{\perp}}$ terms and $e_{\alpha(j)}^{\perp}$ and $e_{\alpha(j')}^{\perp}$ are paired together after applying Isserlis' Theorem/Wick's Theorem), we add an edge between $j$ and $j'$ in $M$.
\end{enumerate}
We now have that
\begin{align*}
\E_{d}[h_{\alpha}(d)] &= \sum_{M:M \text{ is a matching on [\abs{\alpha}]}}{\left(\prod_{(j,j') \in M}{-v_{\alpha(j)}v_{\alpha(j')}}\right)\left(\prod_{j: j \text{ is unmatched by } M}{bv_{\alpha(j)}}\right)}\\ 
&= \left(\sum_{M:M \text{ is a matching on [\abs{\alpha}]}}{(-1)^{|M|}b^{\abs{\alpha} - 2|M|}}\right)\prod_{j=1}^{\abs{\alpha}}{v_{\alpha(j)}} \\
&= h_{\abs{\alpha}}(b)v^{\alpha} 
\end{align*}
For the general case, we use a similar idea although it is somewhat more complicated. In particular, we associate a multi-colored matching $M = M_{blue} \cup M_{red} \cup M_{purple}$ to each term. The idea is that whenever we have a blue edge, we could have had a red edge instead and vice versa, so we can combine terms with red and blue edges to make purple edges which gives us an ordinary matching as before. More precisely, the idea is as follows.
\begin{enumerate}
\item When we expand out $h_{\alpha}(d)$ in terms of matchings, we take $M_{blue}$ to be the union of these matchings.
\item For each $j \in \abs{\alpha}$ where we have the $bv_{\alpha(j)}$ term, we take $j$ to be isolated.
\item For each pair of distinct $j,j' \in \abs{\alpha}$ such that we have the term $\ip{e_{\alpha
(j)}^{\perp}}{e_{\alpha(j')}^{\perp}}$ (which only happens if we start with the $\ip{d'}{e_{\alpha(j)}^{\perp}}$ and $\ip{d'}{e_{\alpha(j')}^{\perp}}$ terms and $e_{\alpha(j)}^{\perp}$ and $e_{\alpha(j')}^{\perp}$ are paired together after applying Isserlis' Theorem/Wick's Theorem), we add an edge between $j$ and $j'$. If $\alpha(j') = \alpha(j)$ then we take this edge to be red and add it to $M_{red}$. If $\alpha(j') \neq \alpha(j)$ then we take this edge to be purple and add it to $M_{purple}$.
\end{enumerate}
We now implement this idea. We have that
\begin{align*}
&\E_{d}[h_{\alpha}(d)] = \sum_{M_{blue}: M_{blue} \text{ is a matching on } [\abs{\alpha}], \atop \forall (j,j') \in M_{blue}, \alpha(j) = \alpha(j')}{
(-1)^{|M_{blue}|}\E_{d}\Big[\prod_{j \in \abs{\alpha}: j \text{ is unmatched by } M_{blue}}{\ip{d}{e_{\alpha(j)}}}\Big]} \\
&=  \sum_{M_{blue}: M_{blue} \text{ is a matching on } [\abs{\alpha}], \atop \forall (j,j') \in M_{blue}, \alpha(j) = \alpha(j')}{
(-1)^{|M_{blue}|}\E_{d' \sim N(0,I)}\Big[\prod_{j \in \abs{\alpha}: j \text{ is unmatched by } M_{blue}}{\left(bv_{\alpha(j)} + \ip{d'}{e_{\alpha(j)}^{\perp}}\right)}\Big]}
\end{align*}
Expanding out $\E_{d' \sim N(0,I)}\Big[\prod_{j \in \abs{\alpha}: j \text{ is unmatched by } M_{blue}}{\left(bv_{\alpha(j)} + \ip{d'}{e_{\alpha(j)}^{\perp}}\right)}\Big]$ and applying Isserlis' Theorem/Wick's Theorem, we have that 
\begin{align*}
\E_{d}[h_{\alpha}(d)] &= \sum_{M_{blue},M_{red},M_{purple}}{(-1)^{|M_{blue}|}\prod_{(j,j') \in M_{red}}{(1-v^{2}_{\alpha(j)})}\prod_{(j,j') \in M_{purple}}{(-v_{\alpha(j)}v_{\alpha(j')})}} \\
&\prod_{j: j \text{ is unmatched by } M = M_{blue} \cup M_{red} \cup M_{purple}}{bv_{\alpha(j)}}
\end{align*}
where the sum is taken over all $M_{blue},M_{red},M_{purple}$ such that
\begin{enumerate}
\item $M = M_{blue} \cup M_{red} \cup M_{purple}$ is a matching on $[\abs{\alpha}]$ and $M_{blue},M_{red},M_{purple}$ are disjoint.
\item $\forall (j,j') \in M_{blue}, \alpha(j) = \alpha(j')$.
\item $\forall (j,j') \in M_{red}, \alpha(j) = \alpha(j')$.
\item $\forall (j,j') \in M_{purple}, \alpha(j) \neq \alpha(j')$.
\end{enumerate}
Since whenever we have a blue edge, we could have instead had a red edge and vice versa, for each distinct $j,j'$ such that $\alpha(j') = \alpha(j)$, we can combine terms which have a blue edge between $j$ and $j'$ with terms which have a red edge between $j$ and $j'$. A blue edge between $j$ and $j'$ has a coefficient of $-1$ and a red edge between $j$ and $j'$ has a coefficient of $1 - v^2_{\alpha(j)}$, so this effectively gives a purple edge with coefficient $-v^2_{\alpha(j)} = v_{\alpha(j)}v_{\alpha(j')}$. Thus, 
\begin{align*}
\E_{d}[h_{\alpha}(d)] &= \sum_{M:M \text{ is a matching on [\abs{\alpha}]}}{\left(\prod_{(j,j') \in M}{-v_{\alpha(j)}v_{\alpha(j')}}\right)\left(\prod_{j: j \text{ is unmatched by } M}{bv_{\alpha(j)}}\right)}\\ 
&= h_{\abs{\alpha}}(b)v^{\alpha} 
\end{align*}
\end{proof}

\section{Importance of Scaling}\label{app:scaling}
We remark that somewhat surprisingly, the scaling of the problem is important for our arguments. The reason this is somewhat surprising is that for the purpose of determining whether or not a matrix $M$ is PSD, the scaling of the rows and columns of $M$ doesn't matter. More precisely, we have the following proposition.
\begin{proposition}
For any symmetric $N \times N$ matrix $M$ and any $N \times N$ diagonal matrix $D$ such that $\forall i \in [N], D_{ii} \neq 0$, $M \succeq 0$ if and only if $DMD \succeq 0$.
\end{proposition}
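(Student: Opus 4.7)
The plan is to reduce the claim to the definition of positive semi-definiteness and exploit the invertibility of $D$. Since every diagonal entry of $D$ is nonzero, $D$ is invertible with $D^{-1}$ also diagonal, and moreover $D = D^\T$ since $D$ is diagonal. The key algebraic identity is that for any $x \in \R^N$,
\[
x^\T (DMD) x \;=\; (Dx)^\T M (Dx).
\]

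First I would establish the forward direction: assume $M \succeq 0$. Then for any $x \in \R^N$, setting $y = Dx$ gives $x^\T(DMD)x = y^\T M y \geq 0$, so $DMD \succeq 0$. Second, for the reverse direction, assume $DMD \succeq 0$. For any $y \in \R^N$, since $D$ is invertible there exists $x = D^{-1}y$ with $Dx = y$, and then $y^\T M y = (Dx)^\T M (Dx) = x^\T (DMD) x \geq 0$, so $M \succeq 0$.

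This is essentially a one-line argument, so there is no real obstacle; the only subtlety is observing that the hypothesis $D_{ii} \neq 0$ for all $i$ is exactly what is needed to ensure that the map $x \mapsto Dx$ is a bijection on $\R^N$, which is what lets the quadratic form $y^\T M y$ be recovered from the quadratic form $x^\T (DMD) x$. Without this invertibility, only one direction of the equivalence would hold (namely, $M \succeq 0 \implies DMD \succeq 0$).
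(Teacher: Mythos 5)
Your proof is correct and is the standard argument; the paper states this proposition without proof, and the change-of-variables argument via $x^\T(DMD)x = (Dx)^\T M (Dx)$ together with invertibility of $D$ is exactly the expected reasoning.
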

However, for our techniques, we also use the fact that if $x$ is in the nullspace of $M$ then for the purposes of determining whether $M$ is PSD, we can freely add a non-negative multiple of $xx^T$ to $M$.
\begin{proposition}\label{addingnullspaceprop}
For any symmetric $N \times N$ symmetric matrix $M$, any vector $x$ such that $Mx = 0$, and any constant $c   $, $M \succeq 0$ if and only if $M+cxx^T \succeq 0$.
\end{proposition}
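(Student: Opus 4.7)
The plan is to argue via an orthogonal decomposition relative to $x$. If $x = 0$ the claim is trivial since $xx^T = 0$, so assume $x \neq 0$ and decompose any vector $y$ as $y = \alpha x + z$ with $z \perp x$. Because $Mx = 0$ and $M$ is symmetric, the cross terms vanish and
\begin{equation*}
y^T M y \;=\; \alpha^2 x^T M x + 2\alpha\, z^T M x + z^T M z \;=\; z^T M z.
\end{equation*}
In particular the quadratic form of $M$ depends only on the component of $y$ orthogonal to $x$, which is the key observation driving everything.

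Next I would exploit the fact that $x^T z = 0$ to see that the rank-one update $cxx^T$ leaves the form on $x^\perp$ unchanged: for any $z \perp x$,
\begin{equation*}
z^T(M + c x x^T) z \;=\; z^T M z + c(x^T z)^2 \;=\; z^T M z.
\end{equation*}
Combining this with the previous identity yields $y^T M y = z^T(M+cxx^T)z$ for every $y$, where $z$ is the component of $y$ orthogonal to $x$. This gives one direction immediately: if $M + cxx^T \succeq 0$, then $y^T M y \geq 0$ for all $y$, so $M \succeq 0$. Note this direction works for every constant $c$.

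For the other direction, I would simply expand
\begin{equation*}
y^T(M + cxx^T)y \;=\; y^T M y + c(x^T y)^2,
\end{equation*}
which is nonnegative for every $y$ provided $M \succeq 0$ and $c \geq 0$. (The preceding paragraph of the paper, which speaks of adding a \emph{nonnegative} multiple of $xx^T$, makes clear that $c \geq 0$ is the intended regime; plugging $y = x$ shows that no $c < 0$ can work when $x \neq 0$.) There is no real obstacle here: the content of the proposition is just the observation that on $\ker(M)^\perp$ the rank-one correction $cxx^T$ is invisible, while along $x$ itself the original form $M$ already vanishes, so positivity on the whole space and positivity after a nonnegative rank-one bump along $x$ are equivalent.
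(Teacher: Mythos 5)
Your proof is correct. The paper states Proposition~\ref{addingnullspaceprop} without providing a proof, so there is no "paper approach" to compare against; your orthogonal-decomposition argument (observing that the quadratic form of $M$ depends only on the component of $y$ in $x^\perp$, on which the rank-one bump $cxx^T$ is invisible) is a clean and standard way to establish it. You are also right to flag that the statement as written ("any constant $c$") is false when $c<0$ and $x\neq0$, since then $x^T(M+cxx^T)x = c\norm{x}^4 < 0$ regardless of $M$; the surrounding sentence in the paper speaks of adding a \emph{nonnegative} multiple of $xx^T$, so the restriction $c\geq 0$ is clearly the intended hypothesis, and your proof correctly invokes it only for the direction where it is needed (the "if $M\succeq 0$ then $M+cxx^T\succeq 0$" direction), while noting that the other direction holds for all $c$.
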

As shown by the following example, the set of matrices that can be obtained using Proposition \ref{addingnullspaceprop} depends on the scaling of $M$.

If $M = \begin{pmatrix}
1 & 1 & 2 \\
1 & 2 & 3 \\
2 & 3 & 5
\end{pmatrix}$, $x = \begin{pmatrix}
1\\
1\\
-1
\end{pmatrix}$, and $D = \begin{pmatrix}
1 & 0 & 0 \\
0 & 1 & 0 \\
0 & 0 & \lambda
\end{pmatrix}$
then $DMD = \begin{pmatrix}
1 & 1 & 2\lambda \\
1 & 2 & 3\lambda \\
2\lambda & 3\lambda & 5{\lambda}^2
\end{pmatrix}$ and 
\[
DMD + cD^{-1}x{x^T}D^{-1} = \begin{pmatrix}
1 + c & 1 + c & 2\lambda - \frac{c}{\lambda} \\
1 + c & 2 + c & 3\lambda - \frac{c}{\lambda} \\
2\lambda - \frac{c}{\lambda} & 3\lambda - \frac{c}{\lambda} & 5{\lambda}^2 + \frac{c}{{\lambda}^2}
\end{pmatrix}
\]
Scaling this so that the diagonal entries are $1$ gives the matrix 
\[
\begin{pmatrix}
1 & \frac{\sqrt{1+c}}{\sqrt{2+c}} & \frac{2\lambda - \frac{c}{\lambda}}{\sqrt{(1+c)(5{\lambda}^2 + \frac{c}{{\lambda}^2})}} \\
\frac{\sqrt{1+c}}{\sqrt{2+c}} & 1 & \frac{3\lambda - \frac{c}{\lambda}}{\sqrt{(2+c)(5{\lambda}^2 + \frac{c}{{\lambda}^2})}} \\
\frac{2\lambda - \frac{c}{\lambda}}{\sqrt{(1+c)(5{\lambda}^2 + \frac{c}{{\lambda}^2})}} & \frac{3\lambda - \frac{c}{\lambda}}{\sqrt{(2+c)(5{\lambda}^2 + \frac{c}{{\lambda}^2})}} & 1
\end{pmatrix}
\]
Note that the entries in the upper left $2 \times 2$ block only depend on $c$ and are different for each $c$ while the other off-diagonal entries also depend on $\lambda$. Thus, different $\lambda$ give different sets of matrices.
\label{app:comment_on_scaling}

\end{document}